\newtheorem{lemma}{Lemma}
\newtheorem{theorem}{Theorem}
\newtheorem{corollary}{Corollary}
\theoremstyle{definition}
\theoremstyle{remark}
\newif\iflong
\newif\ifshort
\newcommand{\N}{\ensuremath{\mathbb N}\xspace}
\newcommand{\F}{\ensuremath{\mathbb F}\xspace}
\newcommand{\cI}{\ensuremath{\mathcal I}\xspace}
\newcommand{\cB}{\ensuremath{\mathcal B}\xspace}
\newcommand{\cC}{\ensuremath{\mathcal C}\xspace}
\newcommand{\cP}{\ensuremath{\mathcal P}\xspace}
\newcommand{\cF}{\ensuremath{\mathcal F}\xspace}
\newcommand{\cT}{\ensuremath{\mathcal T}\xspace}
\newcommand{\cS}{\ensuremath{\mathcal S}\xspace}
\newcommand{\bD}{\ensuremath{\mathbf D}\xspace}
\newcommand{\Oh}{\mathcal{O}}
\newcommand{\MM}{\mathsf{MM}}
\DeclareMathOperator{\supp}{supp}
\DeclareMathOperator{\rank}{rank}
\tikzset{vertex/.style={draw, circle, fill=white, inner sep=2.4pt}}
\tikzset{bvertex/.style={draw, circle, fill=black, inner sep=2.4pt}}
\tikzset{>={Latex[length=3mm, width=2mm]}}
\DeclareMathOperator{\Pf}{Pf}
\newcommand{\flowervertex}[2]{
  \foreach \angle in {30,120,...,300}
      \draw[fill=white] (#1, #2) -- ++(\angle:0.3) arc (\angle:\angle+30:0.5) -- cycle;
      \node[bvertex] (_) at (#1, #2) {};
}
\title{FPT algorithms over linear delta-matroids with applications}
\author{Eduard Eiben\thanks{ Dept. of Computer Science, Royal
    Holloway, University of London, \texttt{Eduard.Eiben@rhul.ac.uk}}
  \and
  Tomohiro Koana\thanks{Utrecht University \& Research Institute for Mathematical Sciences, Kyoto University, \texttt{tomohiro.koana@gmail.com}}
  \and
  Magnus Wahlström\thanks{Dept. of Computer Science, Royal Holloway,
    University of London, \texttt{Magnus.Wahlstrom@rhul.ac.uk}}}
\date{}
\begin{document}
\maketitle

\begin{abstract}
  Matroids, particularly linear matroids, have been a powerful tool for
  applications in parameterized complexity, both for algorithms and
  kernelization.   In particular, they have been instrumental in speeding up or replacing dynamic programming.
  Delta-matroids are a generalization of matroids that further
  encapsulates structures such as non-maximum matchings in general
  graphs and various path-packing and topological structures.
  There is also a notion of linear delta-matroids
  (represented by skew-symmetric matrices) which carries significant
  expressive power and enables powerful algorithms. 
  We investigate parameterized complexity aspects of problems defined over linear delta-matroids,
  or with delta-matroid constraints.
  Our initial analysis of basic intersection and packing problems
  reveals a different complexity landscape compared to the more familiar matroid case. 
  In particular, there is a stark contrast in complexity between
  the \emph{cardinality} parameter $k$ and the \emph{rank} parameterization $r$.
  For example, finding an intersection of size $k$ of three linear delta-matroids
  is W[1]-hard when parameterized by $k$, while far more general problems
  (such as finding a set packing of size $k$ that is feasible in a
  given linear delta-matroid) are FPT when parameterized by
  the rank $r$ of the delta-matroid.
  In fact, we extend the recent \emph{determinantal sieving}
  procedure of Eiben, Koana and Wahlström (SODA 2024)
  into a process that sieves a given polynomial for a monomial whose support is
  feasible in a given linear delta-matroid, parameterized by $r$.
  This is a direct generalization of determinantal sieving. 
  
  Second, we investigate a curious class of problems that turns out
  to be FPT parameterized by $k$, even on delta-matroids of unbounded rank.
  We begin with \textsc{Delta-matroid Triangle Cover} -- find a
  feasible set of size $k$ that can be covered by a vertex-disjoint
  packing of triangles (i.e., sets of size 3) out of a given triangle collection.
  For example, this allows us to find a packing of $K_3$'s and $K_2$'s
  in a graph with a maximum number of edges, parameterized above the
  matching number of the graph. As applications, we settle
  questions on the FPT status of \textsc{Cluster Subgraph} and
  \textsc{Strong Triadic Closure} parameterized above the matching number. 
\end{abstract}

\section{Introduction}

\emph{Matroids} are abstract structures that capture a notion of \emph{independence} across various domains -- from linearly independent
sets in vector spaces to
the spanning forests in graphs. Formally, a matroid is a pair
$M=(V,\cI)$ where $\cI \subseteq 2^V$, referred to as the \emph{independent sets},
is a collection of sets satisfying the axioms (i) $\emptyset \in \cI$; (ii) if $A \subseteq B$ and $B \in \cI$
then $A \in \cI$ and (iii) the \emph{independence augmentation axiom},
\[
  \text{ if } A, B \in \cI \text{ and } |A|<|B| \text{ then there exists } x \in B \setminus A \text{ such that } A \cup \{x\} \in \cI.
\]
A \emph{basis} of a matroid is a maximal independent set. A matroid can equivalently be defined as a pair $M=(V,\cB)$
where $\cB \subseteq 2^V$ is the set of \emph{bases} of $M$, subject to (i) $\cB \neq \emptyset$ and (ii)
the \emph{basis exchange axiom},
\[
  \text{ if } A, B \in \cB  \text{ and } x \in A \setminus B \text{ then there exists } y \in B \setminus A \text{ such that } A \Delta \{x,y\} \in \cB,
\]
where $\Delta$ denotes symmetric difference. 
The study of matroids goes back nearly a century (to
Whitney, 1935~\cite{whitney1935abstract}), and matroids find applications in practically all
parts of computer science. One of the central classes of matroids is \emph{linear matroids},
where the ground set $V$ represents vectors in a vector space over a field $\F$ 
and the independent sets are precisely the sets of vectors that are
linearly independent. This is usually represented via a matrix over $\F$
whose columns are labelled by $V$. 

A major benefit of matroids is the powerful set of algorithmic meta-results they offer. 
Famously, matroids are precisely those structures in which the basic
\emph{greedy algorithm} is guaranteed to find a max-weight solution.
Another pair of well-studied problems are \textsc{Matroid Intersection},
or the problem of finding a maximum-cardinality set in the
intersection of two matroids, and \textsc{Matroid Parity},
where the input is a matroid $M=(V,\cI)$ and a partition $\Pi$
of $V$ into pairs, and the task is to find an independent set $I$
of maximum cardinality that is a union of pairs from $\Pi$
(or equivalently, a basis $B$ of $M$ in which a minimum number of
pairs is broken, where a pair $p \in \Pi$ is broken by $B$
if $|p \cap B|=1$). \textsc{Matroid Intersection} can be solved
in polynomial time over any matroid, while \textsc{Matroid Parity}
is NP-hard in general but can be efficiently solved on a class of
matroids including linear matroids~\cite{Lovasz79,LP86}. 
For more on matroids, see Oxley~\cite{OxleyBook2};
see also Schrijver~\cite{SchrijverBook} for more algorithmic
applications, 

Linear matroids have also seen powerful applications in parameterized
complexity, which is the starting point of our current investigations.
This started with Marx~\cite{Marx09-matroid}, who gave FPT
algorithms for problems over linear matroids, with applications, based
on an algorithmic version of the representative sets theorem
(a.k.a.\ the \emph{two families theorem} of Bollobás \cite{bollobas1965generalized}, 
generalized to linear matroids by Lovász~\cite{lovasz1977flats}).
This was further developed by Fomin et al.~\cite{FominLPS16JACM},
who showed a faster version and observed several applications to
speeding up FPT dynamic programming algorithms; this strategy has
since become a standard building block in the FPT toolbox~\cite{CyganFKLMPPS15PCbook,FominLPS16JACM,FominLPS17}.
Linear matroids and representative sets have also been instrumental
in results for kernelization, particularly for graph separation
problems~\cite{KratschW14TALG,KratschW20JACM,Kratsch18vc,Wahlstrom22TALG}.
In another direction, the \emph{extensor coding}~\cite{BrandDH18}
and recent \emph{determinantal sieving}~\cite{EKW23} methods
can be used to filter an arbitrary polynomial (e.g., a multivariate
generating function) for monomials whose support forms a basis of a
provided linear matroid. This unifies and generalizes earlier monomial
sieving methods~\cite{BjorklundHKK17narrow,BjorklundKK16} and currently
represents the fastest, most general FPT algorithm for a range of
problems~\cite{EKW23}.

\paragraph*{Delta-matroids.}
\emph{Delta-matroids} are generalizations of matroids, defined and
initially investigated by Bouchet in the 1980s~\cite{Bouchet87DMone}
although similar or equivalent structures were introduced
independently by other researchers; see Moffatt~\cite{Moffatt19deltamatroids}
for a survey. A delta-matroid is a pair $D=(V,\cF)$ where $\cF \subseteq 2^V$ 
is a collection of subsets referred to as \emph{feasible sets},
subject to the \emph{symmetric exchange axiom}: 
\[
\text{ for all } A, B \in \cF \text{ and } x \in A \Delta B \text{ there exists } y \in A \Delta B \text{ such that }
A \Delta \{x,y\} \in \cF.
\]
Both the set of bases of a matroid, and the independent sets of a matroid, 
form the feasible sets of a delta-matroid.
We refer to these as the \emph{basis delta-matroid}
and \emph{independence delta-matroid} of the matroid.
For another prominent example, let $G=(V,E)$ be a graph
and let $\cF$ contain those subsets $F \subseteq V$
such that $G[F]$ has a perfect matching. Then $D=(V,\cF)$
forms a delta-matroid, the \emph{matching delta-matroid} of $G$.
As a special case, given a partition $\Pi$ of $V$ into pairs,
we define the \emph{pairing delta-matroid} $D_\Pi$ of $\Pi$ as 
the matching delta-matroid of the graph whose edge set is $\Pi$;
i.e., $F \subseteq V$ is feasible in $D_\Pi$ if and only if no pair is
broken in $F$.
This is frequently used in combinatorial constructions in this paper.
Like with matroids, there is also a notion of \emph{linear delta-matroid}. 
A matrix $A$ is \emph{skew-symmetric} if $A^T=-A$.
Let $A$ be a skew-symmetric matrix with rows and columns labelled by $V$,
and let $\cF=\{F \subseteq V \mid A[F] \text{ is non-singular}\}$. 
Then $D=(V,\cF)$ defines a delta-matroid 
(although this fact is not trivial);
we denote this delta-matroid by $D=\bD(A)$.
More generally, let $S \subseteq V$ and define
$
  \cF \Delta S = \{F \Delta S \mid F \in \cF\}.
$
Then $D \Delta S=(V,\cF \Delta S)$ defines a delta-matroid,
the \emph{twist} of $D$ by $S$. A \emph{linear delta-matroid}
is a delta-matroid which can be represented as
$D=\bD(A) \Delta S$ for some skew-symmetric matrix $A$ as above.
We will refer to this representation as a \emph{twist representation}.
Examples of linear delta-matroids include 
basis delta-matroids of linear matroids,
\emph{twisted matroids} (being the twists of basis delta-matroids of linear matroids)
and matching delta-matroids, the latter represented via the famous
\emph{Tutte matrix} of $G$. Again, see Moffatt~\cite{Moffatt19deltamatroids}
for more background and applications. 
In addition, slightly more generally, one may consider
\emph{projected} linear representations of delta-matroids,
which includes independence delta-matroids of linear matroids.

Finally, as with matroids, delta-matroids enjoy a collection of algorithmic meta-results.
There is a signed generalization of the greedy algorithm that finds a max-weight
solution in a set system precisely when the set system defines a
delta-matroid, and while both the natural generalizations
\textsc{Delta-matroid Intersection} and \textsc{Delta-matroid Parity}
are NP-hard in general, both are tractable on linear delta-matroids.
More precisely, the problems are usually defined as follows. In
\textsc{Delta-matroid Intersection}, the task is to find \emph{any}
common feasible set $F$ in two delta-matroids $D_1$ and $D_2$, without
any requirement on $|F|$, and in \textsc{Delta-matroid Parity}
the task is to find a feasible set $F$ in a delta-matroid
such that a minimum number of pairs are broken in $F$.
Since delta-matroids are not normally closed under taking subsets, 
this definition of \textsc{Delta-matroid Intersection} is already
a non-trivial problem. These versions are tractable on linear delta-matroids
due to a result of Geelen et al.~\cite{GeelenIM03}, recently
improved to $\tilde O(n^\omega)$ time by Koana and Wahlström~\cite{KW24}.
The natural variant \textsc{Maximum Delta-matroid Intersection},
where $|F|$ is maximised, was raised by Kakimura and Takamatsu~\cite{KakimuraT14sidma},
and appears harder. It was only recently solved by a randomized algorithm~\cite{KW24}.
See the latter for a deeper overview on algorithms for linear delta-matroids.
In summary, although they are less known, delta-matroids (and
especially linear delta-matroids) generalize many of the useful
aspects of (linear) matroids.\iflong\footnote{As a side note, another frequent occurrence of
  matroids in combinatorial optimization is as a form of \emph{constraints},
  striking a good balance between expressive power and algorithmic
  tractability. See for example the work on approximation algorithms
  under matroid and knapsack constraints, and the \emph{matroid secretary} problem.
  Although not precisely the same, there are also a number of
  cases where delta-matroid constraints capture the boundary
  of polynomial-time tractable problems; cf.~the \textsc{Boolean edge CSP}  
  and \textsc{Boolean planar CSP} problems~\cite{KW24,KazdaKR19}
  and the discussion on \emph{general factor} in~\cite{KW24}.
}\fi



In this paper, we investigate parameterized complexity aspects of problems on linear and
projected linear delta-matroids, in the same vein as the work on linear matroids
surveyed above. The representative sets theorem over linear matroids 
has recently been extended to linear delta-matroids, with some initial applications
to kernelization~\cite{Wahlstrom24SODA}, but the applicability of
delta-matroid tools for FPT algorithms has so far not been pursued. 
Our purpose is to push this forward. We investigate
upper and lower bounds for parameterizations of natural NP-hard
problems over linear delta-matroids, on the way generalizing the
method of determinantal sieving. As a consequence, we settle open
questions on above-matching parameterizations of \textsc{Cluster Subgraph}
and \textsc{Strong Triadic Closure}~\cite{abs-2001-06867,GolovachHKLP20}.

\subsection{Delta-matroid intersection and packing problems}

\label{sec:ourresults}

In this paper, we investigate the parameterized complexity aspects of problems over linear delta-matroids.
We show two classes of results; we begin with delta-matroid versions
of basic intersection and packing problems.

More precisely, we introduce the \textsc{$q$-Delta-matroid Intersection}
and \textsc{Delta-matroid Set Packing} problems,
generalizing \textsc{Delta-matroid Intersection} and \textsc{Delta-matroid Parity}.
Among others, we consider the following problems.
\begin{itemize}
\item \textsc{$q$-Delta-matroid Intersection}: Given $q$ 
  delta-matroids $D_1=(V,\cF_1)$, \ldots, $D_q=(V,\cF_q)$
  and an integer $k \in \N$, is there a set of cardinality $k$ that is
  feasible in $D_i$ for every $i \in [q]$?
\item \textsc{$q$-Delta-matroid Parity}: Given a delta-matroid $D=(V,\cF)$, 
  a partition $\cP$ of $V$ into blocks of size $q$, and an integer $k
  \in \N$, is there a feasible set that is the union of $k$ blocks
  from $\cP$? 
\item \textsc{Delta-matroid Set Packing}: Given a delta-matroid
  $D=(V,\cF)$, an arbitrary partition $\cP$ of $V$, and an integer $k
  \in \N$, is there a feasible set $F$ in $D$, $|F|=k$, which is the union of
  blocks from $\cP$?
\end{itemize}
We use the following shorthand: \textsc{DDD Intersection}
refers to \textsc{3-Delta-matroid Intersection};
\textsc{DDM Intersection} refers to the special case of
\textsc{3-Delta-matroid Intersection} where $D_3$ is a matroid;
and \textsc{D$\Pi$M Intersection} refers to the case where $D_1$ 
is an arbitrary delta-matroid, $D_2$ is a pairing
delta-matroid for a
partition $\Pi$, and $D_3$ is a matroid. In other words,
\textsc{D$\Pi$M Intersection} is \textsc{Delta-matroid Parity}
with an additional matroid constraint. Throughout, we assume that all
matroids and delta-matroids are linear (or projected linear)
with representations given in the input over some common field $\F$. 

The classical problems \textsc{Linear Delta-matroid Intersection} and 
\textsc{Linear Delta-matroid Parity} correspond to the special case
where $q=2$ and where there is in addition no cardinality constraint $k$
on the solution, and was solved by Geelen et al.~\cite{GeelenIM03}.
For the variant where $q=2$ and the cardinality constraint is present,
only a randomized algorithm is known~\cite{KW24} (even if replace the
constraint $|F|=k$ by asking $|F|$ to be maximized). 

The matroid versions of these problems are all FPT for linear matroids
when parameterized by the solution cardinality $k$ for any $q$ (and more
generally FPT when parameterized by $k+q$)~\cite{Marx09-matroid,BrandDH18,EKW23}.
The most powerful method here, at least for representations over
fields of characteristic 2, is the recent method of
\emph{determinantal sieving}~\cite{EKW23}, which forms the inspiration 
for our algorithms; see below.
In particular, for matroids represented over a field of characteristic 2,
\textsc{$q$-Matroid Intersection} can be solved in $O^*(2^{(q-2)k})$ time,
\textsc{$q$-Matroid Parity} in $O^*(2^{qk})$ time, and
\textsc{Matroid Set Packing} in $O^*(2^k)$ time~\cite{EKW23}.
Results over other fields are somewhat slower, but still single-exponential.

For delta-matroids, we find a more intricate picture. The \emph{rank}
of a delta-matroid $D=(V,\cF)$ is the cardinality of a largest
feasible set in $D$. We consider the above problems both parameterized
by $k$ and by the rank $r$. For matroids, this distinction makes no
difference -- given a linear matroid $M=(V,\cI)$ of rank $r$ and a
parameter $k$, via a \emph{truncation} operation one can construct
a linear matroid $M'=(V,\cI')$ of rank $k$, which retains all
independent sets of cardinality at most $k$ from $M$. 
Furthermore, this can be done efficiently with a maintained linear
representation~\cite{Marx09-matroid,LokshtanovMPS18TALG}. 
Thus, over matroids, parameterizing by $r$ is no easier than
parameterizing by $k$. However, for delta-matroids, the distinction
turns out to be highly significant. 

Parameterized by $k$, we find a dividing line where some problems,
including \textsc{DDM Intersection}, are FPT given representations
over a common field $\F$, whereas others, including \textsc{DDD Intersection},
are W[1]-hard. This holds even for very simple delta-matroids; see
Section~\ref{sec:part1}.

\begin{restatable}{theorem}{dmhardness} \label{hard:ddd}
  \textsc{DDD Intersection} is W[1]-hard parameterized by
  $k$ even for linear delta-matroids represented over the same field. 
\end{restatable}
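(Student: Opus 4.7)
The plan is to reduce from \textsc{Multicolored Clique}, which is W[1]-hard parameterized by the number $k$ of color classes. Given an MCC instance $(G, V_1 \sqcup \ldots \sqcup V_k)$, I would construct a DDD Intersection instance with parameter $k' = \Theta(k^2)$ whose ground set consists of two types of elements: ``half-edges'' $e^i$ and $e^j$ for each edge $e \in E_{ij}$ of $G$, and ``vertex-copies'' $v^{(i,j)}$ for each $v \in V_i$ and each $j \in [k] \setminus \{i\}$. The intended solution encodes a multicolored $k$-clique $\{v_1, \ldots, v_k\}$ by including all $k-1$ copies of each selected vertex $v_i$ together with both halves $e_{ij}^i, e_{ij}^j$ of the clique edge between $v_i$ and $v_j$ for every pair $\{i,j\}$.

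The three linear delta-matroids, all representable over a common field $\F$, are: (i) a partition matroid $D_1$ forcing the correct cardinality per vertex-copy block (one copy per pair $(i,j)$) and per color-pair edge block (exactly two half-edges per $\{i,j\}$); (ii) a pairing delta-matroid $D_2$ coupling the two halves $e^i, e^j$ of each edge together, so that both or neither are selected; and (iii) a matching delta-matroid $D_3$ on a bipartite incidence graph $B$ in which $v^{(i,j)}$ is adjacent to $e^i$ exactly when $v \in V_i$ is an endpoint of $e \in E_{ij}$. The perfect-matching requirement in $B[F]$ forces each selected vertex-copy to match to a selected half-edge incident to it, so together with the cardinality and pairing constraints, each selected edge must have its endpoints among the chosen vertex-copies. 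All three representations are standard: $D_1$ is a direct sum of uniform matroids, $D_2$ is a block-diagonal skew-symmetric matrix with $2 \times 2$ blocks, and $D_3$ is represented by the Tutte matrix of $B$ with indeterminates specialised to generic scalars in $\F$.

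The hard part will be enforcing \emph{full across-$j$ consistency}: for each color class $V_i$, the $k-1$ selected copies $v^{(i,j)}$ should all correspond to the \emph{same} vertex $v \in V_i$. An ordinary pairing delta-matroid couples only two elements at a time, so enforcing consistency across $k-1 > 2$ copies simultaneously requires a more elaborate construction. My plan is to augment the ground set with auxiliary elements and embed a consistency gadget per color class into the matching delta-matroid $D_3$: for each $v \in V_i$, the incidence graph $B$ is extended with auxiliary vertices such that the only perfect matchings coherently couple the $k-1$ copies $v^{(i,1)}, \ldots, v^{(i,k-1)}$ of a single $v$ through matched pivot vertices, with a short parity analysis to separate the cases of $k$ even and odd. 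Once this consistency is secured, verifying both directions of the equivalence is routine: every common feasible set of size $k'$ encodes a multicolored $k$-clique of $G$ and conversely, establishing W[1]-hardness for DDD Intersection over linear delta-matroids represented over a common field.
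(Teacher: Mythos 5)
Your overall strategy — reduce from \textsc{Multicolored Clique} via a ground set of vertex-copies and edge-halves, glued together by a matching delta-matroid — is the same idea as the paper's. However, there is a fatal structural flaw: your $D_1$ is a \emph{partition matroid}. The paper's own Theorem~\ref{tractable:rank} establishes that \textsc{DDM Intersection} (i.e.\ \textsc{3-Delta-matroid Intersection} where one of the three layers is a linear matroid) is FPT parameterized by the cardinality $k$, by truncating the matroid to rank $k$ and sieving. In your construction the partition matroid has rank exactly $k' = \Theta(k^2)$, so the instance you produce is solvable in time $O^*(2^{O(k')})$. If the reduction were correct, this would yield an FPT algorithm for \textsc{Multicolored Clique}, a contradiction under $\mathrm{FPT} \neq \mathrm{W[1]}$. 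So no amount of refinement to the $D_3$ gadget can rescue this plan; the problem is the choice of $D_1$.

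The thing you are missing is precisely the idea that resolves what you call ``the hard part.'' The paper enforces the all-or-nothing constraint on blocks (the block $B_{i,u}$ containing the $k-1$ copies of a vertex $u\in V_i$, and the pair of halves of each edge) \emph{without} any matroid layer, by taking $D_1$ and $D_2$ to be two \emph{pairing} delta-matroids whose pairings together trace a cycle through each block: if $x$ is in $F$, its $\Pi_1$-partner must join, then that element's $\Pi_2$-partner, and so on around the cycle, so $F$ meets each block in either all of it or none of it. Both of those pairing delta-matroids have unbounded rank, as does the third layer, which the paper takes to be a star forest matching delta-matroid (equivalently a \emph{twisted} partition matroid, $M \Delta X$, which for nonempty twist set is genuinely \emph{not} a matroid) enforcing $x_{i,j:u}\in F \Leftrightarrow \exists v : y_{i,j:u,v}\in F$. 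The unbounded rank of all three layers is exactly what keeps the instance out of the tractable regime, and it is what your use of a bounded-rank matroid gives away. Concretely: replace your $D_1$ by two pairing delta-matroids encoding the same partition via the cycle trick, and your reduction would be on the right track (the remaining ``consistency gadget'' you sketch for $D_3$ then becomes unnecessary, since propagation around the blocks together with the matching/parity layer and the cardinality bound already forces consistency).
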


On the other hand, if we parameterize by the rank $r$, then we recover
all positive cases from the matroid case, and with matching running times.

\begin{restatable}{theorem}{dmponefpt} \label{tractable:rank}
  The following problems are FPT over linear delta-matroids and
  matroids provided as representations over a common field. 
  \begin{itemize}
  \item \textsc{DDD Intersection} parameterized by $r=\rank(D_3)$,
    with a running time of $O^*(2^r)$ over characteristic 2
    and $O^*(2^{O(r)})$ otherwise
  \item More generally, \textsc{$q$-Delta-matroid Intersection} parameterized by $r$,
    where $r$ is the maximum rank of $D_i$, $i \geq 3$, with a running
    time of $O^*(2^{(q-2)r})$ over characteristic 2
    and $O^*(2^{O(qr)})$ otherwise
  \item \textsc{$q$-Delta-matroid Parity} and \textsc{Delta-matroid Set Packing}
    parameterized by the rank $r$ of the delta-matroid,
    with a running time of $O^*(2^r)$ over characteristic 2
    and $O^*(2^{O(r)})$  in general
  \item \textsc{DDM Intersection} and \textsc{D$\Pi$M Intersection},
    parameterized by the cardinality $k$, 
    with a running time of $O^*(2^k)$ over characteristic 2
    and $O^*(2^{O(k)})$  in general
  \item More generally, \textsc{Delta-matroid Intersection} and
    \textsc{Delta-matroid Parity} with an additional $q-2$ matroid constraints,
    parameterized by $q$ and the cardinality $k$, 
    with a running time of $O^*(2^{(q-2)k})$ over characteristic 2
    and $O^*(2^{O(qk)})$ in general
  \end{itemize}
  In all cases, the algorithms are randomized, and use polynomial
  space over characteristic 2 but exponential space otherwise.
\end{restatable}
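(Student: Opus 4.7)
The plan is to establish a \emph{delta-matroid sieving} primitive that generalizes determinantal sieving to linear delta-matroids parameterized by rank, and then reduce each part of the theorem to a single invocation of this primitive on a suitably constructed generating polynomial. Concretely, I would show: given an arithmetic circuit computing a polynomial $P(x_1,\dots,x_n)$ over $\F$ together with a contraction representation $D=\bD(A)/T$ of a linear delta-matroid of rank $r$, there is a randomized procedure deciding, in time $O^*(2^r)$ in characteristic $2$ and $O^*(2^{O(r)})$ otherwise, whether $P$ contains a multilinear monomial whose support is feasible in $D$. Following the template of~\cite{EKW23}, the idea is to substitute each variable $x_v$ by a form built from the row of $A$ indexed by $v$, so that feasible monomials of $P$ survive the substitution via the Pfaffian identity $\Pf(A[F\cup T])\ne 0 \Leftrightarrow F\in\cF$, while infeasible ones cancel; a sieve of size $2^{|T|}$ over the auxiliary index set $T$ then isolates a feasible solution. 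Using the balanced contraction representations of~\cite{KW24}, one can arrange $|T|=r$ in characteristic $2$ and $|T|=O(r)$ otherwise, matching the bounds claimed in the theorem.

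Given this primitive, each item follows by choosing the right generating polynomial. For \textsc{$q$-Delta-matroid Intersection}, I would start from a Pfaffian polynomial of a compound skew-symmetric matrix whose multilinear monomials correspond exactly to common feasible sets of $D_1$ and $D_2$, as in the Geelen--Iwata--Murota framework~\cite{GeelenIM03}, and then apply the delta-matroid sieve iteratively for $D_3,\dots,D_q$; each step costs a factor $2^{r_i}$ or $2^{O(r_i)}$, yielding the stated $O^*(2^{(q-2)r})$ and $O^*(2^{O(qr)})$ bounds. \textsc{$q$-Delta-matroid Parity} and \textsc{Delta-matroid Set Packing} are handled by the elementary polynomial $\prod_{B\in\cP}\bigl(1+\prod_{v\in B}x_v\bigr)$, whose multilinear monomials are exactly the unions of blocks from $\cP$; a single sieve against $D$ of rank $r$ completes the argument, with the cardinality constraint $|F|=k$ enforced by the usual degree-selection on the polynomial.

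The matroid-constrained cases \textsc{DDM Intersection} and \textsc{D$\Pi$M Intersection} parameterized by $k$ follow via standard matroid truncation: the rank-$r$ matroid $M$ is replaced by its truncation to rank $k$, which admits an efficient randomized linear representation over the same field~\cite{Marx09-matroid,LokshtanovMPS18TALG}, reducing to the rank-parameterized sieve already established. Multiple matroid constraints $M_3,\dots,M_q$ are each truncated to rank $k$ and sieved in succession, giving the general $O^*(2^{(q-2)k})$ bound. The main obstacle, and where I expect the real work to lie, is the sieving primitive itself: feasibility in a linear delta-matroid is a Pfaffian non-vanishing condition on a skew-symmetric \emph{extension}, not a rank condition on the original representation, so it does not plug directly into the matroid sieve of~\cite{EKW23}. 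Controlling the interaction between variable substitution, twists, and the contraction set $T$ -- and in particular arguing that $|T|$ can be reduced to $O(r)$ rather than $\Theta(|V|)$ -- is the delicate step and constitutes the core new technical content; once this is in place, the remaining reductions are combinatorial bookkeeping.
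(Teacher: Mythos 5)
Your high-level plan matches the paper's proof: build a delta-matroid sieving primitive parameterized by rank, obtain an enumerating polynomial for the intersection $D_1\cap D_2$ from the Pfaffian of a compound skew-symmetric matrix, sieve against the remaining delta-matroids, and reduce the matroid-constrained cases by truncating matroids to rank $k$. Two small discrepancies are worth flagging.

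First, for \textsc{$q$-Delta-matroid Intersection} you propose to ``apply the delta-matroid sieve iteratively for $D_3,\dots,D_q$.'' The paper instead takes the direct sum $D'=D_3\uplus\dots\uplus D_q$, a linear delta-matroid of rank $(q-2)r$, substitutes $x_v\mapsto\prod_{i\geq 3}x_{v,i}$ in the enumerating polynomial, and applies the sieve \emph{once} against $D'$. Both give $O^*(2^{(q-2)r})$ over characteristic~2; iterating is fine in the black-box (char.~2) setting since each sieve step is an inclusion--exclusion sum of evaluations, but over general fields the sieve is an exterior-algebra circuit evaluation whose output is a bit rather than a new arithmetic circuit, so iterating is not literally well-defined there. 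The direct-sum formulation avoids this issue and is the version you should actually use.

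Second, you write that the $O^*(2^r)$ vs.\ $O^*(2^{O(r)})$ gap comes from being able to arrange $|T|=r$ in characteristic~2 but only $|T|=O(r)$ otherwise. This is not where the gap arises. The paper's sparse contraction representation always achieves $|T|=r$ regardless of characteristic. The running-time difference comes from the sieving procedure itself: over characteristic~2, the Pfaffian sum $\sum_{U\subseteq T,|U|=r-k}\Pf A[U]\prod_{t\in U}y_t$ can be extracted as the square root of a coefficient of the characteristic polynomial of $A[T]$, letting the whole sieve run in $O^*(2^r)$ time and polynomial space by black-box inclusion--exclusion; over general fields one falls back to extensor-coding / exterior algebra over $\Lambda(\F^{T\cup T'})$, which computes $\det A[T\cup V_m]=(\Pf A[T\cup V_m])^2$ directly and costs $O^*(2^{\omega r})$ time and exponential space. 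Controlling $|T|$ is not the bottleneck; the algebraic technique is. With these two corrections the argument is complete and agrees with the paper's proof.
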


Finally, we note a problem of \textsc{Colorful Delta-matroid Matching}
that will be used later in the paper. In this problem, we are
given a delta-matroid $D=(V,\cF)$, a graph $G=(V,E)$ 
with edges colored in $k$ colors, and an integer $k \in \N$.
The task is to find a feasible set $F \in \cF$, $|F|=2k$
such that $G[F]$ has a matching with $k$ distinct edge colors. 
This reduces to \textsc{D$\Pi$M Intersection}.

\begin{restatable}{theorem}{dmmatching} \label{corollary:colorful-matching}
  \textsc{Colorful Delta-matroid Matching} on linear delta-matroids
  can be solved in $O^*(2^k)$ time if the representation is over a
  field of characteristic~2 and in $O^*(2^{O(k)})$ time otherwise.
\end{restatable}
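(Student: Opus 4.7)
The plan is to reduce \textsc{Colorful Delta-matroid Matching} to \textsc{D$\Pi$M Intersection} and then invoke Theorem~\ref{tractable:rank}. Given an instance $(D=(V,\cF), G=(V,E), c:E\to[k], k)$, I build a D$\Pi$M instance on the ground set $V' := \{u_e : e\in E,\ u\in e\}$ of two elements per edge. The pairing delta-matroid is $D_2 = D_\Pi$ for $\Pi := \{\{u_e,v_e\} : e=\{u,v\}\in E\}$, so feasibility in $D_2$ amounts to selecting whole edges. The matroid $D_3$ is the partition matroid whose blocks are the color classes $B_c := \{u_e, v_e : c(e)=c\}$, each of rank~$2$; combined with $D_2$ this enforces at most one edge per color. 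The target cardinality for the intersection is $2k$.

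The third ingredient, the delta-matroid $D_1$, encodes the combined constraint ``the selected copies form a matching in $G$ whose vertex set lies in $\cF$''. Writing $V'_v := \{v_e : v\in e\}$ for the copies of $v$, I declare $F'\subseteq V'$ feasible in $D_1$ iff $|F'\cap V'_v|\le 1$ for every $v\in V$ and the projection $\pi(F') := \{v : F'\cap V'_v \ne \emptyset\}$ lies in $\cF$. The key technical step is to show that $D_1$ is a linear delta-matroid over the same field as $D$. Taking any contraction representation $D = \bD(A)/T$ with $A$ skew-symmetric on $V\cup T$, I form $A'$ indexed by $V'\cup T$ by replicating each row and column of $v\in V$ across the copies in $V'_v$, with $A'[v_e,v_{e'}]=A[v,v]=0$. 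Then $A'$ is skew-symmetric, and one checks that $D_1 = \bD(A')/T$: if $F'$ contains two copies of the same vertex, the principal submatrix $A'[F'\cup T]$ has two identical rows and is singular; otherwise it coincides up to relabeling with $A[\pi(F')\cup T]$ and is non-singular iff $\pi(F')\in\cF$.

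Equivalence is then routine to verify: a Colorful Matching solution with rainbow matching $M$ of size $k$ lifts to $F' := \bigcup_{e\in M}\{u_e,v_e\}$ of cardinality $2k$ feasible in $D_1\cap D_2\cap D_3$, and conversely any such $F'$ decomposes via $D_2$ into $k$ edges of $k$ distinct colors (forced by the rank-$2$ color blocks of $D_3$ together with $|F'|=2k$), which form a matching by the ``one copy per vertex'' condition in $D_1$, and whose vertex set $\pi(F')\in\cF$ by the second condition of $D_1$. Feeding the reduced instance into the D$\Pi$M Intersection bound of Theorem~\ref{tractable:rank} delivers the claimed running times.

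The main obstacle is realising the vertex-copy extension $D_1$ as a linear delta-matroid over the original field with an efficiently computable representation; the row-replication trick on a contraction representation of $D$ handles this cleanly, after which the rest of the reduction is bookkeeping.
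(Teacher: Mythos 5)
Your proposal is essentially the same as the paper's own proof: both constructions pass to the degree-one graph on edge-copies, realise the delta-matroid on the enlarged ground set by duplicating the rows and columns of $A$ indexed by $v$ across all copies $v_e$ (so that two copies in a set force singularity, otherwise the submatrix is a relabelling of $A[\pi(F')\cup T]$), take the pairing constraint $\Pi = E$, impose a partition matroid for the colour classes, and feed the result into the D$\Pi$M Intersection algorithm. The one divergence is in how the colour constraint is encoded as a matroid: the paper places one endpoint of each edge of colour $c$ into a capacity-$1$ class $S_c$ and declares the other endpoint free, whereas you place both endpoints into a capacity-$2$ class $B_c$; combined with $\Pi$ both enforce ``at most one edge per colour,'' so the reduction is sound either way. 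One minor point worth flagging: your matroid $D_3$ has rank exactly $2k$, so the black-box D$\Pi$M bound yields $O^*(2^{2k}) = O^*(4^k)$ over characteristic~$2$ rather than the stated $O^*(2^k)$; the paper's one-endpoint-plus-free-class encoding is the one that admits sieving over a rank-$k$ matroid on the $S_c$-endpoints only, though neither argument spells this refinement out explicitly, and both arguments do give the stated $O^*(2^{O(k)})$ in general.
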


\paragraph*{Sieving methods and a brief technical comment.}
Due to the amount of required preliminaries, we leave all technical
discussion to Section~\ref{sec:part1}, but we want to highlight 
the main technical contribution from this part of our paper. 
As mentioned, the main technical tool in the fastest FPT algorithms
for problems over linear matroids is \emph{determinantal sieving}.
Let $P(X)$ be a multivariate polynomial over a field $\F$, and assume
that $P(X)$ is given as black-box evaluation access with the ability
to also evaluate $P(X)$ over an extension field of $\F$ as needed.
The \emph{support} of a monomial $m$ in $P(X)$ is the set of variables
with non-zero degree in $m$. 
The main technical statement is then as follows.

\begin{theorem}[Basis sieving~\cite{EKW23}]
  \label{theorem:basis-sieve}
  Let $P(X)$ be a polynomial of degree $d$ over a set of variables $X$ over a field $\F$ of characteristic 2
  and let $M=(X,\cI)$ be a linear matroid of rank $k$ represented over $\F$. 
  There is a randomized algorithm that in time $O^*(d2^k)$ and
  polynomial space tests if the monomial expansion of $P(X)$ contains
  a term which is multilinear of degree $k$ and whose support is a basis of $M$. 
\end{theorem}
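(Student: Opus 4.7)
The plan is to combine the matrix representation of $M$ with a characteristic-$2$ determinantal identity that characterises bases, and to test non-vanishing of the resulting restricted polynomial by Schwartz--Zippel. Let $A$ be a $k \times n$ matrix representing $M$ over $\F$, with column $A_i$ labelled by variable $x_i$. By Cauchy--Binet,
\[
\det\!\bigl(A\,\mathrm{diag}(x_1,\ldots,x_n)\,A^T\bigr) \;=\; \sum_{\substack{S \subseteq X\\ |S|=k}} \det(A_S)^2 \prod_{i \in S} x_i,
\]
and over characteristic $2$ the coefficient $\det(A_S)^2$ is nonzero precisely when $S$ is a basis of $M$. Thus this ``matroid polynomial'' is multilinear of degree $k$ with support exactly the bases.

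To sieve $P$ against this polynomial, the clean algebraic realisation I would use is the exterior algebra $\Lambda(\F^k)$: substitute $x_i \mapsto x_i \cdot A_i$, interpreting $A_i$ as a degree-$1$ generator, and let $\widehat P$ denote the resulting polynomial in $x_1,\ldots,x_n$ with coefficients in $\Lambda(\F^k)$. Because $A_i \wedge A_i = 0$, only multilinear monomials of $P$ contribute; and a product of vectors $A_i$ for $i\in S$ with $|S|=k$ collapses to $\det(A_S) \cdot (e_1 \wedge \cdots \wedge e_k)$, which is nonzero iff $S$ is a basis. Hence the coefficient of the top extensor $e_1 \wedge \cdots \wedge e_k$ in $\widehat P$ is a polynomial whose nonvanishing is equivalent to $P$ containing a multilinear degree-$k$ monomial with basis support. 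Whether this polynomial is the zero polynomial can be tested with high probability by substituting uniformly random values for the $x_i$ from a sufficiently large extension of $\F$ and invoking Schwartz--Zippel, giving a one-sided randomised test that we then amplify.

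The main obstacle is the resource bound: naively carrying extensor values costs $2^k$ space per intermediate, which would blow the space budget. The plan here is to avoid materialising full elements of $\Lambda(\F^k)$ and instead to compute only a random scalar probe of the top component. After identifying $\Lambda(\F^k) \cong \bigoplus_{T \subseteq [k]} \F$, one picks a random linear functional $\phi$ on $\Lambda(\F^k)$ (describable with $O(k)$ random bits selecting a subset of $[k]$ together with an extension-field scalar) and evaluates $\phi(\widehat P)$ gate by gate in the evaluation of $P$, at each multiplication enumerating how the target extensor coordinate splits between the two factors. This turns the evaluation into a branching process whose work, amortised across the $2^k$ coordinates of the exterior algebra, is $O^*(d \cdot 2^k)$ time and $O(\mathrm{poly}(n,k))$ space. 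The characteristic-$2$ hypothesis is essential here: the Frobenius identity $(a+b)^2 = a^2 + b^2$ removes the sign obstructions that would otherwise cause destructive interference in the probing strategy, and it is what allows the random probes on different coordinates to be combined into a single polynomial-space traversal rather than a $2^k$-space tensor.
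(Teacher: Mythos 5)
The cited theorem is from~\cite{EKW23} and is not re-proved here, but the method is visible in the paper's proof of \Cref{theorem:sieving-char-2}. Your proposal deviates from that method in a way that does not actually deliver the claimed resource bounds.

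You reach for exterior algebra: substitute $x_i \mapsto x_i A_i$, and read off the coefficient of the top form $e_1 \wedge \cdots \wedge e_k$. This is precisely the \emph{general-characteristic} route the paper takes in Theorem~\ref{theorem:sieving-general}, and there it costs $O^*(2^{\omega r})$ time, \emph{exponential space}, and requires $P$ given as an \emph{arithmetic circuit}. Over a black box you cannot substitute extensors for scalars at all. Your fix is a ``random scalar probe $\phi$'' evaluated ``gate by gate'' with a branching at each product over how the target coordinate splits. This does not close the gap: a linear functional on $\Lambda(\F^k)$ is not multiplicative, so $\phi(a \wedge b)$ cannot be recovered from $\phi(a)$ and $\phi(b)$. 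To compute the $e_{[k]}$-coordinate of $a \wedge b$ you need all $\Theta(2^k)$ coordinates of $a$ and $b$, and recursing into the circuit either re-materialises those coordinates (back to $2^k$ space per gate) or redoes them on demand (cost exponential in circuit depth). The Frobenius identity $(a+b)^2=a^2+b^2$ does not rescue this; it is simply not where characteristic $2$ enters.

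The actual characteristic-$2$ argument is genuinely different and simpler. Let $A \in \F^{[k] \times X}$ represent $M$, introduce auxiliary variables $Y = \{y_1,\dots,y_k\}$, and set
\[
  P'(X, Y) \;=\; P\!\left(x_1 \sum_{t\in[k]} y_t A[t,1],\; \dots,\; x_n \sum_{t\in[k]} y_t A[t,n]\right).
\]
Restricting $P$ to its degree-$k$ homogeneous slice (via polynomial interpolation in an extra variable $z$; this is where the factor $d$ comes from), $P'$ is homogeneous of degree $k$ in $Y$, so the only monomial divisible by $\prod_t y_t$ is $\prod_t y_t$ itself. The coefficient of $\prod_t y_t$ in the contribution of a monomial $m$ of $P$ is $P(m)\cdot\sum_{\sigma} \prod_t A[t,\sigma(t)]$, a \emph{permanent-like} sum over assignments with multiplicities dictated by $m$; over characteristic $2$ this equals the determinant of the matrix with repeated columns, hence vanishes unless $m$ is multilinear, in which case it is $P(m)\det A[\cdot,\supp(m)]$, which is nonzero iff $\supp(m)$ is a basis. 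That is the role of characteristic $2$: it makes the permanent collapse to a determinant so exterior algebra is unnecessary. Finally the coefficient of $\prod_t y_t$ is extracted by the inclusion--exclusion of Lemma~\ref{lemma:inclusion-exclusion}: sum $P'$ over the $2^k$ zero-patterns of $Y$. Each summand is a single black-box call to $P$, giving $O^*(d 2^k)$ time, polynomial space, one-sided error by Schwartz--Zippel. Your Cauchy--Binet observation is correct but is an \emph{application} of the sieve (it is the paper's illustration for \textsc{3-Matroid Intersection}), not part of its proof.
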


For a quick illustration, let $A_1, A_2 \in \F^{k \times V}$ be two
linear matroids of rank $k$ over a ground set $V$. Recall that
by the Cauchy-Binet formula,
$\det A_1A_2^T=\sum_{B \in \binom{V}{k}} \det A_1[B] \det A_2[B]$.
By scaling the columns of $A_1$ by a set of variables, we can define
a polynomial $P(X)$ that ``enumerates'' (or, is a generating function for)
common bases of $A_1$ and $A_2$. Given a third linear matroid
represented by a matrix $A_3 \in \F^{k \times V}$, 
Theorem~\ref{theorem:basis-sieve} combined with the Schwartz-Zippel lemma
then immediately gives an $O^*(2^k)$-time
algorithm for \textsc{3-Matroid Intersection} for linear matroids
represented over a common field.

We show a generalization of Theorem~\ref{theorem:basis-sieve} to
linear delta-matroids. This generalization comes without any
additional cost to the running time (up to the polynomial factor)
compared to the above. For technical simplicity we assume that $P(X)$
is homogeneous and that the delta-matroid is given with a so-called
\emph{sparse representation}. Both of these assumptions can be made
without loss of generality; see Section~\ref{sec:dm-sieve}.

\begin{restatable}{theorem}{dmsieve} \label{theorem:sieving-char-2}
  Let $D=(V,\cF)$, $V=\{v_1,\ldots,v_n\}$ be a (projected) linear delta-matroid of rank $r$ given as a sparse representation $\bD(A) / T$,
  and let $P(X)$, $X=\{x_1,\ldots,x_n\}$ be a homogeneous polynomial of degree $k$
  given as black-box access over a field of characteristic 2 with at least $3k$ elements.
  In time $O^*(2^r)$ we can sieve for those terms of $P(X)$ which
  are multilinear and whose support $\{x_i \mid i \in I\}$
  in $X$ is such that $\{v_i \mid i \in I\}$ is feasible in $D$. 
\end{restatable}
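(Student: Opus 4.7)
The strategy is to generalize the determinantal sieving of Theorem~\ref{theorem:basis-sieve} by replacing the Cauchy--Binet expansion for matroid bases with a Pfaffian-based generating polynomial for delta-matroid feasibility. The sparse representation $D = \bD(A)/T$ tells us that $F \subseteq V$ is feasible iff the skew-symmetric submatrix $A[F \cup T]$ is non-singular, equivalently iff $\Pf(A[F \cup T]) \neq 0$; the parity condition that $|F \cup T|$ be even is enforceable by a single dummy element.

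First I build a generating polynomial for the feasible sets of size $k$. Applying the Ishikawa--Wakayama Pfaffian analogue of Cauchy--Binet,
\[
  \Pf(C^T A C) \;=\; \sum_{I \in \binom{[n']}{m}} \pm\, \Pf(A[I])\,\det(C[I, :]),
\]
I choose a rectangular matrix $C(X)$ of size $(|V|+|T|) \times (k+|T|)$ whose $|T|$ ``mandatory'' columns are the standard basis vectors of $T$ (forcing $T \subseteq I$), and whose remaining $k$ columns have entries proportional to $x_v$ weighted by a generic rank-$k$ Vandermonde matrix. A short calculation then gives $\det(C(X)[F \cup T, :]) = \pm \prod_{v \in F} x_v \cdot \det(V_F)$ for a $k \times k$ Vandermonde-type minor $V_F$; thanks to the at-least-$3k$-elements hypothesis, the Vandermonde can be chosen so that $\det(V_F) \neq 0$ for every $F$ of size $k$. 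Hence $\Phi(X) := \Pf(C(X)^T A C(X))$ is a homogeneous polynomial of degree $k$ in $X$ whose coefficient of $x^F$ equals $\pm \Pf(A[F \cup T]) \det(V_F)$, non-zero precisely when $F$ is feasible of size $k$ in $D$.

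Next I combine $\Phi(X)$ with $P(X)$ and invoke the sieving machinery underlying Theorem~\ref{theorem:basis-sieve}. The matroid sieve can be viewed as operating on the matroid-basis generating polynomial $\det(A\,\mathrm{diag}(X)\,A^T) = \sum_B x^B (\det A[:,B])^2$ obtained via Cauchy--Binet; replacing that polynomial by $\Phi(X)$ and re-running the argument sieves $P(X)$ for multilinear monomials of degree $k$ whose support is feasible in $D$. Concretely, one evaluates $P(X)$ in an exterior-algebra-like module built from the columns of $C(X)$ and extracts the relevant scalar, which is non-zero iff a feasible multilinear monomial of $P$ exists and can be certified by Schwartz--Zippel over a field extension. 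The $O^*(2^r)$ bound arises because the sparse representation may be taken with $|T| = O(r)$, so that the Pfaffian evaluation factors through a witness space of dimension $2^{O(r)}$.

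The main obstacle will be ensuring that the substitution of the Pfaffian generating polynomial meshes cleanly with the characteristic-$2$ multilinearity isolation that drives the $2^r$-time evaluation of Theorem~\ref{theorem:basis-sieve}. Unlike determinants, Pfaffians do not factor through a squared Cauchy--Binet that matches the Frobenius trick of characteristic $2$; we instead lean on the \emph{linear} Ishikawa--Wakayama expansion above, where the witness factor $\det(V_F)$ must be simultaneously non-zero for every feasible $F$ of size $k$. This is precisely where the at-least-$3k$-elements assumption enters, via the Vandermonde-type construction. Handling the parity of $|F \cup T|$ by a dummy element, and extending to projected linear delta-matroids by guessing the behaviour of the projected coordinate, complete the plan.
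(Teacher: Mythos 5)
The proposal contains a genuine gap: it never specifies a concrete sieving procedure, and the construction it does provide solves a different problem. Building a generating polynomial $\Phi(X)$ whose monomials enumerate the feasible $k$-sets of $D$ (via Ishikawa--Wakayama with Vandermonde-weighted columns) is the kind of thing one does to \emph{produce} an input for a sieve, not to sieve a given black-box $P(X)$. To sieve $P$ you must construct, from $P$'s black-box evaluations, a new evaluable polynomial $Q$ with $Q(m) = P(m)\cdot\Pf A[T\cup V_m]$ for every multilinear $m$; ``combining $\Phi$ with $P$ and invoking the machinery of Theorem~\ref{theorem:basis-sieve}'' does not describe how to form such a $Q$, and the determinantal sieve of Theorem~\ref{theorem:basis-sieve} requires a matrix whose columns can be folded into a substitution $x_i \mapsto x_i\sum_j y_j a_{ji}$, not an arbitrary homogeneous polynomial like $\Phi$.

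You do correctly flag the characteristic-$2$ issue as the main obstacle, but the resolution you reach for (the linear Ishikawa--Wakayama expansion, with the Vandermonde factor justifying the $3k$-element hypothesis) does not address it. The actual missing ideas are: (i) the decomposition $\Pf A[T\cup F] = \sum_{W\subseteq T,\,|W|=k}\sigma_W\,\Pf A[T\setminus W]\cdot\det A[W,F]$ that splits the Pfaffian into a ``$T$-only'' Pfaffian and a cross-determinant between $T$ and $V$; (ii) the observation that the full generating polynomial $\sum_{U\subseteq T,\,|U|=r-k}\Pf A[U]\prod_{t\in U}y_t$ is, over characteristic~$2$, the square root of the coefficient of $z^k$ in the characteristic polynomial $\det(A_T+zI)$ (where $A_T$ has rows and columns scaled by the $y_t$'s), so that this degree-$(r-k)$ polynomial in $Y$ can be evaluated in polynomial time per call; and (iii) substituting $x_i\mapsto x_i\sum_{t\in T}y_t A[t,v_i]$ into $P$ to obtain $P_V$, multiplying by the $Y$-polynomial, and isolating the $Y$-multilinear coefficient of $\prod_{t\in T}y_t$ by inclusion-exclusion over $T$ (which is where the $O^*(2^r)$ and black-box-with-polynomial-space guarantee come from). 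Also note that the ``at least $3k$ elements'' hypothesis is for the Schwartz--Zippel error bound, not for Vandermonde non-vanishing. As written, your plan could at best yield the less efficient, exponential-space variant over general fields via exterior algebra, not the stated $O^*(2^r)$ black-box result.
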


Theorems~\ref{theorem:basis-sieve} and~\ref{theorem:sieving-char-2}
also have variants over fields of characteristic other than 2, except that
except (1) the process uses exponential space, (2) the process works
over an arithmetic circuit, not via black-box access, and (3) with
a higher running time of $O^*(2^{O(r)})$;
see Section~\ref{sec:dm-sieve}.

Most of the algorithmic results in Theorem~\ref{tractable:rank} work
by applying Theorem~\ref{theorem:sieving-char-2} to a suitably
constructed polynomial $P(X)$ generalizing the Cauchy-Binet example
into linear delta-matroids. In fact, a good working summary of our
upper and lower bounds for packing and intersection problem is the
following: \emph{If a problem can be solved using Theorem~\ref{theorem:sieving-char-2} 
  then it is FPT; otherwise W[1]-hard.}
Thus, for example, \textsc{DDD Intersection} is hard with parameter $k$
since there is no guarantee that either of the three delta-matroids
$D_1$, $D_2$, $D_3$ has rank bounded by $k$, but it is FPT under the
parameter $r=\rank(D_3)$; and (e.g.) \textsc{DDM Intersection}
is FPT since the matroid layer can be truncated to rank $k$.
See Section~\ref{sec:part1} for details.

However, the second set of results in our paper is a counterpoint to
this, presenting a set of surprising algorithms that cannot be
explained purely algebraically. 

\subsection{Delta-matroid triangle cover with applications}
Inspired by applications in graph theory (see below) we study a
relaxation of \textsc{3-Delta-matroid Parity} which,
surprisingly, turns out to have an FPT algorithm.

Let $D=(V,\cF)$ be a linear delta-matroid and let
$\cT=\{T_1,\ldots,T_m\} \subseteq \binom{V}{3}$ be a set of
triples over $V$ (which we refer to as triangles).
We say that a set $S \subseteq V$ can be \emph{covered} by triangles
if there is a disjoint packing $\cT' \subseteq \cT$
such that $S \subseteq V(\cT')$. Note that we do not require
that $S$ is precisely partitioned by $\cT'$, i.e., we allow for
triangles $T_i \in \cT'$ such that $|T_i \cap S| \in \{1,2\}$.
Let $k \in \N$. The \textsc{Delta-matroid Triangle Cover} problem
asks to find a feasible set $F \in \cF$ with $|F|=k$ such that
$F$ can be covered by triangles. 

\begin{restatable}{theorem}{dmtc}
  \label{theorem:dmtc-fpt}
  \textsc{Delta-matroid Triangle Cover} can be solved in $O^*(k^{\Oh(k)})$ time
  over a linear delta-matroid.
\end{restatable}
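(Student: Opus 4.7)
The plan is a color-coding reduction followed by representative-set pruning and brute-force enumeration.

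First, I would randomly color the ground set $V$ with $3k$ colors (or derandomize via an $(n,3k)$-perfect hash family). An optimal solution uses at most $k$ pairwise disjoint triangles, spanning at most $3k$ vertices in total; these vertices receive pairwise distinct colors with probability at least $(3k)!/(3k)^{3k}=e^{-\Theta(k)}$, so the procedure succeeds with high probability after $e^{\Oh(k)}$ repetitions. I would then discard every triangle whose three vertices are not distinctly colored; for each remaining triangle $T$ I record its color triple $c(T)\in\binom{[3k]}{3}$, and set $\cT_C:=\{T\in\cT:c(T)=C\}$.

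Next, I would enumerate every \emph{color pattern} -- a collection $\Pi=\{C_1,\ldots,C_\ell\}$ of pairwise disjoint $3$-element subsets of $[3k]$ -- together with a \emph{shape profile} $(\sigma_1,\ldots,\sigma_\ell)$ specifying, for each slot $i$, which colors of $C_i$ will be used, subject to $\sum_i|\sigma_i|=k$. The number of such $(\Pi,\sigma)$ pairs is at most $k^{\Oh(k)}\cdot 8^k=k^{\Oh(k)}$. Under a good coloring, an optimal solution corresponds to some $(\Pi,\sigma)$ and, for each slot, a choice of triangle $T_i\in\cT_{C_i}$ contributing the vertex set $T_i^{\sigma_i}$ (the $|\sigma_i|$ vertices of $T_i$ whose colors lie in $\sigma_i$) to $F$. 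The remaining restricted problem is to decide whether a choice of $T_i$'s exists with $\bigcup_i T_i^{\sigma_i}\in\cF$.

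For each fixed $(\Pi,\sigma)$, I would apply the representative sets theorem for linear delta-matroids (Wahlstr\"om, 2024) slot by slot. Processing slots one by one, each slot $i$ introduces the family $\{T_i^{\sigma_i}:T_i\in\cT_{C_i}\}$ of candidate contributions of size $|\sigma_i|\le 3$; I would take a representative subfamily of size $k^{\Oh(1)}$ preserving extendability of the partial $F$ accumulated so far to a size-$k$ feasible set in $D$. Brute-forcing over the representatives then yields at most $k^{\Oh(1)}$ choices per slot and $k^{\Oh(k)}$ choices overall, each testable for feasibility in $D$ in polynomial time. Combined with the $k^{\Oh(k)}$ pattern/profile pairs, the total running time is $O^*(k^{\Oh(k)})$.

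The main obstacle is ensuring that the per-slot representative sets can be taken compatibly with joint feasibility in $D$; this requires a careful delta-matroid-theoretic argument, most naturally via an iterative extension-and-pruning step (analogous to the classical matroid argument) or via an augmented projected linear delta-matroid that tracks the accumulated partial $F$ across slots, with at most a further blow-up that is absorbed into the $k^{\Oh(k)}$ bound.
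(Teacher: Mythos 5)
Your color-coding setup and the enumeration of color patterns and shape profiles $(\Pi,\sigma)$ is a reasonable opening, but the representative-sets step that carries the rest of the plan does not go through, and the obstacle you flag at the end is fatal rather than a technicality to be filled in. The representative-sets statement for linear delta-matroids (Wahlstr\"om, SODA 2024) bounds the size of a representative subfamily in terms of the \emph{rank} of the delta-matroid, not the target cardinality $k$; and, unlike for matroids, there is no truncation operation for delta-matroids that would let you first pass to a rank-$O(k)$ structure. (This is precisely why the paper introduces $\ell$-projection as a partial substitute; but $\ell$-projection is not a truncation and does not reduce the rank of $D$.) Consequently no $k^{O(1)}$-size family ``preserving extendability of the accumulated partial $F$ to a size-$k$ feasible set in $D$'' exists in general when $D$ has unbounded rank. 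A concrete obstruction: in the pairing delta-matroid on $m$ pairs, the $m$ singletons $\{u_1\},\dots,\{u_m\}$ (one from each pair) are pairwise non-representative for the relation ``extendable to a size-$2$ feasible set,'' since $\{u_i\}\cup\{v_j\}$ is feasible only when $i=j$; so a representative family must have size $\Omega(m)$, not $k^{O(1)}$. Your fallback of an ``augmented projected delta-matroid tracking the accumulated $F$'' does not help either, since projection likewise does not bound the rank. This is consistent with the paper's own hardness picture: when every slot contributes $|\sigma_i|=3$, your per-$(\Pi,\sigma)$ subproblem becomes ``pick one triple per color class so that their union is feasible in $D$,'' closely related to \textsc{3-Delta-matroid Parity}, which the paper proves W[1]-hard in $k$; the tractability of \textsc{Delta-matroid Triangle Cover} rests on exploiting the slack $1\le|T_i\cap F|\le 3$, which your decomposition records but never uses.

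The paper's route is qualitatively different. Rather than pruning candidate triangles by representativeness in $D$ (which, as above, cannot be bounded by $k$), it first establishes a sunflower-style exchange argument (Lemma~\ref{lemma:completable-packing}): a vertex that is ``safe'' --- e.g.\ a \emph{flowery} vertex lying at the center of sufficiently many otherwise-disjoint triangles in its color class, or a vertex whose covering triangle is forced --- can always be covered after the fact, regardless of what the rest of the packing looks like. It then removes the delta-matroid feasibility obligation on such vertices via the new $\ell$-projection operation and, for the non-flowery vertices, guesses a constant-size hitter $s(v)$ of the triangles through $v$, which collapses each surviving triangle to a colored \emph{pair}. The residual problem is \textsc{Colorful Delta-matroid Matching}, solved via \textsc{D$\Pi$M Intersection}, where the determinantal sieving runs against the rank-$k$ \emph{matroid} arising from the color constraint (which can be truncated in the ordinary way), not against $D$. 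That bounded-rank matroid is exactly the object your plan is missing.
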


Let us make a few observations. On the one hand, this problem generalizes
\textsc{Colorful Delta-matroid Matching}:
Let $(D=(V,\cF), G=(V,E), k)$ be an instance of the latter.
Introduce $k$ new vertices $z_1$, \ldots, $z_k$, and for
every edge $uv \in E$ of color $i$ create a triangle $\{u,v,z_i\}$.
Thus in particular \textsc{Delta-matroid Triangle Cover}
generalizes \textsc{Delta-matroid Parity}. 
On the other hand, the most obvious structural simplification
of \textsc{Delta-matroid Triangle Cover} would seem to be to assume 
that we are looking for a set $F \in \cF$ that is precisely
partitioned by triangles. However, this defines
\textsc{3-Delta-matroid Parity}, which is W[1]-hard.
Hence \textsc{Delta-matroid Triangle Cover} inhabits
an interesting spot of being ``just barely'' tractable.
Indeed, by the same reduction as above, the extended version
\textsc{Delta-matroid $K_4$ Cover}, where $F$ is to be covered
by sets of cardinality 4, generalizes
\textsc{3-Delta-matroid Parity} and is W[1]-hard.

Our algorithm for \textsc{Delta-matroid Triangle Cover} seamlessly combines combinatorial and algebraic techniques,
ultimately reducing the problem to \textsc{Colorful Delta-matroid Matching}.  
On the algebraic side, we introduce a new delta-matroid operation, termed \emph{$\ell$-projection},
which generalizes the classical matroid truncation operation.  
We believe that this operation is of independent interest, particularly for kernelization \cite{Wahlstrom24SODA}.
On the combinatorial side, we use arguments inspired by the sunflower lemma to effectively reduce triangles to pairs.  
For details, see Section~\ref{sec:triangle-cover}.

We give two graph-theoretic applications of this problem.
The first is to  \textsc{Cluster Subgraph}: given a graph $G$ and an integer $\ell$, the problem is to find a cluster subgraph (i.e., a graph whose connected components are cliques) with at least $\ell$ edges.
This problem is NP-hard \cite{ShamirST04}, but can be solved in $O^*(9^\ell)$ time \cite{GruttemeierK20}.
We consider a tighter \emph{above guarantee} parameterization of $\ell-\MM(G)$ where $\MM(G)$ is the matching number of $G$.
Note that a matching is a cluster subgraph. 
The parameterized complexity with respect to this parameterization is marked as an open question in a recent survey~\cite{abs-2001-06867}.

\begin{restatable}{theorem}{csam} \label{theorem:cs-am}
  \textsc{Cluster Subgraph} can be solved in $O^*(k^{O(k)})$ time for $k = \ell - \MM(G)$.
\end{restatable}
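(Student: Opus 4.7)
The plan is to reduce \textsc{Cluster Subgraph} above matching to \textsc{Delta-matroid Triangle Cover} (DMTC) and then invoke Theorem~\ref{theorem:dmtc-fpt}.

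First, I would prove a structural lemma: if $H$ is a cluster subgraph of $G$ with components (cliques) $C_1,\dots,C_t$ and $|E(H)|\ge \MM(G)+k$, then $V^*:=\bigcup_{|C_i|\ge 3} C_i$ satisfies $|V^*|\le 3k/2$. The reason is that the matching $\bigcup_i M_i$, with $M_i$ a maximum matching inside $C_i$, has $\sum_i\lfloor|C_i|/2\rfloor$ edges and is still a matching of $G$, so $\sum_i\lfloor|C_i|/2\rfloor\le \MM(G)$. Consequently $\sum_i(\binom{|C_i|}{2}-\lfloor|C_i|/2\rfloor)\ge k$, and every term with $|C_i|\ge 3$ contributes at least $\tfrac{2}{3}|C_i|$ to the sum, yielding the claimed bound on $|V^*|$. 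Observe that, once $V^*$ is fixed, the remaining vertices $V(G)\setminus V^*$ should be filled by a maximum matching, contributing $\MM(G-V^*)$ edges.

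Second, I would reduce to DMTC. The task becomes: find $V^*\subseteq V(G)$ with $|V^*|\le 3k/2$ together with a partition of $V^*$ into cliques of $G$ of size $\ge 3$ maximising $\sum_i\binom{|C_i|}{2}+\MM(G-V^*)$. Since a single triangle can only encode a $K_3$ directly, I would first guess the clique-size profile $(s_1,\dots,s_t)$ of the target solution; because $\sum s_i\le 3k/2$, there are only $p(3k/2)=2^{O(\sqrt k)}$ such profiles, where $p$ is the integer partition function. For each profile, I would construct a DMTC instance $(D,\cT,k')$ on $V(G)$ augmented by $O(k)$ auxiliary vertices: $D$ is a twist of the matching delta-matroid of $G$, so that feasibility of a set $F$ certifies that $G\setminus F$ has a matching of the required size (the filler for the remainder of the cluster subgraph); and $\cT$ contains the triangles of $G$ together with gadget triangles through the auxiliary vertices, engineered so that covering $V^*$ by $\cT$ forces $V^*$ to split into cliques of $G$ of the prescribed sizes.

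Applying Theorem~\ref{theorem:dmtc-fpt} to each of the $2^{O(\sqrt k)}$ DMTC calls costs $O^*(k^{O(k)})$ time, and the product $2^{O(\sqrt k)}\cdot k^{O(k)}$ is still $k^{O(k)}$, matching the claimed bound. The main technical obstacle will be designing the gadget for encoding cliques of size $\ge 4$: a chain of triangles sharing auxiliary vertices can express such cliques, but one must arrange the auxiliary vertices and gadget triangles so that the resulting construction remains a linear delta-matroid and interfaces correctly with the matching-delta-matroid part controlling $G-V^*$, so that a feasible DMTC solution genuinely decodes into a valid clique partition of $V^*$ in $G$ of the guessed profile.
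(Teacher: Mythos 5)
Your structural lemma is false, and the error is a sign/direction flip. You correctly derive $\sum_i\bigl(\tbinom{|C_i|}{2}-\lfloor|C_i|/2\rfloor\bigr)\ge k$ and that every term with $|C_i|\ge 3$ is at least $\tfrac{2}{3}|C_i|$, but both of these are \emph{lower} bounds on the same sum; neither produces an \emph{upper} bound on $|V^*|$. Concretely, $|V^*|$ is not bounded in $k$ at all. Take $G$ to be $m$ disjoint triangles $\{a_i,b_i,c_i\}$ together with a perfect matching on the $c_i$'s (so $m$ even). Then $\MM(G)=3m/2$, the best cluster subgraph uses all $m$ triangles with $\ell=3m$ edges, hence $k=3m/2$, while \emph{every} cluster subgraph with $\ge\ell$ edges has $|V^*|=3m=2k>3k/2$. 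Since $|V^*|$ (i.e.\ the total vertex count of the triangle packing) is not $O(k)$, both your profile-enumeration step and your DMTC encoding, which rely on $|V^*|\le 3k/2$, collapse.

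The paper's proof never bounds the size of the triangle packing. Lemma~\ref{lemma:ce-feasible-cover} shows the problem (for $K_4$-free $G$ with a perfect matching) is equivalent to finding a triangle packing $\cT$ of \emph{arbitrary size} covering a feasible set $F\subseteq V(\cT)$ of size exactly $2k$ in the dual matching delta-matroid $D$; only $|F|$ is parameterized, which is precisely the regime \textsc{Delta-matroid Triangle Cover} is built for. Cliques of size $\ge 4$ are also handled very differently: a greedy $K_4$-packing either finds $k/2$ disjoint $K_4$'s (immediate yes), or yields a modulator $K$ of size $\le 2k$ with $G-K$ $K_4$-free, and the solution's interaction with $K$ is guessed exhaustively (Lemma~\ref{lemma:lifting}). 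This avoids the clique-gadget you sketch, which is fortunate: encoding arbitrary-size cliques via shared auxiliary vertices is essentially a delta-matroid $K_c$-cover condition, and the paper proves \textsc{Delta-matroid $K_4$ Cover} is already W[1]-hard.
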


To see the connection to \textsc{Delta-matroid Triangle Cover},
consider the case that $G$ is $K_4$-free, and assume for simplicity
that $G$ has a perfect matching. Then the goal is to find a
packing of $K_2$'s and $K_3$'s with as many edges as possible, i.e.,
a collection of $K_3$'s whose deletion leaves a graph with a large
matching number.\footnote{
  We remark that the problem becomes polynomial-time solvable
  when the goal is to maximize the number of covered vertices \cite{HellK84}.
  However, the edge maximization problem is NP-hard as it generalizes the perfect triangle packing problem.
} We show that this reduces to solving
\textsc{Delta-matroid Triangle Cover} over the dual matching
delta-matroid of $G$. On the other hand, if $G$ contains a large
packing of $K_4$'s, then the instance is immediately positive. 
Thus, we may assume that $G$ has a set $K$ of $O(k)$ vertices
such that $G-K$ is $K_4$-free, and Theorem~\ref{theorem:cs-am}
follows by guessing its interaction with the solution. 

\begin{figure}
  \centering
  \begin{tikzpicture}
    \def\radius{1.2cm}
    \def\numvertices{7}
    \foreach \x in {1,...,\numvertices} {
      \pgfmathsetmacro{\angle}{360/\numvertices * (\x - 1)}
      \node[circle,fill,inner sep=2pt] (vertex\x) at (\angle:\radius) {};
    }
    \foreach \x in {1,...,\numvertices} {
      \pgfmathsetmacro{\angle}{360/\numvertices * (\x - 1)}
      \draw[very thick] (\angle:\radius) node[circle,fill,inner sep=2pt] {} -- (\angle + 360/\numvertices:\radius);
    }
    \foreach \x in {1,...,\numvertices} {
      \pgfmathtruncatemacro{\target}{mod(\x+1, \numvertices)+1}
      \pgfmathtruncatemacro{\source}{\x}
        \draw (vertex\source) to[bend left] (vertex\target);
    }
    \foreach \x in {1,...,\numvertices} {
      \pgfmathsetmacro{\angle}{360/\numvertices * (\x - 1)}
      \node at (\angle:1.5cm) {\x};
    }
  \end{tikzpicture}
  \caption{A maximum strong set (thick lines) which does not induce a cluster graph.}
  \label{fig:strong-cycle}
\end{figure}
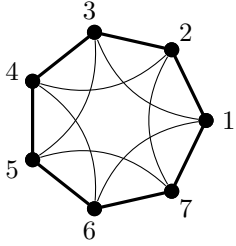

The second application is to \textsc{Strong Triadic Closure}, formalized by Sintos \cite{SintosT14}.
Let $G$ be a graph. We say that a 3-path $(u, v, w)$ is \emph{closed} if the edge $uw$ exists in $G$.
For a graph $G$, an edge set $S$ is said to be \emph{strong} if every 3-path in $S$ is closed. 
Note that the shortcut edge $uw$ is not required to exist in $S$, only in $G$.
Given $G=(V,E)$ and $\ell$, the \textsc{Strong Triadic Closure}
problem is to find a strong edge set $S \subseteq E$ with $|S| \geq \ell$. 
This is based on the notion of triadic closure \cite{granovetter1973strength}.
This problem is closely related to \textsc{Cluster Subgraph} as an edge set that induces a cluster graph is strong.
However, not every strong set induces a cluster subgraph (see Figure~\ref{fig:strong-cycle}).
It is known to be NP-hard~\cite{SintosT14} and FPT when parameterized by $\ell$~\cite{GolovachHKLP20,GruttemeierK20}
with the fastest algorithm running in $O^*(\ell^{O(\ell)})$ time~\cite{GruttemeierK20}.
Again, we consider the \emph{above matching} parameterization. 

Again, the $K_4$-free case turns out to be tractable: if $G$ is $K_4$-free, then the
solution $S$ must be a collection of strong paths and cycles, and
in fact it can be shown that there exists an optimum where $S$ consists
only of strong cycles and $K_2$'s. Similarly to above, we solve the
problem by looking for a feasible set $F$ in the dual matching
delta-matroid of $G$ such that $F$ can be covered by a collection of
strong cycles, possibly of length $\Omega(k)$. 
However, this is significantly more complicated -- as mentioned before, the \textsc{Delta-matroid $K_c$ Cover} is W[1]-hard for any $c \ge 4$.
Yet, we still manage to give a reduction to \textsc{Colorful Delta-matroid Matching}.
To that end, we identify several cases where a strong cycle can be replaced by triangles.
We then argue that when this triangle replacement fails, pairing constraints (which can be encoded in \textsc{Colorful Delta-matroid Matching}) are powerful enough to solve the problem.

\begin{restatable}{theorem}{tcamfpt} \label{theorem:tc-am-fpt}
  \textsc{Strong Triadic Closure} on $K_4$-free graphs can be solved in $O^*(k^{O(k)})$ time for $k = \ell - \MM(G)$.
\end{restatable}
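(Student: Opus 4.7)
The plan is to reduce \textsc{Strong Triadic Closure} on $K_4$-free graphs to a combination of \textsc{Delta-matroid Triangle Cover} (Theorem~\ref{theorem:dmtc-fpt}) and \textsc{Colorful Delta-matroid Matching} (Theorem~\ref{corollary:colorful-matching}) over the dual matching delta-matroid of $G$. First, I would establish a structural description of strong edge sets. In a $K_4$-free graph, every component of a strong set $S$ is a $K_2$, a triangle, a strong path, or a strong cycle, and the closure chords force that a strong cycle of length $4$ or $5$ would complete a $K_4$ in $G$; hence every strong cycle has length $3$ or at least $6$. A local exchange argument, replacing any strong path by an appropriate mix of $K_2$'s and at most one shorter strong component, yields an optimum solution $S^{*}$ whose components are only $K_2$'s, triangles, and strong cycles of length $\geq 6$.

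Next, I would cast the problem in delta-matroid form. Let $D$ be the dual matching delta-matroid of $G$ (feasible sets are $F \subseteq V$ with $G-F$ admitting a perfect matching), represented via the Tutte matrix of $G$, with parity handled by an elementary projection. Letting $F$ be the union of the non-$K_2$ components of $S^{*}$, we may assume that the $K_2$ components form a maximum matching of $G-F$, giving $|S^{*}| = |F|+\MM(G-F)$. A direct arithmetic comparison with $\MM(G)+k$ shows that finding a strong set of at least $\MM(G)+k$ edges is equivalent to finding a feasible $F \in \mathcal{F}(D)$ of size $O(k)$ that admits a disjoint cover by triangles and strong cycles of length $\geq 6$.

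The main obstacle, and the heart of the proof, is dealing with strong cycles of unbounded length: the analogous \textsc{Delta-matroid $K_c$ Cover} is W[1]-hard for $c \geq 4$, so long strong cycles cannot be handled by a direct bounded-size generalization of \textsc{Delta-matroid Triangle Cover}. I would exploit the rigid $C_{c}^{2}$-type structure that a long strong cycle $C = v_1 v_2 \dots v_c$ imposes on $G[V(C)]$: because every second-neighbor chord $v_i v_{i+2}$ lies in $G$, $V(C)$ is paved by triangles $\{v_i,v_{i+1},v_{i+2}\}$, and a case analysis on $|C|\bmod 3$ shows that $C$ can in most cases be replaced by such a triangle packing (plus possibly one isolated matching pair) without changing the contribution to either $|F|$ or the edge count of $S^{*}$. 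The residual cases where this triangle replacement fails are rigid enough to be captured by $O(k)$ pairing constraints of the form ``these two vertices must be covered together by the same strong-cycle component,'' which are precisely what colour classes and pairings encode in \textsc{Colorful Delta-matroid Matching}. After branching over the $k^{O(k)}$ possible interaction patterns between the solution and this bounded obstruction set, each branch reduces either to a \textsc{Delta-matroid Triangle Cover} or to a \textsc{Colorful Delta-matroid Matching} instance on $D$, both solvable in $O^{*}(k^{O(k)})$ time by Theorems~\ref{theorem:dmtc-fpt} and~\ref{corollary:colorful-matching}, yielding the claimed bound.
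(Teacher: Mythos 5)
Your high-level blueprint matches the paper: express strong sets via the dual matching delta-matroid, restrict attention to triangle-maximal strong cycle packings, analyse cycle lengths modulo $3$, and reduce to \textsc{Colorful Delta-matroid Matching}. The first two steps are essentially Lemma~\ref{lemma:tc-feasible-cover} and Lemma~\ref{lemma:tri-maximal} in the paper, and you correctly identify that a literal generalization of \textsc{Delta-matroid Triangle Cover} to long cycles is blocked by the W[1]-hardness of \textsc{Delta-matroid $K_c$ Cover}.

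The genuine gap is in the sentence ``the residual cases \dots are rigid enough to be captured by $O(k)$ pairing constraints.'' This is exactly where the proof gets hard, and you give no mechanism for deriving those constraints or proving they suffice. Two ingredients are missing and are not just technicalities. First, you never introduce the flowery / non-flowery dichotomy, which is the engine that makes the reduction go: flowery vertices are eliminated by $\ell$-contraction (they can always be re-covered by a fresh triangle, as in \textsc{Delta-matroid Triangle Cover}), while each non-flowery vertex admits a vertex cover of constant size on its incident triangles/cycle-successors, which is what lets one guess, for each such vertex $v$, a ``pointer'' $s(v)$ and build the $s$-graph that underpins the pairings (Lemmas~\ref{lemma:small-vc}, \ref{lemma:vc}, \ref{lemma:2mod3:find-cycles}). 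Without this you have no bounded branching object to hang the color-coding and pairing guesses on. Second, even granting the $s$-graph, it is false that ``one vertex per block'' can always be covered by a strong cycle packing: the paper gives an explicit obstruction (Figure~\ref{fig:2mod3:flowery}) where such a selection cannot be covered, and the whole point of Lemma~\ref{lemma:2mod3:ultimate} (with its case analysis on where the solution cycle ``splits'' across safe vertices, and the careful placement of pairing constraints on the two vertices after each safe vertex) is to rule out exactly these bad configurations. Your proposal would need to discover both the constraint schema of Lemma~\ref{lemma:2mod3:ultimate} and the structural lemmas feeding into it (no two consecutive safe vertices, the ``exactly one path of length $1 \bmod 3$'' property, the positions of flowery vertices mod $3$); as written it asserts the conclusion of that analysis rather than proving it. You also omit the treatment of graphs without a perfect matching (handled in the paper by pendant padding plus $\beta$-contraction), though that is a smaller issue.
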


Furthermore, as above, a $K_4$ in $G$ implies an immediate contribution to $\ell-\MM(G)$,
and we can find a small modulator $K$ such that $G-K$ is $K_4$-free.
Unfortunately, despite our efforts, achieving a full FPT algorithm remained elusive -- in fact, guided by the very precise 
structure of solutions where our algorithmic insights prove ineffective, we
show that the problem is W[1]-hard in general.
Intriguingly, the source of hardness contradicts our initial beliefs, stemming not from strong cycles but from an unexpectedly different aspect.

\begin{restatable}{theorem}{tcamhard}
  \textsc{Strong Triadic Closure} is W[1]-hard when parameterized by $k = \ell - \MM(G)$.
\end{restatable}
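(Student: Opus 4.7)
The plan is to reduce from a W[1]-hard problem --- most naturally \textsc{Multicolored Clique} parameterized by the number of color classes $k$ --- to \textsc{Strong Triadic Closure} parameterized above the matching number, in such a way that the source parameter $k$ controls precisely how much the target strong set can exceed $\MM(G)$. Since the hint is that the source of hardness is \emph{not} strong cycles (which we already handled in Theorem~\ref{theorem:tc-am-fpt} on $K_4$-free graphs), the gadget should exploit the presence of $K_4$'s (or, more generally, small cliques), which offer a strong set of $\binom{c}{2}$ edges versus only $\lfloor c/2 \rfloor$ matching edges on $c$ vertices.

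Concretely, given an instance $(H, V_1, \ldots, V_k)$ of \textsc{Multicolored Clique}, I would construct $G$ from three ingredients. First, a large pendant matching $M_0$ attached to the skeleton, ensuring $\MM(G)$ is both large and exactly computable. Second, for each color class $V_i$, a \emph{selector gadget} producing a matched pair for each vertex choice and encoding the chosen vertex through which internal edges are ``sacrificed.'' Third, for each pair of colors $(i,j)$, an \emph{edge gadget} containing a small clique (on $O(1)$ vertices, with a $K_4$ at its core) whose activation yields a strong-set surplus of $\Theta(1)$ edges over its matching contribution, but only when the selector gadgets for $V_i$ and $V_j$ agree on vertices $v_i v_j \in E(H)$. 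The number of simultaneously activatable edge gadgets is then $\binom{k}{2}$, and the total surplus sums to $g(k) = \Theta(k^2)$. Setting $\ell = \MM(G) + g(k)$ gives $\ell - \MM(G) = O(k^2)$, hence a parameter bounded by a function of $k$.

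The correctness argument splits into two directions. The easy direction: given a colorful clique $\{v_1, \ldots, v_k\} \subseteq V(H)$, I exhibit a strong set of size exactly $\MM(G) + g(k)$ by using the pendant matching $M_0$, the matching choices in each selector corresponding to $v_i$, and activating all $\binom{k}{2}$ edge gadgets. I then verify that all 3-paths in the constructed set are closed in $G$, which requires the gadget's internal adjacencies to have been designed to match closures exactly with agreement across colors. For the harder direction, I start with any strong set $S$ of size $\geq \MM(G) + g(k)$ and argue structurally that $S$ must ``pay'' its surplus through the clique cores of the edge gadgets --- no alternative structure (isolated strong cycles, strong paths bridging gadgets, or cross-gadget triangles) can match that surplus because the pendant matching $M_0$ is saturated optimally by matching and any cross-gadget 3-path would require a closure edge absent from $G$.

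The main obstacle will be this second direction: ruling out cheating strong sets. Strong sets in dense graphs are genuinely flexible (cliques, long cycles with short chords, hybrid structures), so the reduction has to enforce rigidity. I expect this to be handled by designing the gadgets so that the ``non-gadget'' part of $G$ is $K_4$-free and essentially a collection of independent matched pairs joined only through controlled triangles, forcing any surplus over the matching to be ``charged'' to an edge-gadget clique, and each clique charges once per pair of colors. A secondary technical step will be to verify that alternative strong-set shapes inside a single edge gadget (e.g.\ taking a 4-cycle plus a chord instead of the full $K_4$) never exceed the designed surplus, which is a finite case check on the gadget.
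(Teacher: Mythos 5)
Your high-level instincts are right: reduce from a clique-like problem, let pendants fix $\MM(G)$, keep the selector part essentially matching-tight, and generate the above-matching surplus from cliques rather than from strong cycles. But the specific gadget you propose --- one small, essentially self-contained $K_4$ edge gadget per color pair, ``activated'' only when the two selectors agree --- has a structural flaw that you anticipate but do not resolve, and it is exactly where the paper's construction diverges. If the edge gadget's $K_4$ is (even close to) vertex-disjoint from the selector part, then all its $3$-paths are closed \emph{inside} the $K_4$, so the full $K_4$ is a valid strong set no matter what the selectors do; its $6-2=4$-edge surplus is unconditional, and the reduction collapses. To make the surplus conditional you would have to entangle the $K_4$'s vertices with the selectors, at which point you no longer have independent $O(1)$-size gadgets, and the closure bookkeeping across $\binom{k}{2}$ overlapping gadgets becomes the real proof.

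The paper's reduction (from \textsc{Grid Tiling}, equivalent here) resolves this with a different mechanism. There is a \emph{single} large clique $R \cup C$ on $\Theta(k^2)$ vertices, and its $\binom{|R\cup C|}{2}$-edge surplus over its own matching is taken for free in every strong set; nothing is ``activated.'' Instead, the verification power comes from the \emph{cross-edges}: every $r^{i,j} \in R$ is adjacent to \emph{all} selector vertices $P$ but to only the four $Q$-vertices for cells $(i,j)$ and $(i+1,j)$. Because $G[P]$ is triangle-free, the strong-set closure condition forces each $r^{i,j}$ to carry at most two edges into $P$; getting two (required to hit $\ell$) forces them to land on $p$'s that are both chosen by their selectors (hence not matched to their pendants) and adjacent in $P$, which exactly encodes the consistency constraint. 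So the ``conditional surplus'' is the bundle of $2|R\cup C|$ cross-edges, not the clique itself. This also changes the arithmetic: the parameter comes out $\Theta(k^4)$, not $\Theta(k^2)$ --- still a function of $k$, so fine for W[1]-hardness, but a signal that your accounting (``$\binom{k}{2}$ gadgets each contributing $\Theta(1)$'') does not match a working construction. In short, the gap is that you are treating the $K_4$'s as the conditional resource; the paper makes the clique unconditional and the clique-to-selector edges the conditional resource, and that shift is what makes the hard direction of the correctness proof go through.
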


On the other hand, a counting argument shows that, although we do not get an FPT
algorithm, we can get a constant-factor FPT approximation simply by
packing a modulator $K$ and ignoring all edges $\delta(K)$ between
$K$ and $G-K$ in the solution, solving $G[K]$ (exhaustively)
and $G-K$ separately. That is, we give an algorithm that either
shows that $G$ has no strong edge set of size $\MM(G)+k$
or finds a strong edge set of size $\MM(G) + \Omega(k)$ in FPT time.

\begin{restatable}{theorem}{tcamapprox}
  There is $O^*(k^{O(k)})$-time algorithm for \textsc{Strong Triadic Closure} that approximates the above matching parameter $\ell - \MM(G)$ within a  factor 7.
\end{restatable}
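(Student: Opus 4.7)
The plan is to combine Theorem~\ref{theorem:tc-am-fpt}, which handles the $K_4$-free case exactly, with a modulator reduction that loses at most a constant factor in the above-matching objective. First, I greedily compute a maximal vertex-disjoint $K_4$-packing $\mathcal{K}$ in $G$, letting $t = |\mathcal{K}|$ and $K = V(\mathcal{K})$; then $|K| = 4t$ and $G - K$ is $K_4$-free. The algorithm then builds two candidate strong sets $S_A$ and $S_B$ and returns the larger, reporting ``yes'' if its size is at least $\MM(G) + k/7$ and ``no'' otherwise.

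The first candidate uses the packing directly: let $S_A$ be the union of all six edges of each $K_4$ in $\mathcal{K}$ (each $K_4$ is a clique and hence strong) together with a maximum matching of $G - K$. Since at most $|K| = 4t$ edges of any maximum matching of $G$ are incident to $K$, we get $|S_A| \geq 6t + \MM(G) - 4t = \MM(G) + 2t$. For $S_B$, I invoke Theorem~\ref{theorem:tc-am-fpt} on the $K_4$-free graph $G - K$ with parameter up to $4k$ to obtain a strong set $S_B^{(2)}$ in $G - K$ that is either the true maximum or of size $\MM(G - K) + 4k$, and independently compute a maximum strong set $S_B^{(1)}$ of size at most $k$ in $G[K]$ using the $O^*(\ell^{O(\ell)})$ algorithm of~\cite{GruttemeierK20} for $\ell = 1, \ldots, k$. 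Setting $S_B = S_B^{(1)} \cup S_B^{(2)}$, the union is strong because every $3$-path in $S_B$ must lie entirely in one of the vertex-disjoint subgraphs $G[K]$ or $G - K$.

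The analysis rests on a counting observation: for any strong set $S^* \subseteq E(G)$ and any $v \in K$, the $S^*$-neighbors of $v$ form a clique in $G$, so their restriction to $V \setminus K$ is a clique in the $K_4$-free graph $G - K$ and thus has size at most $3$; summing over $K$ gives $|S^* \cap \delta(K)| \leq 3|K| = 12t$. Suppose an optimum $S^*$ with $|S^*| \geq \MM(G) + k$ exists; I will show $\max(|S_A|, |S_B|) \geq \MM(G) + k/7$. If $t \geq k/14$, then $|S_A| \geq \MM(G) + 2t \geq \MM(G) + k/7$. Otherwise $t < k/14$, so $|K| < 2k/7$ and $|S^* \cap \delta(K)| < 6k/7$. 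If $|S_B^{(2)}| = \MM(G - K) + 4k$, then $|S_B| \geq \MM(G) + 4k - 4t > \MM(G) + k$; if $|S_B^{(1)}| = k$, then $|S_B| \geq k + \MM(G - K) \geq \MM(G) + k - 4t > \MM(G) + 5k/7$; otherwise both $S_B^{(1)}$ and $S_B^{(2)}$ attain the true maxima in their respective graphs, so $|S_B| \geq |S^* \cap E(G[K])| + |S^* \cap E(G - K)| \geq |S^*| - 6k/7 \geq \MM(G) + k/7$. Contrapositively, if the returned candidate is below $\MM(G) + k/7$, no strong set of size $\MM(G) + k$ can exist and the ``no'' answer is correct.

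The main technical task is to balance the three losses---the $\delta(K)$ loss of $3|K|$, the matching loss of $4t$ from deleting $K$, and the cap $k$ imposed on $S_B^{(1)}$---so that the same constant $7$ works in every case; this is essentially forced by the threshold $t = k/14$, which makes $12t = 6k/7$ exactly cancel all but $k/7$ of the target $k$. The total running time is $O^*(k^{O(k)})$: the $O(k)$ calls to Theorem~\ref{theorem:tc-am-fpt} on $G - K$ each run in $O^*(k^{O(k)})$ since their parameter is $O(k)$, the $O(k)$ calls to the $\ell^{O(\ell)}$ algorithm on $G[K]$ have the same bound, and the greedy $K_4$-packing and matching computations are polynomial.
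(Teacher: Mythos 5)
Your proof is correct and takes essentially the same approach as the paper's: greedily pack vertex-disjoint $K_4$'s, handle the large-packing case by combining $K_4$'s with a leftover matching, and otherwise solve the $K_4$-free part via Theorem~\ref{theorem:tc-am-fpt} and $G[K]$ by brute force, losing at most $3|K|$ edges from $\delta(K)$ because each vertex of $K$ has at most three strong edges into the $K_4$-free remainder. Your case analysis (the exact thresholds on $|\mathcal K|$ and the caps $\MM(G-K)+4k$ and $k$ on the two partial solutions) differs in bookkeeping from the paper's choice of $c=14$ and caps $t$ and $6|\mathcal K|$, but the decomposition and the key counting lemma are identical, and your write-up is if anything tighter in spelling out why the three loss terms balance to yield exactly a factor of~$7$.
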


In addition, we generalize a result of Golovach et al.~\cite{GolovachHKLP20}, who gave an FPT algorithm for maximum degree 4,  
by showing that \textsc{Strong Triadic Closure} can be solved in time $O^*(\Delta^{O(k)})$
for graphs of maximum degree $\Delta$:

\begin{restatable}{theorem}{tcamdelta}
  \textsc{Strong Triadic Closure} can be solved in $O^*(\Delta^{O(k)})$ time, where $\Delta$ is the maximum degree of $G$ and $k = \ell - \MM(G)$.
\end{restatable}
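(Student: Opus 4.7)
The plan is to compute a maximum matching $M^*$ of $G$ and then search for a strong set $S$ of size $\MM(G)+k$ via a bounded search tree. If $|M^*| \ge \ell$, we return $M^*$ (any matching is trivially strong, since no 3-path of edges exists within a matching), so assume $k \ge 1$.

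The structural setup is as follows. By restricting to a subset of any larger strong set, we may assume $|S| = \MM(G)+k$ exactly, and, by an augmenting-path argument that swaps edges of a maximum matching $M_S$ of $S$ with matching edges of $G$ while preserving strongness, we may also assume $\mu(S) = \MM(G)$, where $\mu(S)$ is the matching number of the subgraph $(V,S)$. In that case the ``excess'' $|S| - \mu(S)$ equals exactly $k$. A component-wise count then bounds the set $T = \{v : \deg_S(v) \ge 2\}$ of non-trivial vertices: each connected component $C$ with non-trivial set $T_C$ and leaves $L_C$ satisfies $2|E(C)| \ge 2|T_C| + |L_C|$ and $\mu(C) \le (|T_C|+|L_C|)/2$, whence $|E(C)|-\mu(C) \ge |T_C|/2$; summing over components yields $|T| \le 2k$. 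Thus $S$ decomposes into at most $2k$ non-trivial connected components plus a matching on the rest.

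The algorithm maintains a partial guess of the non-trivial structure: for each vertex $v$ in a growing set $T' \subseteq V$, a fixed ``strong star'' $\sigma(v) = N_S(v) \subseteq N_G(v)$. Consistency requires that each $\sigma(v)$ is a $G$-clique (as $\sigma(v)$ must be a $G$-clique whenever $v$ is non-trivial in a strong set), that commitments are mutually compatible, and that every implied 3-path closes in $G$. At each node we compute a maximum matching on the uncommitted subgraph; if the committed edges together with this matching reach $\ell$, we accept. Otherwise we branch on extending $T'$ by one new vertex $v$ and its strong star. The new $v$ is either a $G$-neighbor of a previously committed $\sigma$-edge ($O(k\Delta)$ candidates) or a seed vertex for a new non-trivial component, handled by a color-coding step with $O(k)$ colors at an additional $2^{O(k)}$ factor. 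For each $v$, the strong star ranges over $\Delta^{O(1)}$ $G$-cliques in $N_G(v)$: branching on the ``initial pair'' of star-neighbors already fixes a maximal clique to which the rest of $\sigma(v)$ can extend, with additional elements deferred to later recursive calls. The depth is $O(k)$ and the branching factor $\Delta^{O(1)}$, yielding total running time $O^*(\Delta^{O(k)})$.

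The main obstacle is the normalization step: showing that an optimum $S$ can be transformed into one with $\mu(S) = \MM(G)$ while preserving strongness, since swapping along an $M_S$-augmenting path $P$ in $G$ may create new 3-paths that need closing in $G$. The remedy is a case analysis on the structure of $P$ near $S$, showing that whenever a swap fails, the obstructing configuration corresponds to an alternative solution of the same size but with a more restricted non-trivial structure; the algorithm recovers by exploring such configurations as separate branches. A secondary obstacle is keeping the per-vertex enumeration of strong stars to $\Delta^{O(1)}$ rather than $2^{\Delta}$: this is achieved by leveraging the clique condition together with the constraints imposed by already-committed adjacent stars, which reduces each $\sigma(v)$ to a polynomial-in-$\Delta$ set of genuinely distinct options once a small ``anchor'' is chosen.
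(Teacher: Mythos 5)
Your approach is substantively different from the paper's. The paper first computes a $K_4$-modulator $K$ with $|V(K)| < 2k$ via greedy packing, guesses the edges of the solution inside and incident to $K$, and reduces the remaining $K_4$-free part $G-K$ to a random two-edges-per-vertex selection combined with a delta-matroid feasibility test in the dual matching delta-matroid, using \Cref{lemma:tc-feasible-cover} and \Cref{lemma:tri-maximal} to bound the relevant strong-cycle packing by $O(k)$ vertices. You instead propose a direct bounded search tree on the ``non-trivial structure'' of a hypothetical optimal strong set $S$, bootstrapped by the claim that one may normalize $\mu(S)=\MM(G)$.

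That normalization claim is a genuine gap, and in fact it is \emph{false}. Take $G$ to be a triangle $\{a,b,c\}$ with a single pendant $a'$ attached to $a$. Then $\MM(G)=2$ (match $aa'$ and $bc$), and the unique maximum strong set is the triangle $S=\{ab,ac,bc\}$ of size $\ell=3$, so $k=1$; but $\mu(S)=1<2=\MM(G)$, and one can check that no strong set of size $3$ with matching number $2$ exists. The derived bound $|T|\le 2k$ also fails here: $|T|=3>2$. What your component-wise count actually yields is $|T|\le 2(|S|-\mu(S))=2k+2(\MM(G)-\mu(S))$, so you would additionally need to show $\MM(G)-\mu(S)=O(k)$ for some optimum $S$; this is plausible but nowhere argued, and your stated remedy (a case analysis on failing augmenting swaps plus extra recovery branches) is too vague to verify and does not obviously terminate. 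The paper avoids all of this by never normalizing $\mu(S)$: it instead pads $G$ to have a perfect matching, works with $M$-alternating paths in the padded graph, and encodes the bookkeeping in the $\beta$-contracted dual matching delta-matroid. Your secondary obstacle is also real as stated --- a vertex whose $G$-neighborhood is itself a clique has $2^{\Delta}$ candidate strong stars, not $\Delta^{O(1)}$, and a pair of committed neighbors does not determine a unique maximal clique --- though that one is more likely repairable (guess the $O(k)$ support vertices one neighbor at a time, then bound the edge set on the support separately) than the normalization is.
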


Lastly, we remark that the question whether \textsc{Strong Triadic Closure} is FPT has
been asked repeatedly in the literature~\cite[Open Problem 5.4]{abs-2001-06867},~\cite{GolovachHKLP20}.
The connection to \textsc{Cluster Subgraph} has also been observed~\cite{GruttemeierK20}. 
We settle the status of both of these problems. 

\subsection{Further related work}

Given a graph, the problem of finding a vertex-disjoint $\{K_2, K_3\}$-packing that covers the maximum number of vertices can be solved in polynomial time \cite{HellK84}.
In contrast, maximizing the number of edges in such a packing, as discussed above, is NP-hard since it generalizes the problem of partitioning a graph into triangles.
Moreover, this problem is APX-hard and admits a $3/2$-approximation \cite{ChataignerMWY09}.

There is a range of algebraic algorithms for FPT problems, beyond
what has been surveyed here. Much of this is surveyed by Eiben et
al.~\cite{EKW23}. For instance, the basic \textsc{$q$-Set Packing}
and \textsc{$q$-Dimensional Matching} problems parameterized by cardinality
are currently best solved by reduction to algebraic sieving arguments~\cite{BjorklundHKK17narrow},
and it is an intriguing question what the best possible running time
for these problems is.

Other intriguing connections for delta-matroids includes their
generalization to \emph{jump systems}~\cite{BouchetC95}, with 
connections to discrete convexity theory (cf.~\cite{murota2021note}).
In another direction, Bouchet generalized delta-matroids into
\emph{multi-matroids}, with their own notion of representations~\cite{BouchetMMI,BouchetMMIV}.

For further related work, we again refer to Moffatt~\cite{Moffatt19deltamatroids} and
Koana and Wahlström~\cite{KW24}.

\iflong
\paragraph*{Organization.} Section~\ref{sec:prelim} covers additional
background material on delta-matroids. Section~\ref{sec:part1}
contains the first set of results, regarding intersection and packing
problems as well as the delta-matroid sieving result.
Section~\ref{sec:triangle-cover} gives the FPT algorithm for
\textsc{Delta-matroid Triangle Cover}, and
Sections~\ref{sec:cs-above-matching} and~\ref{sec:tcam}
cover the \textsc{Cluster Subgraph} and \textsc{Strong Triadic Closure}
applications, respectively. We wrap up in Section~\ref{sec:conclusions}
with conclusions and open question.
\fi

\iflong
\paragraph*{Preliminaries.} 
A parameterized problem is a collection of pairs $(I, k)$, where $I$ is the instance and $k$ is called the parameter, and
is \emph{fixed-parameter tractable} (in FPT) if there is an \emph{FPT algorithm}.
Such an algorithm solves every instance $(I,k)$ in $f(k) \cdot |I|^{O(1)}$ time, for some computable function $f$.
This complexity is denoted by $O^*(f(k))$.
For a comprehensive exposition on parameterized complexity, refer to the book of Cygan et al.~\cite{CyganFKLMPPS15PCbook}.

We will use standard notation for graphs as outline in Diestel \cite{DiestelBook}.
By default, graphs are assumed to be undirected. 
For a graph $G$, we often denote the sets of its vertices and edges by $V(G)$ and~$E(G)$, respectively.
The vertex and edge sets of a graph $G$ are denoted by $V(G)$ and $E(G)$, respectively, with $n = |V(G)|$ and $m = |E(G)|$.
We may abbreviate an edge $\{ u, v \}$ in an undirected graph as $uv$.
Similarly, in directed graphs, an arc from $u$ to $v$ is represented by $(u, v)$ or $uv$.
A vertex $v\in V(G)$ has neighborhood $N_G(v)=\{u \mid uv \in E\}$, and degree
$|N_G(v)|$.
For any subset $V'\subseteq V$, $G[V']$ denotes the \emph{induced subgraph} of $G$ on $V'$, that is, $G[V']=(V',E')$ where $E' = \{ uv \in E \mid u \in V, v \in V \}$.
The notation $G - V'$ refers to the induced subgraph $G[V \setminus V']$.

Given a matrix $A$ and a set of rows $S$ and columns $T$, $A[S, T]$ denotes the submatrix containing rows $S$ and columns $T$.
If $S$ or $T$ contains all rows or columns, respectively, we use the notation $A[\cdot, T]$ and $A[S, \cdot]$, where $\cdot$ denotes all rows or columns.
The $n \times m$ zero matrix and the $n \times n$ identity matrix is denoted by $O_{n \times m}$ and $I_{n}$, with the subscripts omitted when the dimension are clear from context.

We assume that each field in this paper comprises $\exp(n^{O(1)})$ many elements.
Under this assumption, all arithmetic operations can be performed in polynomial time.
The Schwartz-Zippel lemma \cite{Schwartz80,Zippel79} states that a polynomial $P(X)$ of total degree $d$ or less becomes nonzero with a probability of at least $1 - d / |\F|$, when evaluated at uniformly random elements from $\F$, provided $P(X)$ is not identically zero.

\fi

\section{Background on delta-matroids} \label{sec:prelim}

\subsection{Preliminaries on delta-matroids and skew-symmetric matrices} 

For two sets $A, B$, we let $A \Delta B=(A \setminus B) \cup (B \setminus A)$ denote their symmetric difference. 

\paragraph*{Delta-matroids.}

A \emph{delta-matroid} is defined as a pair $D=(V,\cF)$ where $V$ is a ground set
and $\cF \subseteq 2^V$ is a collection of \emph{feasible sets}.
The feasible sets adhere to the following \emph{symmetric exchange axiom}:
\[
\forall A, B \in \cF,\, x \in A \Delta B\, \exists y \in A \Delta B :
A \Delta \{x,y\} \in \cF.
\]
For a delta-matroid $D=(V,\cF)$, let $V(D)=V$ and $\cF(D)=\cF$. 
A delta-matroid is called \emph{even} if all feasible sets share the same
parity. 
This implies $x \neq y$ in the application of the
symmetric exchange axiom, although this is not required generally.

Delta-matroids also feature operations that extend beyond traditional matroid theory. Specifically, the \emph{twisting} of a delta-matroid 
$D=(V,\cF)$ by $S \subseteq D$ is the delta-matroid
$D \Delta S = (V, \cF \Delta S)$, where
$
  \cF \Delta S = \{F \Delta S \mid F \in \cF\}.
$
The \emph{dual} delta-matroid of $D$ is $D \Delta V(D)$.
The \emph{deletion} of $S \subseteq V$
yields a delta-matroid
$
  D \setminus S = (V \setminus S, \{F \in \cF \mid F \subseteq V \setminus S\}),
$
and the \emph{contraction} by $S$ is defined by
$
  D/S = (D \Delta S) \setminus S = (V \setminus S, \{F \setminus S \mid F \in \cF, S \subseteq F\}).
$

\paragraph*{Skew-symmetric matrices.}

A square matrix $A$ is \emph{skew-symmetric} if $A = -A^T$.
Particularly, when $A$ is over a field of characteristic 2, it is assumed
 that it has zero diagonal, unless stated otherwise.
For a skew-symmetric matrix $A$, whose rows and columns are indexed by a set $V = [n]$, 
the \emph{support graph} of $A$ is the graph $G=(V, E)$ where $E$ is the set of edges $\{ uv \mid A[u, v] \ne 0 \}$.
A fundamental tool in the analysis of skew-symmetric matrices is the
\emph{Pfaffian}, which is defined for a skew-symmetric matrix $A$ as follows:
\begin{align*}
  \Pf A = \sum_{M} \sigma(M) \prod_{e \in M} A[u, v],
\end{align*}
where the sum is taken over all perfect matchings of $A$'s support graph, and 
$\sigma(M) \in \{1,-1\}$ denotes the sign of the permutation:
\[
  \begin{pmatrix}
    1 & 2 & \cdots n - 1 & n \\
    v_1 & v_1' & \cdots v_{n/2} & v_{n/2}'
  \end{pmatrix},
\]
where $M = \{ v_i v_i' \mid i \in [n/2] \}$ with $v_i < v_i'$ for all $i \in [n/2]$.
It is well-known that $\det A=(\Pf A)^2$, so $\Pf A \neq 0$
if and only if $A$ is non-singular.
However, directly working with the Pfaffian is often more effective. 
The Pfaffian can be computed in $O(n^3)$ steps by an elimination procedure.
Alternatively, a combinatorial algorithm for computing the Pfaffian is also known \cite{MahajanSV99}.
The Pfaffian generalizes the notion of determinants as follows.
\begin{lemma} \label{lemma:det-pf}
For an $n \times n$-matrix $M$, it holds that
\begin{align*}
 \det M = (-1)^{\binom{n}{2}} \Pf \begin{pmatrix}
  O & M \\
  -M^T & O \\
\end{pmatrix}.
\end{align*}
\end{lemma}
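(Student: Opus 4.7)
The plan is to expand the Pfaffian of $\tilde A := \begin{pmatrix} O & M \\ -M^T & O \end{pmatrix}$ directly from its combinatorial definition and to match the expansion against the Leibniz formula for $\det M$. Indexing the $2n$ rows and columns of $\tilde A$ by $L \cup R$ with $L = \{1,\ldots,n\}$ (the rows of $M$) and $R = \{n+1,\ldots,2n\}$ (the columns of $M$), the diagonal blocks $\tilde A[L,L]$ and $\tilde A[R,R]$ vanish, so the support graph of $\tilde A$ is bipartite between $L$ and $R$. Its perfect matchings are therefore in bijection with the permutations $\pi \in S_n$ via $\pi \mapsto N_\pi := \{\,\{i,n+\pi(i)\} : i \in [n]\,\}$, and the product of entries along $N_\pi$ is $\prod_{i=1}^n \tilde A[i,n+\pi(i)] = \prod_{i=1}^n M[i,\pi(i)]$, which already matches a Leibniz summand.

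The technical content is then to pin down the Pfaffian sign $\sigma(N_\pi)$. By the convention recalled in the preliminaries, $\sigma(N_\pi)$ is the sign of the permutation of $[2n]$ whose one-line form is $(1,\, n+\pi(1),\, 2,\, n+\pi(2),\, \ldots,\, n,\, n+\pi(n))$; the requirement $v_i < v_i'$ is automatic because $i \le n < n+\pi(i)$. I would decompose this permutation as $\hat\pi \circ \tau_n$, where $\tau_n$ is the ``perfect shuffle'' $(1,2,\ldots,2n) \mapsto (1,n+1,2,n+2,\ldots,n,2n)$ and $\hat\pi$ is the permutation of $[2n]$ that fixes $L$ pointwise and sends $n+i \mapsto n+\pi(i)$ on $R$. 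Then $\mathrm{sgn}(\hat\pi) = \mathrm{sgn}(\pi)$, and a direct inversion count for $\tau_n$ shows that its inversions are exactly the pairs of positions $(2i,2j-1)$ with $i<j$: at these positions the value $n+i$ exceeds the value $j \le n$, while one checks that no inversion arises among same-parity positions or in the symmetric case $(2i-1, 2j)$. This yields $\binom{n}{2}$ inversions, so $\mathrm{sgn}(\tau_n) = (-1)^{\binom{n}{2}}$ and hence $\sigma(N_\pi) = (-1)^{\binom{n}{2}} \mathrm{sgn}(\pi)$.

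Summing over $\pi \in S_n$ and invoking the Leibniz formula for $\det M$ gives
\[
\Pf \tilde A \;=\; \sum_{\pi \in S_n} \sigma(N_\pi) \prod_{i=1}^n M[i,\pi(i)] \;=\; (-1)^{\binom{n}{2}} \sum_{\pi \in S_n} \mathrm{sgn}(\pi) \prod_{i=1}^n M[i,\pi(i)] \;=\; (-1)^{\binom{n}{2}} \det M,
\]
which is equivalent to the stated identity since $(-1)^{\binom{n}{2}}$ squares to $1$. The only real obstacle is the sign bookkeeping for $\tau_n$; if the direct inversion count feels fragile, a clean alternative is to invoke the congruence identity $\Pf(X^T B X) = \det(X)\,\Pf(B)$ with $B = \begin{pmatrix} O & I \\ -I & O \end{pmatrix}$ and $X = \begin{pmatrix} M^T & O \\ O & I \end{pmatrix}$, which reduces the general case to verifying $\Pf(B) = (-1)^{\binom{n}{2}}$ on a matrix whose support graph admits a unique perfect matching (where the sign is easily confirmed by induction on $n$).
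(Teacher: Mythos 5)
Your proof is correct, and both arguments you outline are sound. The paper states Lemma~\ref{lemma:det-pf} without proof, treating it as a standard identity, so there is no paper argument to compare against; your write-up supplies a self-contained verification.

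A few remarks on the details. Your reduction of $\sigma(N_\pi)$ to $\mathrm{sgn}(\hat\pi)\cdot\mathrm{sgn}(\tau_n)$ is the right move, and the inversion count for the perfect shuffle $\tau_n$ is correct: the only inversions occur between an even position $2i$ (carrying value $n+i$) and a later odd position $2j-1$ with $i<j$ (carrying value $j\le n$), giving exactly $\binom{n}{2}$ inversions. One small point worth making explicit is that the paper's Pfaffian sign convention also presupposes the pairs $\{v_i,v_i'\}$ are listed in increasing order of their smaller endpoints; in your setup $v_i=i$, so this is automatic, but it is the reason the one-line form $(1,\,n{+}\pi(1),\,2,\,n{+}\pi(2),\ldots)$ is the canonical representative. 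Your alternative via the congruence identity $\Pf(X^TBX)=\det(X)\Pf(B)$ is also correct and arguably cleaner: with $B=\left(\begin{smallmatrix}O&I\\-I&O\end{smallmatrix}\right)$ and $X=\left(\begin{smallmatrix}M^T&O\\O&I\end{smallmatrix}\right)$ one indeed gets $X^TBX=\left(\begin{smallmatrix}O&M\\-M^T&O\end{smallmatrix}\right)$ and $\det X=\det M$, and $\Pf(B)=(-1)^{\binom{n}{2}}$ is exactly the single-matching special case of your inversion count, so the two routes are really the same computation packaged differently.
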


An essential operation on skew-symmetric matrices is \emph{pivoting}.
Given a skew-symmetric matrix $A \in \F^{n \times n}$ and $S \subseteq [n]$ where $A[S]$ is non-singular,
reorder the rows and columns of
$A$ such that 
\[
  A =
  \begin{pmatrix}
    B & C \\
    -C^T& D
  \end{pmatrix},
\]
where $A[S]=B$.
Then the \emph{pivoting} of $A$ by $S$ is defined as
\[
  A * S =
  \begin{pmatrix}
    B^{-1} & B^{-1}C \\
    C^T B^{-1}& D + C^T B^{-1} C
  \end{pmatrix},
\]
which is again a skew-symmetric matrix.

\begin{lemma}[\cite{tucker1960combinatorial}] \label{lemma:tucker}
It holds that for any $X \subseteq [n]$, 
\begin{align*}
  \det (A*S)[X] = \frac{\det A[X \Delta S]}{\det A[S]},
\end{align*}
thereby,
$(A*S)[X]$ is non-singular if and only if $A[X \Delta S]$ is
non-singular. 
\end{lemma}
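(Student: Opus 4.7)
The plan is to prove the identity by reducing a general pivot to a sequence of \emph{elementary} pivots --- namely, pivots on a single index $\{s\}$ with $A[s,s] \neq 0$ (which are unavailable in characteristic~$2$, where the diagonal is zero) or on a pair $\{s,t\}$ with $A[s,t] \neq 0$ --- and verifying the identity for each elementary case separately.

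The key preliminary step is to establish the \emph{composition law}: for disjoint $S_1, S_2$ with $A[S_1]$ and $(A*S_1)[S_2]$ both non-singular, $(A*S_1)*S_2 = A*(S_1 \cup S_2)$. This is a routine block-matrix computation after repartitioning $[n]$ into $S_1$, $S_2$, and their common complement. Combined with the observation that when $A[S]$ is non-singular, either some diagonal entry $A[s,s]$ with $s \in S$ is non-zero (possible only outside characteristic~$2$) or some off-diagonal entry $A[s,t]$ with $s,t \in S$ is non-zero, it follows that $A*S$ decomposes into a sequence of at most $|S|$ elementary pivots, with non-singularity of the remaining block preserved at each step via the composition law. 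Moreover, the target identity telescopes under this decomposition: if $\det(A*\{s\})[Y] \cdot \det A[\{s\}] = \det A[Y \Delta \{s\}]$ holds for each elementary step (and likewise for pair pivots), a short substitution shows that the analogous identity for $A*S$ follows, with the intermediate determinants $\det A[S_1]$ cancelling telescopically.

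It therefore remains to verify the identity for elementary pivots. For a single-element pivot $S = \{s\}$ with $\alpha := A[s,s] \neq 0$, the explicit formula $(A*\{s\})[i,j] = A[i,j] - A[i,s]\,A[s,j]/\alpha$ for $i, j \neq s$ identifies $(A*\{s\})[Y]$, when $s \notin Y$, as the Schur complement of $\alpha$ in $A[Y \cup \{s\}]$, from which the classical formula $\det A[Y \cup \{s\}] = \alpha \cdot \det((A*\{s\})[Y])$ yields the identity immediately. The case $s \in X$ is handled symmetrically using $(A*\{s\})[s,i] = A[s,i]/\alpha$ and $(A*\{s\})[s,s] = 1/\alpha$ via cofactor expansion along row $s$. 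For a pair pivot $S = \{s,t\}$ with $\beta := A[s,t] \neq 0$, one has $A[S] = \bigl(\begin{smallmatrix} 0 & \beta \\ -\beta & 0 \end{smallmatrix}\bigr)$ (ignoring diagonal entries, which vanish in characteristic~$2$) with $\det A[S] = \beta^2$, and a structurally similar $2 \times 2$ Schur-complement argument handles each of the four sub-cases $X \cap \{s,t\} \in \{\emptyset, \{s\}, \{t\}, \{s,t\}\}$.

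The main obstacle is this pair-pivot case, since in characteristic~$2$ pair pivots are the only available elementary operation and sign bookkeeping across the four sub-cases is delicate. The cleanest remedy is to work with Pfaffians throughout: first establish $\Pf(A*S)[X] \cdot \Pf A[S] = \pm \Pf A[X \Delta S]$ with an explicit sign fixed by the ordering convention, and then square via $\det = \Pf^2$ to recover Tucker's identity without any leftover sign ambiguity. Composition of elementary pivots then completes the proof.
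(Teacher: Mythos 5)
The paper does not prove this lemma: it is stated as a cited result from Tucker's 1960 paper, so there is no in-paper proof to compare against. Evaluating your argument on its own terms, the overall strategy (decompose the pivot into elementary pivots via the composition law, verify each elementary case by a Schur-complement identity, telescope the determinants, and then pass through Pfaffians to avoid sign chasing) is sound and does yield a correct proof.

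One conceptual slip worth flagging: you present single-index pivots $\{s\}$ with $A[s,s]\neq 0$ as an available elementary operation outside characteristic~$2$. This is never the case here. For a skew-symmetric matrix over characteristic $\neq 2$, the relation $A=-A^T$ forces $2A[i,i]=0$, hence $A[i,i]=0$; and over characteristic~$2$ the paper's convention explicitly requires zero diagonal. So the diagonal vanishes in all cases, single-index pivots are vacuous, and (consistently with this) $A[S]$ can only be non-singular when $|S|$ is even, so pair pivots are always sufficient. Your pair-pivot treatment carries the whole argument; the single-index branch should simply be deleted. A second, smaller remark: the standard proof in the literature proceeds by a single block/Schur-complement manipulation on the $2\times 2$ block decomposition along $S$ versus $[n]\setminus S$, without first reducing to elementary pivots, which is shorter but buys nothing conceptually over your route; your telescoping approach also has the merit of making the composition law $(A*S_1)*S_2=A*(S_1\cup S_2)$ explicit, which the paper uses implicitly elsewhere.
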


Additionally, we mention a formula regarding the Pfaffian of the sum of two skew-symmetric matrices:
\begin{lemma}[{\cite[Lemma 7.3.20]{Murota99}}]
  \label{lemma:sum-pf}
  For two skew-symmetric matrices $A_1$ and $A_2$, both indexed by $V$,
  it holds that
  \[ \Pf \, (A_1 + A_2) = \sum_{U \subseteq V} \sigma_U \Pf A_1[U] \cdot \Pf A_2[V \setminus U], \]
  where $\Pf A_i[\emptyset] = 1$ for $i = 1, 2$ and $\sigma_U \in \{1,-1\}$ is a sign of the permutation
  \begin{align*}
    \begin{pmatrix}
      1 & 2 & \cdots & |U| & |U| + 1 & \cdots & |V| - 1 & |V| \\
      u_1 & u_2 & \cdots & u_{|U|} & v_1 & \cdots & v_{|V \setminus U|-1} & u_{|V \setminus U|}
    \end{pmatrix},
  \end{align*}
  where $u_i$ and $v_i$ are the $i$-th largest elements of $U$ and $V \setminus U$, respectively.
\end{lemma}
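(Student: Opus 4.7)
The plan is to prove the identity by expanding $\Pf(A_1+A_2)$ by definition and distributing the product over matching edges, then collecting terms by the vertex set used by the ``$A_1$-part.''

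First I would write, using the definition of the Pfaffian,
\[
\Pf(A_1+A_2) = \sum_{M} \sigma(M) \prod_{uv \in M} \bigl(A_1[u,v] + A_2[u,v]\bigr),
\]
where the sum is over all perfect matchings $M$ on $V$ (with each edge $uv$ oriented so that $u<v$). Expanding the product over edges, each matching edge independently picks either the $A_1$-entry or the $A_2$-entry; grouping this choice into a sub-matching $M_1 \subseteq M$ of ``$A_1$-edges,'' with $M_2 := M \setminus M_1$, this becomes
\[
\Pf(A_1+A_2) = \sum_{M} \sigma(M) \sum_{M_1 \subseteq M} \Bigl(\prod_{uv \in M_1} A_1[u,v]\Bigr)\Bigl(\prod_{uv \in M_2} A_2[u,v]\Bigr).
\]
Next I would reindex the double sum by $U := V(M_1)$ (the vertex set covered by the $A_1$-edges). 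The data $(M,M_1)$ with $V(M_1)=U$ is in bijection with pairs $(M_1,M_2)$ where $M_1$ is a perfect matching on $U$ and $M_2$ is a perfect matching on $V\setminus U$. In particular only even-cardinality $U$ contribute. This gives
\[
\Pf(A_1+A_2) = \sum_{U \subseteq V} \sum_{M_1,\,M_2} \sigma(M_1 \cup M_2) \Bigl(\prod_{uv \in M_1} A_1[u,v]\Bigr)\Bigl(\prod_{uv \in M_2} A_2[u,v]\Bigr),
\]
where $M_1$ ranges over perfect matchings of $U$ and $M_2$ over perfect matchings of $V\setminus U$.

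The crux is the sign identity
\[
\sigma(M_1 \cup M_2) \;=\; \sigma_U \cdot \sigma_U(M_1) \cdot \sigma_{V \setminus U}(M_2),
\]
where $\sigma_U(M_1)$ and $\sigma_{V\setminus U}(M_2)$ are the sign permutations that define $\Pf A_1[U]$ and $\Pf A_2[V\setminus U]$ (using the induced orders on $U$ and $V\setminus U$), and $\sigma_U$ is precisely the permutation in the lemma statement. To prove this, I would view each of $\sigma(M_1\cup M_2)$, $\sigma_U(M_1)$, $\sigma_{V\setminus U}(M_2)$ as permutations of $[|V|]$, $[|U|]$, $[|V\setminus U|]$ respectively, and factor $\sigma(M_1\cup M_2)$ as the composition of (i) first sorting $V$ into the concatenation ``elements of $U$ in order, then elements of $V\setminus U$ in order'' — this is $\sigma_U$; (ii) within the first $|U|$ slots, rearranging into the matching order of $M_1$ — this contributes $\sigma_U(M_1)$; (iii) analogously for $M_2$ on the remaining slots. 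Since the last two blocks act on disjoint position ranges, their signs multiply, giving the claimed factorization.

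Once the sign identity is established, the inner double sum factors:
\[
\sum_{M_1,\,M_2} \sigma_U(M_1)\sigma_{V\setminus U}(M_2) \prod_{M_1} A_1[u,v] \prod_{M_2} A_2[u,v] \;=\; \Pf A_1[U] \cdot \Pf A_2[V \setminus U],
\]
and pulling $\sigma_U$ out of the $U$-sum yields the statement (with the convention $\Pf A_i[\emptyset]=1$ and the terms for odd $|U|$ vanishing automatically). The main obstacle is the bookkeeping of the sign factorization in step (i)--(iii); I expect this to be a direct but careful calculation comparing the one-line notations of the permutations induced by $M_1\cup M_2$, $M_1$, and $M_2$.
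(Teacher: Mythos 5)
The paper cites this lemma from Murota's book without including a proof, so there is no ``paper's proof'' to compare against. Your argument is the standard elementary one and is correct: expand $\Pf(A_1+A_2)$ over perfect matchings, distribute the product over edges so that each matching $M$ splits into an $A_1$-sub-matching $M_1$ and an $A_2$-sub-matching $M_2$, reindex by $U = V(M_1)$, and observe that only even $|U|$ contribute. The sign identity $\sigma(M_1 \cup M_2) = \sigma_U \cdot \sigma_U(M_1) \cdot \sigma_{V\setminus U}(M_2)$ is the only nontrivial bookkeeping step, and your factorization plan is right: since the sign of a matching permutation is independent of the order in which the edges are listed (swapping two blocks of size two is an even permutation), one may list $M_1$'s edges first and then $M_2$'s, after which the matching permutation of $M_1 \cup M_2$ visibly factors as $\sigma_U$ composed with the direct sum of the two local matching permutations; the sign then multiplies across the factors because the local permutations act on disjoint position blocks. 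One small caveat worth stating explicitly if you write this up in full: the sign $\sigma_U(M_1)$ appearing in $\Pf A_1[U]$ is computed with respect to the order that $V$ induces on $U$, and it is exactly this induced-order convention that makes the inner local permutation $\pi_1$ on $[|U|]$ coincide with the Pfaffian sign permutation for $A_1[U]$ -- this is where the definition of $\sigma_U$ as ``sort $U$ first, then $V\setminus U$'' is used.
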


Finally, we note a generalization of the Cauchy-Binet formula to skew-symmetric matrices:

\begin{restatable}[Ishikawa-Wakayama formula \cite{IshikawaW95}]{lemma}{iwformula}
  \label{lemma:cauchy-binet-ss}
  For a skew-symmetric $2n \times 2n$-matrix $A$ and a $2k \times 2n$-matrix $B$ with $k \le n$, we have
  \begin{align*}
    \Pf B A B^T = \sum_{U \in \binom{[2n]}{2k}} \det B[\cdot, U] \Pf A[U].
  \end{align*}
\end{restatable}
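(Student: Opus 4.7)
The plan is to prove the identity by direct expansion from the normalized definition of the Pfaffian. Writing
$$\Pf(BAB^T) = \frac{1}{2^k k!} \sum_{\sigma \in S_{2k}} \operatorname{sgn}(\sigma) \prod_{i=1}^{k} (BAB^T)[\sigma(2i-1), \sigma(2i)],$$
I would substitute $(BAB^T)[a,b] = \sum_{u,v \in [2n]} B[a,u]\, A[u,v]\, B[b,v]$, expand the product over $i$, and interchange summations so that a sum over tuples $w = (u_1, v_1, \ldots, u_k, v_k) \in [2n]^{2k}$ moves to the outside while the signed sum over $\sigma$ stays attached to the $B$-factors.

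For fixed $w$, the inner sum $\sum_\sigma \operatorname{sgn}(\sigma) \prod_i B[\sigma(2i-1), u_i]\, B[\sigma(2i), v_i]$ is, by the Leibniz rule, exactly $\det B[\cdot, w]$: the determinant of the $2k \times 2k$ matrix whose $j$-th column is the $w_j$-th column of $B$. This determinant vanishes unless the entries of $w$ are distinct, so only tuples whose underlying set $U \subseteq [2n]$ has size $2k$ contribute. Grouping tuples by $U$, and letting $\tau \in S_{2k}$ be the permutation that sends the sorted order $x_1 < \cdots < x_{2k}$ of $U$ to $w$, we have $\det B[\cdot, w] = \operatorname{sgn}(\tau)\, \det B[\cdot, U]$. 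The contribution of $U$ to the overall sum then factors as
$$\det B[\cdot, U] \cdot \frac{1}{2^k k!} \sum_{\tau \in S_{2k}} \operatorname{sgn}(\tau) \prod_{i=1}^{k} A[x_{\tau(2i-1)}, x_{\tau(2i)}],$$
and the second factor is precisely $\Pf A[U]$ by definition. Summing over $U \in \binom{[2n]}{2k}$ yields the claim.

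The one delicate point is sign bookkeeping: one must check that the $\operatorname{sgn}(\tau)$ extracted from reindexing the columns of $B$ is the very same sign that appears in the Pfaffian expansion of $A[U]$. This works out cleanly because both signs are induced by the \emph{same} permutation $\tau$ acting on the ordered index set $U$, and no additional sign from distinguishing the ``$u_i$'' versus ``$v_i$'' slots needs tracking, since the product $\prod_i A[u_i, v_i]$ is the one appearing on both sides of the identification. Once this step is verified, the identity is immediate; as sanity checks, setting $B$ square ($k=n$) recovers $\Pf(BAB^T) = \det(B)\, \Pf(A)$, and the argument parallels the Leibniz-plus-regrouping proof of the classical Cauchy–Binet formula.
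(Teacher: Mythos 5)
Your proof is correct. The paper does not prove this lemma itself---it imports the Ishikawa--Wakayama formula directly from the cited reference---so there is no in-paper argument to compare against, and your derivation is a useful self-contained one.

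Your approach, based on the normalized Leibniz-style expansion $\Pf(M) = \tfrac{1}{2^k k!}\sum_{\sigma}\operatorname{sgn}(\sigma)\prod_i M[\sigma(2i-1),\sigma(2i)]$, interchanging sums, and regrouping terms by the underlying $2k$-element set $U$, is the standard elementary argument paralleling the classical Cauchy--Binet proof, and the sign bookkeeping works out exactly as you say: the permutation $\tau$ that maps the sorted order of $U$ to the tuple $w$ contributes $\operatorname{sgn}(\tau)$ once as a column permutation of $B[\cdot,U]$ and is then reabsorbed into the Pfaffian expansion of $A[U]$. One caveat worth making explicit: the $1/(2^k k!)$ normalization you use is not directly meaningful over fields of small positive characteristic, and characteristic~$2$ is precisely the setting the paper cares most about. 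Since that factor cancels between the two sides and the resulting identity has integer coefficients, what you have actually shown is a polynomial identity over $\mathbb{Z}[A,B]$, which then specializes to every commutative ring, including $\mathbb{F}_2$; it would be cleaner to state this transfer step explicitly, or to rerun the argument directly on the matching-based Pfaffian definition the paper adopts (at the cost of slightly heavier sign bookkeeping). Either way, the substance of your argument is sound, and the sanity check $\Pf(BAB^T) = \det(B)\Pf(A)$ for $k=n$ is recovered correctly.
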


\subsection{Representations and operations on linear delta-matroids}
\label{ssec:rep}

We review some recent results on representations of, and operations
over, linear delta-matroids due to Koana and Wahlström~\cite{KW24}.

The traditional notion of linear representation of a delta-matroid is as follows. 
For a skew-symmetric matrix $A \in \F^{V \times V}$ over a field $\F$, define
\[ \cF = \{ X \subseteq V \mid A[X] \text{ is nonsingular} \}. \]
Then, $(V, \cF)$, which is denoted by $\bD(A)$, is a delta-matroid. 
We say that a delta-matroid $D = (V, \cF)$ is \emph{representable} over $\F$ if there is a skew-symmetric matrix $A \in \F^{V \times V}$ and a twisting set $X \subseteq V$ such that $D = \bD(A) \Delta X$.
If $A[X]$ is nonsingular, or equivalently $\emptyset \in \cF$, we say that $D$ is \emph{directly representable} over~$\F$.
Note that a directly representable delta-matroid $D$ can be represented without a twisting set $X$, as $\bD(A) \Delta X = \bD(A * X)$ by Lemma~\ref{lemma:tucker}.
We will say that $D$ is \emph{directly represented} by $A$ if $D = \bD(A)$.
We also refer to a representation $D=\bD(A) \Delta S$ as a \emph{twist representation}.

Another linear representation, of equivalent expressive power,
recently proposed by Koana and Wahlström~\cite{KW24}
is a \emph{contraction representation}, where a delta-matroid $D=(V,\cF)$
is represented as $D=\bD(A)/T$ where $A$ is a skew-symmetric matrix over ground set $V \cup T$. 
These forms have equivalent power, in the sense that one can convert between
twist representations and contraction representations of a given
linear delta-matroid in matrix multiplication time~\cite{KW24}. 
However, contraction representations are easier to work with for
certain algorithmic purposes.
Given a twist representation $D = \bD(A) \Delta S$, we get a contraction representation $D = \bD(A^*) / T$, where $A^*$ is indexed by $V \cup T$ (with $T$ being a set of size~$|S|$), defined as follows:
\begin{align} \label{eq:twist-to-contraction}
  A^* = \kbordermatrix{
    & T & S & V \setminus S \\
    T & A[S] & -I & -A[S, V \setminus S] \\
    S & I & O & O \\
    V \setminus S & A[V \setminus S, S] & O & A[V \setminus S] \\
  }.
\end{align}

Let $D=(V,\cF)$ be a delta-matroid and $X \subseteq V$.
The \emph{projection} $D|X$ of $D$ to $V \setminus X$ 
is a delta-matroid $D'=(V \setminus X, \cF')$
where $\cF'=\{F \setminus X \mid F \in \cF\}$. 
A \emph{projected linear delta-matroid}
is a delta-matroid $D=(V,\cF)$ represented as a projection
$D=D'|X$ from some linear delta-matroid $D'$ on a ground set $V \cup X$.
A projected linear delta-matroid is not necessarily even,
hence they generalize linear delta-matroids.
Every projected linear delta-matroid can be represented
as an \emph{elementary projection} $D=D'|\{x\}$ for a single element $x$,
and such a representation can be constructed from a general
projected linear representation in randomized polynomial time.\footnote{Koana
  and Wahlström consider \emph{$\varepsilon$-approximate representations}
  for a tweakable error parameter $\varepsilon$. Since we are not
  optimizing the polynomial factors in the running times in this
  paper, we set $\varepsilon = 1/2^{\Omega(n)}$ and construct a representation
  in polynomial time that is inexact only with exponentially small failure probability.}

In particular, let $A \in \F^{r \times V}$ be a linear representation
of a matroid $M$ and construct the matrix
\[
  A_D= \kbordermatrix {
    & T & V \\
    T & 0 & A \\
    V & -A^T & 0
  }.
\]
Then, by Lemma~\ref{lemma:det-pf}, $D_B=\bD(A_D)/T$ represents the set of bases of $M$ (i.e.,
the feasible sets of $D_B$ are precisely the bases of $M$)
and $D_I=\bD(A_D)|T$ represents the independent sets of $M$. 

Let $D_1=(V,\cF_1)$ and $D_2=(V,\cF_2)$ be delta-matroids on a shared
ground set. The \emph{union} $D_1 \cup D_2$ is the delta-matroid $D=(V,\cF)$
with feasible set
\[
  \cF=\{F_1 \cup F_2 \mid F_1 \in \cF_1, F_2 \in \cF_2, F_1 \cap F_2 = \emptyset\}.
\]
Note the disjointness condition (which is distinct from the more
familiar union operation over matroids). Similarly,
the \emph{delta-sum} $D_1 \Delta D_2$ is the delta-matroid $D=(V,\cF)$
with feasible set
\[
  \cF=\{F_1 \Delta F_2 \mid F_1 \in \cF_1, F_2 \in \cF_2\}.
\]
Bouchet~\cite{Bouchet89dam} and Bouchet and Schwärzler~\cite{BouchetS98deltasum}
showed that $D_1 \cup D_2$ and $D_1 \Delta D_2$ are delta-matroids.
Koana and Wahlström~\cite{KW24} showed that the union and delta-sum
of linear delta-matroids, represented over a common field,
are also linear, and linear representations can be constructed
in randomized polynomial time.

\iflong
  \section{Basic delta-matroid problems}
\label{sec:part1}
In this section, we investigate the parameterized complexity of
natural generalizations of ``basic'' delta-matroid search problems
such as \textsc{Delta-matroid Intersection} and
\textsc{Delta-matroid Parity}.

We recall the definition of the \textsc{Delta-matroid Parity} problem. 
Let $D=(V,\cF)$ be a delta-matroid and $\Pi$ a partition of $V$ into pairs. 
For a set $F \subseteq V$, let $\delta_{\Pi}(F) = |\{ P \in \Pi : |F \cap P| = 1 \}|$
denote the number of pairs broken by $F$, and define
$\delta(D, \Pi) = \min_{F \in \cF} \delta_{\Pi}(F)$.
The goal of \textsc{Delta-matroid Parity} is to find a set $F \in \cF$ with
$\delta_\Pi(F)=\delta(D,\Pi)$. 
In our generalizations, we will focus on the harder case that $\delta(D,\Pi)=0$
where we are looking for a solution $F$ with $|F|=k$. This version can
be solved in randomized polynomial time given a (projected) linear
representation of $D$~\cite{KW24}, but a deterministic polynomial-time
algorithm is open.

In \textsc{Delta-matroid Intersection} the input is two delta-matroids
$D_1=(V,\cF_1)$ and $D_2=(V,\cF_2)$ and we are seeking a set $F$ that is
feasible both in $D_1$ and $D_2$. Again, we will focus on the case
where we have an additional requirement that $|F|=k$ for an integer
$k$ given in the input, and we will assume that $D_1$ and $D_2$ are
(projected) linear delta-matroids provided with representations over a
common field. Again, this version can be solved in randomized
polynomial time~\cite{KW24}.

We investigate the parameterized complexity of generalizations of
these problems. We focus on the following variants, as surveyed in Section~\ref{sec:ourresults}.
Restricted to linear or projected linear inputs, they are as follows. 
\begin{itemize}
\item \textsc{$q$-Delta-matroid Intersection}:
  Given $q$ (projected) linear delta-matroids $D_i=(V,\cF_i)$, $i \in [q]$
  represented over a common field $\F$, and an integer $k \in \N$,
  is there a set $F \subseteq V$ with $|F|=k$ such that $F \in \cF_i$
  for every $i \in [q]$?
\item \textsc{$q$-Delta-matroid Parity}:
  Given a (projected) linear delta-matroid $D=(V,\cF)$ provided with linear
  representation, a partition $\cP$ of $V$ into blocks of size $q$,
  and $k \in \N$, is there a union of $k$ blocks from $\cP$ that is
  feasible in $D$? 
\item \textsc{Delta-matroid Set Packing}:
  Given a (projected) linear delta-matroid $D=(V,\cF)$ provided with linear
  representation, a partition $\cP$ of $V$ into arbitrary blocks,
  and $k \in \N$, is there a set $F \in \cF$ with $|F|=k$ which is the 
  union of blocks from $\cP$?
\end{itemize}
We also recall the following special cases when $q=3$:
\begin{itemize}
\item \textsc{DDD Intersection} refers to \textsc{3-Delta-matroid Intersection}
\item \textsc{DDM Intersection} refers to the special case of
  \textsc{DDD Intersection} where $D_3$ is a linear matroid
\item \textsc{D$\Pi$M Intersection} is \textsc{Delta-Matroid Parity}
  with an additional parity constraint $\Pi$, i.e.,
  \textsc{DDM Intersection} when $D_2$ is a pairing delta-matroid
\end{itemize}
Over linear matroids, these problems have a status as follows.
\textsc{$q$-Matroid Intersection} for linear matroids can be solved in
randomized time $O^*(2^{(q-2)k})$ over a field of characteristic 2, and
randomized time and space $O^*(2^{qk})$ more generally~\cite{EKW23}.
The time for general fields can be improved to $O^*(4^k)$ for $q=3$~\cite{EKW23,BrandKS23}
and for $q=4$ over characteristic 0~\cite{BrandKS23}.
For the remaining problems, the fastest algorithm runs in $O^*(2^{qk})$
respectively $O^*(2^k)$~\cite{EKW23}.
All of these algorithms are randomized.
There are also somewhat slower, but single-exponential deterministic
algorithms over characteristic 0~\cite{BrandP21}
and over arbitrary fields~\cite{FominLPS16JACM}.

We consider the two basic parameters $k$ (cardinality of the solution)
and $r$ (the rank of the involved delta-matroid(s)). We also consider
the special case where some $D_i$'s are in fact basis
delta-matroids of linear matroids.

The algorithm for the matroid case of the above, using determinantal
sieving, works as follows. Let $V=\{v_1,\ldots,v_n\}$
be the ground set and $X=\{x_1,\ldots,x_n\}$ a set of variables. For
\textsc{$q$-Matroid Intersection}, on matroids $M_1, \ldots, M_q$,
we can use the Cauchy-Binet formula to efficiently evaluate a
polynomial $P(X)$ whose monomials enumerate the elements of the
intersection $M_1 \cap M_2$, where we can sieve for those monomials
which in addition are bases of $M_3, \ldots, M_q$ (see~\cite{EKW23}).
For the remaining problems, we replace $P(X)$ by the much simpler
polynomial whose monomials enumerate unions of blocks from $\cP$
and apply sieving for all the matroids $M_1, \ldots, M_q$.

Over linear delta-matroids, our findings are the following. To begin
with, under a pure cardinality parameter all the problems above are
either in P (for $q=2$) or W[1]-hard. This holds even in relatively
restricted cases, including twisted matroids (see Section~\ref{sec:dm-part1-hardness}).
On the other hand, under a rank parameter we recover all the above
results over delta-matroids, by just replacing the cardinality parameter $k$
by the rank parameter $r$. In fact, the strategy is the same as for
linear matroids. First, we generalize the determinantal sieving result
(which is phrased in terms of sieving for bases of a linear matroid)
to sieving for sets that are feasible in a linear delta-matroid of
rank $r$; this can be done with the same dependency on $r$ as
determinantal sieving has on $k$. Next, we construct a polynomial
which enumerates solutions to \textsc{Delta-matroid Intersection},
and sieve for the remaining constraints over this polynomial.
This polynomial is simply the Pfaffian of a skew-symmetric matrix
constructed by Koana and Wahlström~\cite{KW24} as a linear
representation of delta-sums of linear delta-matroids.

To give two further observations, we first note that the result only
requires that the delta-matroid that is plugged into the sieving
procedure has bounded rank; the initial two layers $D_1$ and $D_2$
that are used to construct $P(X)$ can be arbitrary. Second, if any
layer $D_i$ is a linear matroid, then naturally we can simply compute
its $k$-truncation and replace its rank by $k$ in the above-mentioned
results.

In particular, for the problems with $q=3$ the status is as follows.
\begin{itemize}
\item \textsc{DDD Intersection} is W[1]-hard parameterized by $k$,
  but FPT parameterized by $\rank(D_3)$
\item \textsc{DDM Intersection} and \textsc{D$\Pi$M Intersection}
  are FPT parameterized by $k$
\end{itemize}

An additional parameter that has been considered for delta-matroids is
the \emph{spread}~\cite{Moffatt19deltamatroids}. The spread of a
delta-matroid $D=(V,\cF)$ is $\max_{F \in \cF} |F|-\min_{F \in \cF} |F|$, 
and it is zero precisely when $D$ is the basis delta-matroid of a matroid.
However, if the spread of $D$ is $s$, then the rank is at most $k+s$,
hence the rank is the stronger parameter. On the other hand,
parameterizing purely by spread is unreasonable since
\textsc{3-Matroid Intersection} (with spread 0) is NP-hard. 
Thus the spread is not a useful parameter for the problems listed
above, but it could potentially be useful in other questions.
For example, is \textsc{Delta-matroid Intersection} FPT parameterized
by the spread, even if the delta-matroids are not linear?
Note that this does not allow the trick of reducing from
\textsc{Delta-matroid Parity} since the pairing delta-matroid has
the largest possible spread $|V|$. 

\paragraph*{Structure of the section.}
We demonstrate the enumerating polynomial in Section~\ref{sec:dm-gf};
we show the linear delta-matroid sieving procedure in Section~\ref{sec:dm-sieve};
we show FPT results in Section~\ref{sec:dm-part1-fpt} and
hardness results in Section~\ref{sec:dm-part1-hardness}.

\subsection{Generating functions for delta-matroid problems}
\label{sec:dm-gf}
In preparation for the positive results of this section, we present efficiently computable 
multivariate generating functions (i.e., \emph{enumerating polynomials}, in the terminology of Eiben et al.~\cite{EKW23})
for the \textsc{Delta-matroid Intersection} and \textsc{Delta-matroid Parity} problems over linear delta-matroids.
This follows from the algebraic algorithms of Koana and Wahlström~\cite{KW24}.

To reduce notational overhead, we establish some terms. Let $\cS \subseteq 2^V$ be a set system.
Let $X=\{x_v \mid v \in V\}$ be a set of variables. An \emph{enumerating polynomial} for $\cS$ over a field $\mathbb{R}$
is a polynomial
\[
  P_\cS(X) = \sum_{S \in \cS} c_S \prod_{v \in S} x_v,
\]
where $c_S \in \mathbb{F}$ is some non-zero constant.
We base the results on the following support lemma (derived from the construction of linear representation for $D_1 \Delta D_2$~\cite{KW24}).

\begin{lemma} \label{lm:delta-combo-polynomial}
  Let $D_1=(V,\cF_1)$ and $D_2=(V,\cF_2)$ be linear delta-matroids provided as representations over a common field $\F$.
  Let $X=\{x_{v,i} \mid v \in V, i \in [3]\}$ be a set of variables.
  There is a skew-symmetric matrix $A$ indexed by $V \cup T$ for a new set of elements $T$
  such that for every $S \subseteq V$ we have
  \[
    \Pf A[S \cup T] = \sum_{F_1 \in \cF_1, F_2 \in \cF_2 \colon F_1 \Delta F_2=S}
    f(F_1,F_2) \prod_{v \in F_1 \cap S} x_{v,1} \prod_{v \in F_2 \cap S} x_{v,2} \prod_{v \in F_1 \cap F_2} x_{v,3},
  \]
  where $f(F_1,F_2)$ is a non-zero constant for each $F_1 \in \cF_1$, $F_2 \in \cF_2$.   
\end{lemma}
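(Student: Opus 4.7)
The plan is to adapt the contraction-based linear representation of the delta-sum $D_1 \Delta D_2$ from~\cite{KW24}, injecting the variables $x_{v,1}, x_{v,2}, x_{v,3}$ by scaling specific matrix entries so that the Pfaffian expansion naturally separates the three per-vertex cases $v \in F_1 \setminus F_2$, $v \in F_2 \setminus F_1$, and $v \in F_1 \cap F_2$. First I would convert each twist representation of $D_i$ into a contraction representation $D_i = \bD(A_i)/T_i$ via~\eqref{eq:twist-to-contraction}, so that $A_i$ is skew-symmetric on $V \cup T_i$ with $\Pf A_i[F_i \cup T_i] \ne 0$ exactly when $F_i \in \cF_i$.

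The matrix $A$ would then be indexed by $V \cup V_1 \cup V_2 \cup T_1 \cup T_2$, where $V_1, V_2$ are disjoint copies of $V$ and $T := V_1 \cup V_2 \cup T_1 \cup T_2$. The $V_i \cup T_i$ block is a rescaling of $A_i$ with each $V_i(v)$ row and column multiplied by a factor $u_{v,i}$ to be chosen; each $v \in V$ is connected to its two copies by identification edges $A[v, V_1(v)] = x_{v,2}/u_{v,2}$ and $A[v, V_2(v)] = x_{v,1}/u_{v,1}$; and the two copies of each $v$ are joined by a neutral edge $A[V_1(v), V_2(v)] = 1$, with all other cross-block entries zero. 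A perfect matching of the support graph of $A[S \cup T]$ then decomposes uniquely into a pair $(F_1, F_2)$ with $F_1 \Delta F_2 = S$, where $F_i = \{v : V_i(v) \text{ is matched inside the } A_i \text{ block}\}$, together with a perfect matching of $A_i[F_i \cup T_i]$ for each $i$. Grouping contributions by $(F_1, F_2)$ and using the Pfaffian expansion gives
\[
\Pf A[S \cup T] = \sum_{F_1 \Delta F_2 = S} \sigma(F_1, F_2)\, \Pf A_1[F_1 \cup T_1]\, \Pf A_2[F_2 \cup T_2] \prod_{v \in F_1 \cap F_2} (u_{v,1} u_{v,2}) \prod_{v \in F_1 \setminus F_2} x_{v,1} \prod_{v \in F_2 \setminus F_1} x_{v,2}.
\]
Choosing any factorization $u_{v,1} u_{v,2} = x_{v,3}$ (for instance $u_{v,1} = x_{v,3}$, $u_{v,2} = 1$) then yields the claimed formula with $f(F_1, F_2) = \sigma(F_1, F_2) \Pf A_1[F_1 \cup T_1] \Pf A_2[F_2 \cup T_2]$, which is nonzero precisely when $F_1 \in \cF_1$ and $F_2 \in \cF_2$.

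The main obstacle is selectively producing $x_{v,3}$ only in the $F_1 \cap F_2$ case; the solution above splits it into two factors placed in the two $A_i$ blocks, and cancels their unwanted appearance in $F_i \setminus F_{3-i}$ by dividing the identification edges by the same factors. This forces entries such as $x_{v,1}/x_{v,3}$ into $A$, but the Pfaffian nonetheless lies in $\F[\{x_{v,i}\}]$ because these denominators cancel exactly against the $u_{v,i}$ factors accumulated in $\prod_{v \in F_i} u_{v,i}$ from the rescaled $A_i$ blocks (alternatively, one may work with formal square-root variables and clear denominators at the end). The remaining bookkeeping---verifying that the four per-vertex matching cases are the only possibilities, that skew-symmetry is preserved throughout, and that the signs $\sigma$ behave consistently---is routine given the explicit block structure.
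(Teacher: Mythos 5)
Your proposal is correct but takes a genuinely different route from the paper's. The paper forms $A_1$ and $A_2$ as contraction representations of the \emph{duals} $D_1^*$ and $D_2^*$. With that choice, ``$v^i$ is matched inside the $A_i$-block'' corresponds to $v\notin F_i$, so $v\in F_1\cap F_2$ forces both copies $v^1,v^2$ to be matched outside the two $A_i$-blocks; since such a $v$ lies outside $S$, the only option is the $v^1v^2$ edge. The paper can therefore put $x_{v,3}$ directly on the $v^1v^2$ entry and $x_{v,1},x_{v,2}$ on the $v$--$v^1$ and $v$--$v^2$ entries, with no scaling and purely polynomial matrix entries. The expansion then follows in one shot from the Pfaffian sum formula (Lemma~\ref{lemma:sum-pf}).

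You instead represent $D_1,D_2$ directly (no dualization), so ``$v^i$ matched inside the $A_i$-block'' corresponds to $v\in F_i$, and $v\in F_1\cap F_2$ is the case where the $v^1v^2$ edge is \emph{not} used. That blocks the obvious placement of $x_{v,3}$ and forces you to route it through row/column rescalings $u_{v,1},u_{v,2}$ of the two $A_i$-blocks, compensated by dividing the two identification edges. This works --- the cancellation you invoke is real: grouping per matching gives $\prod_{v\in F_1}u_{v,1}\prod_{v\in F_2}u_{v,2}$ from the scaled blocks against $\prod_{v\in F_1\setminus F_2}u_{v,1}^{-1}\prod_{v\in F_2\setminus F_1}u_{v,2}^{-1}$ from the identification edges, leaving exactly $\prod_{v\in F_1\cap F_2}x_{v,3}$ --- but it means the matrix has rational-function entries and you need a separate argument (or a detour through square-root variables) that the final Pfaffian is polynomial. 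What the paper's dualization buys you, compared to your version, is precisely that you never leave $\F[\{x_{v,i}\}]$: every matrix entry is a variable or a field element, so there is nothing to cancel or clear. Both constructions yield the same class of $(F_1,F_2)$-terms and the same nonvanishing coefficients, so your proof is sound; it is just the less economical of the two.
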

\begin{proof}
  Let $D_1^*=\bD(A_1)/T_1$ and $D_2^*=\bD(A_2)/T_2$ be contraction representations of the duals
  of $D_1$ and $D_2$, where $T_1 \cap T_2=\emptyset$. This can be constructed in matrix multiplication
  time~\cite{KW24}. 
  Define a matrix $A$ indexed by $V \cup V_1 \cup T_1 \cup V_2 \cup T_2$
  where $V_1, V_2$ are copies of $V$ such that
  $A[V_1 \cup T_1]=A_1$, $A[V_2 \cup T_2]=A_2$,
  and for every $v \in V$ (with corresponding copies $v^1 \in V_1$ and $v^2 \in V_2$)
  let $A[v,v^1]=x_{v,1}$, $A[v,v^2]=x_{v,2}$ and $A[v^1,v^2]=x_{v,3}$,
  completed so that $A$ is skew-symmetric and all other positions are 0.
  We claim that $A$ is the matrix we need, with $T=V_1 \cup T_1 \cup V_2 \cup T_2$.
  Indeed, as in~\cite{KW24}, we view $A$ as the sum of the matching delta-matroid $A_H$
  whose entries correspond to variables $X$ and $A'$ which retains the copies of $A_1$ and $A_2$.
  Then by the Pfaffian sum formula in Lemma~\ref{lemma:sum-pf}, for $S \subseteq V$,
  \begin{align*}
    \Pf A[S \cup T] &= \sum_{S=S_1 \uplus S_2} \sum_{S' \subseteq V \setminus S}
    \sigma_{S, S'} \prod_{v \in S_1} x_{v,1} \prod_{v \in S_2} x_{v,2} \prod_{v \in S'} x_{v,3} \cdot
                      \Pf A_1[S_1' \cup T_1] \cdot
                      \Pf A_2[S_2' \cup T_2],
  \end{align*}
  where $\sigma_{S,S'} \in \{ 1, -1 \}$, $\uplus$ denotes disjoint union,
  $S_1'=V_1 \setminus \{v^1 \mid v \in S_1 \cup S'\}$
  and $S_2'=V_2 \setminus \{v^2 \mid v \in S_2 \cup S'\}$.
  A term $(S_1,S_2,S')$ of this sum contributes non-zero precisely when
  $S_1 \cup S' \in \cF_1$ and $S_2 \cup S' \in \cF_2$, since $A_1$ and $A_2$
  represent the duals of $D_1$ and $D_2$. Thus letting $F_1=S_1 \cup S'$
  and $F_2=S_2 \cup S'$ we have $S'=F_1 \cap F_2$, $S_1=F_1 \setminus F_2$
  and $S_2=F_2 \setminus F_1$ as expected. 
\end{proof}

It is easy to specialize this construction into providing the enumerating polynomials we need.

\begin{corollary} \label{cor:intersection-enum}
  There are enumerating polynomials for (projected) linear delta-matroid intersection  
  and parity.
\end{corollary}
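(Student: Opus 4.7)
The plan is to specialize Lemma~\ref{lm:delta-combo-polynomial} to each case. For the intersection of two linear delta-matroids $D_1,D_2$ on $V$, I would take the skew-symmetric matrix $A$ indexed by $V\cup T$ produced by that lemma and focus on the principal minor at $S=\emptyset$. Since $F_1\Delta F_2=\emptyset$ forces $F_1=F_2=:F\in\cF_1\cap\cF_2$ and the products over $F_i\cap S$ are empty, we get
\[
  \Pf A[T] \;=\; \sum_{F\in\cF_1\cap\cF_2} f(F,F)\,\prod_{v\in F} x_{v,3}.
\]
After renaming $x_v:=x_{v,3}$ (the variables $x_{v,1},x_{v,2}$ do not appear), this is an enumerating polynomial for $\cF_1\cap\cF_2$, computable in polynomial time by any standard Pfaffian algorithm.

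For \textsc{Delta-matroid Parity} with a partition $\Pi$ of $V$ into pairs, I would reduce to intersection against the pairing delta-matroid $D_\Pi$, whose feasible sets are exactly the unions of pairs of $\Pi$. This delta-matroid admits an explicit deterministic linear representation: place the $2\times 2$ skew-symmetric block with off-diagonal entry $\pm 1$ on each pair, and zeros elsewhere; then $A[F]$ is non-singular iff $F$ contains both or neither element of each pair. A parity solution is precisely a set in $\cF\cap\cF_{D_\Pi}$, so the intersection construction of the previous paragraph applied to $D$ and $D_\Pi$ provides the desired polynomial.

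For the projected linear case, I would use the fact recalled in Section~\ref{ssec:rep} that each input can be written as an elementary projection $D_i=D_i'|\{y_i\}$ with $D_i'$ linear on $V\cup\{y_i\}$. For each guess $(a,b)\in\{0,1\}^2$ of whether $y_1\in F_1'$ and $y_2\in F_2'$, contracting or deleting the auxiliary element produces a linear delta-matroid $D_i^{(a)}$ on $V$ in polynomial time, and $\cF_1\cap\cF_2=\bigcup_{a,b}\cF_1^{(a)}\cap\cF_2^{(b)}$. I would apply the intersection construction to each of the four pairs to obtain polynomials $P_{a,b}(X)$ and form $\sum_{a,b}\alpha_1^a\alpha_2^b\,P_{a,b}(X)$ for independent uniformly random $\alpha_1,\alpha_2\in\F$. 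For each $F\in\cF_1\cap\cF_2$ the coefficient of $\prod_{v\in F}x_v$ is a polynomial in $\alpha_1,\alpha_2$ supported on the distinct monomials $\alpha_1^a\alpha_2^b$ for those $(a,b)$ that witness $F$, each with a non-zero scalar coefficient, hence not identically zero; by Schwartz--Zippel all such coefficients are simultaneously non-zero with high probability, yielding a randomized enumerating polynomial. The projected parity version is obtained by further intersecting with the (linear) pairing delta-matroid exactly as before. The main obstacle is precisely this last step, where deterministically combining the four subproblems risks cancellations; the random scaling is the standard workaround.
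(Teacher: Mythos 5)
Your proof is correct and takes a genuinely different specialization of Lemma~\ref{lm:delta-combo-polynomial} for the intersection case. The paper applies the lemma to $D_1$ and the \emph{dual} $D_2^*$, sets $S=V$, substitutes $x_{v,3}=0$, $x_{v,2}=1$, and reads off an enumerating polynomial in the variables $X_1=\{x_{v,1}\}$. You instead apply the lemma to $D_1,D_2$ directly and read off the principal Pfaffian at $S=\emptyset$, i.e.\ $\Pf A[T]$, in the variables $X_3=\{x_{v,3}\}$. Both are valid; yours avoids an extra dualization and is arguably cleaner. Your parity reduction (intersection with the pairing delta-matroid, with an explicit $2\times 2$-block representation) matches the paper. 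For the projected case, the paper just says to enumerate over contracting/deleting the two elementary projection elements — i.e.\ the guessing is moved outside the construction, which is also how the corollary is consumed later (e.g.\ Lemma~\ref{lm:ddd-intersection}) — whereas you try to package the four cases into a single polynomial.

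The one place you over-engineer: the cancellation concern that motivates your random-scaling step does not arise. The underlying $D_1',D_2'$ are linear delta-matroids and therefore \emph{even}, so for any fixed $F\subseteq V$ the two conditions $F\in\cF_i'$ and $F\cup\{y_i\}\in\cF_i'$ cannot both hold (they differ in parity). Hence each $F\in\cF_1\cap\cF_2$ is witnessed by exactly one guess $(a,b)$, the union $\bigcup_{a,b}\cF_1^{(a)}\cap\cF_2^{(b)}$ is disjoint, and the plain sum $\sum_{a,b}P_{a,b}(X)$ is already an enumerating polynomial with no risk of coefficient cancellation. The random $\alpha_1,\alpha_2$ scaling is a correct but unnecessary hedge. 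Everything else — including the renaming $x_v:=x_{v,3}$, the observation that $x_{v,1},x_{v,2}$ do not appear in $A[T]$ since its index set excludes $V$, and the reduction to elementary projections per the cited representation result — is sound.
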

\begin{proof}
  For the former, let $D_1=D_1'|Z_1$ and $D_2=D_2'|Z_2$ be projections (possibly with $Z_1, Z_2=\emptyset$)
  of linear delta-matroids $D_1'$, $D_2'$ and apply Lemma~\ref{lm:delta-combo-polynomial}
  to $D_1$ and the dual $D_2^*$. Let $(A,T)$ be the result. Set $x_{v,3}=0$ and $x_{v,2}=1$ for every $v \in V$
  and evaluate $\Pf A$ on the result. Let $X_i=\{x_{v,i} \mid v \in V\}$, $i=1, 2$. 
  This restricts the sum to be over pairs $(S_1,S_2)$ where $S_1 \cup S_2=V$
  and $S_1 \cap S_2=\emptyset$. Since the construction is performed over the dual of $D_2$,
  this corresponds to enumeration over elements $S_1 \in \cF(D_1) \cap \cF(D_2)$
  and $\Pf A$ as a polynomial over the remaining variables $X_1$ is the polynomial we seek.
  For the second, let $(D,\Pi)$ be an instance of \textsc{Delta-matroid Parity}.
  Let $D_\Pi$ be the pairing delta-matroid on pairs $\Pi$.
  Then perfect delta-matroid parity solutions correspond to members of $D \cap D_\Pi$. 
  For the projected variants, in both cases we can reduce the inputs to elementary projections
  (i.e., $Z_1=\{z_1\}$ and $Z_2=\{z_2\}$) and explicitly enumerate over contracting or
  deleting each of $z_1$ and $z_2$ before proceeding as above. 
\end{proof}

\subsection{Delta-matroid sieving}
\label{sec:dm-sieve}

In this section, we give a generalization of determinantal sieving \cite{EKW23}.
To that end, we first propose a ``sparse'' representation for delta-matroids.

\paragraph*{Sparse representation.}

For a delta-matroid of rank $r$, its linear representation may involve $\Omega(n^2)$ non-zero entries, even if $r = O(1)$.\footnote{Consider a delta-matroid, where all sets of size 0 or 2 are feasible. The most straightforward representation, in our view, is a matrix $xy^T - yx^T$, where $x$ and $y$ are random vectors of dimension $n$.}
We give a sparse representation that contains $O(rn)$ non-zero entries, in the context of twist representation as well as contraction representation.

First, let us consider twist representation.
Given a delta-matroid $D = \bD(A) \Delta S$ in twist representation, let $F_{\max}$ be a maximum feasible set. 
As $A[S \Delta F_{\max}]$ is non-singular, the pivoting $A^* = A * (F_{\max} \Delta S)$ is well-defined, and $D = \bD(A) \Delta (F_{\max} \Delta S) \Delta F_{\max} = \bD(A^*) \Delta F_{\max}$ by Lemma~\ref{lemma:tucker}.
We claim that $A^*[V \setminus F_{\max}] = O$.
Assume for contradiction that there is a pair $\{ v, v' \} \subseteq V \setminus F_{\max}$ such that $A^*[v, v'] \ne 0$.
This implies that $\{ v, v' \}$ is feasible in $\bD(A^*)$, and consequently, $F_{\max} \cup \{ v, v' \}$ is feasible in $D$.
However, this contradicts the maximality of $F_{\max}$.

For contraction representation, we
transform the sparse representation into a contraction representation via Equation~\eqref{eq:twist-to-contraction}.
We then obtain $D = \bD(A') / T$, where
\begin{align*}
  \kbordermatrix{
    & T & S & V \setminus S \\
    T & A[S] & -I & -A[S, V \setminus S] \\
    S & I & O & O \\
    V \setminus S & -A[V \setminus S, S] & O & O \\
  }
\end{align*}

\paragraph*{Delta-matroid sieving.}

As in determinantal sieving, polynomial interpolation and inclusion-exclusion will be crucial for delta-matroid sieving as well:

\begin{lemma}[Interpolation]
  \label{lemma:interpolation}
  Let $P(z)$ be a polynomial of degree $n - 1$ over a field $\F$.
  Suppose that $P(z_i) = p_i$ for distinct $z_1, \cdots, z_n \in \F$.
  By the Lagrange interpolation, 
  \[
    P(z)
    = \sum_{i \in [n]} p_i \prod_{j \in [n] \setminus \{ i \}} \frac{z - z_j}{z_i - z_j}.
  \]
  Thus, given $n$ evaluations $p_1, \dots, p_n$ of $P(z)$, the coefficient of $z^t$ in $P(z)$ for every $t \in [n]$ can be computed in polynomial time.
\end{lemma}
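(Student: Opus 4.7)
The plan is to verify directly that the displayed formula defines the unique polynomial of degree at most $n-1$ agreeing with $P$ at the $n$ prescribed points, and then observe that reading off coefficients from this closed form is a polynomial-time operation.

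First I would set $Q(z) := \sum_{i \in [n]} p_i \prod_{j \in [n] \setminus \{i\}} \frac{z - z_j}{z_i - z_j}$ and note that $Q$ is a polynomial in $z$ of degree at most $n-1$, since each summand is a scalar multiple of a product of $n-1$ linear factors in $z$. Next I would evaluate $Q$ at each interpolation point $z_k$: in the term indexed by $i = k$, every factor $\frac{z_k - z_j}{z_k - z_j}$ equals $1$, giving $p_k$; in every term with $i \neq k$, the factor $\frac{z_k - z_k}{z_i - z_k}$ vanishes. Hence $Q(z_k) = p_k = P(z_k)$ for all $k \in [n]$. Note also that the distinctness assumption on $z_1, \dots, z_n$ guarantees that no denominator is zero, so $Q$ is well defined.

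To identify $P$ with $Q$, I would invoke the standard fact that a nonzero polynomial of degree at most $n-1$ over a field $\F$ has at most $n-1$ roots; thus the difference $P - Q$, which has degree at most $n-1$ and vanishes at the $n$ distinct points $z_1, \dots, z_n$, must be identically zero. This yields the Lagrange formula. For the algorithmic claim, to extract the coefficient of $z^t$ (or indeed the full coefficient vector), I would precompute each scalar $p_i / \prod_{j \neq i}(z_i - z_j)$ in $O(n)$ operations, expand each product $\prod_{j \neq i}(z - z_j)$ into a coefficient vector by incrementally multiplying the $n-1$ linear factors at a cost of $O(n)$ field operations per multiplication, and sum the resulting $n$ polynomials coordinatewise. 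The total running time is $O(n^3)$ field operations, which is polynomial. There is no real obstacle here; the statement is entirely classical, and once the Lagrange formula is written down the verification reduces to the uniqueness of polynomial interpolation and a routine cost count.
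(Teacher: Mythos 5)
Your proof is correct and is the standard argument; the paper states this lemma without proof, treating Lagrange interpolation as a classical fact, and your verification (degree bound, evaluation at the nodes, uniqueness via the root-count bound, and the $O(n^3)$ coefficient extraction) is exactly the canonical justification one would supply.
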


\begin{lemma}[Inclusion-exclusion \cite{Wahlstrom13STACS}]
  \label{lemma:inclusion-exclusion}
  Let $P(Y)$ be a polynomial over a set of variables $Y = \{ y_1, \cdots, y_n \}$ and a field of characteristic two.
  For $T \subseteq [n]$, $Q$ be a polynomial identical to $P$ except that the coefficients of monomials not divisible by $\prod_{i \in T} y_i$ is zero.
  Then, $Q = \sum_{I \subseteq T} P_{-I}$, where $P_{-I}(y_1, \cdots, y_n) = P(y_1', \cdots, y_n')$ for $y_i' = y_i$ if $i \notin I$ and $y_i' = 0$ otherwise. 
\end{lemma}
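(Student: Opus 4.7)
The plan is a direct coefficient-by-coefficient analysis of the right-hand side, exploiting the fact that in characteristic $2$ every even integer vanishes.

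Write $P(Y) = \sum_{m} c_m \, m$, where the sum ranges over monomials $m = \prod_{i=1}^n y_i^{e_i}$ and $\supp(m) = \{ i \in [n] : e_i \geq 1 \}$. The key observation is that evaluating $P_{-I}$ amounts to substituting $y_i = 0$ for every $i \in I$, so the monomial $m$ contributes $c_m m$ to $P_{-I}$ when $\supp(m) \cap I = \emptyset$, and contributes $0$ otherwise. Summing over all $I \subseteq T$, the total coefficient of $m$ in $\sum_{I \subseteq T} P_{-I}$ is
\[
  c_m \cdot \bigl|\{ I \subseteq T : I \cap \supp(m) = \emptyset \}\bigr| \;=\; c_m \cdot 2^{|T \setminus \supp(m)|}.
\]

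Now I invoke characteristic $2$: the factor $2^{|T \setminus \supp(m)|}$ equals $0$ whenever $T \setminus \supp(m) \neq \emptyset$, and equals $1$ precisely when $T \subseteq \supp(m)$, i.e.\ when $m$ is divisible by $\prod_{i \in T} y_i$. Hence the surviving monomials in $\sum_{I \subseteq T} P_{-I}$ are exactly those divisible by $\prod_{i \in T} y_i$, each retaining its original coefficient $c_m$. This matches the definition of $Q$, so the identity holds.

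There is essentially no obstacle here — the whole lemma reduces to counting subsets of $T$ disjoint from $\supp(m)$ and noting that the count is a power of two. The only point worth flagging is that the characteristic $2$ assumption is what removes the need for alternating signs: in other characteristics the analogous identity is $Q = \sum_{I \subseteq T} (-1)^{|I|} P_{-I}$, because then one applies $\sum_{I \subseteq S} (-1)^{|I|} = [S = \emptyset]$ to $S = T \setminus \supp(m)$ instead of $2^{|S|} = [S = \emptyset]$.
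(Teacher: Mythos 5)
Your proof is correct. The paper does not give its own proof of this lemma — it is cited directly to \cite{Wahlstrom13STACS} — so there is nothing to compare against; your coefficient-by-coefficient count, using that $2^{|T \setminus \supp(m)|} = 0$ in characteristic two unless $T \subseteq \supp(m)$, is the standard and complete argument, and the closing remark about the signed version $\sum_{I \subseteq T} (-1)^{|I|} P_{-I}$ over general fields is also accurate.
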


We will also utilize the fact that the Pfaffian of a sparse representation can be expressed as a sum of the product of Pfaffian and determinant:

\begin{lemma} \label{lemma:pf-det-decomposition}
  For a skew-symmetric matrix $A$ indexed by $V$ and a matrix $B$ whose rows and columns are indexed by $V'$ and $V$, respectively, with $|V| \le |V'|$,
  \begin{align*}
    \Pf C = \sum_{U \subseteq V,\, |U| = |V| - |V'|} \sigma_{U} \Pf A[U] \cdot \det B[V \setminus U, V'], 
    \text{ where }
    C = \begin{pmatrix}
      A & B \\ -B^T & O
    \end{pmatrix}
  \end{align*}
  and $\sigma_{U} \in \{ 1, -1 \}$.
\end{lemma}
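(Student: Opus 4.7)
The plan is to derive the identity as a direct application of the Pfaffian sum formula (Lemma~\ref{lemma:sum-pf}) together with Lemma~\ref{lemma:det-pf}. I will decompose $C$ as $C = C_1 + C_2$, where both summands are skew-symmetric and indexed by $V \cup V'$:
\begin{align*}
  C_1 = \begin{pmatrix} A & O \\ O & O \end{pmatrix},
  \qquad
  C_2 = \begin{pmatrix} O & B \\ -B^T & O \end{pmatrix}.
\end{align*}
Applying Lemma~\ref{lemma:sum-pf} to this decomposition expands $\Pf C$ as a signed sum $\sum_{W \subseteq V \cup V'} \sigma_W \Pf C_1[W] \cdot \Pf C_2[(V \cup V') \setminus W]$, and the task reduces to identifying which pairs $(W, (V \cup V') \setminus W)$ contribute nonzero terms.

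First I would observe that $\Pf C_1[W]$ is zero whenever $W$ contains any index from $V'$, because $C_1$ has a zero row and column at every such index. Hence only subsets $W = U \subseteq V$ contribute, and for such $U$ we have $\Pf C_1[U] = \Pf A[U]$. Second, I would examine $\Pf C_2[Z]$ for $Z = (V \setminus U) \cup V'$. Restricting $C_2$ to $Z$ gives a block matrix of the shape $\begin{pmatrix} O & B[V\setminus U, V'] \\ -B[V \setminus U, V']^T & O \end{pmatrix}$, which has a perfect matching in its support graph only if $|V \setminus U| = |V'|$, i.e., $|U| = |V| - |V'|$. In this case, Lemma~\ref{lemma:det-pf} converts the Pfaffian into a determinant: up to a sign depending only on $|V'|$, it equals $\det B[V \setminus U, V']$.

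Putting these two observations together collapses the sum from Lemma~\ref{lemma:sum-pf} to a sum over subsets $U \subseteq V$ with $|U| = |V| - |V'|$, and yields
\begin{align*}
  \Pf C = \sum_{\substack{U \subseteq V \\ |U| = |V| - |V'|}} \sigma_U \, \Pf A[U] \cdot \det B[V \setminus U, V'],
\end{align*}
where $\sigma_U \in \{-1, +1\}$ absorbs both the sign coming from Lemma~\ref{lemma:sum-pf} for the bipartition $(U, (V \cup V') \setminus U)$ of the index set and the $(-1)^{\binom{|V'|}{2}}$ from Lemma~\ref{lemma:det-pf}.

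The main obstacle I anticipate is bookkeeping the signs $\sigma_U$ cleanly. Since the statement only asserts $\sigma_U \in \{1,-1\}$ and does not specify a formula, I will fix an ordering on $V \cup V'$ that places $V$ before $V'$, track the permutation required to bring the columns of $C_2[Z]$ into the canonical off-diagonal form used in Lemma~\ref{lemma:det-pf}, and combine this with the sign in Lemma~\ref{lemma:sum-pf}. No estimates are needed beyond this, so the argument is essentially a structural unpacking of the two lemmas.
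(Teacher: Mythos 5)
Your proposal is correct and follows essentially the same route as the paper: the decomposition $C = C_1 + C_2$, the application of the Pfaffian sum formula (Lemma~\ref{lemma:sum-pf}), the observation that only $U \subseteq V$ with $|V \setminus U| = |V'|$ can contribute, and the conversion of $\Pf C_2[(V \setminus U) \cup V']$ to a determinant via Lemma~\ref{lemma:det-pf}. If anything your vanishing argument for $\Pf C_1[W]$ (any index in $V'$ forces a zero row/column) is stated a bit more sharply than the paper's, and your handling of the block sizes quietly corrects an apparent sign-of-inequality slip in the lemma statement, but the substance is the same.
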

\begin{proof}
  Letting
  \begin{align*}
    C = C_1 + C_2, \text{ where }
    C_1 = \begin{pmatrix} A & O \\ O & O \end{pmatrix}  \text{ and }
    C_2 = \begin{pmatrix} O & B \\ -B^T & O \end{pmatrix}
  \end{align*}
  Lemmas~\ref{lemma:sum-pf} yields
  \begin{align*}
    \Pf C = \sum_{U \subseteq V \cup V'} \sigma_U \Pf C_1[U] \cdot \Pf C_2[(V \cup V') \setminus U].
  \end{align*}
  Observe that $\Pf C_1[U] = 0$ for any subset $U$ where $|U| > |V|$, as the support graph of $C_1$ contains edges exclusively within $V$.
  Similarly, $\Pf C_2[(V \cup V') \setminus U] = 0$ when $|U| < |V|$, since all edges in the support graph of $C_2$ are incident with vertices in $V'$.
  These observations imply that non-zero contributions to the sum occur only when $U$ has size exactly $|V| - |V'|$.
  Consequently, $\Pf C_1[U] = \Pf A[U]$ and $\Pf C_2 [(V \cup V') \setminus U] = \det B[V \setminus U, V']$ by Lemma~\ref{lemma:det-pf}, which concludes the proof.
\end{proof}

Before presenting our sieving method, let us introduce notations for polynomials, which is in consistent with \cite{EKW23}.
Consider a polynomial $P(X)$ over a set of variables $X = \{ x_1, \cdots, x_n \}$.
A monomial is a product $m = x_1^{m_1} \dots x_n^{m_n}$, where $m_1, \dots, m_n$ are nonnegative integers.
A monomial $m$ is called multilinear if $m_i \le 1$ for each $i \in [n]$.
The \emph{support} of a monomial $m$, denoted by $\supp(m)$, is $\{ i \in [n] \mid m_i
> 0 \}$. 
We also use the notation $X^m$ for the monomial $m=x_1^{m_1} x_2^{m_2} \cdots x_n^{m_n}$,
where its coefficient is excluded.
The coefficient of $m$ in $P$ is denoted by $P(m)$.
Therefore, the polynomial $P$ can be written as $P(X)=\sum_m P(m) X^m$
where $m$ ranges over all monomials in $P(X)$.


\dmsieve*

\begin{proof}
  Our objective is to transform the polynomial $P$ into $Q$ such that (i) every non-multilinear monomial $m$ vanishes, (ii) every multilinear monomial $m$ is translated into $\Pf A[T \cup V_m] \cdot m$, where $V_m = \{ v_i \mid i \in \supp(m) \}$, and (iii) $Q$ can be evaluated in $O^*(2^r)$ time via black-box access to $P$.

  Introduce a set of auxiliary variables $Y=\{ y_t \mid t \in T \}$.
  Let $A_T$ be a $T \times T$ matrix obtained from $A[T]$ by multiplying the row and column $t$ by $y_t$.
  Consider the coefficient of $z^k$ in the characteristic polynomial $\det(A_T + zI)$.
  It is known to be the sum of all $r - k$ principle minors:
  \begin{align*}
    \sum_{U \in \binom{T}{r-k}} \det A_T[U]
    = \sum_{U \in \binom{T}{r-k}} \det A[U] \prod_{t \in U} y_t^2.
  \end{align*}
  Now define $P_T(X, Y)$ be its square root.\footnote{Square roots are well-defined over a field of characteristic 2 because $\alpha^2 = \beta^2$ implies $\alpha = \beta$.
  Furthermore, square roots can be efficiently computed: If $\F$ is a finite field with $2^m$ elements, the square root of $\alpha$ over $\F$ is $\alpha^{2^{m-1}}$ because $(\alpha^{2^{m-1}})^2 = \alpha \cdot \alpha^{2^m-1} = \alpha$ by Fermat's little theorem.}
  This equals
  \begin{align*}
    P_T(X, Y) = \sum_{U \in \binom{T}{r - k}} \Pf A[U] \prod_{t \in U} y_t,
  \end{align*}
  since $(\Pf A[U])^2 = \det A[U]$ and $\sum_i \alpha_i^2 = (\sum_i \alpha_i)^2$ over a field of characteristic 2.

  Define another polynomial $P_V(X, Y)$ by
  \[
    P_V(X,Y) = P\left(x_1 \sum_{t \in T} y_t A[t,v_1], \ldots, x_n \sum_{t \in T} y_t A[t, v_n]\right).
  \]
  Furthermore, define $Q(X,Y)$ as the result of extracting terms from $P'(X, Y) = P_T(X, Y) \cdot P_V(X, Y)$ that have at least degree one in each variable $y_i$ for $i \in [k]$,
  and set $Q(X)=Q(X,1)$.
  Note that $Q$ can be evaluated from $2^r$ evaluations of $P'$, using inclusion-exclusion in Lemma~\ref{lemma:inclusion-exclusion}.
  We aim to show that every monomial $m$ is transformed into $\Pf A[T \cup V_m] \cdot m$.
  To that end, let us examine $P_V(X, Y)$ first.
  For every monomial $m=x_1^{m_1} \cdots x_n^{m_n}$ in $P$,
  the coefficient of $m$ in $P_V$ is given by
  \[
    P_V(m) = P(m) \cdot \prod_{v_i \in V_m} \left( \sum_{t \in T} y_t A[t, v_i]\right)^{m_i},
  \]
  where $P(m)$ is the coefficient of $m$ in $P$.
  In particular, the coefficient of $\prod_{t \in W} y_t$ for $W \in \binom{T}{k}$ in this expression is:
  \[
  P(m) \cdot \sum_{\sigma} \left( \prod_{t \in W} A[t, \sigma(t)] \right),
  \]
  where the sum is over all mappings $\sigma \colon W \to V_m$ such that for each $i \in \supp(m)$, exactly $m_i$ elements of $W$ are that mapped to $v_i$.
  This is equivalent to the determinant of a matrix $A_m$ whose columns comprise $m_i$ copies of $A[T, v_i]$ for each $v_i \in V_m$.
  Consequently, this term vanishes if $m$ is not multilinear.
  For multilinear terms $m$, it simplifies to $P(m) \cdot \det A[T, V_m]$.
  Thus, the part of $P_V(m)$ that is multilinear in $Y$ is
  \begin{align*}
    P(m) \cdot \sum_{W \in \binom{T}{k}} \det A[W, V_m] \prod_{t \in W} y_t.
  \end{align*}
  Note that the non-multilinear part of $P_V(m)$ is irrelevant since it will be excluded from $Q$ in the inclusion-exclusion step.
  
  Multiplying this with $P_T(X, Y)$ yields
  \begin{align*}
    P(m) \cdot \sum_{U \in \binom{T}{r - k}, W \in \binom{T}{k}} \Pf A[U] \cdot \det A[T, V_m] \cdot \left(\prod_{t \in U} y_t \right) \left(\prod_{t \in W} y_t \right)
  \end{align*}
  The coefficient of $\prod_{t \in T} y_t$ in this expression, which equals $Q(m)$, is
  \begin{align*}
    P(m) \cdot \sum_{W \in \binom{T}{k}} \Pf A[T \setminus W] \cdot \det A[W, V_m] = P(m) \cdot \Pf A[T \cup V_m],
  \end{align*}
  where the equality follows from Lemma~\ref{lemma:pf-det-decomposition}.
  Finally, evaluating $Q(X)$ at random coordinates yields the desired result by the Schwartz-Zippel lemma.
  Note that $Q(X)$ remains homogeneous of degree $k$. 
\end{proof}

\paragraph*{Sieving over general fields.}

Following prior work~\cite{BrandDH18,EKW23}, we will use the exterior algebra to develop a sieving method applicable to general fields.
For a field $\F$, $\Lambda(\F^T)$ is a $2^{|T|}$-dimensional vector space, where each basis $e_I$ corresponds to a subset $I \subseteq T$.
Each element $a = \sum_{I \subseteq [k]} {a_I} e_I$ is called an \emph{extensor}.
We denote the vector subspace spanned by bases $e_I$ with cardinality $|I| = i$ for $i \in \{ 0, \dots, |T| \}$, by $\Lambda^i(\F^T)$.
Notably, $\Lambda^0(\F^k)$ is isomorphic to $\F$, and $\Lambda^1(\F^k)$ is isomorphic to the vector space $\F^k$.
Addition in $\Lambda(\F^k)$ is defined element-wise, while
multiplication, referred to as \emph{wedge product}, is defined as follows:
when $I$ and $J$ intersect, the wedge product yields zero;
otherwise, $e_I \wedge e_J = (-1)^{\sigma(I, J)} e_{I \cup J}$, where $\sigma(I, J) = \pm 1$ is the sign of the permutation mapping the concatenation of $I$ and $J$ into the increasing sequence of $I \cup J$.
For vectors $v, v' \in \F^k$, the wedge product exhibits self-annihilation $v \wedge v = 0$ and anti-commutativity $v \wedge v' = -v' \wedge v$. 
For a $T \times T$-matrix $A$, the wedge product of the column vectors yields $\det A \cdot e_T$.
We refer interested readers to prior works~\cite{BrandDH18,EKW23} for a more accessible introduction to exterior algebra used in this work.


Now we discuss the complexity of operations in exterior algebra.
The addition of two extensors can be done by $2^{|T|}$ field operations.
The wedge product $a \wedge a'$ of two extensors $a \in \Lambda(\F^k)$ and $b \in \Lambda^i(\F^k)$ involves $2^k \binom{k}{i}$ field operations, according to the definition (resulting in $O^*(2^k)$ time complexity for $i \in O(1)$).
More generally, an $O(2^{\omega k / 2})$-time algorithm for computing the wedge product is presented by W\l{}odarczyk~\cite{Wlodarczyk19}, where $\omega < 2.372$ is the matrix multiplication exponent.
For a detailed exposition, one may refer to Brand's thesis~\cite{Brand19thesis}.

To address issues arising from non-commutativity in exterior algebra, we use the \emph{lift mapping} $\bar{\phi} \colon \Lambda(\F^T) \to \Lambda(\F^{T \cup T'})$, where $T' = \{ t' \mid t \in T \}$ and 
\vspace{-3ex}
\begin{align*}
  \bar{\phi}(v) = v_1 \wedge v_2 \text{ for } v_1 =\,
  \kbordermatrix{
    & \\
    T & v \\ 
    T' & 0 \\
  } \, \text{ and } \,
  v_2 =\, \kbordermatrix{
    & \\
    T & 0 \\ 
    T' & v \\
  }
\end{align*}
This mapping has been useful in previous studies \cite{Brand19,BrandDH18,EKW23}.
The subalgebra generated by the image of $\bar{\phi}$ exhibits commutativity, i.e., $\bar{\phi}(v) \wedge \bar{\phi}(v') =  \bar{\phi}(v') \wedge \bar{\phi}(v)$.

For an element $t'$ in $T'$, let $t''$ denote the corresponding element $t$ in $T$. Similarly, for a subset $W'$ of~$T'$, let $W'' = \{ t'' \mid t' \in W' \}$.

We will assume that the polynomial is represented by an \emph{arithmetic circuit}.
It is a directed acyclic graph with one sink node (referred to as output gate) in which every source node is labelled with either a variable $x_i$ or an element of $\F$ (referred to as input gate) and all other nodes are either addition gates or multiplication gates.
It is further assumed that every sum and product gate has fan-in 2.

\begin{theorem} \label{theorem:sieving-general}
  Let $D=(V,\cF)$ be a (projected) linear delta-matroid of rank $r$, $V=\{v_1,\ldots,v_n\}$ given as a sparse representation $\bD(A) / T$,
  and let $P(X)$ be a homogeneous polynomial of degree $k$ given as an arithmetic circuit over
  sufficiently large field, $X=\{x_1,\ldots,x_n\}$.
  In time $O^*(2^{\omega r})$ we can sieve for those terms of $P(X)$ which
  are multilinear in $X$ and whose support $\{x_i \mid i \in I\}$
  in $X$ is such that $\{v_i \mid i \in I\}$ is feasible in $D$. 
\end{theorem}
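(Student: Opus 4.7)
The plan is to follow the same skeleton as the proof of Theorem~\ref{theorem:sieving-char-2}, but replace the characteristic-2 square-root trick (computing $\Pf A[T]$ as the square root of a coefficient of the characteristic polynomial of $A[T]$) with an exterior-algebra computation that directly realises $\det A[T \cup V_m] = (\Pf A[T \cup V_m])^2$, which is nonzero iff $V_m$ is feasible in $D$. In the sparse contraction representation we have $A[T \cup V_m] = \bigl(\begin{smallmatrix} B & C_m \\ -C_m^T & O \end{smallmatrix}\bigr)$ where $B = A[T]$ and $C_m = A[T, V_m]$, so Lemma~\ref{lemma:pf-det-decomposition} writes
\[
\Pf A[T \cup V_m] = \sum_{U \in \binom{T}{r-k}} \sigma_U \Pf B[U] \det C[T \setminus U, V_m].
\]
The goal is to build an arithmetic circuit over an exterior algebra whose output on input $X$ is the polynomial $Q(X) = \sum_m P(m) \det A[T \cup V_m]\, X^m$, and then apply Schwartz--Zippel at a random $X$.

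I would work in $\Lambda(\F^{T \cup T'})$, where $T'$ is a disjoint copy of $T$. For each $v_i$ set $\alpha_i = \sum_{t \in T} C[t, v_i]\, e_t \in \Lambda^1(\F^T)$ and its lift $\bar{\phi}(\alpha_i) = \alpha_i \wedge \alpha_i' \in \Lambda^2(\F^{T \cup T'})$, where $\alpha_i'$ is the $T'$-mirror. The lifts commute pairwise and self-annihilate, so substituting $x_i \mapsto x_i \cdot \bar{\phi}(\alpha_i)$ in the arithmetic circuit for $P(X)$ yields a well-defined extensor-valued polynomial $R(X)$ in which every non-multilinear monomial vanishes automatically. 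For a multilinear monomial $m$ of degree $k$, expanding $\bigwedge_{i \in \supp(m)} \bar{\phi}(\alpha_i)$ via Cauchy--Binet independently in each of the two copies of $\Lambda(\F^T)$ produces
\[
\sum_{W, W' \in \binom{T}{k}} \det C[W, V_m]\det C[W'', V_m] \cdot \epsilon_{W, W'}\, e_W \wedge e_{W'},
\]
for explicit signs $\epsilon_{W,W'}$. Now define a ``Pfaffian extensor'' $\gamma \in \Lambda^{2(r-k)}(\F^{T \cup T'})$ of the form $\sum_{U \in \binom{T}{r-k}} \sigma_U \Pf B[U] \cdot (e_{T \setminus U} \wedge e_{(T \setminus U)'})$ with signs calibrated to match those of Lemma~\ref{lemma:pf-det-decomposition}, so that the $e_{T \cup T'}$-coefficient of $R(X) \wedge \gamma$ is precisely $\sum_m P(m)(\Pf A[T \cup V_m])^2\, X^m = Q(X)$. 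For projected linear delta-matroids, I would first reduce to an elementary projection using the construction of Koana and Wahlstr\"om~\cite{KW24} and explicitly enumerate the behaviour of the single projected element before applying the sieve.

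For the runtime, $\Lambda(\F^{T \cup T'})$ has dimension $2^{2r}$ and each wedge product costs $O(2^{\omega r})$ via W\l{}odarczyk's algorithm~\cite{Wlodarczyk19}, so traversing the arithmetic circuit for $P$ once yields $Q(X)$ in $O^*(2^{\omega r})$ time with exponential space. The main obstacle I anticipate is the sign calibration: the signs $\sigma_U$ from the Pfaffian-decomposition formula and the reordering signs $\epsilon_{W,W'}$ from wedge products must line up exactly so that the double sum assembling $(\Pf A[T \cup V_m])^2$ does not cancel spuriously across the ``diagonal'' terms $W = W''$ versus the ``off-diagonal'' terms $W \neq W''$. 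Once $\gamma$ is chosen to mirror the expansion in Lemma~\ref{lemma:pf-det-decomposition} symmetrically across both $T$ and its copy $T'$, this should reduce to a mechanical verification, but it is the step that requires the most careful bookkeeping, since unlike in characteristic 2, every sign actually matters.
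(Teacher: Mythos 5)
Your overall approach matches the paper's: work in the lifted exterior algebra $\Lambda(\F^{T\cup T'})$, substitute $x_i \mapsto x_i \bar\phi(\alpha_i)$ so that non-multilinear monomials self-annihilate and each surviving multilinear $m$ contributes coefficients of the form $\det C[W,V_m]\det C[W'',V_m]$ on $e_{W\cup W'}$, then re-weight by Pfaffian data for $B=A[T]$ to reconstruct $\det A[T\cup V_m]=(\Pf A[T\cup V_m])^2$ via Lemma~\ref{lemma:pf-det-decomposition}, and finish with Schwartz--Zippel. The runtime argument ($2^{2r}$-dimensional algebra, W\l{}odarczyk wedge products) is also the paper's. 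However, the specific ``Pfaffian extensor'' you propose has a structural flaw, not merely a sign-calibration issue as you anticipate.

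The $\gamma$ you write down uses a \emph{single} summation index $U$ with matching $T$-part and $T'$-part. (Set aside the degree typo: $|T\setminus U|=k$, so to land in $\Lambda^{2(r-k)}$ as you claim you should be wedging $e_U$ with $e_{U'}$, not $e_{T\setminus U}$ with $e_{(T\setminus U)'}$.) With a coupled $\gamma=\sum_U \sigma_U \Pf B[U]\, e_U\wedge e_{U'}$, the product $(e_W\wedge e_{W'})\wedge(e_U\wedge e_{U'})$ reaches $e_{T\cup T'}$ only when simultaneously $U=T\setminus W$ and $U'=T'\setminus W'$, forcing $W=W''$. The off-diagonal terms $W\neq W''$ vanish at the wedge-product level before any sign bookkeeping occurs. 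You would therefore be extracting
\begin{align*}
\sum_{W\in\binom{T}{k}} \pm\,\Pf B[T\setminus W]\,\bigl(\det C[W,V_m]\bigr)^2,
\end{align*}
which over a field of characteristic other than $2$ is not $(\Pf A[T\cup V_m])^2$ and can vanish even when $V_m$ is feasible. So this step, as stated, does not certify feasibility.

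The fix is to decouple the two copies of $T$: the Pfaffian weight should allow $U$ and $U''$ to range independently. Concretely, take $\gamma=\gamma_1\wedge\gamma_2'$ with $\gamma_1=\sum_{U}\sigma_U \Pf B[U]\,e_U$ supported on $T$ and $\gamma_2'=\sum_{U''}\sigma_{U''}\Pf B[U'']\,e_{(U'')'}$ on $T'$; equivalently, as the paper does, read off each coefficient of $e_{W\cup W'}$ individually and multiply by $\sigma_U\sigma_{U''}\Pf A[U]\Pf A[U'']$ for the \emph{independently} determined $U=T\setminus W$ and $U''=T\setminus W''$, then sum. That reproduces the full double sum in Lemma~\ref{lemma:pf-det-decomposition} squared, and only then does sign calibration become a matter of mechanical verification rather than a lost battle.

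Your handling of the projected case (reduce to an elementary projection and branch on the single projected element) is the same as the paper's.
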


\begin{proof}
We evaluate the circuit over the subalgebra of $\Lambda(\F^{T \cup T'})$ by substituting every variable $x_i$ with $x_i a_i$, where $a_i = \phi(A[T, v_i])$.
Let $r \in \Lambda^k(\F^{T \cup T'})$ denote the resulting extensor.
Note that with each variable $x_i$ substituted with an element from $\F$, the extensor $r$ can be computed in $O^*(2^{\omega k})$ time.

For each monomial $m$ in $P$, there is a term containing $e_{W \cup W'}$ for $W \in \binom{T}{k}$ and $W' \in \binom{T'}{k}$:
\begin{align*}
  \bigwedge_{v_i \in V_m} \left( \sum_{t \in W} A[t, v_i] e_t \wedge \sum_{t' \in W'} A[t'', v_i] e_{t'} \right)  = (-1)^{\binom{r - k}{2}} \cdot \det A[W, V_m] \cdot \det A[W'', V_m] \cdot e_{W \cup W'},
\end{align*}
where $V_m = \{ v_i \mid i \in \supp(m) \}$.
In particular, all non-multilinear terms vanish.
Consequently, the coefficient of $e_{W \cup W'}$ in $r$ is given by
\begin{align*}
  (-1)^{\binom{r-k}{2}} \sum_{m} P(m) \cdot \det A[W, V_m] \cdot \det A[W'', V_m],
\end{align*}
where the sum is taken over all multilinear monomials $m$ of $P$.

We compute the Pfaffian $A[U]$ for each $U \in \binom{T}{r-k}$ in $O^*(2^r)$ time.
For each partition $U \cup W = T$, and each partition $U' \cup W' = T'$ with $|W| = |W'| = k$, we multiply the coefficient of $e_{W \cup W'}$ by $(-1)^{\binom{r-k}{2}} \sigma_{U} \sigma_{U''} \Pf [U] \cdot \Pf A[U'']$, where $\sigma_U, \sigma_{U''} \in \{ 1, -1 \}$ are defined as in Lemma~\ref{lemma:pf-det-decomposition}.
We then sum up the products.
The resulting sum, denoted by $Q(X)$, contains the following term for each monomial in $P$:
\begin{align*}
  P(m) \sum_{U, U'' \in \binom{T}{r - k}} \sigma_U \sigma_{U''} \Pf A[U] \cdot \Pf A[U''] \cdot \det A[T \setminus U, V_m] \cdot \det A[T \setminus U'', V_m]. 
\end{align*}
This equals $P(m) \cdot \det A[T \cup V_m]$ since by Lemma~\ref{lemma:pf-det-decomposition}, 
\begin{align*}
  \det A[T \cup V_m] = (\Pf A[T \cup V_m])^2
  = \left( \sum_{U \in \binom{T}{r - k}} \sigma_U \Pf A[U] \cdot \det A[T \setminus U, V_m] \right)^2,
\end{align*}
which matches  the above expression.
It follows that $Q(X) = \sum_{m} P(m) \cdot \det A[T \cup V_m] \cdot m$.
Evaluating $Q(X)$ at random coordinates yields the desired result by the Schwartz-Zippel lemma.
\end{proof}

We remark that when the arithmetic circuit is \emph{skew}, i.e., every multiplication is connected from an input gate, the running time improves to $O^*(4^r)$.

\subsection{Delta-matroid FPT results}
\label{sec:dm-part1-fpt}

We note the FPT consequences of the sieving results. The following is
a direct generalization of the fastest known algorithms for linear
matroid intersection~\cite{EKW23}. 

\begin{lemma} \label{lm:ddd-intersection}
  Let $D_1=(V,\cF_1), \ldots, D_q=(V,\cF_q)$ be (projected) linear delta-matroids represented over a common field $\F$,
  let $k \in \N$ and $r=\max_{i \geq 3} \rank D_i$. Then a feasible set of cardinality $k$
  in the common intersection $D_1 \cap \ldots \cap D_q$ can be found in randomized time and space $O^*(2^{O(qr)})$.
  In particular, if $\F$ is of characteristic 2 then this can be reduced to time $O^*(2^{(q-2)r})$ and polynomial space. 
\end{lemma}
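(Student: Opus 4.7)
The plan is to combine the enumerating polynomial of Corollary~\ref{cor:intersection-enum} with iterated delta-matroid sieving. First, I would instantiate Corollary~\ref{cor:intersection-enum} on $D_1$ and $D_2$ to obtain, in polynomial time, a skew-symmetric matrix whose Pfaffian $P(X)$ in the variables $X=\{x_v \mid v \in V\}$ is an enumerating polynomial for $\cF_1 \cap \cF_2$: each $F \in \cF_1 \cap \cF_2$ contributes a nonzero monomial $c_F \prod_{v \in F} x_v$, and no other monomials appear. Because we seek solutions of cardinality exactly $k$, I would homogenize by substituting $x_v \mapsto z x_v$ for a fresh variable $z$ and, via Lemma~\ref{lemma:interpolation}, interpolate in $z$ over $n+1$ points to extract the degree-$k$ component $P_k(X)$. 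This $P_k(X)$ is homogeneous of degree $k$, still evaluable in polynomial time by black-box access, and its nonzero multilinear monomials are exactly the size-$k$ members of $\cF_1 \cap \cF_2$.

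Next, I would apply the sieving procedure once for each $D_i$ with $i \in \{3,\ldots,q\}$: Theorem~\ref{theorem:sieving-char-2} in characteristic~$2$ and Theorem~\ref{theorem:sieving-general} in general. For each $D_i$ I first compute its sparse contraction representation $D_i = \bD(A_i)/T_i$ with $|T_i| \le r$ (Section~\ref{sec:dm-sieve}). Each sieving round transforms the current polynomial into one where every multilinear monomial with support $F$ is scaled by $\Pf A_i[T_i \cup F]$, which is nonzero precisely when $F$ is feasible in $D_i$, while every non-multilinear or non-feasible monomial is annihilated. After all $q-2$ rounds the resulting polynomial $Q(X)$ is nonzero iff some $F \in \bigcap_{i=1}^q \cF_i$ with $|F|=k$ exists, and I would test this by evaluating $Q$ at uniformly random points in a sufficiently large extension of $\F$ and applying the Schwartz--Zippel lemma.

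The running time and space analysis follows by multiplication. In characteristic~$2$, Theorem~\ref{theorem:sieving-char-2} delivers its sieved polynomial as a linear combination of $2^r$ black-box evaluations of its input via inclusion--exclusion (Lemma~\ref{lemma:inclusion-exclusion}), using only polynomial additional space. Nesting the $q-2$ sieving layers therefore costs $O^*(2^{(q-2)r})$ black-box evaluations of $P_k$ and stays in polynomial space. Over a general field I would apply Theorem~\ref{theorem:sieving-general}, which expands the arithmetic circuit by a factor $2^{\Oh(r)}$ per layer in both time and space, giving $O^*(2^{\Oh(qr)})$ time and exponential space overall. The main subtlety I expect is precisely this composition: verifying that each sieved polynomial can be fed as input to the next round with the stated overhead (clean in characteristic~$2$ thanks to the black-box interface; requiring explicit circuit bookkeeping over general fields). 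Projected linear delta-matroids cause no extra difficulty, since by Section~\ref{ssec:rep} every projection can be made elementary and absorbed either into the Pfaffian construction of $P(X)$ for the first two layers, or by branching on the two behaviors of the projected element for each of the remaining $D_i$.
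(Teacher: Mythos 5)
Your proposal is correct, but it takes a genuinely different route from the paper. The paper's proof constructs a \emph{single} delta-matroid $D'$ of rank $(q-2)r$ on an enlarged ground set $V'=V_3 \uplus \cdots \uplus V_q$ (the direct sum of $D_3,\ldots,D_q$ over disjoint copies of $V$), replaces each $x_v$ in the enumerating polynomial by the product $\prod_{i=3}^q x_{v,i}$ of fresh variables, and then applies Theorem~\ref{theorem:sieving-char-2}/\ref{theorem:sieving-general} \emph{once} against $D'$. In contrast, you sieve $q-2$ times against each $D_i$ separately, always over the same variable set $X$, and argue that each round's output $Q_i(X)$ (homogeneous of degree $k$, evaluable in $O^*(2^{r})$ black-box evaluations of its input) is a valid input to the next round; multiplying the per-round costs gives the same $O^*(2^{(q-2)r})$ bound in characteristic $2$ (and $O^*(2^{O(qr)})$ in general). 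Your observation that $Q(X)$ stays homogeneous and that Lemma~\ref{lemma:inclusion-exclusion} composes cleanly through a black-box interface is exactly what makes the iterated version go through. The two approaches trade off: the direct-sum trick is a one-shot reduction that transparently inherits the matroid analogue from \cite{EKW23}, whereas your compositional version is more modular (you never need to build the rank-$(q-2)r$ representation explicitly) but requires the extra argument that the sieve composes --- in particular over general fields, where you must track circuit size through each nesting, as you correctly flag. One minor point worth making explicit: the failure probability from Schwartz--Zippel and from the randomized representation constructions must be accounted for jointly across all $q-2$ rounds by working over a sufficiently large extension field, which you gesture at but do not spell out.
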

\begin{proof}
  We assume $q \geq 3$ as otherwise the problem is in P~\cite{GeelenIM03,KW24}.
  Let $D'=(V',\cF)$ be the direct sum $D_3 \uplus \ldots \uplus D_q$, with
  $V'=V_3 \uplus \ldots \uplus V_q$ where each $V_i$ is a separate copy of $V$. 
  For each $v \in V$ and $i \in [q]$, $i \geq 3$, let $v^i$ denote the copy of $v$ in $V_i$ (and let $v^1=v^2=v$). 
  Then $D'$ is a linear delta-matroid of rank $(q-2)r$.
  Reduce $D_1$, $D_2$ and $D'$ to elementary projections over elements $z_1$, $z_2$ and $z_3$
  and guess for each $z_i$, $i=1, 2, 3$ whether $z_i$ is a member of the solution or not.
  For each such guess, correspondingly delete or contract $z_i$ and proceed as follows
  with the resulting linear delta-matroids. 
  Let $X=\{x_v \mid v \in V\}$ and let $P(X)$ be the enumerating polynomial for the intersection $D_1 \cap D_2$ of Cor.~\ref{cor:intersection-enum}.
  For each $v \in V$ and $i=3, \ldots, q$ let $x_{v,i}$ be a new variable associated with the element $v^i$ in $D'$
  and let $X'=\bigcup_{v \in V} \{x_{v,3}, \ldots, x_{v,q}\}$ be the set of all these variables.
  Let $P'(X')$ be $P(X)$ evaluated with $x_v=\prod_{i=3}^q x_{v,i}$ for each $v \in V$.
  By Theorems \ref{theorem:sieving-char-2} and \ref{theorem:sieving-general}, we can sieve in $P'(X')$ for a term that is feasible in $D'$
  by a randomized algorithm
  in time and space $O^*(2^{\omega (q-2)r})$ in the general case, and time $O^*(2^{(q-2)r})$ and polynomial space
  if $\F$ is over char.~2, as promised. In particular, for the success probability, 
  we can choose to work over a sufficiently large extension field
  of the given field at no significant penalty to the running time.
\end{proof}

In other words, \textsc{$q$-Delta-matroid Intersection} on linear
delta-matroids is FPT parameterized by the rank
(or even, parameterized by the third largest rank of the participating delta-matroids).

In summary, we get the following algorithmic implications (repeated
from Section~\ref{sec:ourresults}). Variants for projected linear
delta-matroids also apply, with the same guessing preprocessing step
as in Lemma~\ref{lm:ddd-intersection}.

\dmponefpt*

\begin{proof}
  The first two items follow directly from Lemma~\ref{lm:ddd-intersection}.
  For the third, let $\cP=V_1 \cup \ldots \cup V_n$ be the partition, 
  and define the polynomial
  \[
    P(X,z) = \prod_{i=1}^n (1+\prod_{v \in V_i} zx_v),
  \]
  $X=\{x_v \mid v \in V\}$. Now we can sieve in the coefficient of $z^k$ in $P(X,z)$ for
  a multilinear monomial whose support is feasible in $D$. 
  For items four and five, we can replace a parity constraint $\Pi$ by
  a pairing delta-matroid $D_\Pi$ over $\Pi$, if applicable,
  truncate each matroid $M_i$, $i \geq 3$ to rank $k$, then apply Lemma~\ref{lm:ddd-intersection}.
  In particular, a pairing delta-matroid can be represented over any field. 
\end{proof}

The following variant will be used later in the paper.
Suppose that we are given a linear delta-matroid $D = (V, \mathcal{F})$ represented by $D = \bD(A) / T$ for $A \in \mathbb{F}^{(V \cup T) \times (V \cup T)}$ and a set of edges $E \subseteq \binom{V}{2}$.
A matching $M$ in the graph $G=(V, E)$ is a \emph{delta-matroid matching} if $\bigcup M$ is feasible in $D$.
Furthermore, assume that the edges $E$ are colored in $k$ colors. 
\textsc{Colorful Delta-matroid Matching}
asks for a delta-matroid matching $M$ of $k$ edges whose edges all have distinct colors.
We show the following.

\dmmatching*

\begin{proof}
  Let $(D=(V,\cF), G=(V,E), c \colon E \to [k])$ be an instance of 
  \textsc{Colorful Delta-matroid Matching} as defined above.
  Let $D=\bD(A)/T$ be a representation of $D$. 
  We may assume that $G$ has degree 1: For every $v \in V$ with degree greater than one in $G$,
  introduce $d$ copies $v_1, \dots, v_d$ of $v$ and duplicate rows and columns of $v$ in $A$ for each copy.
  Let $V'$ be the new, larger ground set. Note that delta-matroid matchings are preserved.
  Now the result follows by reduction to \textsc{D$\Pi$M Intersection}:
  Let $M=(V',\cI)$ be the partition matroid where for every edge $uv \in E$ of color $c$,
  one member (say $u$) is placed in a set $S_c$ from which at most one element may be chosen,
  and the other (i.e., $v$) is a free element. Then $M$ is linear over every field.
  Thus, the problem has been reduced to an instance of \textsc{D$\Pi$M Intersection}
  $(D,E,M,k)$ with partition $\Pi=E$. 
\end{proof}

\subsection{Hardness results}
\label{sec:dm-part1-hardness}

To complement the above, we show that \textsc{DDD Intersection} is
W[1]-hard parameterized by $k$, even for some quite restrictive cases. 

\dmhardness*

Specifically, consider a delta-matroid $D=(V,\cF)$ on a ground set
$V=\bigcup_{i=1}^m \{x_i, y_{i,1}, \ldots, y_{i,n_i}\}$
with a feasible set $\cF$ defined as
\[
  F \in \cF \Leftrightarrow \forall i (x_i \in F \Leftrightarrow |\{ j \in [n_i]: y_{i,j} \in F \}| = 1)
\]
for all $F \subseteq V$. 
Let us note some ways to describe $D$.

\begin{lemma} \label{lm:star-forest-matching}
  $D$ is simultaneously a matching delta-matroid (for a star forest),
  a twisted unit partition matroid, and a twisted co-graphic matroid.
  (In particular, $D$ can be represented over every field.)
\end{lemma}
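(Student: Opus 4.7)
The plan is to exploit the direct-sum structure of $D$. Because the feasibility condition factorises across the $m$ blocks, we have $D = D_1 \oplus \cdots \oplus D_m$, where $D_i$ is the delta-matroid on $V_i = \{x_i, y_{i,1}, \ldots, y_{i,n_i}\}$ defined by the block-restricted condition. Each of the three classes claimed in the lemma is closed under direct sums: matching delta-matroids under disjoint unions of graphs, twisted unit partition matroids under juxtaposition of partitions and twist sets, and twisted co-graphic matroids under disjoint unions of the underlying graphs. Hence it suffices to establish the three representations for a single block $D_i$, and the stars, partition blocks, or graph components then assemble into the global witnesses.

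For the matching delta-matroid representation, I would construct a graph $G_i$ whose structure is star-like around $x_i$, so that the global graph over all blocks is the claimed star forest; the perfect matching condition on $G_i[F]$ is meant to align with the biconditional $x_i \in F \Leftrightarrow |F \cap Y_i| = 1$. For the twisted unit partition matroid representation, I would take a partition matroid whose parts correspond to blocks with capacity one each, and twist by the set $\{x_1, \ldots, x_m\}$ of centers, so that the biconditional in $\cF$ emerges from the interaction of the capacity constraint with the twist (``flipping'' whether the distinguished element in each part is the center or a leaf). For the twisted co-graphic matroid representation, I would exhibit a graph $H$ with edges indexed by $V$ and per-block cycle structure, so that the bases of $M^*(H)$ (complements of spanning forests), twisted by the analogous set derived from the centers, reproduce $\cF(D)$.

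The main obstacle I expect is in pinning down the matching delta-matroid representation precisely. The ``$x_i \in F$'' branch of the biconditional fits a star with center $x_i$ very naturally: $x_i$ must be matched to a single leaf $y_{i,j}$. But the ``$x_i \notin F$'' branch admits subsets of $Y_i$ of cardinality different from one, which is not directly captured by a perfect matching on a pure star; accommodating this likely requires either auxiliary vertices that are subsequently projected away, or a twist layered on top of a matching delta-matroid, both of which are standard operations in the delta-matroid toolbox described in Section~\ref{ssec:rep}. Verifying the twisted partition matroid and twisted co-graphic matroid representations should then be a more routine matter of identifying the correct twist set, since partition and co-graphic matroids are well-understood combinatorial objects. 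Once all three representations are in hand, the parenthetical conclusion that $D$ can be represented over every field is immediate: matching delta-matroids admit Tutte-matrix representations, partition matroids admit block-diagonal generic matrix representations, and co-graphic matroids admit orientation-based incidence representations, each working over any sufficiently large field.
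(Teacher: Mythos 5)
Your plan stumbles on a point that is actually an imprecision in the paper's displayed definition, but you resolve it in a way that would break the lemma rather than repair it. You correctly observe that the literal biconditional admits, in the $x_i \notin F$ branch, any number of the $y_{i,j}$'s other than exactly one --- in particular two or more --- and that the matching delta-matroid of the star rejects that case. But your proposed remedies (auxiliary vertices with a projection, or a twist layered on a matching delta-matroid) would produce a \emph{projected} or \emph{twisted} matching delta-matroid, not a matching delta-matroid of a star forest, so they cannot prove the statement as written. Worse, you treat the twisted unit partition matroid and twisted co-graphic matroid cases as routine, without noticing that they suffer from exactly the same discrepancy: a basis of the rank-$m$ unit partition matroid has precisely one element per block, so after twisting by $\{x_1,\ldots,x_m\}$ the $x_i \notin F$ case forces \emph{zero} of the $y_{i,j}$'s, not ``anything other than one''; the co-graphic matroid of a disjoint union of cycles is the same matroid. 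If the biconditional were to be read literally, none of the three descriptions would match $\cF$, and the lemma would simply be false.

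The intended reading --- and the one the paper's proof and the downstream W[1]-hardness reduction actually use --- is the stricter condition: per block, either $F \cap V_i = \emptyset$, or $x_i \in F$ together with exactly one $y_{i,j} \in F$. Under that reading, all three descriptions coincide by an immediate per-block check (each block contributes the empty set or a single star edge $\{x_i, y_{i,j}\}$), no auxiliary machinery is needed, and the parenthetical claim follows because the unit partition / co-graphic matroid is regular and hence representable by a $\{0,1\}$-matrix over \emph{every} field (not merely sufficiently large ones, as your last paragraph suggests). Your direct-sum framing is fine and matches what the paper does implicitly, but the key missing step is to recognize that the anomalous case must be excluded from $\cF$, not accommodated by changing the representation class.
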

\begin{proof}
  As a matching delta-matroid, it is the matching delta-matroid on
  the star forest where for every $i \in [m]$ there is star with root $x_i$
  and leaves $y_{i,j}$, $j \in [n_j]$. On the other hand, $D \Delta X$
  (for $X=\{x_1,\ldots,x_m\}$) denotes the unit partition matroid where
  for every $i$ an independent set $I$ contains at most one element
  of $\{x_i, y_{i,1}, \ldots, y_{i,n_i}\}$.
  This can also be described as the co-graphic matroid for the graph $G$
  which for every $i \in [m]$ consists of a cycle
  on the set $\{x_i,y_{i,1}, \ldots,y_{i,n_i}\}$. 
  In particular, $M$ is regular~\cite{OxleyBook2}.
  (Indeed, it is represented by a matrix $A \in \{0,1\}^{[m] \times V}$
  where each element $x_i$ and $y_{i,j}$ is represented by the $i$-th unit vector.)
\end{proof}

We refer to $D$ as a \emph{star forest matching delta-matroid}, since this is the most specific description.
We have the following.

\begin{theorem}[Theorem~\ref{hard:ddd}, refined] \label{thm:ddd-hard}
  \textsc{DDD Intersection} is W[1]-hard parameterized by cardinality,
  even for the intersection of a star forest matching delta-matroid
  and two pairing delta-matroids. 
\end{theorem}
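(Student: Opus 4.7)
The plan is to reduce from \textsc{Multicolored Clique}, W[1]-hard in the number of colors $k$; WLOG $k$ is even, by padding with a singleton universal color class. I will construct a DDD Intersection instance whose target cardinality is polynomial in $k$ and achievable iff $G$ has a multicolored $k$-clique. The ground set uses four kinds of elements: a vertex element $v$ for each $v \in V(G)$; a copy $v^j$ for each $v \in V_i$ and each color $j \neq i$; two endpoint elements $e_u, e_w$ for each edge $e = uw$; and a color root $r_i$ for each color $i$.

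The star forest matching delta-matroid $D_1$ has two families of stars, each element belonging to exactly one: a color-star per $i$ with root $r_i$ and leaves $V_i$ (enforcing ``one vertex per color whenever $r_i$ is on''); and a vertex-pair-star per $(v,j)$ with root $v^j$ and leaves $\{e_v \mid e = vw \in E,\ w \in V_j\}$ (enforcing ``one $V_j$-directed edge at $v$ whenever $v^j$ is on''). The pairing delta-matroids $D_2, D_3$ are constructed so that the multigraph $D_2 \cup D_3$ decomposes into cycles, each forced by feasibility to be entirely in or entirely out of $F$: for each edge $e = uw$, a $2$-cycle on $\{e_u, e_w\}$ (same pair in both partitions); for each $v \in V_i$, a $k$-cycle (the \emph{vertex group}) on $\{v\} \cup \{v^j \mid j \neq i\}$, realized by alternating pairs between $D_2$ and $D_3$; and a single $k$-cycle on $\{r_1, \ldots, r_k\}$.

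For the forward direction, a multicolored clique $v_1, \ldots, v_k$ with connecting edges $e_{ij}$ yields a feasible $F$ of size exactly $k + k^2 + k(k-1) = 2k^2$ by selecting all $r_i$, each $v_i$'s vertex group, and each edge pair $\{(e_{ij})_{v_i}, (e_{ij})_{v_j}\}$. For the converse, the $r_i$-cycle forces an all-in / all-out dichotomy. In the all-in case, the cascade ``color-star selects $v_i \in V_i$; $v_i^j$-star selects one edge $v_i w$; edge-pair forces $e_w \in F$; $w^i$-star forces $w$'s vertex group in $F$'' identifies a multicolored clique. In the all-out case, a bipartite double-counting in the multigraph of selected edges between each pair of active colors (with $c_i$ denoting the number of selected vertex groups in $V_i$) gives that all active colors share a common count $t \ge 2$, producing cardinality $tk(2k-1) \ge 4k^2 - 2k > 2k^2$ for $k \ge 3$.

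The main technical obstacle is closing the all-out case: naively one may admit ``hybrid'' configurations with only some colors active, in which inactive colors absorb multiple edges into shared non-selected targets (using that the $w^i$-star is also satisfied with $\ge 2$ leaves). Ruling these out requires either a refined counting showing every such hybrid exceeds the target cardinality, or a modest padding of the construction -- for instance, extending the $r_i$-cycle to carry per-color dummies so that the clique cardinality shifts to $2k^2 + k$ -- which breaks the occasional Diophantine coincidences where a hybrid matches the target for sporadic values of $k$. I expect this case analysis to be the main work; the forward direction and gadget-level verification are routine.
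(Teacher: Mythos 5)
Your overall plan---reducing from \textsc{Multicolored Clique} with one star-forest matching delta-matroid and two pairing delta-matroids---is the same strategy as the paper, though the gadgets are different. The paper works with ordered color pairs directly: for each ordered $(i,j)$ with $i\neq j$ and $u\in V_i$ it has an indicator $x_{i,j:u}$, for each directed edge a $y_{i,j:u,v}$; two pairing delta-matroids force $\{x_{i,j:u}\}_{j}$ to be all-in or all-out per $(i,u)$ and each $\{y_{i,j:u,v},y_{j,i:v,u}\}$ to be a pair; and the star-forest delta-matroid couples each $x_{i,j:u}$ to its $y$'s, with target $k'=2k(k-1)$. Your construction reaches the same end through an explicit root cycle and vertex-group cycles, at the cost of a slightly larger ground set and a parity restriction ($k$ even). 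Both reductions are sound, but the paper's is leaner: it needs no ``all-in vs.\ all-out'' dichotomy, since any nonempty solution already contains some $x$-element, which seeds the cascade.

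The gap you flag, however, is not real, because it rests on a misreading of feasibility in a star-forest matching delta-matroid. For a star with root $x$ and leaves $y_1,\dots,y_n$, the subsets of $\{x,y_1,\dots,y_n\}$ whose induced subgraph has a perfect matching are exactly $\emptyset$ and the pairs $\{x,y_j\}$: a set containing two or more leaves but not the root induces an edgeless graph on an even but positive number of vertices and hence has no perfect matching. So the claim underlying your hybrid worry---``the $w^i$-star is also satisfied with $\ge 2$ leaves'' when the root is off---is false. The correct rule is: root present iff exactly one leaf present, and root absent iff zero leaves present. With this in hand, the all-out case collapses immediately: if every $r_i$ is absent, the color stars force $F\cap V_i=\emptyset$ for all $i$, the vertex-group cycles then empty all copies $v^j$, and the vertex-pair stars with absent roots empty every edge-endpoint element, so $F=\emptyset$, contradicting $|F|=2k^2$. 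No Diophantine counting or further padding is needed (beyond making $k$ even so the alternating cycles exist); once the semantics are corrected, your construction is already complete.
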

\begin{proof}
  We show a reduction from \textsc{Multicolored Clique}.
  Let $G=(V,E)$ be a graph with a partition $V=V_1 \cup \ldots \cup V_k$,
  where the task is to decide whether $G$ contains a clique $C=\{v_1,\ldots,v_k\}$
  where $v_i \in V_i$ for every $i \in [k]$. For $i, j \in [k]$, $i \neq j$,
  let $E_{i,j} \subseteq E$ be the edges between $V_i$ and $V_j$. 
  Assume w.l.o.g.\ that $E=\bigcup_{i \neq j} E_{i,j}$. 
  Define a ground set
  \[
    X \cup Y :=   \{x_{i,j:u} \mid i, j \in [k], i \neq j, u \in V_i\}
    \cup
    \{y_{i,j:u,v} \mid i, j \in [k], i \neq j,
    uv \in E_{i,j}, u \in V_i, v \in V_j\}.
  \]
  Let $D_1$ and $D_2$ be pairing delta-matroids
  such that the intersection $D_1 \cap D_2$ 
  induces a partition of $X$ into blocks $B_{i,u}=\{x_{i,j:u} \mid j \in [k]-i\}$
  and $Y$ into pairs $P_e=\{y_{i,j:u,v},y_{j,i:v,u}\}$, $e \in E$.
  For $D_3$, let $M$ be the partition matroid over blocks
  \[
    C_{i,j,u}=\{x_{i,j:u}\} \cup \{y_{i,j:u,v} \mid uv \in E_{i,j}\}.
  \]
  For $i, j \in [k]$ and $u \in V_i$, 
  let $D_3=M \Delta \{x_{i,j:u} \mid i, j \in [k], i \neq j, u \in V_i\}$.
  Then $F \in \cF(D_3)$ has the effect of an implication
  \[
    x_{i,j:u} \in F \Leftrightarrow \exists v : y_{i,j:u,v} \in F,
  \]
  which finishes the construction with a parameter $k'=2k(k-1)$.
  In particular, $D_3$ is a star forest matching delta-matroid.

  For correctness, on the one hand, let $C=\{v_1,\ldots,v_k\}$ be a multicolored $k$-clique.
  Then
  \[
    F=\{x_{i,j:v_i}, y_{i,j:v_i,v_j}  \mid i \in [k], j \in [k] - i\}  \in D_1 \cap D_2 \cap D_3
  \]
  is a feasible set of cardinality $k'$. In the other direction,
  let $F \in D_1 \cap D_2 \cap D_3$, and assume that $x_{i,j:v_i} \in F$ 
  for some $i, j, v_i$. This must hold, since $y_{i,j:v_i,w} \in F$
  for any $i, j, v_i, w$ implies $x_{i,j:v_i} \in F$ by $D_3$. Then by
  $D_1 \cap D_2$, $x_{i,a:v_i} \in F$ for every $a \in [k]-i$,
  and by $D_3$ for every $a \in [k]-i$ we have $y_{i,a:v_i,v_a} \in F$ 
  for some $v_iv_a \in E_{i,a}$. By $D_1 \cap D_2$ we also have
  $y_{a,i:v_a,v_i} \in F$ and by $D_3$ we have $x_{a,i:v_a} \in F$.
  Finally by $D_1 \cap D_2$ we have $x_{a,b:v_a} \in F$ for every
  $a, b \in [k]$, $a \neq b$ and by $D_3$ we have $x_{a,b:v_a,v_{b,a}} \in F$
  for some $v_{b,a} \in V_b$ for all pairs $a, b \in [k]$, $a \neq b$.   
  This represents two elements in $F$ for every ordered pair $a, b \in [k]$,
  $a \neq b$, thus since $|F| \leq k'$ this represents an exhaustive
  account of every element in $F$. Thus $v_{b,a}=v_b$ for all $a, b \in [k]$, 
  $a \neq b$, and repeating the argument shows that 
  $C=\{v_i\} \cup \{v_{j,i} \mid j \in [k]-i\}$ forms a clique.
\end{proof}

We note some other hard problem variants. These are justified primarily with the positive results related to
\textsc{Delta-matroid triangle cover}.

\begin{corollary}
  The following problems are W[1]-hard parameterized by $k$.
  \begin{enumerate}
  \item Finding a feasible set of cardinality $k$ in the intersection of three twisted regular matroids
  \item Finding a feasible set of cardinality $k$ in the intersection of three matching delta-matroids.
  \item \textsc{3-Delta-matroid Parity}, i.e., given
    a directly represented linear delta-matroid $D=(V,\cF)$,
    a partition of $D$ into triples, and an integer $k$, 
    find a union of $k$ triples that is feasible in $D$.
    In particular, \textsc{Delta-matroid Triangle Packing} is hard. 
  \item Given a directly represented linear delta-matroid $D=(V,\cF)$,
    a graph $G=(V,E)$, and an integer $k$, find a feasible
    set $F$ of size $k$ in $D$ that can be covered by $K_4$'s in $G$.
  \end{enumerate}
\end{corollary}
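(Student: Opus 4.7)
The plan is to derive all four items from Theorem~\ref{thm:ddd-hard} using either structural observations on the delta-matroids appearing in its reduction or short parameter-preserving reductions. For items~1 and~2 I would re-examine the three delta-matroids $D_1,D_2,D_3$ produced by Theorem~\ref{thm:ddd-hard}: Lemma~\ref{lm:star-forest-matching} already gives that $D_3$ is simultaneously a matching delta-matroid and a twisted co-graphic (hence regular) matroid; and each pairing delta-matroid $D_1,D_2$ is itself a matching delta-matroid on the graph whose edge set is the given collection of pairs, and also equals the twist of the ``pick exactly one per pair'' partition matroid (which is regular) by a transversal of its pair set. Hence the very same construction that proves Theorem~\ref{thm:ddd-hard} already witnesses items~1 and~2 without modification.

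For item~3, I plan a reduction from \textsc{DDD Intersection}. Given linear delta-matroids $D_1,D_2,D_3$ on $V$ over a common field $\F$, form the direct sum $D=D_1\uplus D_2\uplus D_3$ on ground set $V_1\uplus V_2\uplus V_3$, where $V_i$ is a fresh copy of $V$. This direct sum is directly representable over $\F$ by a block-diagonal skew-symmetric matrix: the three delta-matroids used in Theorem~\ref{thm:ddd-hard} are matching delta-matroids whose Tutte representations already have $\emptyset$ feasible, and direct sum preserves this. Partition the ground set into triples $\{v^1,v^2,v^3\}$, one per $v\in V$. A union of $k$ such triples corresponds bijectively to a subset $S\subseteq V$ of size $k$ (together with all three copies of each of its elements), and this union is feasible in $D$ iff $S$ is feasible in each $D_i$. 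The reduction preserves the parameter $k$.

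For item~4, I would reduce from item~3. Given a \textsc{3-Delta-matroid Parity} instance $(D,\cP,k)$ with $D$ directly represented on $V$ and partition into triples $T_i$, introduce $k$ fresh vertices $Z=\{z_1,\dots,z_k\}$ and let $D'=D\uplus D_\emptyset$ on $V\cup Z$, where $D_\emptyset$ is represented directly by the all-zero $Z\times Z$ matrix, so its only feasible set is $\emptyset$. Consequently every feasible set of $D'$ is disjoint from $Z$, and $D'$ remains directly representable over $\F$. Build the graph $G$ on $V\cup Z$ whose edges are those of each triangle $T_i$ together with all edges between $V$ and $Z$; a short case analysis on the split $(|K\cap V|,|K\cap Z|)$ shows that the $K_4$'s of $G$ are exactly the sets $T_i\cup\{z\}$ for a triple $T_i$ and $z\in Z$, so any vertex-disjoint $K_4$-packing uses at most $|Z|=k$ of them. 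Setting target cardinality $k'=3k$, any feasible $F\subseteq V$ with $|F|=3k$ covered by such a packing forces the packing to use exactly $k$ of these $K_4$'s, and $F$ to equal the union of the $k$ corresponding triples, which is then a valid parity solution; conversely every parity solution lifts to such a cover. Since $k'=3k$, the reduction preserves W[1]-hardness.

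The main subtlety to watch is keeping every intermediate delta-matroid \emph{directly} representable in items~3 and~4: this is precisely why I use the ``only-$\emptyset$-feasible'' delta-matroid $D_\emptyset$ rather than its twist that would force $Z\subseteq F$, and why the sources of hardness must be matching delta-matroids so that direct sums stay directly representable. Once these design choices are fixed, the remaining verifications are routine structural bookkeeping.
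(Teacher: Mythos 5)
Your proposal is correct and follows essentially the same route as the paper's own proof: items 1 and 2 via the characterizations in Lemma~\ref{lm:star-forest-matching} (plus the observation that pairing delta-matroids are simultaneously matching delta-matroids and twisted regular matroids), item 3 via the direct-sum-and-triplicate construction, and item 4 via adding $k$ artificial vertices $z_1,\dots,z_k$ and reading triples as $K_4$'s. Your rendition of item 4 (connecting every $z\in Z$ to all of $V$, verifying by the case split on $|K\cap V|$ that the $K_4$'s are exactly $T_i\cup\{z\}$, and appending $D_\emptyset$ to keep the delta-matroid directly represented while forbidding $Z$ from the feasible set) is a cleaner and more explicit statement of the paper's rather terse ``colour class'' phrasing, but it is the same reduction.
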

\begin{proof}
  The first two follow from Theorem~\ref{thm:ddd-hard}
  via Lemma~\ref{lm:star-forest-matching}.
  The third follows immediately. If $(D_1,D_2,D_3,k)$ is an instance of
  \textsc{DDD Intersection} from Theorem~\ref{thm:ddd-hard},
  then let $D$ be the direct sum of $D_1$, $D_2$ and $D_3$
  on ground set $V=V_1 \cup V_2 \cup V_3$ (where $D$ on $V_i$ is a copy of $D_i$).
  Let the triples consist of the three copies of every element
  of the original ground set. Then the result follows. 
  
  The fourth follows similarly. Note that the input is partitioned,
  so that there are $k$ colour classes of triples and the solution
  to the triangle-packing problem needs to use one triple per colour
  class. We can implement this constraint by adding, for every
  colour class $i$, an artificial element $z_i$ that we add as the
  fourth element to every triple in class $i$. Now the only way to
  cover a feasible set $F$ is to use at most one 4-set per colour
  class, and a feasible set of size $|F|=3k$ that is covered by 4-sets
  must be a disjoint union of triangles as above.
  It is easy to see that the 4-sets in this construction can be
  represented precisely as $K_4$'s in a graph $G$. 
\end{proof}

Finally, we note that finding a feasible $k$-path is W[1]-hard.
This is again in contrast to the case of matroids, where the
corresponding problem is FPT, and matroid-related techniques have been
used in algorithms for the basic \textsc{$k$-Path}
problem~\cite{EKW23,FominLPS16JACM} (see also~\cite{FominGKSS23soda}).

\begin{theorem}
  Given a linear delta-matroid $D=(V,\cF)$, a graph $G=(V,E)$ and an integer $k \in \N$, 
  it is W[1]-hard to find a $k$-path in $G$ whose vertex set is feasible in $D$. 
\end{theorem}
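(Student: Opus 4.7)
The plan is to reduce from \textsc{DDD Intersection}, which by Theorem~\ref{hard:ddd} is W[1]-hard parameterized by~$k$ even for linear delta-matroids over a common field. Given an instance $(D_1, D_2, D_3, k)$ on ground set $V$, I construct in polynomial time a linear delta-matroid $D$, a graph $G$ on the ground set of $D$, and a parameter $k' := 3k + 4$, such that the DDD instance admits a common feasible set of cardinality $k$ iff $G$ contains a $k'$-path whose vertex set is feasible in $D$.

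The ground set is $V' := V^{(1)} \cup V^{(2)} \cup V^{(3)} \cup \{s, s', t, t'\}$, with three vertex-disjoint copies of $V$ (writing $v^{(i)}$ for the copy of $v$) plus four endpoint vertices. The delta-matroid $D$ is the direct sum $D_1 \oplus D_2' \oplus D_3 \oplus D_0$, where $D_i$ acts on $V^{(i)}$, $D_0$ is the trivial four-element linear delta-matroid on $\{s, s', t, t'\}$ whose unique feasible set is the full set, and $D_2'$ is a linear delta-matroid on $V^{(2)}$ whose feasible sets are the feasible sets of $D_2$ of cardinality exactly~$k$; the latter can be obtained as a cardinality restriction of $D_2$ in the spirit of the $\ell$-projection machinery of Section~\ref{sec:triangle-cover}, which preserves linearity. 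Since the direct-sum operation preserves linearity (Section~\ref{ssec:rep}), $D$ is linear. For each $v \in V$ the graph $G$ has intra-triple edges $v^{(1)} v^{(2)}$ and $v^{(2)} v^{(3)}$ (so that $v^{(2)}$ has degree two with neighbors only inside the triple), inter-triple edges $v^{(3)} u^{(1)}$ for each $u \neq v$, and endpoint edges $s s'$, $t t'$, $s' v^{(1)}$ for all $v$, and $t' v^{(3)}$ for all $v$.

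The forward direction is immediate: given a common feasible $F = \{v_1, \ldots, v_k\}$, the sequence $s, s', v_1^{(1)}, v_1^{(2)}, v_1^{(3)}, v_2^{(1)}, \ldots, v_k^{(3)}, t', t$ is a $k'$-path whose vertex set is feasible in $D$. For the converse, the degree-one vertices $s, t$ must be the endpoints of any $k'$-path, and feasibility of the $D_0$-block forces all of $\{s, s', t, t'\}$ onto $V(P)$, leaving a $3k$-subpath in $V^{(1)} \cup V^{(2)} \cup V^{(3)}$ running from some $v^{(1)}$ to some $u^{(3)}$. The main obstacle is ruling out \emph{partial triples}: the middle subpath could enter a triple at $v^{(1)}$ and leave directly to $u^{(3)}$ for $u \neq v$ via an inter-triple edge without visiting $v^{(2)}$ or $v^{(3)}$, in which case the three slices $V(P) \cap V^{(i)}$ would not agree and would not directly yield a common feasible set.

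I resolve this by exploiting the degree-two structure of $v^{(2)}$: since its only neighbors are intra-triple, any internal appearance of $v^{(2)}$ on $P$ pulls both $v^{(1)}$ and $v^{(3)}$ onto $V(P)$, hence a complete triple. Letting $f$ denote the number of completely traversed triples and $p_1, p_3$ the numbers of single-vertex ``partial'' triples (of types $\{v^{(1)}\}$ and $\{v^{(3)}\}$ respectively), one has $3f + p_1 + p_3 = 3k$ together with $|V(P) \cap V^{(2)}| = f$; the cardinality constraint built into $D_2'$ then forces $|V(P) \cap V^{(2)}| = k$, so $f = k$ and $p_1 = p_3 = 0$. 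Thus $P$ consists of exactly $k$ complete triples, $V(P) \cap V^{(i)} = \{v_1^{(i)}, \ldots, v_k^{(i)}\}$ for a single common set $\{v_1, \ldots, v_k\} \subseteq V$ feasible in each $D_i$, which is the desired DDD solution and completes the equivalence.
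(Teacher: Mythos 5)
Your reduction idea (tripling the ground set to chain together the three layers of a \textsc{DDD Intersection} instance) is natural, but the central gadget $D_2'$ is not well-defined, and this is a real gap. You define $D_2'$ to have feasible sets $\{F \in \cF_2 : |F| = k\}$ and claim linearity ``in the spirit of the $\ell$-projection machinery.'' But $\ell$-projection constrains $|F \cap X|$ for a set $X$ that is projected \emph{away} from the ground set; it does not filter feasible sets by total cardinality, and the two operations are genuinely different. In fact, the paper explicitly warns that cardinality restriction fails to produce a delta-matroid (Section~\ref{sec:triangle-cover}), and the same example breaks your $D_2'$: for the matching delta-matroid with $\cF = \{\emptyset, \{a,b\}, \{c,d\}, \{a,b,c,d\}\}$, the collection $\{F \in \cF : |F| = 2\} = \{\{a,b\},\{c,d\}\}$ violates the symmetric exchange axiom (no $y$ works for $A=\{a,b\}$, $B=\{c,d\}$, $x=a$). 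This is exactly the reason there is no truncation operation for delta-matroids, and it is the only place in your argument that rules out partial triples, so the converse direction collapses: without $D_2'$ forcing $|V(P) \cap V^{(2)}| = k$, the path can enter at $v^{(1)}$ and jump to $u^{(3)}$ without visiting $v^{(2)}$, and the three slices $V(P)\cap V^{(i)}$ need not agree.

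For comparison, the paper's proof avoids this obstacle entirely by reducing from \textsc{Multicolored Clique} rather than \textsc{DDD Intersection}. It builds a \emph{layered} graph over $U \cup W \cup \{s,t\}$ in which the shortest $s$-$t$ path already has exactly $k'$ vertices, so a $k'$-path has no slack and must march through the layers in order; the delta-matroid only needs pairing constraints on $W$, partition constraints on blocks of $U$, and co-loops on $s,t$ -- all of which are unproblematic, linearly representable constructions. This sidesteps the need for any cardinality gadget. If you want to rescue a reduction from \textsc{DDD Intersection}, you would similarly need to enforce a layered traversal by the graph structure alone (e.g.\ separating each ``triple'' by hub vertices so that the $k'$-bound leaves no room for partial visits), rather than hoping the delta-matroid can police cardinality.
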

\begin{proof}
  Let $G=(V,E)$ with $V=V_1 \cup \ldots \cup V_k$ be
  an input to \textsc{Multicolored Clique}.
  
  Create a vertex set $U \cup W \cup \{s,t\}$ where
  $U=\{u_{i,j,v} \mid i, j \in [k], i \neq j, v \in V_i\}$ and
  $W=\{w_{i,j,e} \mid i, j \in [k], i \neq j, e \in E \cap (V_i \times V_j)\}$.
  Add edges as follows:
  \begin{itemize}
  \item $su_{1,2,v}$ for every $v \in V_1$
  \item $u_{i,j,v}w_{i,j,e}$ for every $w_{i,j,e} \in W$ where $e=vv' \in E$
  \item $w_{i,j,e}u_{i,j',v}$ for every $w_{i,j,w} \in W$ where $e=vv' \in E$
    where $j'=j+1$ except (1) $j'=j+2$ if $j+1=i<k$ and (2) the edge is
    skipped if $j+1=i=k$  
  \item $w_{i,k,e}u_{i+1,1,v}$ for every $w_{i,k,e} \in W$ and $v \in V_{i+1}$
    where $i<k$
  \item $w_{k-1,k,e}t$ for every $w_{k-1,k,e} \in W$
  \end{itemize}
  Let $D=(U \cup W \cup \{s,t\}, \cF)$ be a delta-matroid with feasible sets meeting the following conditions.
  \begin{itemize}
  \item A pairing constraint over $W$ on all pairs $(w_{i,j,w},w_{j,i,e})$
    representing the same edge $e$
  \item A partition constraint over $U$ on blocks $B_{i,j}=\{u_{i,j,v} \mid v \in V_i\}$,
    $i, j \in [k]$, $i \neq j$
  \item $s$ and $t$ are co-loops, i.e., present in every feasible set
  \end{itemize}
  This can be easily constructed, e.g., as a contraction of a matching
  delta-matroid (where the blocks $B_{i,j}$ are formed from stars
  where the midpoint vertex has been contracted).
  Let $H=(U \cup W \cup \{s,t\}, E_H)$ be the graph just constructed and let $k'=2k(k-1)+2$.
  We claim that $H$ contains a $k'$-path whose vertex set is feasible in $D$
  if and only if $G$ has a multicolored $k$-clique.
  
  On the one hand, let $C=\{v_1,\ldots,v_k\}$ be a
  multicolored $k$-clique in $G$, with $v_i \in V_i$ for every $i \in [k]$.
  Let $F=\{u_{i,j,v_i}, w_{i,j,v_iv_j} \mid i, j \in [k], i \neq j\} \cup \{s,t\}$.
  Then $F$ induces a path in $H$, $|F|=k'$ and $F$ is feasible in $D$. 
  
  On the other hand, let $F \in \cF(D)$ with $|F|=k'$, and let $P$ be a
  path in $H$ such that $F=V(P)$. Then $s, t \in F$, and the shortest
  $st$-path in $H$ contains $k'$ vertices. Hence $F$ contains precisely
  one vertex $u_{i,j,v_{ij}}$ and $w_{i,j,e_{ij}}$ for every $i, j \in [k]$, $i \neq j$.
  Furthermore, for every $j, j' \in [k]-i$ we have $v_{ij}=v_{ij'}$
  since the $w$-vertices $w_{i,j,e_{ij}}$ in $F$ remember the identity of $v_i$.
  Finally, since $F$ is feasible in $D$, we have $e_{ij}=e_{ji}$
  for all $i, j \in [k]$, $i \neq j$.
  Thus, the $w$-vertices contain one selection of an edge $e_{ij} \in V_i \times V_j$
  for every $i, j \in [k]$, $i \neq j$ where $e_{ij}$ and $e_{ij'}$ agree at $V_i$, 
  and $e_{ij}$ and $e_{ji}$ are the same edge. Thus the selection forms
  a $k$-clique in $G$.
\end{proof}

\else
  \input{sec3_short.tex}
\fi
\iflong
  \section{FPT algorithm for delta-matroid triangle cover} \label{sec:triangle-cover}

A triangle packing is a collection of vertex-disjoint triangles $\mathcal{T} = \{ T_1, \cdots, T_{p} \}$.
We denote $V(\mathcal{T}) = \bigcup_{T \in \mathcal{T}} T$.
We say that $\mathcal{T}$ covers $S$ if $S \subseteq V(\mathcal{T})$.
We will solve the following problem. Given a linear delta-matroid $D$
over ground set $V$, a collection $\cT_0 \subseteq \binom{V}{3}$ of
triples (triangles) from $V$, and an integer $k$,
find a feasible set $F \subseteq V$ with $|F|=k$ such that $F$ can be
covered by a packing of triangles from $\cT_0$.
We call this problem \textsc{Delta-matroid Triangle Cover}.
As shown in Theorem~\ref{thm:ddd-hard}, the problem would be W[1]-hard if we insisted that $F$ must
be precisely partitioned into triangles. But here, we allow triangles
to have anywhere between 1 and 3 members in $F$.

The algorithm uses a mixture of algebraic and combinatorial arguments.  
We start with the algebraic side, where we introduce a new delta-matroid operation,  
termed \emph{$\ell$-projection}, that plays a crucial role in our approach.

\subsection{Delta-matroid $\ell$-projection}

We introduce an operation called \emph{$\ell$-projection} for delta-matroids,  
which generalizes the classical concept of matroid truncation.  
This new operation is of independent interest and may also be useful for kernelization  
(see, e.g., \cite{Wahlstrom24SODA}).
For a matroid $M = (V, \cI)$ and $k \in \N$, the \emph{$k$-truncation} of $M$ is a matroid $(V, \cI_k)$, where $\cI_k = \{ I \in \cI \mid |I| \le k \}$.
This operation is fundamental in the design of many (parameterized) algorithms for matroid problems.
See Marx~\cite{Marx09-matroid} and Lokshtanov et al.~\cite{LokshtanovMPS18TALG} for randomized and deterministic algorithms for truncation, respectively.

However, the situation is not as straightforward for delta-matroids.
A natural idea would be to define the $k$-truncation of a delta-matroid $D = (V, \cF)$ as $(V, \cF_k)$, where $\cF_k = \{ F \in \cF \mid |F| \le k\}$.
However, this does not always yield a delta-matroid, e.g., let $D = (V, \cF)$ be the (matching) delta-matroid, where $V = \{ a, b, c, d \}$ and $\cF = \{ \emptyset, \{ a, b \}, \{ c, d \}, \{ a, b, c, d \} \}$.
For $k = 2$, we obtain $\cF_2 = \{ \emptyset, \{ a, b \}, \{ c, d \} \}$, which fails the exchange axiom:
For $F = \{ a, b \}$ and $F' = \{ c, d \}$, there is no element $x \in F \Delta F'$ such that ${F \Delta \{ c, x \} \in \cF_2}$.

To overcome this issue, we introduce a more nuanced operation called \emph{$\ell$-projection}, which generalizes matroid truncation.
Let $D = (V, \mathcal{F})$ be a delta-matroid and let $X \subseteq V$.  
The \emph{$\ell$-projection} of $D$ to $V \setminus X$ is the delta-matroid $D |_{\ell} X = (V \setminus X, \cF |_{\ell} X)$, where
\begin{align*}
  \cF |_{\ell} X = \{ F \setminus X \mid F \in \cF, |F \cap X| = \ell \}.
\end{align*}
It can be verified that $D|_\ell X$ is indeed a delta-matroid. 
In fact, if we let $D_X$ be the basis delta-matroid of the uniform matroid $U_{X,\ell}$ of rank $\ell$ on $X$ (with all elements of $V \setminus X$ treated as loops in $D_X$), then
\[
  D |_\ell X = (D \Delta D_X) \setminus X.
\]
Because both the delta-sum and deletion operations preserve the delta-matroid property, the resulting structure $D |_\ell X$ is indeed a delta-matroid.
Since $D_X$ is representable over any sufficiently large field (see \cite{OxleyBook2}), a linear representation of the $\ell$-projection of a linear delta-matroid can be computed in randomized polynomial time (see \cite{KW24}).  
To be precise, we have the following:

\begin{lemma}
  \label{lemma:reduce}
  Let $D = (V, \mathcal{F})$ be a linear delta-matroid with $D = \bD(A) / T$ for $A \in \mathbb{F}^{(V \cup T) \times (V \cup T)}$. 
  For $X \subseteq V$, an $\varepsilon$-approximate linear representation of the $\ell$-projection of $D$ to $V \setminus X$ over a field extension $\mathbb{F}'$ of $\mathbb{F}$ with at least $\ell / \varepsilon$ elements can be computed in polynomial time.
\end{lemma}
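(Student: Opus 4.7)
The plan is to prove the lemma by unpacking the decomposition $D |_\ell X = (D \Delta D_X) \setminus X$ that is already given immediately before the statement, and then constructing each piece algorithmically. First I would verify the decomposition at the level of feasible sets: a set $S \subseteq V \setminus X$ lies in $\cF((D \Delta D_X) \setminus X)$ iff there exist $F \in \cF(D)$ and $Y \subseteq X$ with $|Y| = \ell$ such that $F \Delta Y = S$ and $S \cap X = \emptyset$. The second condition forces $F \cap X = Y$, so $|F \cap X| = \ell$ and $S = F \setminus X$, which is precisely the definition of $\cF |_\ell X$.

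Next I would build a linear representation of $D_X$ over a suitable extension $\F'$ of $\F$. Since $D_X$ is the basis delta-matroid of $U_{X,\ell}$ (with $V \setminus X$ adjoined as loops), I take a random $\ell \times X$ matrix $R$ with entries in $\F'$ and form
\[
  A_X = \kbordermatrix{
    & T_X & X & V \setminus X \\
    T_X & 0 & R & 0 \\
    X & -R^T & 0 & 0 \\
    V \setminus X & 0 & 0 & 0
  },
\]
setting $D_X = \bD(A_X)/T_X$ over ground set $V$. By Lemma~\ref{lemma:det-pf}, $\Pf A_X[T_X \cup S] = \pm \det R[\cdot,S]$ for $S \subseteq X$ with $|S| = \ell$ (and vanishes otherwise), so $S$ is feasible in $D_X$ iff $|S| = \ell$ and the columns $R[\cdot, S]$ are linearly independent. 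All elements of $V \setminus X$ are loops by construction. Choosing $R$ uniformly at random from $(\F')^{\ell \times X}$, the Schwartz-Zippel lemma applied to the Vandermonde-style identity for $U_{X,\ell}$ bounds the probability that any one prescribed $\ell$-subset of $X$ fails to appear as feasible by $\ell / |\F'| \le \varepsilon$; this gives an $\varepsilon$-approximate representation of $D_X$.

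Finally I would appeal to the closure results of Koana and Wahlström~\cite{KW24} recalled in Section~\ref{ssec:rep}: the delta-sum of two linear delta-matroids represented as contraction representations over a common field admits a contraction representation that can be constructed in randomized polynomial time, with only an additive loss in the approximation parameter. Applying this to $D = \bD(A)/T$ and $D_X = \bD(A_X)/T_X$ yields a contraction representation $\bD(A')/T'$ of $D \Delta D_X$. Deletion of $X$ then simply restricts $A'$ to rows and columns indexed by $(V \setminus X) \cup T'$, preserving both skew-symmetry and the contraction structure. Composing the three steps gives the desired contraction representation of $D|_\ell X$ on ground set $V \setminus X$.

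The one step that needs genuine care is the error accounting: the randomized construction of $D_X$, the randomized delta-sum step, and the union bound over the number of events being asserted must all be controlled simultaneously by the single parameter $\varepsilon$, so $\F'$ is chosen of size at least $\ell/\varepsilon$ (absorbing polynomial factors into the implicit constants), which is the bound claimed in the statement. The rest is purely a matter of invoking the representation machinery already established.
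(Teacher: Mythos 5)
Your proof takes the same route the paper sketches in the paragraph immediately preceding the lemma (the paper states no formal proof): decompose $D|_\ell X = (D \Delta D_X) \setminus X$ where $D_X$ is the basis delta-matroid of $U_{X,\ell}$, represent $D_X$ by a random $\ell \times X$ matrix embedded in a skew-symmetric block matrix, then invoke the Koana--Wahlström delta-sum construction followed by a deletion. You have usefully made explicit the verification of the decomposition and the Schwartz--Zippel error bound for $D_X$ that the paper leaves implicit, and the argument is sound.
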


An alternative, perhaps more transparent, perspective on maintaining a representation of $\ell$-projection is to apply the Ishikawa-Wakayama formula (a generalization of the Cauchy-Binet formula; see Lemma~\ref{lemma:cauchy-binet-ss}) to $\Pf BAB^T$,
where $B$ is a representation of a suitable transversal matroid.  
This approach is analogous to the randomized construction of $k$-truncation for linear matroids~\cite{Marx09-matroid} via the Cauchy-Binet formula.  
Nonetheless, from a technical standpoint, the construction described above fully meets our needs.

We illustrate that linear matroid truncation is a special case of the $\ell$-projection.
Let $M = (V, \cI)$ be a linear matroid represented by a matrix $A$.
As mentioned in Section~\ref{ssec:rep}, we can represent the independent sets of $M$ via the delta-matroid $\bD(A_D)|T$, where
\begin{align*}
  A_D = \kbordermatrix{
    & T & V \\
    T & O & A \\
    V & -A^T & O \\
  }
\end{align*}
Furthermore, by Lemma~\ref{lemma:det-pf}, a set $F \subseteq V \cup T$ is feasible in $\bD(A_D)$ if and only if $|F \cap V| = |F \cap T|$ and the submatrix $A[F \cap T, F \cap V]$ is non-singular.
Thus, applying $k$-projection to the ground set $V$ on $\bD(A_D)$ yields a linear representation of a delta-matroid on $V$, where a set $F \subseteq V$ is feasible if and only if there exists a set $T' \subseteq T$ of size $k$ such that $A[T', F]$ is non-singular.
This is exactly the $k$-truncation of $M$.

For notational convenience, we will sometimes use \emph{$\ell$-contraction} by $X$ to mean $\ell$-projection to $V \setminus X$.

\subsection{Algorihtm}

Suppose that there $\cT=\{T_1,\ldots,T_p\}$ is a triangle packing covering a feasible set $F$ of size $k$.
Using the standard color coding argument, we may assume that $V$ is partitioned into $V_1, \dots, V_p$ such that $T_i \subseteq V_i$.
The color coding step adds a multiplicative factor $k^{O(k)}$ to the running time.
Let $\cT_i$ be the collection of triangles $T \in \cT_0$ with $T \subseteq V_i$.
Our algorithm will try to find a packing of at most $|V_i \cap F| \le 3$ triangles from $\cT_i$.

Let us introduce the notion of \emph{flowery} vertices:
We say that a vertex $v$ is \emph{flowery} if it is at the center of at least
$7$ otherwise disjoint triangles in $\cT_i$.
Note that this can be efficiently
tested simply by computing the maximum
matching number in the graph induced by triangles covering $v$.
Let $H$ be the set of flowery vertices.
Other vertices are \emph{non-flowery}.
Let us also define a broader notion of \emph{safe} vertices.
Intuitively, a safe vertex can always be included into a feasible set, as discussed in \Cref{lemma:completable-packing}.
The central idea in our algorithm revolves around utilizing the $\ell$-projection to get rid of safe vertices.
A vertex $v$ is said to be \emph{safe} if one of the following holds:
\begin{itemize}
  \item 
  it is flowery, or
  \item
  there is a triangle $T = \{ u, v, w \}$ such that every triangle in $\mathcal{T}_i$ covering $u$ or $w$ covers $v$.
\end{itemize}
Note that the second condition holds e.g., when it is covered by an \emph{isolated triangle}, a triangle $T$ such that for every $v \in T$, $T$ is the unique triangle in $\cT_i$ covering $v$.

\begin{lemma} \label{lemma:completable-packing}
  Let $V' \subseteq V$ be the set of safe vertices. 
  Let $F$ be a set of size $k$ such that $|F \cap V_i| \le 3$ for every $i \in [p]$.
  Suppose that there is a triangle packing $\cT$ covering $F \setminus V'$, where every $T \in \cT$ belongs to $V_i$ (i.e., $T \subseteq V_i$) for some $i \in [p]$.
  Then, there is a triangle packing covering all of $F$. 
\end{lemma}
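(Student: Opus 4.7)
The plan is to iteratively augment $\cT$, adding one new triangle per uncovered vertex $v \in F \cap V'$. To make the counting clean, I would first discard every triangle of $\cT$ that covers no vertex of $F$; the resulting packing still covers $F \setminus V'$ and is now \emph{minimal} in the sense that every triangle of $\cT$ touches $F$. Because the triangles of $\cT$ are pairwise vertex-disjoint and each is contained in a single color class $V_i$, minimality forces $|\cT \cap V_i| \le |F \cap V_i| \le 3$; and if $v \in V_i$ is not yet covered, the same count gives $|\cT \cap V_i| \le |F \cap V_i| - 1 \le 2$, hence $|V(\cT) \cap V_i| \le 6$. Since triangles living in other color classes are disjoint from $V_i$, only these at most six vertices can obstruct a new triangle at $v$.

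If $v$ is non-flowery safe, I would proceed directly: safety yields a triangle $T_v = \{u,v,w\} \in \cT_i$ such that every triangle of $\cT_i$ containing $u$ or $w$ also contains $v$. Since $v \notin V(\cT)$, I can conclude that neither $u$ nor $w$ lies in $V(\cT)$---otherwise the triangle of $\cT$ covering that endpoint would itself be a triangle of $\cT_i$ meeting $u$ or $w$ but not $v$. Hence $T_v$ is disjoint from the current packing and I insert it; since $T_v$ covers $v \in F$, minimality is preserved.

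If $v$ is flowery, I would use a pigeonhole count: by definition there are at least $7$ triangles of $\cT_i$ through $v$ that are pairwise disjoint outside $v$, so their $14$ non-$v$ endpoints are all distinct. Each of the at most six blocker vertices in $V(\cT) \cap V_i$ lies in at most one of those flowery triangles, so at most six of the seven are blocked, leaving at least one flowery triangle vertex-disjoint from $\cT$ that I can insert.

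Iterating these two cases over every $v \in F \cap V'$, in any order (say non-flowery safe vertices first), produces a packing covering all of $F$. I expect the main delicate point to be the minimality bookkeeping---verifying that after each insertion the bound $|\cT \cap V_i| \le 2$ really follows from vertex-disjointness together with minimality, and that the newly added triangle keeps the invariant intact---and noting that the constant $7$ in the definition of flowery is chosen precisely to beat the $6$-vertex obstruction above, so no strictly smaller constant would suffice in the flowery step.
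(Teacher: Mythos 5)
Your proof is correct and follows essentially the same approach as the paper's: handle non-flowery safe vertices by observing that neither endpoint of the witnessing triangle can already be in the packing, and handle flowery vertices by a counting argument showing that the at most six blocker vertices in $V_i$ cannot obstruct all seven flowery petals. The one place you are more explicit than the paper is the minimality preprocessing (discarding triangles of $\cT$ that touch no vertex of $F$) and the resulting bound $|\cT \cap V_i| \le |F \cap V_i| - 1 \le 2$ when some $v \in F \cap V_i$ is still uncovered; the paper compresses this into a ``we may assume'' but is using the same reasoning, and your explicit bookkeeping confirms the invariant is preserved across insertions.
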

\begin{proof}
  Let $v \in (F \cap V') \setminus V(\cT)$.
  First, suppose that $v$ is non-flowery. 
  By definition, there is a triangle $T = \{ u, v, w \}$ such that every triangle covering $u$ or $w$ covers $v$.
  Note that neither $u$ nor $w$ is covered by $\cT$ since otherwise $v$ would be covered as well.
  Thus, $T$ is disjoint from every other triangle in $\cT$ and can safely be
  added to $\cT$.

  Next, suppose that $v \in (F \cap V') \setminus V(\cT)$ is flowery.
  Since $|F \cap V_i| \le 3$, we may assume that the packing $\cT$ restricted to $V_i$ contains at most two triangles, covering at most six vertices.
  Thus there is at least
  one triangle containing $v$ that is disjoint
  from $V(\cT)$ and can be added to $\cT$. The result follows by
  repetition. 
\end{proof}

Using color coding, we guess
a further partition of every $V_i$ as $V_i=V_{i,1} \cup V_{i,2} \cup V_{i,3}$ such that $T_i$ intersects all three parts.
Thus, we will henceforth assume that every triangle in $\cT_i$ intersects $V_{i,j}$ for all $j \in [3]$.
For each $i \in [p]$ and $j \in [3]$, let $v_{i,j}$ be the unique vertex of $T_i \cap V_{i,j}$.
We define the following for every $i \in [p]$ and $j \in [3]$:
\begin{itemize}
\item let $f_{i,j} \in \{ 0, 1 \}$ be a Boolean indicating whether $v_{i,j}$ is in the feasible set $F$ and
\item let $h_{i,j} \in \{ 0, 1 \}$ be a Boolean indicating whether $v_{i,j}$ is flowery.
\end{itemize}
We guess these Booleans; note that there are $2^{O(k)}$ possibilities.

For each non-flowery vertex, we can also guess one edge of its triangle in the triangle packing. 

\begin{lemma} \label{lemma:small-vc}
  If a vertex $v \in V_i$ is non-flowery, then there is a set $S_v \subseteq V_i$
  such that every triangle containing $v$ must intersect $S_v$ and $|S_v| \le 12$.
\end{lemma}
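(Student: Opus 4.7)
The plan is to reduce the statement to the standard fact that in any graph the vertex cover number is at most twice the matching number. Concretely, I would associate to $v$ the auxiliary ``link'' graph $G_v$ on vertex set $V_i \setminus \{v\}$ whose edges are exactly $\{uw : \{u,v,w\} \in \cT_i\}$, so that edges of $G_v$ are in bijection with triangles of $\cT_i$ containing $v$.

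First I would observe that a matching $\{u_1w_1, \ldots, u_tw_t\}$ in $G_v$ corresponds exactly to a collection of triangles $\{u_1,v,w_1\}, \ldots, \{u_t,v,w_t\} \in \cT_i$ that pairwise intersect only in $v$. By the definition of flowery, the hypothesis that $v$ is non-flowery gives $\nu(G_v) \le 6$.

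Next I would invoke the classical inequality $\tau(G) \le 2\nu(G)$ valid for every graph (take both endpoints of every edge of a maximum matching; any uncovered edge would extend the matching, contradicting maximality). Applying this to $G_v$ yields a vertex cover $S_v \subseteq V_i \setminus \{v\}$ with $|S_v| \le 12$. Every triangle $T \in \cT_i$ containing $v$ corresponds to an edge of $G_v$, which by the vertex cover property must have at least one endpoint in $S_v$, and that endpoint lies in $T$. Hence $T \cap S_v \ne \emptyset$, which is exactly what is required.

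There is no real obstacle here; the only thing to be slightly careful about is the constant. The bound $|S_v| \le 12$ uses $\nu(G_v) \le 6$, which in turn uses the threshold ``$7$ otherwise disjoint triangles'' in the definition of flowery. Should a tighter bound be needed later, one could of course adjust the threshold in the definition of flowery accordingly.
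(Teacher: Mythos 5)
Your proof is correct and follows essentially the same route as the paper: both form the link graph $G_v$ (an edge $uw$ for each triangle $\{u,v,w\} \in \cT_i$), note that non-flowery is precisely $\nu(G_v)\le 6$, and then cover all edges by $V(M)$ for a maximum matching $M$, i.e.\ the classical $\tau \le 2\nu$ bound. The paper states this as ``$S_v = V(M)$ intersects every triangle containing $v$''; you just spell out the justification (an uncovered opposite edge would extend $M$) a bit more explicitly.
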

\begin{proof}
  Consider the graph induced by
  triangles in $\cT_i$ intersecting~$v$.
  Since $v$ is non-flowery, this graph has a maximum matching $M$ of size at most 6.
  Then, $S_v = V(M)$ intersects every triangle containing $v$.
\end{proof}

For each non-flowery vertex $v$, we choose a vertex $s(v) \in S_v$ uniformly at random.
We are interested in the case that  $s(v) \in T$  holds for every non-flowery vertex $v$ covered by a triangle $T_i \in \cT$.
By \Cref{lemma:small-vc}, this happens with probability at least $1/2^{O(k)}$.
Let us say that the \emph{$s$-graph} is a directed graph that contains an arc $(v_i, v_i')$ if and only if $s(v_i) = v_i'$.
By definition, flowery vertices have out-degree~0 and non-flowery vertices have out-degree 1 in the  $s$-graph.
Each triangle $T_i \in \cT$  
is of one of the following types (see Figure~\ref{fig:triangle-cover}).
\begin{enumerate}
\item \label{type:hhh} All three vertices of $T_i$ are flowery.
\item \label{type:hhl} $T_i$ contains two flowery vertices, say $v_{i,1}$
  and $v_{i,2}$, and $v_{i,3}$ points at one of them,
  say $s(v_{i,3})=v_{i,2}$.
\item \label{type:hla} $T_i$ contains one flowery vertex, say $v_{i,1}$,
  and $s(v_{i,2})=s(v_{i,3})=v_{i,1}$.
\item \label{type:hlp} $T_i$ contains one flowery vertex, say $v_{i,1}$,
  and the $s$-graph forms a path, e.g.,
  $s(v_{i,3})=v_{i,2}$ and $s(v_{i,2})=v_{i,1}$.
\item \label{type:hlc} $T_i$ contains one flowery vertex, say $v_{i,1}$,
  and the $s$-graph forms a 2-cycle of
  $s(v_{i,3})=v_{i,2}$ and $s(v_{i,2})=v_{i,3}$.
\item \label{type:l3c} $T_i$ contains only non-flowery vertices,
  and the $s$-graph forms a 3-cycle.
\item \label{type:l2c} $T_i$ contains only non-flowery vertices,
  and the $s$-graph forms a directed 2-cycle
  with one incoming edge,
  e.g., $s(v_{i,1})=v_{i,2}$, $s(v_{i,2})=v_{i,1}$
  and $s(v_{i,3})=v_{i,2}$.
\end{enumerate}
For every $i \in [p]$, we guess the type as well as the
ordering of $j=1,2,3$.
We will assume that all triangles in $\cT_i$ are compatible with our guesses (by deleting incompatible ones from $\cT_i$). 

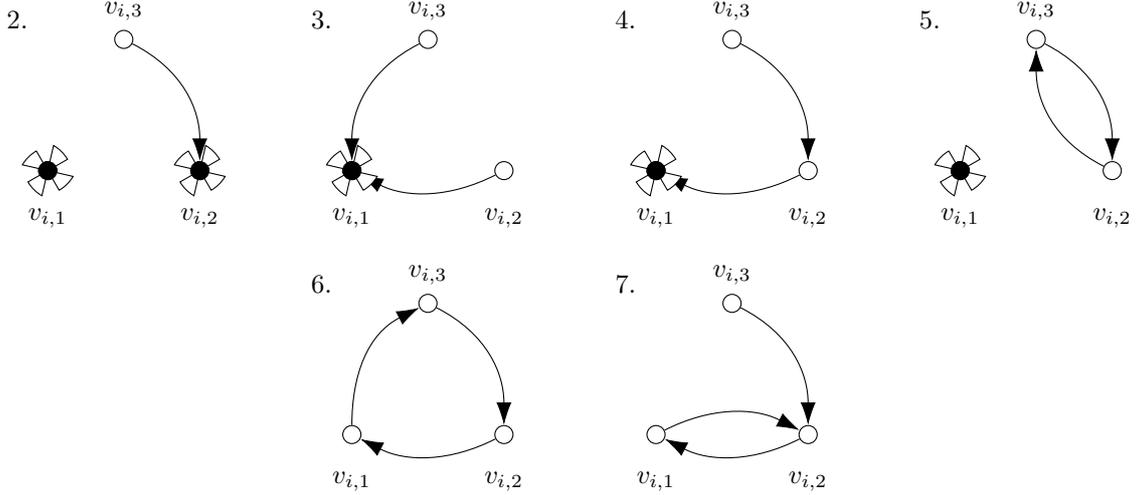
\begin{figure}
  \centering
  \begin{tikzpicture}
    \begin{scope}[yscale=0.87]
      \node at (-1.4, 2.3) {2.};
      \node[vertex,label={[below=5mm]:$v_{i,1}$}] (v1) at (-1, 0) {};
      \node[vertex,label={[below=5mm]:$v_{i,2}$}] (v2) at (1, 0) {};
      \node[vertex,label={above:$v_{i,3}$}] (v3) at (0, 2) {};

      \draw[->] (v3) to[out=330, in=90] (v2);
    \end{scope}
    \flowervertex{-1}{0}
    \flowervertex{1}{0}

    \begin{scope}[shift={(4,0)},yscale=0.87]
      \node at (-1.4, 2.3) {3.};
      \node[vertex,label={[below=5mm]:$v_{i,1}$}] (v1) at (-1, 0) {};
      \node[vertex,label={[below=5mm]:$v_{i,2}$}] (v2) at (1, 0) {};
      \node[vertex,label={above:$v_{i,3}$}] (v3) at (0, 2) {};
      \draw[->] (v3) to[out=210, in=90] (v1);
      \draw[->] (v2) to[out=210, in=330] (v1);
    \end{scope}
    \flowervertex{3}{0}

    \begin{scope}[shift={(8,0)},yscale=0.87]
      \node at (-1.4, 2.3) {4.};
      \node[vertex,label={[below=5mm]:$v_{i,1}$}] (v1) at (-1, 0) {};
      \node[vertex,label={[below=5mm]:$v_{i,2}$}] (v2) at (1, 0) {};
      \node[vertex,label={above:$v_{i,3}$}] (v3) at (0, 2) {};

      \draw[->] (v3) to[out=330, in=90] (v2);
      \draw[->] (v2) to[out=210, in=330] (v1);
    \end{scope}
    \flowervertex{7}{0}

    \begin{scope}[shift={(12,0)},yscale=0.87]
      \node at (-1.4, 2.3) {5.};
      \node[vertex,label={[below=5mm]:$v_{i,1}$}] (v1) at (-1, 0) {};
      \node[vertex,label={[below=5mm]:$v_{i,2}$}] (v2) at (1, 0) {};
      \node[vertex,label={above:$v_{i,3}$}] (v3) at (0, 2) {};

      \draw[->] (v3) to[out=330, in=90] (v2);
      \draw[->] (v2) to[out=150, in=270] (v3);
    \end{scope}
    \flowervertex{11}{0}

    \begin{scope}[shift={(4,-3.5)},yscale=0.87]
      \node at (-1.4, 2.3) {6.};
      \node[vertex,label={[below=5mm]:$v_{i,1}$}] (v1) at (-1, 0) {};
      \node[vertex,label={[below=5mm]:$v_{i,2}$}] (v2) at (1, 0) {};
      \node[vertex,label={above:$v_{i,3}$}] (v3) at (0, 2) {};

      \draw[->] (v3) to[out=330, in=90] (v2);
      \draw[->] (v2) to[out=210, in=330] (v1);
      \draw[->] (v1) to[out=90, in=210] (v3);
    \end{scope}

    \begin{scope}[shift={(8,-3.5)},yscale=0.87]
      \node at (-1.4, 2.3) {7.};
      \node[vertex,label={[below=5mm]:$v_{i,1}$}] (v1) at (-1, 0) {};
      \node[vertex,label={[below=5mm]:$v_{i,2}$}] (v2) at (1, 0) {};
      \node[vertex,label={above:$v_{i,3}$}] (v3) at (0, 2) {};

      \draw[->] (v3) to[out=330, in=90] (v2);
      \draw[->] (v2) to[out=210, in=330] (v1);
      \draw[->] (v1) to[out=30, in=150] (v2);
    \end{scope}
  \end{tikzpicture}
  \caption{Triangles of types 2 to 7. The arrows indicate the $s$-graph.}
  \label{fig:triangle-cover}
\end{figure}

Now we are ready to give the reduction to \textsc{Colorful Delta-matroid Matching}.
Recall that the input is a linear delta-matroid and an edge-colored graph over its ground set.
Whenever we identify a set $S$ of safe vertices,  we apply $|F \cap S|$-contraction by $S$.
This way, we are able to transform the task of finding a triangle packing into the simpler problem of finding a colorful matching.
The pairs will be colored in $C \subseteq [p]$.
For every $i \in [p]$, we will indicate whether $i \in C$, and if $i \in C$, then we also define a collection of pairs $P_i$ in~$V_{i}$, which are colored in $i$.

\paragraph*{Type 1.} If $T_i$ is of type \ref{type:hhh}, then we apply $f_{i,j}$-contraction by $H \cap V_{i,j}$ for each $j \in [3]$. (Recall that $H$ is the set of flowery vertices.)

\paragraph*{Types 2--5.}
Next, suppose that $T_i$ has one of the types \ref{type:hhl}--\ref{type:hlc}, where $v_{i, 1}$ is flowery.
First, we apply $f_{i,1}$-contraction by $H \cap V_{i,1}$.

We have two cases depending on $f_{i,2} = f_{i,3} = 1$ or not.
If $f_{i,2} = f_{i,3} = 1$, then we define a collection $P_i$ of pairs $(u_{i,2}, u_{i,3})$ from $V_{i,2} \times V_{i,3}$ colored in $i$ as follows.
\begin{itemize}
  \item If $T_i$ has type \ref{type:hhl}, then we add all pairs $(u_{i,2}, u_{i,3})$ such that $u_{i,2} = s(u_{i,3})$ and $N(u_{i,2}) \cap N(u_{i,3})$ contains a flowery vertex.
  \item If $T_i$ has type \ref{type:hla}, then we add all pairs $(u_{i,2}, u_{i,3})$ such that $u_{i,1} = s(u_{i,2}) = s(u_{i,3}) \in V_{i,1}$ is flowery and $\{ u_{i,1}, u_{i,2}, u_{i,3} \} \in \cT_i$.
  \item If $T_i$ has type \ref{type:hlp}, then we add all pairs $(u_{i,2}, u_{i,3})$ such that $u_{i,2} = s(u_{i,3})$ and $s(u_{i,2}) \in N(u_{i,3})$.
  \item If $T_i$ has type \ref{type:hlc}, then we add all pairs $(u_{i,2}, u_{i,3})$ such that $u_{i,2} = s(u_{i,3})$, $u_{i,3} = s(u_{i,2})$, and $N(u_{i,2}) \cap N(u_{i,3})$ contains a flowery vertex.
\end{itemize}
Otherwise (i.e., $(f_{i,2}, f_{i,3}) \ne (1, 1)$), $i \notin C$.
We apply the $f_{i,j}$-contraction by $V_{i,j}$ for $j = 2, 3$.

\paragraph*{Type 6.}
  Suppose that $T_i$ is of type \ref{type:l3c}.
  We may assume that $T_i$ is a strongly connected component of the $s$-graph.
  So, we delete the triangles from~$\cT_i$ that are not strongly connected components.
  Every triangle $T \in \cT_i$ contained in $V_i$ is then isolated and thus safe.
  We apply $f_{i,j}$-contraction by $V(\cT_i) \cap V_{i,j}$ for each $j \in [3]$.

\paragraph{Type 7.}
Finally, suppose that $T_i$ is of type \ref{type:l2c}.
We delete every triangle $T = \{ u_{i,1}, u_{i,2}, u_{i,3} \} \in \cT_i$ from $\cT_i$ unless the $s$-graph has bidirectional arcs $(u_{i,1}, u_{i,2}), (u_{i,2}, u_{i,1})$ and an arc $(u_{i,3}, u_{i,2})$.
We argue that $u_{i,j}\in V_{i,1} \cup V_{i,2}$ covered by some triangle in $\cT_i$ is safe.
We consider the graph induced by $E(\cT_i)$.
Every connected component consists of one vertex from $V_{i,1}$, another vertex from $V_{i,2}$, and some number of vertices from~$V_{i,3}$.
Thus, any triangle covering $u_{i,j} \in V_{i,1} \cup V_{i,2}$ certifies that it is safe.
We apply $f_{i,j}$-contraction by $V(\cT_i) \cap V_{i,j}$ for each $j \in [3]$.

\medskip

In summary, we apply $f_{i,j}$-contraction by $V_{i,j}$ for every $i \in [p]$ and $j \in [3]$, unless $T_i$ is of type \ref{type:hhl}--\ref{type:hlc} and $f_{i,j} = f_{i,j} = 1$.
We prove the correctness of our algorithm:


\dmtc*
\begin{proof}
  We show that our algorithm solves \textsc{Delta-matroid Triangle Cover} in $O^*(k^{\Oh(k)})$ time.
  Suppose that there is a triangle packing $\cT = \{ T_1, \dots, T_p \}$ covering a feasible set $F$ of size $k$.
  By the color-coding argument, we can find a partition of $V$ into $V_{i,j}$ for $i \in [p]$ and $j \in [3]$ with probability at least $1 / k^{O(k)}$.
  We guess whether each vertex $v_{i,j} \in V(\cT) \cap V_{i,j}$ is feasible or not and flowery or not with respect to the partition $V_{i,j}$ in $2^{\Oh(k)}$ time.
  For every non-flower vertex, we choose a vertex from $S_v$ uniformly at random.
  The probability that $s(v_{i,j}) \in T_{i,j}$ for every non-flowery vertex $v_{i,j} \in V(\cT) \cap V_{i,j}$ is at least $1/2^{\Oh(k)}$.
  As we have described, we then construct an instance $(D', (P_i)_{i \in C})$ of \textsc{Colorful Delta-Matroid Matching}, which can be solved in $O^*(2^{O(k)})$ time.
  Let $M = \{ v_{i,2} v_{i,3} \mid i \in C \}$, where $v_{i,2}$ and $v_{i,3}$ are the unique element in $T_i \cap V_{i,2}$ and $T_i \cap V_{i,3}$, respectively.
  By assumption, $V(M) \subseteq F$, and for every $i \in [p]$ and $j \in [3]$, $|F \cap V_{i,j}| = f_{i,j}$.
  Since $D'$ is the result of $f_{i,j}$-contraction by $V_{i,j}$ for $i \notin C$ and $j \in [3]$, 
  the set $V(M)$ is a feasible set in $D'$.
  

  Conversely, suppose that the \textsc{Colorful Delta-matroid Matching} instance has a solution $M$, i.e., $V(M)$ is feasible in $D'$.
  Let $F \supseteq V(M)$ be a feasible set in $D$ certifying the feasibility of $V(M)$ in $D'$.
  For every $i \in [p]$ and $j \in [3]$ with $f_{i,j} = 1$, let $v_{i,j}'$ be the unique element in $F \cap V_{i,j}$.
  We show that there is a triangle packing $\cT'$ with $F \subseteq V(\cT')$.
  It will consist of at most three triangles from $\cT_i$ for each $i \in [p]$.
  Let us examine each type:
  \paragraph*{Types 1 and 6.}
    If $T_i$ is of type \ref{type:hhh} or \ref{type:l3c}, then $v_{i,j}'$ is safe for every $j \in [3]$.
    Thus, by \Cref{lemma:completable-packing}, there is a triangle packing covering $F \cap V_i$.
  \paragraph*{Types 2--5.}
    Suppose that $T_i$ has one of the types \ref{type:hhl}--\ref{type:hlc}.
    If $f_{i,2} = f_{i,3} = 1$, then there is a pair $(v_{i,2}', v_{i,3}') \in M$.
    By construction, there is a triangle $T$ covering $v_{i,2}'$ and $v_{i,3}'$.
    Since $v_{i,1}'$ is flowery, by \Cref{lemma:completable-packing}, if $v_{i,1}' \notin T$, there is a triangle $T'$  disjoint from $T$ with $v_{i,1}' \in T'$. 
    If $(f_{i,2}, f_{i,3}) \in \{ (0, 1), (1, 0) \}$, then there is a triangle $T$ covering $v_{i,2}'$ or $v_{i,3}'$.
    Again, by \Cref{lemma:completable-packing}, $v_{i,1}'$ is covered by $T$ or another triangle disjoint from $T$.
  \paragraph*{Type 7.}
    Suppose that $T_i$ has type \ref{type:l2c}.
    By construction, there is a triangle covering $v_{i,3}'$.
    Since $v_{i,j}'$ is safe for $j \in [2]$, \Cref{lemma:completable-packing} yields a packing of at most three triangles covering $F \cap V_i$.
\end{proof}

\section{Cluster Subgraph above Matching} \label{sec:cs-above-matching}

Recall that \textsc{Cluster Subgraph} asks whether the input graph $G$ contains a cluster subgraph with at least $\ell$ edges.
In this section, we develop an FPT algorithm for \textsc{Cluster Subgraph} parameterized by $k = \ell - \MM(G)$, where $\MM(G)$ is the maximum matching size.
Our algorithm is a reduction to (a variant of) \textsc{Delta-matroid Triangle Cover}.

\csam*

We start with the simpler case that the input graph $G$ is a $K_4$-free graph with a perfect matching.
Note that $k = \ell - n/2$.
Let $D$ be the dual of the matching delta-matroid.
We can formulate the \textsc{Cluster Deletion} problem in terms of $D$ as follows.

\begin{lemma}
  \label{lemma:ce-feasible-cover}
  There is a cluster graph with at least $\ell$ edges if and only if there is a triangle packing $\mathcal{T}$ covering a feasible set $F \subseteq V(\mathcal{T})$ of size $2k$ in $D$.
\end{lemma}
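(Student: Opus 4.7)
I will prove the equivalence in two directions: the backward direction is an explicit edge-count construction, while the forward direction requires a matching-theoretic argument within the dual matching delta-matroid $D$.

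For the backward direction ($\Leftarrow$), suppose we are given a triangle packing $\mathcal{T}$ and $F \subseteq V(\mathcal{T})$ feasible in $D$ of size $2k$; let $M$ be a perfect matching of $G - F$ witnessing feasibility. I would construct $H$ by including every $T \in \mathcal{T}_F := \{T \in \mathcal{T} : T \cap F \neq \emptyset\}$ as a $K_3$-component and every edge of $M$ entirely inside $V \setminus V(\mathcal{T}_F)$ as a $K_2$-component, leaving the remaining vertices isolated. Partitioning $M = M_1 \sqcup M_2 \sqcup M_3$ according to whether an edge lies inside $V \setminus V(\mathcal{T}_F)$, crosses between $V \setminus V(\mathcal{T}_F)$ and $V(\mathcal{T}_F) \setminus F$, or lies inside $V(\mathcal{T}_F) \setminus F$, a direct count gives $|M_1| = (n - 3|\mathcal{T}_F| - |M_2|)/2$ and therefore $|E(H)| = 3|\mathcal{T}_F| + |M_1| = (n + 3|\mathcal{T}_F| - |M_2|)/2$. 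Since $|M_2| \leq |V(\mathcal{T}_F) \setminus F| = 3|\mathcal{T}_F| - 2k$, this is at least $n/2 + k = \ell$.

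For the forward direction ($\Rightarrow$), let $H$ be a cluster subgraph with $a$ $K_1$-, $b$ $K_2$-, and $c$ $K_3$-components and $b + 3c \geq \ell$ edges, which rearranges via $a + 2b + 3c = n$ and $\ell = n/2 + k$ to the key inequality $3c - a \geq 2k$. I will take $\mathcal{T}$ to be the $c$ triangle components of $H$, write $V_T := V(\mathcal{T})$ and $U := V \setminus V_T$, and seek $F \subseteq V_T$ with $|F| = 2k$ and $G - F$ having a perfect matching. Equivalently, I need $S \subseteq V_T$ with $|S| \leq 3c - 2k$ such that $G[U \cup S]$ has a perfect matching, after which $F := V_T \setminus S$ is the desired feasible set. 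To produce such an $S$, I would fix any perfect matching $M^*$ of $G$ and examine the symmetric difference $M_H \oplus M^*$, where $M_H$ is the matching given by $H$'s $K_2$-components; its connected components are alternating paths and cycles whose path-endpoints are precisely the $M_H$-unmatched vertices $V(K_1) \cup V_T$. A careful accounting of how these paths pair up $V(K_1)$-endpoints against $V_T$-endpoints then yields an $S$ of size at most $a$, and by the key inequality $a \leq 3c - 2k$, this suffices.

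Finally, since $D$ is an even delta-matroid, once I have a feasible $F^* \subseteq V_T$ with $|F^*| \geq 2k$, iteratively applying the symmetric exchange axiom to $F^*$ and $\emptyset$ lets me descend by pairs until I reach a feasible subset of $V_T$ of size exactly $2k$. The main obstacle is the forward direction: producing the set $S$ requires a delicate matching-exchange argument, in particular handling the case where isolated vertices of $H$ lack $V_T$-neighbors and must be routed through $K_2$-components via augmenting-type paths in $M_H \oplus M^*$. The inequality $3c - a \geq 2k$ supplies exactly the slack inside $V_T$ needed to absorb these augmentations.
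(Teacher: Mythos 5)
Your proof is correct and takes a genuinely different route from the paper's in both directions.

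For the direction "feasible $F$ covered by $\mathcal{T}$ $\Rightarrow$ cluster graph", the paper starts from a fixed perfect matching $M$ of $G$, invokes Lemma~\ref{lemma:alt-paths} to get $k$ alternating $F$-paths, and builds the cluster graph by symmetric-differencing $M$ against the triangles and the long paths, with a separate count for length-one paths. Your direct count using the witnessing perfect matching of $G-F$ (the partition $M_1 \sqcup M_2 \sqcup M_3$ and the bound $|M_2| \le 3|\mathcal{T}_F| - 2k$) is cleaner and avoids the case analysis on path lengths; it also bypasses Lemma~\ref{lemma:alt-paths} in this direction entirely.

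For the reverse direction, the paper chooses a solution $E'$ minimizing $|E' \Delta M|$, argues $\deg_H(u)+\deg_H(v) \ge 2$ on each $M$-edge, and then extracts $M$-alternating paths whose endpoints have degree $2$ in $H$, with the handshaking lemma showing there are at least $k$ of them. You instead derive the inequality $3c - a \ge 2k$ purely from component counting (this is essentially the handshaking computation in disguise, but without the minimization trick), and then argue via the paths of $M_H \oplus M^*$ that the $K_1$-vertices can be matched by consuming at most $a$ vertices of $V_T$. Since the path components of $M_H \oplus M^*$ have all their endpoints in $V(K_1) \cup V_T$ and $V_T$-vertices are never interior, the $V_T$-endpoints of paths with one $K_1$-end give you $S$ with $|S| = p_2 \le 2p_1 + p_2 = a \le 3c - 2k$, and $M_H$ XOR'd with those paths perfectly matches $U \cup S$. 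Your final step — shrinking the feasible $F^* = V_T \setminus S$ of size $\ge 2k$ down to size exactly $2k$ via the symmetric exchange axiom against $\emptyset$, using evenness of $D$ to guarantee $y \ne x$ — is a point the paper does not need (it arranges to find exactly $k$ paths), but it is a correct and convenient way to finish. Overall your approach avoids the minimization trick and the per-$M$-edge degree case analysis, at the cost of the extra truncation step and a bit more explicit bookkeeping in $M_H \oplus M^*$; both arguments ultimately rest on the same alternating-path picture. You should spell out the path-type decomposition ($p_1, p_2, p_3$ with $2p_1 + p_2 = a$) and the XOR computation when writing this up, since that is the only step you left as a sketch, but there is no gap in the plan.
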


To prove \Cref{lemma:ce-feasible-cover}, we start with an observation.
We fix a perfect matching $M$ in $G$.
An alternating path $P$ is a path that visits an edge in $M$ and not in $M$ in an alternating order.
Further, we say that $P$ is an $M$-alternating path if the first and last edges are in $M$.
For a vertex set $F$, an alternating $F$-path is a path where the endpoints are both in $F$ and the inner vertices are disjoint from $F$.
Two paths are disjoint when they do not have any vertex in common.

\begin{lemma}
  \label{lemma:alt-paths}
  For a vertex set $F \subseteq V$, $F$ is feasible in the dual matching delta-matroid $D$ if and only if there is a collection of $|F|/2$ vertex-disjoint $M$-alternating $F$-paths.
\end{lemma}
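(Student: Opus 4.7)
The plan is to unwind the definition of the dual matching delta-matroid and then carry out the standard symmetric-difference argument for matchings. Recall that $D$ is the twist of the matching delta-matroid of $G$ by $V$, so $F$ is feasible in $D$ if and only if $G[V \setminus F]$ admits a perfect matching. Thus the claim reduces to: $V \setminus F$ has a perfect matching in $G$ if and only if $F$ is covered by $|F|/2$ vertex-disjoint $M$-alternating $F$-paths. I will verify the two directions in turn.

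For the forward direction, suppose that $V \setminus F$ has a perfect matching $M'$. Consider the edge-set $M \cup M'$. Since $M$ and $M'$ are both matchings, its connected components are paths and even cycles. A vertex $v$ has degree $2$ in $M \cup M'$ exactly when it is covered by both $M$ and $M'$, which happens iff $v \in V \setminus F$, since $M$ is a perfect matching on $V$ and $M'$ is a perfect matching on $V \setminus F$. Vertices in $F$ therefore have degree exactly $1$ and are precisely the path endpoints. Each such path alternates between $M$- and $M'$-edges and starts and ends with an $M$-edge (because its endpoints are in $F$, hence uncovered by $M'$); its internal vertices are all in $V \setminus F$. This yields the required collection of $|F|/2$ vertex-disjoint $M$-alternating $F$-paths; the even cycles of $M \cup M'$ lie entirely in $V \setminus F$ and can be ignored.

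For the backward direction, suppose we have vertex-disjoint $M$-alternating $F$-paths $P_1,\dots,P_{|F|/2}$. Define
\[
  M' = M \,\Delta\, \bigcup_{i=1}^{|F|/2} E(P_i).
\]
Flipping the matching along an $M$-alternating path whose first and last edges belong to $M$ leaves all internal vertices matched while unmatching the two endpoints. Hence $M'$ still matches every internal vertex of every $P_i$, still matches every vertex outside $\bigcup_i V(P_i)$ (by $M$), and unmatches precisely the $2 \cdot |F|/2 = |F|$ path endpoints, which together form $F$. Since the paths are vertex-disjoint, no conflicts arise, so $M'$ is a perfect matching of $V \setminus F$, certifying that $F$ is feasible in $D$.

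There is no real obstacle here: the argument is the classical ``augmenting path'' symmetric-difference trick, and the only care required is bookkeeping about which endpoints are matched by $M$ versus $M'$. Writing the two directions side by side makes it clear that $M \cup M'$ and $M \,\Delta\, \bigcup_i E(P_i)$ are inverse constructions, which also makes the equivalence essentially bijective.
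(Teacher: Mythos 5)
Your proof is correct and follows essentially the same route as the paper's: pass to the perfect matching $M'$ of $G-F$, compare it with $M$, read off the alternating paths, and in the converse direction flip $M$ along the paths. The only cosmetic difference is that you analyse $M \cup M'$ while the paper uses $M \Delta M'$; your claim that a vertex has degree $2$ in $M\cup M'$ iff it lies in $V\setminus F$ is slightly off, since a vertex whose $M$-edge and $M'$-edge coincide has degree $1$ there, but such vertices lie entirely in $V\setminus F$ and their single-edge components can be discarded just like the even cycles, so the argument goes through; working with $M\Delta M'$ (where shared edges vanish and degree-$1$ vertices are exactly $F$) avoids this bookkeeping.
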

\begin{proof}
  If $F$ is feasible in $D$, then $G - F$ has a perfect matching $M'$.
  The symmetric difference $M \Delta M'$ yields a collection of $|F|/2$ disjoint $M$-alternating $F$-paths.
  Conversely, if $\cP$ is a collection of $|F|/2$ disjoint $M$-alternating $F$-paths, then $M \Delta E(\cP)$ is a perfect matching in $G - F$, so $F$ is feasible in $D$.
\end{proof}

Intuitively speaking, every $M$-alternating path with endpoints in triangles ``earns'' one edge for the solution.

\begin{proof}[Proof of \Cref{lemma:ce-feasible-cover}]
  First, assume that there is a feasible set $F$ of size $2k$ covered by a triangle packing~$\mathcal{T}$.
  By \Cref{lemma:alt-paths}, we have a collection $\mathcal{P}$ of $k$ vertex-disjoint $M$-alternating $F$-paths.
  Suppose that among $\mathcal{P}$, there are $k'$ paths of length one.
  We construct an edge set $E'$ of size at least $\ell$ such that $(V(G), E')$ is a cluster graph.
  We start with $E' = M$.
  Replace all edges incident to $V(\mathcal{T})$ (denoted by $\partial(V(\mathcal{T}))$) with $E(\mathcal{T})$, i.e., $E' = (E' \setminus \partial(\mathcal{V(\mathcal{T})})) \cup E(\mathcal{T})$.
  Note that $|E' \cap \partial(\mathcal{V(\mathcal{T})})| = 3|\mathcal{T}| - k'$ because we count one edge in $E' \cap \partial(\mathcal{V(\mathcal{T})})$ for every vertex in $V(\mathcal{T})$ and the edges corresponding to length-one paths in $\mathcal{P}$ are counted twice.
  Since $3|\mathcal{T}| - k'$ edges are replaced with $3|\mathcal{T}|$ edges, we have $|E'| = |M| + k'$.
  For every $M$-alternating $F$-path $P \in \mathcal{P}$ of length greater than one, we take $E' = E' \Delta (E(P) \setminus \delta(V(T)))$, increasing the size $E'$ by one this way.
  Since the paths in $\mathcal{P}$ are disjoint, we indeed increase the size of $E'$ by $k - k'$, yielding an edge set of size $|M| + k$.
  It is easy to verify that $E'$ indeed induces a cluster subgraph.

  Next, assume that there is a cluster subgraph $H = (V(G), E')$ with at least $\ell$ edges.
  We assume that $E'$ is chosen in such a way that it minimizes $|E' \Delta M|$ under the constraint that $(V(G), E')$ is a cluster graph and $|E'| \ge \ell$.
  By the hand-shaking lemma, we have $|E'| = \frac{1}{2}\sum_{v \in V(H)} \deg_H(v)$.
  Since $H$ has maximum degree two, we have $\deg_H(u) + \deg_H(v) \le 4$ for every edge $uv \in M$.
  We may assume that $\deg_H(u) + \deg_H(v) \in \{ 2, 3, 4 \}$ for every $uv \in M$:
  If $\deg_H(u) + \deg_H(v) \le 1$, then we obtain another cluster graph $(V(G), E'')$ as follows:
  If $\deg_H(u) = \deg_H(v) = 0$, then we add the edge $uv$ to $H$.
  If $\deg_H(u) + \deg_H(v) = 1$, then we replace the edge incident to $uv$ with $uv$.
  In both cases, we have $|E'' \Delta M| < |E \Delta M|$, contradicting our choice of $E'$.

  We want to find $M$-alternating paths whose endpoints belong to triangles in~$H$.
  First, observe that an edge $uv \in M$ with $\deg_H(u) = \deg_H(v) = 2$ is an $M$-alternating path of length one.
  Next, consider an edge $uv \in M$ such that $\deg_H(u) = 2$ and $\deg_H(v) {= 1}$. We can find an $M$-alternating path starting from $u$:
  Since $\deg_H(v) = 1$, there exists an edge $vw \in E'$.
  We have $\deg_H(w) = 1$, since otherwise $w$ is part of a triangle in $H$ containing both $v$ and~$w$, which contradicts $\deg_H(v) = 1$.
  Suppose that $w$ is matched to $x$ in~$M$.
  If $\deg_H(x) = 2$, then we find an $M$-alternating path.
  Otherwise, we find an $M$-alternating path by repeating this argument until we reach a vertex of degree two in~$H$.
  Thus, we have an $M$-alternating path for every edge $uv \in M$ with $\deg_H(u) + \deg_H(v) = 3$.
  (Note that every path is counted twice.)
  By the hand-shaking lemma,  
  \begin{align*}
  |E'| = \frac{1}{2} \sum_{v \in V(G)} \deg_H(v) = \frac{1}{2} \sum_{uv \in M} \deg_H(u) + \deg_H(v) = |M| + \frac{1}{2} \sum_{uv \in M} (\deg_H(u) + \deg_H(v) - 2).
  \end{align*}
  Since $|E'| \ge \ell = |M| + k$, we have $\frac{1}{2}|M_3| + |M_4| \ge k$, where $M_3$ and $M_4$ are the set of edges $uv \in M$ with $\deg_H(u) + \deg_H(v) = 3$ and $4$, respectively.
  With the double counting in mind, we find $k$ vertex-disjoint $M$-alternating paths.
  Let $F$ be the set of their endpoints.
  It is of size $2k$, and feasible in $D$ by \Cref{lemma:alt-paths}.
  Moreover, the triangles in $H$ cover $F$ as every vertex in $F$ has degree 2 in $H$.
\end{proof}

It follows from \Cref{theorem:dmtc-fpt} that \textsc{Cluster Subgraph} can be solved in $O^*(k^{\Oh(k)})$ time when the input graph is $K_4$-free and has a perfect matching.

\paragraph*{Lifting assumptions.}

We discuss how to solve \textsc{Cluster Subgraph} generally.
To that end, we will reduce to the color-coded variant of \textsc{Delta-matroid Triangle Cover}, which is to find a triangle packing containing at least one triangle from $\cT_i$, where $(V_1, \dots, V_p)$ is a partition of $V$ and $\cT_i$ is a collection of triangles in $V_i$.
Note that our algorithm (\Cref{theorem:dmtc-fpt}) solves the color-coded variant in $O^*(k^{\Oh(k)})$ time.


\begin{lemma} \label{lemma:lifting}
  There is a parameterized Turing reduction from \textsc{Cluster Subgraph above Matching} to the color-coded variant of \textsc{Delta-matroid Triangle Cover} that generates $k^{O(k)}$ instances.
\end{lemma}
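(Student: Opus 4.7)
The plan is to lift the $K_4$-free, perfect-matching case handled by Lemmas~\ref{lemma:alt-paths}~and~\ref{lemma:ce-feasible-cover} to arbitrary graphs in two stages: peel off a small $K_4$-modulator, and handle non-perfect matchings via vertex padding combined with $\ell$-projection. First I would greedily pack vertex-disjoint $K_4$s in $G$. Each $K_4$ contributes six edges on four vertices, whose removal can shrink the matching number by at most four, so every $K_4$ yields a net gain of at least two above $\MM(G)$; hence, if the greedy procedure finds $\lceil k/2 \rceil$ disjoint $K_4$s the instance is immediately a yes-instance, and otherwise their union $K$ is a modulator with $|K| \le 2k$ such that $G - K$ is $K_4$-free. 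I would then enumerate the $k^{O(k)}$ possible interactions of an optimal solution $H$ with $K$: since $G - K$ is $K_4$-free, every cluster of $H$ meeting $K$ has at most three vertices in $V \setminus K$, so it suffices to guess the partition of $K$ induced by the clusters of $H$ together with the combinatorial shape (size $0$, $1$, $2$, or $3$) of each cluster's extension into $V \setminus K$. This fixes the number of edges contributed by the $K$-touching clusters and leaves a residual target of edges to be produced inside $G - K$; the identities of the extension vertices themselves will be determined later by color coding.

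To handle the matching parity, let $M$ be a maximum matching in $G$ with unmatched set $U$, and form $G^+$ by attaching a pendant $u'$ to each $u \in U$, so that $M^+ = M \cup \{uu' : u \in U\}$ is a perfect matching of $G^+$. The dual matching delta-matroid of $G^+$ is linear via its Tutte matrix, and applying $|U|$-projection to the padding set $U' = \{u' : u \in U\}$ through Lemma~\ref{lemma:reduce} produces a linear delta-matroid $D$ on $V(G)$ whose feasible sets $F$ are precisely those for which $G^+ - F$ retains a perfect matching saturating $U'$. A direct adaptation of Lemma~\ref{lemma:alt-paths} shows that such $F$ correspond bijectively to collections of $|F|/2$ vertex-disjoint $M^+$-alternating $F$-paths in $G^+$, which may traverse the padding edges to simulate alternating paths ending at vertices of $U$. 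Analogously to Lemma~\ref{lemma:ce-feasible-cover}, cluster subgraphs of $G$ with the guessed $K$-interaction and at least $\MM(G) + k$ edges then correspond to feasible sets in $D$ of size $2k'$ (with $k'$ fixed by the guess), disjoint from $K$, and covered by triangles in $G - K$. A final color-coding step on $V(G) \setminus K$ with $O(k)$ colors converts the residual problem into a color-coded \textsc{Delta-matroid Triangle Cover} instance, yielding $k^{O(k)}$ instances in total.

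The main obstacle I anticipate is the bookkeeping in the combined analog of Lemma~\ref{lemma:ce-feasible-cover}: one must verify that any cluster subgraph of $G$ with $\MM(G) + k$ edges can be chosen so that its edges outside $K$ decompose into $M^+$-alternating paths whose endpoints lie in triangles of $G - K$, with the gain of $k$ correctly attributed among the guessed $K$-interaction, triangles of $G - K$, and alternating paths routed through padding vertices in $U'$. I expect this to follow from a careful extension of the edge-counting and path-charging argument used in the $K_4$-free, perfect-matching case.
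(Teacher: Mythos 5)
Your high-level decomposition (greedy $K_4$-packing to get a modulator $K$ of size $O(k)$, guessing the partition of $K$ and the extension sizes $\gamma_i\in\{0,1,2,3\}$ into $V\setminus K$, then padding with pendants to simulate a perfect matching) matches the paper's overall route, and the $K$-guessing is essentially what the paper does (though the paper also needs an exchange argument to justify assuming $\gamma_i=2$ clusters have no triangle in $V_i$, and introduces a fresh vertex for that case).

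There is, however, a genuine gap in your treatment of the matching deficiency. You propose applying $|U|$-projection to the padding set $U'$. Unwinding the definition of $\ell$-projection: $F$ is feasible in the resulting delta-matroid iff $F\cup U'$ is feasible in the dual matching delta-matroid of $G^+$, i.e., iff $G^+ - (F\cup U') = G - F$ has a perfect matching. This is \emph{not} the condition you state ("$G^+ - F$ has a perfect matching saturating $U'$" would instead require $U\cap F=\emptyset$ together with a perfect matching of $G - F - U$, since each pendant $u'$ forces the edge $uu'$). More importantly, "$G - F$ has a perfect matching" is too strong a demand. If an unmatched vertex $u\in U$ happens to be isolated in the optimal cluster subgraph $H$, we do not want $u$ in $F$; but $u$ may have all its $G$-neighbors occupied by triangles, in which case $G - F$ contains $u$ as an isolated vertex and has no perfect matching, so no candidate $F$ is feasible even though $H$ is a valid solution. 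The paper avoids this by guessing $\alpha$ and $\beta=|U|-\alpha$ (the numbers of unmatched vertices that are, respectively, isolated and non-isolated in $H$), applying a $\beta$-contraction by $U'$ rather than a $|U|$-projection, and adjusting both the target size of $F$ and the target edge count $\ell_3 = \ell_2 + \alpha$ accordingly; the $\alpha$ isolated unmatched vertices then stay matched to their pendants in $G^+$ instead of being forced into $F$ or into a matching of $G-F$. Relatedly, your "$|F|/2$ vertex-disjoint $M^+$-alternating $F$-paths" should be $(|F|+|U|)/2$ paths with endpoints in $F\cup U'$, which is another symptom of the same miscount.
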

\begin{proof}
  First, we find a greedy packing of $K_4$'s.
  Start with $K = \emptyset$.
  As long as there is a $K_4$ in the graph, we add it to $K$, deleting it from the graph. 
  If we find $k/2$ disjoint $K_4$'s, we conclude that there is a cluster subgraph with $\ell$ edges:
  There remains a matching of size at least $\MM(G) - 4 \cdot k/2 = \MM(G) - 2k$, and thus taking the disjoint union of the $K_4$'s and the matching yields a cluster subgraph with at least $6 \cdot k/2 + (\MM(G) - 2k) = \MM(G) + k$ edges.
  Thus, we may assume that there is a set $K$ of size at most $2k$ such that $G - K$ is $K_4$-free (and $K$ can be found in polynomial time).
  Let $H = (V(G), E')$ be a hypothetical solution and $\cC = \{ C_1, \cdots, C_{\kappa} \}$ be the connected components of~$H$.
  Suppose that $H[K]$ has $\kappa'$ components $C_1', \cdots, C_{\kappa'}'$. We assume that $C_i' \subseteq C_i$ for every $i \in [\kappa']$.
  For every $i \in [\kappa]$, we guess $\gamma_i = |C_i \setminus C_i'|$.
  Note that $\gamma_i \le 3$, because $G - K$ is $K_4$-free.
  We search for a cluster graph with at least $\ell_1 = \ell - (\sum_{i \in [\kappa']} \binom{|C_i'|}{2} + \gamma_{i} \cdot |C_i'|)$ edges in $G - K$ such that it has a connected component (pairwise distinct for each $i$) of size $\gamma_i$ contained in $\bigcap_{v \in C_i'} N(v)$.
  We construct an instance of the color-coded variant of \textsc{Delta-matroid Triangle Cover} as follows.
  First, we partition $V(G) \setminus K$ into $k'$ sets $V_1, \dots, V_{k'}$ using color coding, where $k'$ is $\kappa'$ plus the number of triangles in $H - \bigcup_{i \in [\kappa']} C_i$.
  By \Cref{lemma:ce-feasible-cover}, we may assume that $k' \in \Oh(k)$.
  For $i \in [\kappa']$, $V_i$ should contain $C_i \setminus C_i'$.
  So we will assume that $V_i \subseteq \bigcap_{v \in C_i'} N(v)$.
  For each $i \in [\kappa' + 1, k']$, let $\cT_i$ contain all triangles in $G[V_i]$.
  For $i \in [\kappa']$, we do as follows.
  If $\gamma_i \le 1$, then we apply $\gamma_i$-contraction by $V_i$.
  If $\gamma_i = 3$, then the collection $\cT_i$ contains all triangles $T \subseteq V_i$.
  If $\gamma_i = 2$, we may assume that there is no triangle in $V_i$:
  Let $C_i = C_i' \cup \{ x, y \}$.
  If there is a triangle $T = \{ u, v, w \}$ in $V_i$, we obtain another solution by deleting edges incident to $x, y$, and $T$ and adding edges so that $C_i' \cup T$ becomes a clique.
  We delete at most $2 |C_i'| + 3$ edges since we assumed that $H$ contains no triangle in $V_i$.
  On the other hand, we add $3 |C_i'| + 3 \ge 2|C_i'| + 3$ edges, and thus the assumption that $V_i$ has no triangle if $\gamma_i = 2$ is justified.
  To reduce to the triangle cover problem, for every $i$ with $\gamma_i = 2$, we introduce a new vertex $v$ and make it adjacent to $V_i$ so that the solution contains a triangle with $v$ in it.
  To account for two edges incident to $v$, we increase the solution size by two, i.e., we will look for a cluster subgraph with at least $\ell_2 = \ell_1 + 2 |\Gamma_2|$, where $\Gamma_2  = \{ i \in [\kappa'] \mid \gamma_i = 2 \}$.
  Note that $G - K$ remains $K_4$-free after $v$ is added.

  We fix a maximum matching $M$ of $G - K$.
  Let $U = V(G - K) \setminus V(M)$.
  For every unmatched vertex $v \in U$, we add a degree-one neighbor $u'$ only adjacent to $u$ and let $U' = \{ u' \mid u \in U \}$.
  We call the resulting graph $G'$, and let $M' = M \cup \{ uu' \mid u \in U \}$.
  If an unmatched vertex $u$ is isolated in $H$, then we can increase the solution size by one by adding the edge~$uu'$.
  Let $\alpha$ and $\beta$ be the number of vertices in $U$ that are isolated and not isolated in $H$, respectively. 
  Note that we cannot guess which unmatched vertices are isolated in $H$, so we only guess two numbers $\alpha$ and $\beta$.
  We then search for a cluster graph with at least $\ell_3 = \ell_2 + \alpha$ edges in $G'$ such that $\beta$ vertices of $U'$ are isolated.
  Let $D'$ be the dual matching delta-matroid of $G'$ and $D$ be the result of applying a $\beta$-contraction by $U'$ on $D$.
  Note that the ground set of $D$ is $V(G - K)$.
  Our algorithm searches for a feasible set in $D$ of size $2(\ell_3 - (\MM(G) + \alpha + \beta)) - \beta \in \Oh(k)$.
  By \Cref{lemma:alt-paths}, a feasible set in $D'$ corresponds to the endpoints of disjoint $M'$-alternating paths in $G'$.
  Thus, for a feasible set $F$ in $D$, there is a collection $\mathcal{P}$ of $\frac{1}{2}(|F| + \beta)$ disjoint $M'$-alternating paths in $G'$.
  Note that $\beta$ vertices of $U'$ are isolated in the symmetric difference between $E(\mathcal{P})$ and $M'$.
  Moreover, by \Cref{lemma:ce-feasible-cover}, if a feasible set $F$ in $D$ is of size $2(\ell_3 - (\MM(G) + \alpha + \beta)) - \beta$ can be covered by a triangle packing, then we have a cluster subgraph with at least $\ell_3$ edges.
\end{proof}

It follows that \textsc{Cluster Subgraph} can be solved in time $O^*(k^{O(k)})$ time, proving \Cref{theorem:cs-am}.

\section{Strong Triadic Closure above Matching}
\label{sec:tcam}

In this section, 
we study \textsc{Strong Triadic Closure} problem parameterized by $k = \ell - \mu$. Let us first recall the problem definition: 
given a graph $G$ and $\ell \in \N$, the goal is to find a strong set of size $\ell$, where an edge set $E' \subseteq E$ is strong if $uw \in E(G)$ whenever $uv \in E'$ and $vw \in E'$.

We start our investigation similarly to \textsc{Cluster Subgraph} in Subsection~\ref{ssec:tcamfpt} by showing that the problem is FPT on $K_4$-free graphs. However, as opposed to \textsc{Cluster Subgraph}, we show in Subsection~\ref{ssec:STC-w1-hard} that the assumption of $K_4$-freeness cannot be lifted and \textsc{Strong Triadic Closure} is W[1]-hard parameterized by $k$ on general graphs. We then show that we can still overcome this limitation if maximum degree in the graph is bounded in Subsection~\ref{ssec:STC_bounded_degree} or if we are looking for an approximate solution in FPT time in Subsection~\ref{ssec:STC_FPT_Approx}. We complement our constant FPT approximation algorithm with a strong inapproximability lower bound for polynomial time algorithms.


\subsection{FPT algorithm for $K_4$-free graphs} \label{ssec:tcamfpt}

As in Section~\ref{sec:cs-above-matching}, we will initially work under the simplifying assumption that the input graph $G$ is a $K_4$-free graph with a perfect matching.
We deal with the general case afterwards.
While \textsc{Cluster Subgraph} is a special case of \textsc{Delta-matroid Triangle Cover} under this assumption, there seems to be no simple reduction from \textsc{Strong Triadic Closure}.
So we will develop a problem-specific algorithm, significantly extending our algorithm for \textsc{Delta-matroid Triangle Cover}.

\tcamfpt*

A \emph{strong cycle} on $n$ vertices for $n \ge 3$ is a graph $C$ such that $V(C) = \{ v_0, \dots, v_{n - 1} \}$ and there is an edge between two vertices $v_i$ and $v_j$ if 
$i - j = \pm 1, 2 \bmod n$.
Note that the edge set $\{ v_{i} v_{(i + 1) \bmod n} \mid i = 0, \dots, n - 1 \}$ constitutes a strong set, which we denote by $E_s(C)$.
Note also that a triangle is also a strong cycle.
The following lemma connects the size of a feasible set (in the dual matching delta-matroid $D$) covered by a strong cycle packing to the solution size (cf. Lemma~\ref{lemma:ce-feasible-cover})

\begin{lemma} \label{lemma:tc-feasible-cover}
  There is a strong set of size $\ell$ if and only if there is a strong cycle packing $\mathcal{C}$ such that there is a feasible set $F \subseteq V(\cC)$ of size $2k$.
\end{lemma}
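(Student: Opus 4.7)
\noindent\emph{Proof sketch.}
The statement mirrors Lemma~\ref{lemma:ce-feasible-cover}, with strong cycles taking the role of triangles. For the backward direction, given a strong cycle packing $\cC$ and a feasible set $F \subseteq V(\cC)$ with $|F| = 2k$, let $M'$ be a perfect matching of $G - F$ (which exists by feasibility) and consider
\[
  E^* \ :=\ \bigcup_{C \in \cC} E_s(C)\ \cup\ \{\,uv \in M' : u, v \notin V(\cC)\,\}.
\]
This is strong because each $E_s(C)$ is strong within $V(C)$ and the added matching edges share no vertex with $V(\cC)$, so no new 3-paths arise. Writing $a$ for the number of $M'$-edges with exactly one endpoint in $V(\cC) \setminus F$, the $M'$-edges fully outside $V(\cC)$ number $(n - |V(\cC)| - a)/2$, hence
\[
  |E^*| \ =\ |V(\cC)| + \frac{n - |V(\cC)| - a}{2} \ =\ \frac{n}{2} + \frac{|V(\cC)| - a}{2} \ \ge\ \frac{n}{2} + k \ =\ \ell,
\]
using $a \le |V(\cC) \setminus F| = |V(\cC)| - 2k$.

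For the forward direction, pick a strong set $E'$ of size $\ell$ that first maximizes $|E'|$ and then minimizes the number of path components of length $\ge 2$. Since $G$ is $K_4$-free, no vertex of $E'$ can have degree $\ge 3$: three $E'$-neighbours of $v$ would, by 3-path closure, be pairwise adjacent in $G$, producing a $K_4$. Hence $E'$ is a disjoint union of paths and cycles, and every cycle is a strong cycle (by 3-path closure on consecutive triples). Local exchanges now force $E'$ into a standard form: a length-$2$ path $v_0 v_1 v_2$ extends to a triangle via the chord $v_0 v_2 \in E(G)$, strictly increasing $|E'|$; more generally, a path of length $p$ with $p \equiv 2 \pmod 3$ splits into $(p{+}1)/3$ vertex-disjoint triangles $\{v_{3i},v_{3i+1},v_{3i+2}\}$ along its distance-$2$ chords, again gaining one edge, contradicting the maximality of $|E'|$. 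A path of length $p$ with $p \equiv 0 \pmod 3$ (resp.~$p \equiv 1 \pmod 3$) can be reshaped into $\lfloor p/3 \rfloor$ triangles plus a single isolated vertex (resp.~plus one extra matching edge) without changing $|E'|$ but strictly reducing the number of long path components, contradicting the secondary minimality. Hence the edges of $E'$ decompose as $\bigcup_{C \in \cC} E_s(C) \cup M^*$ with $\cC$ a strong cycle packing and $M^*$ a matching, while the remaining vertex set $I$ is isolated in $E'$. Combining $|E'| = |V(\cC)| + |M^*|$ with $n = |V(\cC)| + 2|M^*| + |I|$ yields $|E'| = n/2 + (|V(\cC)| - |I|)/2 \ge \ell$, so $|V(\cC)| \ge 2k + |I| \ge 2k$.

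The remaining step, and the main obstacle, is to select a feasible $F \subseteq V(\cC)$ of size exactly $2k$; by Lemma~\ref{lemma:alt-paths} this amounts to exhibiting $k$ vertex-disjoint $M$-alternating $F$-paths. I plan to do this by a parity-sensitive selection inside each $C \in \cC$, exploiting that $K_4$-freeness forces every strong cycle in $G$ to have length $3$ or $\ge 6$, so the cycle edges of $C$ already supply a matching covering all but at most one vertex of $V(C)$; the leftover vertices of $V(\cC) \setminus F$ together with $I$ are then paired up in $G - F$ using $M^*$ and short $M$-alternating segments, with the slack $|V(\cC)| - 2k \ge |I|$ providing enough room. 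The delicate point is to coordinate the choice of $F$ across cycles and with $I$ so that a single perfect matching of $G - F$ actually exists; this is also where the $|V(C)| \le 6k$ bound noted in the preamble becomes useful, as it caps the local combinatorics of each cycle.
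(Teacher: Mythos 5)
Your backward direction is essentially correct and produces the same count as the paper: you build the strong set directly from $\bigcup_{C \in \cC} E_s(C)$ plus the part of a perfect matching of $G - F$ that avoids $V(\cC)$, whereas the paper phrases it as $M \Delta E(\cP)$ with the $V(\cC)$-incident edges swapped out; both give $|E^*| \ge |M| + k = \ell$.

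The forward direction, however, has a genuine gap that you yourself flag. After the exchange steps you correctly reduce to components that are strong cycles $\cC$, a matching $M^*$, and isolated vertices $I$, and the count $|V(\cC)| \ge 2k + |I|$ is right, but you never exhibit a feasible $F \subseteq V(\cC)$ of size $2k$ — only a plan for a ``parity-sensitive selection,'' which is not a proof and would not easily work: by Lemma~\ref{lemma:alt-paths}, feasibility of $F$ means $G - F$ has a perfect matching (equivalently, there are $|F|/2$ disjoint $M$-alternating $F$-paths), and this is a global condition on $G$, not something guaranteed by choosing vertices from each cycle with the right local parities. The $|V(\cC)| \le 6k$ bound you invoke also comes from a later triangle-maximality argument that you have not set up. The paper avoids the whole issue by never picking $F$ first and checking feasibility afterward; it reuses the walk-and-count from Lemma~\ref{lemma:ce-feasible-cover} verbatim with ``triangles'' replaced by ``cycles.'' Concretely: fix a perfect matching $M$, pick $E'$ to minimize $|E' \Delta M|$ (which forces $\deg_H(u) + \deg_H(v) \ge 2$ for every $uv \in M$), note that any $uv \in M$ with both $H$-degrees $2$ is a length-one $M$-alternating path, and that any $uv \in M$ with $H$-degrees $(2,1)$ starts an $M$-alternating walk through the $K_2$-components of $H$ ending at another degree-$2$ vertex; the handshake lemma gives $\tfrac12 |M_3| + |M_4| \ge k$ pairwise-disjoint $M$-alternating paths, and their $2k$ endpoints are a feasible $F \subseteq V(\cC)$ by Lemma~\ref{lemma:alt-paths}, contained in $V(\cC)$ because each endpoint has $H$-degree $2$ and so lies on a cycle. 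That argument is the missing ingredient.
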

\begin{proof}
  First, assume that there is a feasible set $F$ of size $2k$ covered by a strong cycle packing $\mathcal{C}$.
  Again by Lemma~\ref{lemma:alt-paths}, we have a collection $\mathcal{P}$ of $k$ vertex-disjoint $M$-alternating $F$-paths.
  We essentially take the symmetric difference $E(\mathcal{P}) \Delta M$ and then remove from this set all edges incident to $V(\mathcal{C})$ and add a strong edge set $E_s(C)$ for every $C \in \cC$. Note that we end up with exactly the same construction as in Lemma~\ref{lemma:ce-feasible-cover}, with only difference being that $\mathcal{C}$ can contain also longer cycles and not only triangles.
  Then, we can argue in the same as in the proof of Lemma~\ref{lemma:ce-feasible-cover} that we obtain a strong edge set of size $|M| + k$.

  Next, assume that there is a strong edge set $E'$ with at least $\ell$ edges.
  Since there is no $K_4$, we may assume that the graph $H = (V(G), E')$ induced by a strong edge set has no vertex of degree greater than two, and thus, every connected component is either a path or a cycle, as observed by Golovach et al.~\cite{GolovachHKLP20}.
  In fact, we may assume that a path of length at least two can be replaced by triangles:
  If $uv, vw \in E'$, then the edge $uw$ exists by the triadic closure and hence $uw$ can be added to $E'$.
  Suppose that the path is on $p \ge 3$ vertices $(v_1, \dots, v_p)$ (note that it has $p - 1$ edges)
  If $p = 3q$ or $p = 3q + 1$, then there is a packing of $q$ triangles, namely, $\{ 3p' - 2, 3p' - 1, 3p' \}$ for each $p' \in [q]$, which contains $3q \ge p - 1$ edges.
  If $p = 3q + 2$, then there is a packing of $q$ triangles $\{ 3p' - 2, 3p' 1, 3p' \}$ for each $p' \in [q]$ and one edge $v_{p - 1} v_p$, which has $3q + 1 = p - 1$ edges.
  Thus, we can assume that every connected component of $H$ of size at least three is a cycle.
  The rest is analogous to the proof of Lemma~\ref{lemma:ce-feasible-cover}; simply replace ``triangles'' with ``cycles'' in the proof.
\end{proof}

We say that a strong cycle packing $\cC$ is \emph{triangle-maximal} if it maximizes the number of triangles.
We will search for a triangle-maximal strong cycle packing.
This assumption proves to be valuable since a strong cycle can often be replaced by a triangle packing in various cases.
Consequently, we can assume that the strong set of our interest is small as shown in the Lemma~\ref{lemma:tri-maximal}, which is crucial for color coding.
\begin{lemma} \label{lemma:tri-maximal}
  Suppose that $\cC$ is a triangle-maximal strong cycle packing covering a feasible set $F$ of size~$2k$.
  For $p \in \N$, the following holds:
  \begin{itemize}
    \item $\cC$ contains no cycle of length $3p$ for $p > 1$.
    \item If $\cC$ contains a cycle $C$ of length $3p + 1$, then $F \cap V(C) = V(C)$.
    \item If $\cC$ contains a cycle $C$ of length $3p + 2$, then $|F \cap V(C)| \ge \frac{1}{2} (3p +2)$.
  \end{itemize}
  Moreover, we have $|V(\cC)| \le 6k$.
\end{lemma}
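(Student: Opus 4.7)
The plan is to prove each of the three bullets by contradiction: whenever the stated condition fails for some $C \in \cC$, we will exhibit a modification of $\cC$ into another valid packing of $F$ that strictly increases the number of triangles, contradicting triangle-maximality. All three modifications rest on the basic observation that in a strong cycle $(v_0, \ldots, v_{n-1})$, any three consecutive vertices $v_i, v_{i+1}, v_{i+2}$ form a triangle in $G$, since their pairwise cyclic distances are $1$, $1$, and $2$.

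For the first bullet, if $C$ has length $3p$ with $p > 1$, the partition of $V(C)$ into the $p$ triangles $\{v_{3i}, v_{3i+1}, v_{3i+2}\}$, $i = 0, \ldots, p-1$, replaces the single non-triangle $C$ with $p \geq 2$ triangles while preserving the coverage of $V(C) \supseteq F \cap V(C)$. For the second bullet, if $v_j \in V(C) \setminus F$ for a cycle $C$ of length $3p+1$, we cyclically relabel so $j = 0$ and partition $V(C) \setminus \{v_0\}$ into the $p$ triangles $\{v_{3i+1}, v_{3i+2}, v_{3i+3}\}$, which covers $F \cap V(C)$ and gains $p \geq 1$ triangles over the non-triangle $C$. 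For the third bullet, suppose $|F \cap V(C)| < (3p+2)/2$ for some $C$ of length $n = 3p+2$. Then $|V(C) \setminus F| > n/2 \geq \lfloor n/2 \rfloor$, which exceeds the independence number of the underlying cycle graph on $v_0,\ldots,v_{n-1}$, so some pair of cyclically consecutive vertices $v_i, v_{i+1}$ both lie outside $F$. The remaining $3p$ vertices $v_{i+2}, \ldots, v_{i-1}$ (in cyclic order) then partition into $p$ triangles $\{v_{i+3j+2}, v_{i+3j+3}, v_{i+3j+4}\}$, $j = 0, \ldots, p-1$, again contradicting triangle-maximality.

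For the moreover part, once the three bullets are in hand, we may additionally prune from $\cC$ every cycle $C$ with $F \cap V(C) = \emptyset$; this only shrinks $|V(\cC)|$ and preserves the coverage of $F$, and the three bullets continue to hold for the resulting subpacking (each is a per-cycle property preserved under restriction). Under this pruning every remaining $C$ satisfies $|V(C)| \leq 3 \, |F \cap V(C)|$: at most $3$ for triangles (since $|F \cap V(C)| \geq 1$); exactly $1$ for $(3p{+}1)$-cycles (where $|F \cap V(C)| = |V(C)|$); and at most $2$ for $(3p{+}2)$-cycles (where $|F \cap V(C)| \geq (3p+2)/2$); cycles of length $3p$ with $p>1$ do not occur. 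Summing over the disjoint cycles and using $\sum_C |F \cap V(C)| = |F| = 2k$ yields $|V(\cC)| \leq 6k$. The main obstacle is cyclic-index bookkeeping in the third bullet together with making precise the WLOG pruning step, which is needed because triangle-maximality alone does not rule out extraneous triangles disjoint from $F$.
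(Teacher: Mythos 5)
Your proof is correct and follows essentially the same approach as the paper: each bullet is proved by contradiction, replacing a long cycle by a packing of consecutive-vertex triangles that still covers $F \cap V(C)$ (covering all of $V(C)$ for length $3p$, all but one unfeasible vertex for length $3p+1$, and all but two consecutive unfeasible vertices for length $3p+2$), and the moreover bound follows by summing $|V(C)| \le 3|F \cap V(C)|$ over the disjoint cycles. The one place you are more careful than the paper is the explicit pruning of cycles disjoint from $F$ before doing this sum: the paper simply asserts $|C \cap F| \ge |C|/3$ for every $C \in \cC$, which implicitly assumes every triangle in $\cC$ meets $F$, whereas triangle-maximality by itself would if anything pull in extra triangles disjoint from $F$. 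Your pruning step (after which the per-cycle bullets are inherited) makes the counting argument rigorous for the pruned subpacking, which is what the algorithm uses — strictly speaking this proves $|V(\cC')| \le 6k$ for the pruned $\cC'$ rather than for an arbitrary triangle-maximal $\cC$, but the paper's own proof has the identical implicit assumption, so this is a defect of the statement rather than of either proof.
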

\begin{proof}
  Suppose that the solution has a long strong cycle $C = (v_1, \dots, v_l)$ of length $l$.
  \begin{itemize}
    \item 
    If $l = 3p$, we can replace $C$ with $p$ disjoint triangles, $\{ v_{3p' - 2}, v_{3p' - 1}, v_{3p'} \}$ for $p' \in [p]$.
    \item
    Suppose that $l = 3p + 1$ and that $V(C) \setminus F \ne \emptyset$.
    Assume w.l.o.g.\ that $v_{l} \notin F$.
    Then we can replace $C$ with $p$ disjoint triangles $\{ v_{3p' - 2}, v_{3p' - 1}, v_{3p'} \}$ for $p' \in [p]$.
    \item
    Suppose that $l = 3p + 2$.
    If $|V(C) \cap F| < |C| / 2$, then there are two consecutive vertices in $C$ not contained in $F$.
    W.l.o.g., assume that they are $v_{l - 1}$ and $v_{l}$.
    Then, we can cover $V(C) \setminus \{ v_{l - 1}, v_l \}$ with $p$ triangles $\{ v_{3p' - 2}, v_{3p' - 1}, v_{3p'} \}$ for $p' \in [q]$, which can replace $C$.
  \end{itemize}
  For every $C \in \cC$ we have $|C \cap F| \ge |C| / 3$ (where the equality holds for triangles), and thus $|V(\cC)| \le 3 |F| = 6k$.
\end{proof}

Let a triangle-maximal strong cycle packing be $\cC = \{ C^1, \cdots, C^{\kappa} \}$, where $C^i = (v^i_{1}, \dots, v^i_{l_i})$, covering a feasible set $F$ of size~$2k$.
Reusing the first color-coding steps from Section~\ref{sec:triangle-cover}, we guess the following:
\begin{itemize}
\item A partition $V=V^{1} \cup \ldots \cup V^\kappa$ such that for every $i \in [\kappa]$, 
  $V(C^i) \subseteq V^{i}$.
\item An integer $l^i$ such that $|V(C^i)| = l^i$ and a further partition of every $V^{i}$ into $l^i$ sets $V^{i}_1 \cup \cdots \cup V^{i}_{l_i}$ 
  such that $C^i$ intersects all parts; for $i \in [\kappa]$ and $j \in [l_i]$,
  let $v^i_{j}$ be the unique vertex of $V(C^i) \cap V^{i}_{j}$.
\end{itemize}
By Lemma~\ref{lemma:tri-maximal}, this step adds a multiplicative factor $k^{O(k)}$ to the running time.
For $i \in [\kappa]$ with $l_i = 3$, we can handle it in the same way as \textsc{Delta-matroid Triangle Cover}.
See Section~\ref{sec:triangle-cover} for details. 
For every $i \in [\kappa]$ with $l_i \ne 3$, we may assume that $l_i  \bmod 3 \in \{ 1, 2 \}$ by Lemma~\ref{lemma:tri-maximal}.
As we will focus attention on one cycle, we will drop the superscript $\cdot^i$, unless it is unclear.
We will also use the cyclic notation for subscripts, e.g., $v_{0} = v_{l_i}$.

Each $V_j$ is referred to as a \emph{block}.
We will assume that every vertex $v \in V_{j}$ has neighbors in the two preceding and two following blocks $V_{j-2} \cup V_{j-1} \cup V_{j+1} \cup V_{j+2}$.
Call a vertex $v \in V_{j}$ is \emph{flowery} if it has a matching of size at least 13 in its neighborhood and \emph{non-flowery} otherwise.
For non-flowery vertices, we can guess a few of its neighbors:

\begin{lemma} \label{lemma:vc}
  Let $C = (v_1, \dots, v_{l})$ be a strong cycle.
  For a vertex cover $S$ of $N(v_j)$, the following hold: (i) $v_{j-1} \in S$ or $v_{j+1} \in S$, (ii) $v_{j-2} \in S$ or $v_{j-1} \in S$, and (iii) $v_{j+1} \in S$ or $v_{j+2} \in S$.
\end{lemma}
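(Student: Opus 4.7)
The plan is to read the conclusion directly off the local structure of a strong cycle near $v_j$. By the definition of a strong cycle, $v_j$ is adjacent in $G$ to every $v_{j'}$ with $j - j' \equiv \pm 1$ or $\pm 2 \pmod{l}$, so $\{v_{j-2}, v_{j-1}, v_{j+1}, v_{j+2}\} \subseteq N(v_j)$. Because this lemma is invoked in the $K_4$-free setting and the triangle case $l = 3$ is handled separately, I may assume $l \ge 5$, so these four indices are pairwise distinct modulo $l$ and the four vertices are genuinely distinct.

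Next, I would enumerate the edges of $C$ that lie inside this four-vertex set. Two vertices $v_a, v_b$ here are joined by a cycle edge exactly when $|a - b|$ is $1$ or $2$ (cyclically), and a direct case check yields exactly three such edges: $v_{j-2} v_{j-1}$, $v_{j-1} v_{j+1}$, and $v_{j+1} v_{j+2}$. All three lie in $G[N(v_j)]$, so any vertex cover $S$ of $G[N(v_j)]$ must contain at least one endpoint of each. Hitting $v_{j-1} v_{j+1}$ yields (i), hitting $v_{j-2} v_{j-1}$ yields (ii), and hitting $v_{j+1} v_{j+2}$ yields (iii). The only potential pitfall is a collision among the four indexed vertices in short cycles, which is precisely the reason $K_4$-freeness (and hence $l \ge 5$) was established earlier; otherwise the argument is immediate.
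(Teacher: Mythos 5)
Your proof is correct and takes essentially the same approach as the paper: identify the three cycle edges $v_{j-2}v_{j-1}$, $v_{j-1}v_{j+1}$, and $v_{j+1}v_{j+2}$ inside $N(v_j)$, then observe that a vertex cover of $G[N(v_j)]$ must hit each. One small caveat: the distinctness worry is unnecessary, and the claim that there are \emph{exactly} three such edges is false for small $l$ (e.g.\ for $l=6$, $v_{j-2}v_{j+2}$ is also an edge, and for $l=5$ the strong cycle is $K_5$); what matters is only that those three edges exist, which holds for every $l\ge 3$ regardless of coincidences among the four indexed vertices, so the lemma needs no lower bound on $l$ at all.
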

\begin{proof}
  From the definition of strong cycles, it follows that $v_{j-2}, v_{j-1}, v_{j+1}, v_{j+2} \in N(v_j)$ and that $v_{j -1} v_{j+1}, v_{j-2} v_{j-1}$, $v_{j+1} v_{j+2} \in E(C)$.
\end{proof}

\subsubsection{Strong cycle of length $3p + 1$} \label{sssec:1mod3}

Suppose that $\cC$ contains a cycle $C = (v_{1}, \dots, v_{3p+1})$.
By Lemma~\ref{lemma:tri-maximal}, we may assume that $V(C) \subseteq F$.
We may also assume that every vertex is non-flowery:

\begin{lemma} \label{lemma:1mod3:non-flowery}
  Every vertex in $C$ is non-flowery.
\end{lemma}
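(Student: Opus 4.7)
Suppose for contradiction that some $v_j \in V(C)$ is flowery, where $C = (v_1,\ldots,v_{3p+1})$ with $p \geq 1$ is a cycle in the triangle-maximal packing $\cC$. The plan is to transform $\cC$ into a strong cycle packing $\cC'$ that still covers $F$ but contains strictly more triangles, contradicting triangle-maximality. Since $V(C) \subseteq F$ by Lemma~\ref{lemma:tri-maximal}, every vertex of $C$ must remain covered in $\cC'$.

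First I would observe that $C - v_j$ is a strong path on $3p$ vertices, so that every three consecutive vertices along this path form a triangle in $G$. This gives a natural partition of $C - v_j$ into $p$ vertex-disjoint triangles $T_i = \{v_{j+3i-2}, v_{j+3i-1}, v_{j+3i}\}$ for $i \in [p]$ (indices modulo $3p+1$), each of which is a triangle of $G$ by the strong cycle structure. These $p$ triangles use only vertices of $V(C) - v_j$ and are thus vertex-disjoint from $V(\cC) \setminus V(C)$.

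The crucial second step is to exhibit a triangle $T^* = \{v_j, a, b\}$, with $ab \in E(G[N(v_j)])$, that is vertex-disjoint from $V(\cC) \setminus \{v_j\}$. By the flowery hypothesis there are $13$ pairwise vertex-disjoint matching edges $\{a_i,b_i\}_{i=1}^{13}$ in $N(v_j)$, giving $13$ triangles through $v_j$ that pairwise meet only at $v_j$. A matching edge is \emph{blocked} if it uses a vertex of $V(\cC) \setminus \{v_j\}$; since the matching edges are pairwise vertex-disjoint, the number of blocked edges is bounded by $|N(v_j) \cap (V(\cC) \setminus \{v_j\})|$. The $4$ cycle-neighbors of $v_j$ in $C$ absorb at most $2$ matching edges, and the remaining blocked edges must use vertices of $V(\cC) \setminus V(C)$. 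A careful local count, exploiting the $K_4$-freeness of $G$ (so that $G[N(v_j)]$ is triangle-free and each other cycle in $\cC$ contributes only a small number of neighbors to $v_j$) together with a possible simultaneous modification of nearby cycles, shows that fewer than $13$ of the matching edges can be blocked, so that a suitable $T^*$ exists; the constant $13$ is tailored to this count.

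Finally, setting $\cC' = (\cC \setminus \{C\}) \cup \{T_1, \ldots, T_p, T^*\}$ produces a strong cycle packing: all new triangles are pairwise vertex-disjoint by construction, and they are disjoint from the rest of $\cC \setminus \{C\}$. Moreover $V(\cC') \supseteq V(C) \cup V(\cC \setminus \{C\}) \supseteq F$, so $\cC'$ still covers $F$, and $\cC'$ contains $p+1 \geq 2$ more triangles than $\cC$ (while losing only the non-triangle cycle $C$), contradicting triangle-maximality. The main obstacle in this plan is the second step: rigorously justifying that at least one of the $13$ matching edges remains unblocked, given that $|V(\cC)|$ can in principle be much larger than $13$. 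Resolving this requires a structural argument about how strong cycles in $\cC$ may interact with $N(v_j)$, possibly via cascaded replacements of nearby cycles, rather than a crude global bound.
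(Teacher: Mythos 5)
Your high-level plan is the same as the paper's: decompose $C - v_j$ into $p$ vertex-disjoint triangles $T_1,\ldots,T_p$ using the strong-cycle structure, then attach an additional flowery triangle $T^*$ through $v_j$, producing $p+1$ triangles in place of the single non-triangle cycle $C$ and contradicting triangle-maximality. However, the gap you flag at the end --- that $|V(\cC)|$ could be far larger than $13$ and might block all of $v_j$'s flowery matching edges --- is a genuine gap in your writeup, and it is not resolved by any local $K_4$-free counting or cascaded cycle replacement. The paper resolves it with nothing more than the color-coding setup introduced just before this lemma: the guessed partition $V = V^1 \cup \cdots \cup V^\kappa$ has $V(C^{i'}) \subseteq V^{i'}$ for every cycle $C^{i'} \in \cC$, the blocks $V^{i'}$ are pairwise disjoint, and the ``flowery'' definition in this section places the $13$ pairwise-disjoint matching edges of $N(v_j)$ entirely inside $V^i$ (in the two preceding and two following sub-blocks). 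Hence no vertex of $V(\cC)\setminus V(C)$ can lie on a flowery matching edge. The only vertices of $V(\cC)$ that might block are the cycle-neighbors $v_{j\pm 1}, v_{j\pm 2}$ of $v_j$ inside $C$ itself; these four vertices can block at most $4$ of the $13$ edges (not ``at most $2$'' as you estimate --- each of the four neighbors can independently be an endpoint of a distinct matching edge), leaving at least $9$ candidate triangles $T^*$ disjoint from $T_1,\ldots,T_p$ and from all of $\cC \setminus \{C\}$. The constant $13$ is therefore not ``tailored'' to this lemma as you speculate; it is chosen with slack to cover the more delicate counts elsewhere in the section (e.g.\ Lemma~\ref{lemma:2mod3:consecutive} and the vertex-cover bound of size $26$ used in Lemma~\ref{lemma:1mod3:ultimate}).
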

\begin{proof}
  For contradiction, assume w.l.o.g.\ that $v_{3p + 1}$ is flowery.
  Consider a triangle packing $\{ \{ v_{3q-2}, \allowbreak v_{3q-1}, v_{3q} \} \mid q \in [p] \}$.
  Since $v_{3p+1}$ is flowery, there exists a triangle that covers $v_{3p+1}$ and is disjoint from the packing.
  These triangles cover $V(C)$, contradicting the triangle maximality.
\end{proof}

As every vertex is non-flowery, we can guess a neighbor for each vertex either in their preceding or following block.
In a simple case where this guessing results in a collection of $l$-cycles, the situation aligns with a case discussed in Section~\ref{sec:triangle-cover}, namely case 6.
Here, our algorithms applies 1-contraction applied to each block.
We argue that even if the outcome does not neatly fit into the pattern of $l$-cycles, a structural simplification that enables the application of 1-contraction on each block can be achieved.
This is done through a nuanced application of guessing arguments, where we strategically work out the connections between vertices across blocks:

\begin{lemma} \label{lemma:1mod3:ultimate}
  There is a polynomial-time algorithm that finds a set $S$ of vertices such that (i) $C \subseteq S$ with probability at least $1/2^{O(l)}$ and (ii) for any $X \subseteq S$ with $|X \cap V_j| = 1$ for each block $V_j$, there is a strong set that covers $X$.
\end{lemma}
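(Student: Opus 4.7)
The plan is to exploit non-floweriness of each cycle vertex to obtain a constant-size vertex cover of its neighborhood, and then randomly guess how each vertex's role in the cycle is witnessed inside that vertex cover. For every candidate vertex $v\in V_j$, since $v$ is non-flowery, $N(v)$ has a matching of size at most $12$ and hence a vertex cover $S_v\subseteq N(v)$ of size $O(1)$, computable in polynomial time. By Lemma~\ref{lemma:vc}, if $v$ plays the role of $v_j$ in $C$ then at least one of $\{v_{j-1},v_{j+1}\}$, at least one of $\{v_{j-2},v_{j-1}\}$, and at least one of $\{v_{j+1},v_{j+2}\}$ lies in $S_v$, so a constant amount of information inside $S_v$ suffices to ``anchor'' $v$'s two predecessors and two successors along $C$.

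Next I would, for every $v$ and every direction, independently and uniformly sample an element from $S_v$, labelled by the putative role it plays (i.e.\ one of $v_{j-2},v_{j-1},v_{j+1},v_{j+2}$), and keep $v$ as a candidate only if the sampled labels lie in the correct blocks and are mutually consistent with the local adjacencies forced by a strong cycle (e.g.\ the chosen $v_{j-1}$ is adjacent to the chosen $v_{j+1}$, and adjacent to the chosen $v_{j-2}$, etc.). The set $S$ is the union of all $v$ surviving this filter. Since $|S_v|=O(1)$ and we perform $O(l)$ independent guesses in total, the probability that the true cycle vertices $v_1,\dots,v_l$ all survive is at least $1/2^{O(l)}$, which gives property (i).

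For property (ii) I would argue that the surviving structure is rigid: for any $X\subseteq S$ with $|X\cap V_j|=1$ per block, writing $u_j$ for the unique element of $X\cap V_j$, the locally guessed pointers of $u_j$ are, by construction, forced to match the identities $u_{j-2},u_{j-1},u_{j+1},u_{j+2}$. Hence all the required edges $u_i u_{i+1}$ and $u_i u_{i+2}$ exist in $G$, so $(u_1,\dots,u_l)$ forms a strong cycle and $E_s(u_1,\dots,u_l)$ is a strong set covering $X$. A key observation here is that in a $K_4$-free graph, once the two predecessors and two successors of $u_j$ are fixed, no additional edges can appear among them without producing a $K_4$ with $u_j$, which prevents degenerate configurations from being accepted by the filter while being inconsistent with a genuine strong cycle.

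The main obstacle is calibrating property (ii) against property (i): the filter used to define $S$ must be strong enough that any pointwise-consistent selection of one vertex per block actually certifies every edge of a strong cycle, yet weak enough that the true cycle vertices $v_1,\ldots,v_l$ survive with probability $2^{-O(l)}$. I expect the right formulation is to guess, for each cycle vertex, both the identity of its backward and forward anchors in $S_v$ \emph{and} which of the two possible roles ($v_{j\pm 1}$ vs.\ $v_{j\pm 2}$) each plays, then verify the constant-size set of adjacencies and non-adjacencies among these four anchors that characterise the local neighborhood of a strong cycle; this should be sufficient to propagate local consistency into a global strong cycle along $X$.
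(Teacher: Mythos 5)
Your opening moves match the paper's: use non-floweriness to get a constant-size vertex cover $S_v$ of $N(v)$ for each candidate vertex, invoke Lemma~\ref{lemma:vc} to see that the true cycle neighbours of $v_j$ sit inside $S_{v_j}$, and pay a factor $2^{-O(l)}$ to randomly anchor each cycle vertex to the correct members of $S_v$. Your probability estimate for property (i) is sound. However, there is a genuine gap in your argument for property (ii).

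You define $S$ as the set of vertices that pass a purely local filter: $v$ survives if its own sampled anchors lie in the right blocks and satisfy the local adjacency pattern of a strong cycle around $v$. You then claim that for an arbitrary transversal $X = \{u_1,\dots,u_l\}$ with one vertex per block, ``the locally guessed pointers of $u_j$ are, by construction, forced to match the identities $u_{j-2},u_{j-1},u_{j+1},u_{j+2}$.'' This is not true: $u_j$'s sampled pointer into block $V_{j+1}$ points to some survivor $w \in V_{j+1}$, but nothing forces $w = u_{j+1}$. The set $S$ could contain two survivors in $V_j$ whose anchor pointers disagree, and an arbitrary selection of one per block can freely mix anchor chains, leaving no reason for consecutive or second-neighbour edges of $X$ to exist in $G$. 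Local consistency at each vertex does not propagate to global consistency along a transversal, and that is exactly the hard part of this lemma. Your concluding remark about calibrating (ii) against (i) correctly identifies the obstacle, but the proposed fix --- a constant-size test of adjacencies and non-adjacencies among the four anchors of each vertex, even invoking $K_4$-freeness --- is still a per-vertex test and cannot constrain two distinct vertices of $S$ in the same block to behave interchangeably.

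The paper overcomes this by not defining $S$ as a set of locally-consistent vertices, but by constructing the global \emph{$s$-graph} on the anchor pointers and extracting from it a collection of vertex-disjoint strong cycles with parallel vertices; $S$ is then precisely the vertex set covered by this collection. The case analysis (whether the auxiliary graph $H_C$ on the true cycle is a directed cycle or not, the tree/double-broom decomposition over intervals of blocks, and the three rules for joining double brooms across missing arcs) is the machinery needed to guarantee that $S$ literally decomposes into strong cycles with parallel vertices. Property (ii) then follows immediately: any transversal picks one vertex per block from this structure, and within each cycle-with-parallel-vertices one can swap in the chosen parallel vertex and still have a strong cycle, so the union of the resulting cycles is a strong set covering $X$. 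This structural guarantee --- that $S$ itself is a union of strong cycles --- is the missing ingredient in your proposal.
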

\begin{proof}
  For every $v \in V^i$, $i \in [l]$, let $S_v$ be a vertex cover on $N(v)$.
  By Lemma~\ref{lemma:1mod3:non-flowery}, we may assume that $S_v$ has size at most 26.
  For every vertex $v_j \in V(C)$, by Lemma~\ref{lemma:vc}, $v_{j-1} \in S_{v_j}$ or $v_{j+1} \in S_{v_j}$ holds.
  Fix $d_j \in \{ 1, -1 \}$ such that $v_{j+d_j} \in S_{v_j}$ for every $v_j \in V(C)$.
  For every $j \in [l]$, we randomly choose $\delta_j = \pm 1$ uniformly at random.
  Then, with probability $1/2^{O(l)}$, $\delta_j = d_j$ holds for all $j \in [l]$.
  For every $j \in [l]$ and $v \in V_j$, we randomly choose a vertex $s(v) \in S_v \cap V_{j+\delta_j}$.
  The probability that $s(v_j) = v_{j+d_j}$ for each $j \in [l]$ is at least $1/2^{O(l)}$.
  
  Consider a directed graph $H_C$ over the vertex set $C$, where there is an arc from $v_{j}$ to $v_{j+d_j}$.
  First, we consider a simpler case that $H_C$ forms a directed cycle.
  We introduce an \emph{$s$-graph}, over the vertex set $V^i$, where there is an arc $(v, s(v))$ for every vertex $v \in V^i$.
  From the $s$-graph, we construct a set $S$ as follows.
  For every connected component in the corresponding undirected graph that forms a cycle, which intersects with each subset $V_j$ precisely once, we include all its vertices in $S$, provided the component's edges constitute a strong cycle.
  Note that this is completely analogous to case 6 in Section~\ref{sec:triangle-cover}.
  Then, the item (i) is satisfied because if $s(v_j) = v_{j+d_j}$ for each $j \in [l]$, then all vertices of $C$ are incorporated into $S$.
  For the item (ii), note that the entire set $S$ can be covered by strong cycles.

  Suppose that $H_C$ is not a directed cycle.
  While it seems difficult to obtain a collection of cycles, we obtain a similar structure, a collection of \emph{cycles with parallel vertices}.
  Here, a cycle with parallel vertices is a graph obtained from a standard cycle $(x_1, \dots, x_n)$ by adding vertices parallel to $x_i$, i.e., adjacent to $x_{i-1}$ and $x_{i+1}$.
  As $H_C$ is not a directed cycle, there exists $a \in [l]$ such that the edge $v_{a-1} v_{a}$ is missing in the underlying undirected graph.
  Assume w.l.o.g.\ that $a = 1$.
  It follows that $d_a = d_1 = +1$ and $d_{a-1} = d_{l} = -1$.
  Let $a'$ be the largest integer such that $d_j = +1$ for all $j \in [a, a']$.
  Such an integer $a'$ exists because $d_l = -1$.
  By the definition of $a'$, $d_{b'} = -1$, where $b' = a' + 1$.
  Also, let $b$ be the largest integer such that $d_j = -1$ for all $j \in [b', b]$.
  We will simplify the structure of the $s$-graph over $V_a \cup \cdots \cup V_b$ as follows.
  First, we find all bidirectional arcs $(u_{a'}, u_{b'})$ within $V_{a'} \cup V_{b'}$ in the $s$-graph, i.e., $s(u_{a'})= u_{b'}$ and $s(u_{b'}) = u_{a'}$.
  From the $s$-graph, delete all vertices in $V_{a'} \cup V_{b'}$ not incident with any bidirectional arc.
  Now consider the subgraph of the $s$-graph induced by $V_a \cup \cdots \cup V_{a'}$.
  As each arc is leading from $V_j$ to $V_{j+1}$ for $j \in [a, a']$, each weakly connected component is a directed tree rooted at some vertex in $V_{a'}$.
  We delete a vertex $v \in V_{a} \cup \cdots \cup V_{a'}$ from the $s$-graph, if it is not part of any path starting from $V_{a}$ and ending at $V_{a'}$.
  
  We further simplify the $s$-graph as follows.
  Henceforth, we will treat the $s$-graph as an undirected graph, disregarding the orientations.
  For every vertex $v$ in the $s$-graph that is part of $V_j$, $j \in [a+2, a']$, we randomly choose a vertex $s'(v) \in S_v \cap (V_{j-2} \cup V_{j-1})$.
  Based on the chosen $s'(v)$, we apply the following rules:
  \begin{itemize}
    \item If $s'(v) \in V_{j-1}$, then we preserve only the subtree rooted at $s'(v)$, eliminating all other subtrees connected to $v$. \item If $s'(v) \in V_{j-2}$, then let $v'\in V_{j-1}$ be the intermediate vertex linking $s'(v)$ to $v$.
    In this case, we keep the subtree rooted at $v'$, deleting all other subtrees connected to $v$.
  \end{itemize}
  We apply the rules similarly on $V_{b'} \cup \cdots \cup V_{b}$ as well.
  As a result, every connected component, when restricted to $V_{a+1} \cup \cdots \cup V_{b-1}$, transforms into a path.
  Note however that vertices in $V_{a+1}$ and $V_{b-1}$ may have multiple neighbors in $V_a$ and $V_{b}$ linking to them, thereby forming a double broom.
  Recall that a double broom is a tree obtained from a path by adding any number of degree-1 neighbors to its leaves.

  Next, we describe how to deal with missing edges in the underlying undirected graph of $H_C$.
  Suppose that neither $v_{a-1} v_a$ nor $v_{a} v_{a-1}$ is present in $H_C$.
  By the above discussion, there are weakly connected components which form double brooms.
  Suppose that they intersect blocks $V_{c} \cup \cdots \cup V_{a-1}$ and  $V_{a} \cup \cdots \cup V_{b}$.
  Denote those intersecting $V_{c} \cup \cdots \cup V_{a-1}$ and  $V_{a} \cup \cdots \cup V_{b}$ by $\cB^{-}$ and $\cB^+$, respectively.
  We show how to connect these double brooms.
  For each $B^+ \in \cB^+$, we randomly choose a double broom in $\cB^-$ denoted by $r(B^+)$ as follows.
  Let $v$ be the vertex in $V_{a+1}$ on $B^+$.
  We randomly choose a vertex $s'(v) \in S_v \cap (V_{a-1} \cap V_{a})$, and 
  apply one of the following rules:
  \begin{enumerate}
    \item
    If $s'(v) \in V_{a-1}$, then let $r(B^+)$ be the double broom in $\cB^-$ containing $s'(v)$.
    \item
    If $s'(v) \in V_{a}$, then we further choose $s''(s'(v)) \in S_{s'(v)} \cap (V_{a-2} \cup V_{a-1})$ uniformly at random.
    Let $r(B^+)$ be the double broom in $\cB^-$ containing $s''(s'(v))$.
  \end{enumerate}
  For each $B^- \in \cB^-$, we randomly choose $r(B^-) \in \cB^+$ similarly.
  We keep only the pairs $(B^+, B^-) \in \cB^+ \times \cB^-$ such that $r(B^+) = B^-$ and $r(B^-)= B^+$.
  Let $v^+$ be the vertex on $B^+$ in $V_{a+1}$ and $v^-$ be the vertex on $B^-$ in $V_{a-2}$.
  We connect double brooms as follows (see Figure~\ref{fig:1mod3:connection} for an illustration):
  \begin{enumerate}
    \item[(a)] Suppose that rule 1 is applied when choosing $r(B^+)$ and $r(B^-)$.
    Then, we connect $(v^-, s'(v^+), s'(v^-),\allowbreak v^+)$ as a path.
    \item[(b)] Suppose that rules 1 and 2 are applied when choosing $r(B^+)$ and $r(B^-)$, respectively (or vice versa).
    If $s''(s'(v^-)) \in V_a$, then we connect $(v^-, s'(v^-), s''(s'(v^-)), v^+)$ as a path.
    Otherwise, we have $s''(s'(v^-)) \in V_{a+1}$, and we connect as a path with parallel vertices: $(v^-, s'(v^-), (s')^{-1}(v^+) \cap V_{a}, v^+)$, where $(s')^{-1}(v^+)$ denotes the vertices $v$ with $s'(v) = v^+$.
    \item[(c)] Suppose that rule 2 is applied when choosing $r(B^+)$ and $r(B^-)$.
    Then, we connect $(v^-, s'(v^-), s'(v^+), \allowbreak v^+)$ as a path.
  \end{enumerate}
  We apply this argument to every $a \in [l]$ such that $v_{a-1} v_a$ and $v_a v_{a-1}$ are absent from $H_C$.
  This results in a collection of cycles with parallel vertices.
  We apply sanity check as follows.
  First, we keep only those that circle around the blocks exactly once.
  Second, we check whether every vertex is adjacent to the two preceding and two succeeding vertices.
  If the vertex fails adjacency test, we have two cases. 
  If the vertex is a parallel vertex, then delete it from the cycle.
  Otherwise, delete the entire cycle.
  Finally, let $S$ be the vertices covered the remaining cycles with parallel vertices.

  We claim both conditions of the lemma hold.
  For the item (i), note that the probability that for every vertex $v \in C$, $s(v)$, $s'(v)$, and $s''(v)$ belong to $C$ is at least $1/2^{O(l)}$.
  It is straightforward to verify that under this condition the cycle $C$, possibly with parallel vertices, remains until the step where it is included into $S$.
  The item (ii) holds because $S$ is precisely a set of vertices covered by a collection of strong cycles with parallel vertices.
\end{proof}

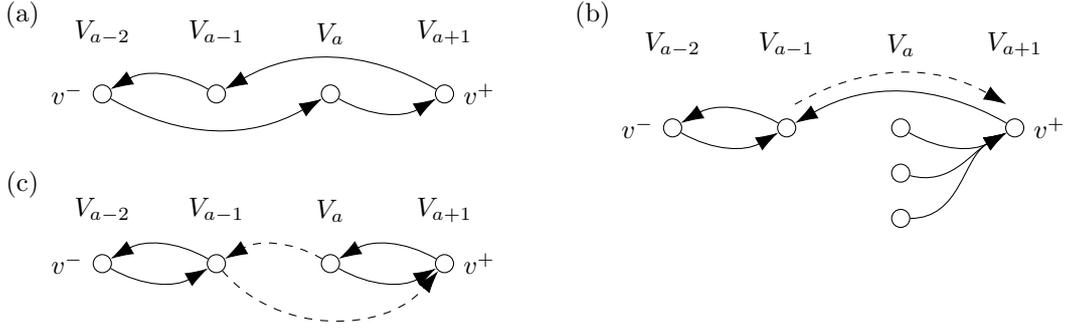
\begin{figure}
  \centering
  \begin{tikzpicture}[scale=1.5]
    \node at (-.7, .5) {(a)};
    \begin{scope}[shift={(0, -0.2)}]
    \node[vertex,label={180:$v^-$},label={[above=4mm]:$V_{a-2}$}] (v1) at (0, 0) {};
    \node[vertex,label={[above=4mm]:$V_{a-1}$}] (v2) at (1, 0) {};
    \node[vertex,label={[above=4mm]:$V_{a}$}] (v3) at (2, 0) {};
    \node[vertex,label={0:$v^+$},label={[above=4mm]:$V_{a+1}$}] (v4) at (3, 0) {};
    \draw[->] (v2) to[out=150, in=30] (v1);
    \draw[->] (v3) to[out=330, in=210] (v4);
    \draw[->] (v4) to[out=150, in=30] (v2);
    \draw[->] (v1) to[out=330, in=210] (v3);
    \end{scope}

    \begin{scope}[shift={(5, -.5)}]
    \node at (-.7, 1) {(b)};
    \node[vertex,label={180:$v^-$},label={[above=7mm]:$V_{a-2}$}] (v1) at (0, 0) {};
    \node[vertex,label={[above=7mm]:$V_{a-1}$}] (v2) at (1, 0) {};
    \node[vertex,label={[above=7mm]:$V_a$}] (v3) at (2, 0) {};
    \node[vertex,label={0:$v^+$},label={[above=7mm]:$V_{a+1}$}] (v4) at (3, 0) {};
    \node[vertex] (v31) at (2, -.4) {};
    \node[vertex] (v32) at (2, -.8) {};
    \draw[->] (v2) to[out=150, in=30] (v1);
    \draw[->] (v3) to[out=330, in=210] (v4);
    \draw[->] (v4) to[out=150, in=30] (v2);
    \draw[->] (v1) to[out=330, in=210] (v2);
    \draw[->, dashed, transform canvas={yshift=2.5mm}] (v2) to[out=30, in=150] (v4);
    \draw[->] (v31) to[out=345, in=210] (v4);
    \draw[->] (v32) to[out=0, in=210] (v4);
    \end{scope}

    \begin{scope}[shift={(0, -1.7)}]
    \node at (-.7, .7) {(c)};
    \node[vertex,label={180:$v^-$},label={[above=3mm]:$V_{a-2}$}] (v1) at (0, 0) {};
    \node[vertex,label={[above=3mm]:$V_{a-1}$}] (v2) at (1, 0) {};
    \node[vertex,label={[above=3mm]:$V_a$}] (v3) at (2, 0) {};
    \node[vertex,label={0:$v^+$},label={[above=3mm]:$V_{a+1}$}] (v4) at (3, 0) {};
    \draw[->] (v1) to[out=330, in=210] (v2);
    \draw[->] (v2) to[out=150, in=30] (v1);
    \draw[->] (v3) to[out=330, in=210] (v4);
    \draw[->] (v4) to[out=150, in=30] (v3);
    \draw[->, dashed] (v3) to[out=150, in=30] (v2);
    \draw[->, dashed] (v2) to[out=310, in=230] (v4);
    \end{scope}
  \end{tikzpicture}
  \caption{The connection between two double brooms. An arrow indicates $s'(\cdot)$, and a dashed arrow $s''(\cdot)$.}
  \label{fig:1mod3:connection}
\end{figure}

\subsubsection{Strong cycle of length $3p + 2$} \label{sssec:2mod3}

Suppose that $\cC$ contains a cycle $C = (v_{1}, \dots, v_{3p+2})$.
This situation is different from when the cycle's length is $3p + 1$, as we cannot  assume that all vertices in $C$ are included in the set $F$. Consequently, we need to guess for each vertex $v_i$ whether it is \emph{intersecting}, i.e., whether it is part of $F$. 
We call a vertex $v_i$ \emph{safe} if it is flowery or not intersecting.
We will start off with another observation with respect to the triangle-maximality.

\begin{lemma} \label{lemma:2mod3:unique}
  The common neighborhood $N(v_{j}) \cap N(v_{j+1}) \cap V_{j+2}$ is a singleton $\{ v_{j+2} \}$ for every $j \in [l_i]$.
\end{lemma}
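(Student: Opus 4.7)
The inclusion $v_{j+2} \in N(v_j) \cap N(v_{j+1}) \cap V_{j+2}$ is immediate: by the color coding, $v_{j+2}$ is the unique element of $V(C) \cap V_{j+2}$, and in a strong cycle $v_{j+2}$ is adjacent to both $v_j$ (indices differing by~$2$) and $v_{j+1}$. The plan is therefore to argue the reverse inclusion by contradiction, exploiting the triangle-maximality of $\cC$ in exactly the same spirit as in the proof of Lemma~\ref{lemma:tri-maximal}.

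Suppose, toward a contradiction, that there is some $u \in N(v_j) \cap N(v_{j+1}) \cap V_{j+2}$ with $u \neq v_{j+2}$. The idea is to replace the single cycle $C$ (of length $l_i = 3p+2$, where $p \geq 1$) in the packing $\cC$ by a collection of $p+1$ vertex-disjoint triangles. Specifically, I would take
\[
T \;=\; \{v_j,\, v_{j+1},\, u\}
\qquad\text{and}\qquad
T_q \;=\; \{v_{j+3q-1},\, v_{j+3q},\, v_{j+3q+1}\}\ \text{for } q=1,\ldots,p,
\]
where indices are read modulo $l_i$. Each $T_q$ is a triangle in $G$ because three consecutive vertices of any strong cycle induce a triangle, and $T$ is a triangle because $v_j v_{j+1} \in E(C)$ while $u$ is adjacent to both $v_j$ and $v_{j+1}$ by hypothesis.

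The main thing to verify is that the new family $\cC' = (\cC \setminus \{C\}) \cup \{T, T_1, \ldots, T_p\}$ is still a vertex-disjoint strong cycle packing covering~$F$. Disjointness among $T_1, \ldots, T_p$ and with $T$ restricted to $V(C)$ is immediate from the partition $v_1,\ldots,v_{l_i}$. For the new vertex $u$: since $u \in V_{j+2}$ and $V_{j+2} = V^i_{j+2}$ is disjoint from every other $V^{i'}_{j'}$ by the color coding, $u$ lies in none of the other cycles of $\cC$ and (as $u \neq v_{j+2}$) in none of $T_1, \ldots, T_p$. Coverage of $F$ persists since $V(\cC') = V(\cC) \cup \{u\} \supseteq F$.

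Finally, counting triangles: $C$ is not a triangle (its length is $3p+2 \geq 5$), while $\cC'$ contains the $p+1 \geq 2$ new triangles $T, T_1, \ldots, T_p$ in its place, so $\cC'$ strictly exceeds $\cC$ in triangle count, contradicting triangle-maximality. The only real obstacle I foresee is the disjointness check for $u$ against other cycles of $\cC$, and this is precisely what the color-coded partition $V = \bigcup_{i,j'} V^i_{j'}$ is set up to guarantee.
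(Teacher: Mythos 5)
Your proposal is correct and takes essentially the same approach as the paper: assume for contradiction a second common neighbor $u \in V_{j+2}$, and replace $C$ by the triangle $\{v_j, v_{j+1}, u\}$ together with the canonical decomposition of the remaining $3p$ consecutive vertices into $p$ triangles, contradicting triangle-maximality. The paper's proof is stated for the specific index $j = 3p+1$ (WLOG) and is terser, leaving implicit the disjointness and coverage checks that you spell out, but the underlying argument is identical.
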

\begin{proof}
  Assume for contradiction that, w.l.o.g., $v_{3p+1}$ and $v_{3p+2}$ has a common neighbor $v_{1}' \in V_{1}$ in addition to $v_{1}$.
  Then there is a triangle packing $\{ v_{3q-2}, v_{3q-1}, v_{3q} \mid q \in [p]\} \cup \{ v_{3p+1}, v_{3p+2}, v_{1}' \}$ that covers the strong cycle.
\end{proof}

In view of Lemma~\ref{lemma:2mod3:unique}, we obtain a polynomial-time algorithm that generates a collection of strong cycles in $V^i$, denoted by $\mathcal{S}$, such that $C \in \mathcal{S}$.
To simplify our terminology, we will extend the use of the terms ``flowery'', ``intersecting'', and ``safe'': a block $V_{j}$ will be described as flowery (intersecting, or safe) if its corresponding vertex $v_{j}$ is flowery (intersecting, or safe).

\begin{lemma} \label{lemma:2mod3:find-cycles}
  There is a polynomial-time algorithm that finds a collection of strong cycles
  $\mathcal{S}$ such that the following hold: (i) $C^i \in \mathcal{S}$ with probability at least $1/2^{O(l)}$ and (ii) strong cycles in $\mathcal{S}$ are vertex-disjoint on non-flowery blocks, i.e., for every vertex $v$ in a non-flowery block, there is at most one strong cycle in $\mathcal{S}$ covering $v$.
\end{lemma}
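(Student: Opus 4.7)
The plan is to extend the random-guessing strategy of Lemma~\ref{lemma:1mod3:ultimate} to this setting, exploiting Lemma~\ref{lemma:2mod3:unique} to pin down the cycle from local data. For every non-flowery vertex $v \in V_j$, compute a vertex cover $S_v \subseteq N(v)$ with $|S_v| \leq 26$; by Lemma~\ref{lemma:vc}, at least one of the two cycle neighbors of $v_j$ on $C$ belongs to $S_{v_j}$. For each non-flowery block $V_j$, independently sample $\delta_j \in \{-1, +1\}$ uniformly, and then sample $s(v) \in S_v \cap V_{j+\delta_j}$ uniformly at random for every $v \in V_j$. With probability at least $(1/52)^l = 2^{-O(l)}$, the value $s(v_j)$ equals the correct cycle neighbor $v_{j+\delta_j}$ of $v_j$ for every non-flowery block $V_j$ intersecting $C$.

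Given these random guesses, we construct $\mathcal{S}$ as follows. Call a pair $(u, u') \in V_j \times V_{j+1}$ \emph{consistent} if, whenever $u$ is non-flowery with $\delta_u = +1$, we have $s(u) = u'$, and whenever $u'$ is non-flowery with $\delta_{u'} = -1$, we have $s(u') = u$. For every consistent pair we attempt to extend it greedily into a length-$l$ candidate: at each step the next vertex at position $j{+}2$ must be the unique element of $N(u) \cap N(u') \cap V_{j+2}$, and the extension is discarded whenever this common neighborhood fails to be a singleton or the newly appended vertex conflicts with the $s$-information collected so far. If the trace closes after $l$ steps into a strong cycle, we add it to $\mathcal{S}$. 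Since each starting pair produces at most one candidate, the procedure runs in polynomial time with $|\mathcal{S}| \leq O(n^2)$.

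Property (i) follows because when the random $\delta_j$ and $s$-values agree with $C$, the pair $(v_1, v_2)$ is consistent and Lemma~\ref{lemma:2mod3:unique} ensures that each greedy extension step produces a singleton, so the trace deterministically recovers $C$. Property (ii) is the main technical point: two cycles $C_1, C_2 \in \mathcal{S}$ sharing a non-flowery vertex $v \in V_j$ must agree on both of $v$'s cycle-neighbors. One neighbor is forced to be $s(v)$ through the consistency condition at the step introducing $v$, and the other is forced either by the $s$-value of an adjacent non-flowery block or, when both adjacent blocks are flowery, by the singleton extension rule applied two positions back. Once two consecutive vertices of $C_1$ and $C_2$ coincide, the greedy extension forces the rest of the cycle, so $C_1 = C_2$. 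The main obstacle is carrying out this case analysis on the configurations of flowery/non-flowery blocks surrounding $V_j$ (and on the signs of the $\delta$-values in those blocks) to confirm that both cycle-neighbors of $v$ are indeed determined in every scenario.
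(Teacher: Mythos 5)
Your overall strategy — computing vertex covers $S_v$, randomly sampling $s(v)$, and pruning candidate cycles via the singleton rule of Lemma~\ref{lemma:2mod3:unique} — matches the paper's approach; the pre-sampling of a direction $\delta_j$ per block is harmlessly borrowed from the proof of Lemma~\ref{lemma:1mod3:ultimate} (the paper here just samples $s(v) \in S_v \cap (V_{j-1} \cup V_{j+1})$ directly). The issue is in your justification of property (ii). You set out to show that both cycle-neighbors of a shared non-flowery vertex $v$ are individually forced, and you explicitly defer a case analysis over the flowery/non-flowery status and $\delta$-signs of the neighboring blocks, labelling it ``the main obstacle.'' That case analysis is both unnecessary and, as stated, unwinnable in at least one branch: if $\delta_j = +1$ and the backward block $V_{j-1}$ is non-flowery with $\delta_{j-1} = +1$, the backward neighbor $u$ is only constrained to satisfy $s(u) = v$, which does not pin $u$ down since many vertices of $V_{j-1}$ could map to $v$ under $s$.

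The repair is to drop the attempt to pin both neighbors directly. If $C_1, C_2 \in \mathcal{S}$ both pass through a non-flowery $v \in V_j$, the $s$-filter forces both to contain the edge $v\,s(v)$ (the neighbor in direction $\delta_j$). That single shared edge is enough: by the singleton extension rule, $N(v) \cap N(s(v))$ has a unique element in the next block (this is a property of the pair $v, s(v)$ alone, independent of which cycle we are tracing), and iterating this all the way around the cycle forces $C_1 = C_2$. So the ``other'' neighbor is determined, but by global propagation, not by a local $s$-constraint, which is why the case analysis you anticipated cannot succeed on its own terms. The paper organizes the same fact more cleanly: it first establishes that all candidate cycles in $\cS'$ are pairwise edge-disjoint (same propagation argument, stated once up front), and then observes that, among edge-disjoint cycles through $v$ (which must use distinct vertices in $V_{j-1}$ and in $V_{j+1}$), the filter $s(v) \in V(C')$ retains at most one. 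Adopting either formulation closes your gap.
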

\begin{proof}
  In view of Lemma~\ref{lemma:2mod3:unique}, we obtain a polynomial-time algorithm that generates a collection of edge-disjoint strong cycles in $V_i$, denoted by $\cS'$, such that $C \in \mathcal{S}'$.
  For each edge $u_1 u_2$ with $u_1 \in V_{1}$ and $u_2 \in V_{2}$, this algorithm will find at most one strong cycle in $\cS$ as follows.
  If $u_1$ and $u_2$ share more than one common neighbor in $V_{3}$, then Lemma~\ref{lemma:2mod3:unique} allows us to conclude that $u_1 u_2$ is not part of a strong cycle in $\cC$.
  On the other hand, if $u_1$ and $u_2$ share a unique common neighbor $u_3$ in $V_{3}$, then we check whether there is a unique neighbor between $u_2$ and $u_3$ in $V_{4}$.
  This process continues iteratively until it circles back to $V_{1}$.
  Through this process, a strong cycle $(u_1, \dots, u_{3p+2})$ that is part of $\cC$ is identified.
  By construction, the strong cycles found this way are edge-disjoint.

  To ensure vertex-disjointness on non-flowery blocks, we randomly pick $\cS \subseteq \cS'$ as follows.
  For every non-flowery vertex $v \in V_j$, we randomly choose a neighbor $s(v)$ from $S_v \cap (V_{j-1} \cup V_{j+1})$, where $S_v$ is a vertex over on $N(v)$.
  Strong cycles $C'$ are deleted from $\cS'$ if there is a vertex $v \in V(C')$ such that $s(v) \notin V(C')$.
  Let $\cS$ denote the remaining strong cycles.
  We claim that strong cycles in $\mathcal{S}$ are vertex-disjoint on non-flowery blocks.
  For a vertex $v$ in a non-flowery block $V_j$, and it is possible that multiple strong cycles in $\cS'$ cover $v$.
  However, by the edge-disjointness, these cycles pass through different vertices in $V_{j-1}$ (and $V_{j+1}$).
  Since we keep at most one cycle in $\cS$, it follows that at most one strong cycle covering $v$ remains, thereby establishing vertex-disjointness.
  Note that $C^i \in \cS$ holds when $s(v_j) \in V(C)$ holds for all non-flowery vertices $v_j \in V(C)$.
  This occurrence has probability at least $1/2^{O(l)}$ by Lemma~\ref{lemma:vc}.
\end{proof}

\begin{figure}
  \centering
  \begin{tikzpicture}
    \begin{scope}[yscale=0.8]
    \node[vertex,label={[label distance=.3cm]below:$v_{1}$}] (f1) at (0, 0) {};
    \node[bvertex,label={above:$v_{2,1}$}] (v1a) at (1, 1) {};
    \node[vertex,label={below:$v_{2,2}$}] (v1b) at (1, -1) {};
    \node[bvertex,label={[label distance=.3cm]below:$v_{3}$}] (f2) at (2, 0) {};
    \node[vertex,label={above:$v_{4,1}$}] (v2a) at (3, 1) {};
    \node[bvertex,label={below:$v_{4,2}$}] (v2b) at (3, -1) {};
    \node[bvertex,label={above:$v_{5,1}$}] (v3a) at (4, 1) {};
    \node[vertex,label={below:$v_{5,2}$}] (v3b) at (4, -1) {};
    \node[bvertex,label={[label distance=.3cm]below:$v_{6}$}] (f3) at (5, 0) {};
    \node[bvertex,label={above:$v_{7,1}$}] (v4a) at (6, 1) {};
    \node[vertex,label={below:$v_{7,2}$}] (v4b) at (6, -1) {};
    \node[bvertex,label={above:$v_{8,1}$}] (v5a) at (7, 1) {};
    \node[vertex,label={below:$v_{8,2}$}] (v5b) at (7, -1) {};

    \draw (v1a) -- (f1) -- (v1b);
    \draw (v1a) -- (f2) -- (v1b);
    \draw (v2a) -- (f2) -- (v2b);
    \draw (v2a) -- (v3a) -- (f3) -- (v3b) -- (v2b);
    \draw (v5a) -- (v4a) -- (f3) -- (v4b) -- (v5b);
    \draw (v5a) -- (7.5, 0.5) {};
    \draw (v5b) -- (7.5, -0.5) {};
    \draw[dotted] (v5a) -- (7.8, 0.2) {};
    \draw[dotted] (v5b) -- (7.8, -0.2) {};
    \draw (-0.5, 0.5) -- (f1) -- (-0.5, -0.5);
    \draw[dotted] (-0.8, 0.8) -- (f1) -- (-0.8, -0.8);
    \end{scope}

    \flowervertex{0}{0}
    \flowervertex{2}{0}
    \flowervertex{5}{0}
  \end{tikzpicture}
  \caption{}
  \label{fig:2mod3:flowery}
\end{figure}

Lemma~\ref{lemma:2mod3:find-cycles} is sufficiently powerful to obtain a statement similar to Lemma~\ref{lemma:1mod3:ultimate}, when there is no flowery vertex in $C$:
\begin{lemma} \label{lemma:2mod3:no-flowery}
  If $C$ contains no flowery vertex,
  then there is a polynomial-time algorithm that finds a set $S$ of vertices such that (i) $V(C) \subseteq S$ with probability at least $1/2^{O(l)}$ and (ii) for any $X \subseteq S$ such that $|X \cap V_j| = 1$ for each intersecting block $V_j$ and $|X \cap V_j| = 0$ for each non-intersecting block, there is a strong set that covers $X$.
\end{lemma}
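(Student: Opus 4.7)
The plan is to take $S = V(\mathcal{S})$, where $\mathcal{S}$ is the collection of strong cycles returned (in polynomial time) by Lemma~\ref{lemma:2mod3:find-cycles} applied to $C$. Property~(i) then follows immediately: Lemma~\ref{lemma:2mod3:find-cycles} yields $C \in \mathcal{S}$ with probability at least $1/2^{O(l)}$, whence $V(C) \subseteq V(\mathcal{S}) = S$ with at least the same probability.

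For property~(ii), the crucial observation is that under the no-flowery hypothesis every block $V_j$ is non-flowery, so the disjointness clause of Lemma~\ref{lemma:2mod3:find-cycles} strengthens to \emph{full} pairwise vertex-disjointness of the cycles in $\mathcal{S}$. Combined with the construction in the proof of that lemma (which traverses the blocks in cyclic order using the unique common neighbors granted by Lemma~\ref{lemma:2mod3:unique}), I may assume every $C' \in \mathcal{S}$ visits each block $V_j$ exactly once.

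Given any $X \subseteq S$ satisfying the prescribed cardinality constraints, let $\mathcal{S}_X \subseteq \mathcal{S}$ be the subcollection of cycles meeting $X$, and set $E' = \bigcup_{C' \in \mathcal{S}_X} E_s(C')$. Vertex-disjointness implies that each vertex of $X$ lies on a unique member of $\mathcal{S}_X$, so trivially $X \subseteq V(E')$, i.e.\ $E'$ covers $X$. To see that $E'$ is a strong set, pick any two edges $uv, vw \in E'$: the vertex $v$ lies on a unique cycle $C' \in \mathcal{S}_X$, so both $uv$ and $vw$ must lie in $E_s(C')$, and the strong-cycle property of $C'$ then yields $uw \in E(G)$.

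The one potential obstacle is the strongness of $E'$: a priori, two edges $uv, vw \in E'$ could come from two distinct cycles sharing the vertex $v$, in which case $uw$ need not exist in $G$. This is exactly what the strengthened vertex-disjointness under the no-flowery hypothesis rules out, and once that is in hand the rest of the argument is routine.
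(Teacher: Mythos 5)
Your proposal is correct and takes essentially the same route as the paper: define $S = V(\mathcal{S})$ with $\mathcal{S}$ from Lemma~\ref{lemma:2mod3:find-cycles}, note that the absence of flowery vertices promotes the block-wise disjointness of that lemma to full pairwise vertex-disjointness of the cycles, and cover $X$ by the union of the (disjoint) strong cycles meeting it. The paper's own proof of item~(ii) is a single terse sentence; you have simply written out the argument it gestures at.
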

\begin{proof}
  Define $S$ as the set of vertices covered by strong cycles in $\cS$ of Lemma~\ref{lemma:2mod3:find-cycles}.
  Since $C$ belongs to $\cS$ with probability at least $1/2^{O(l)}$, the condition (i) holds.
  Furthermore, it is possible to cover a set $X \subseteq S$, where $|X \cap V_j| \le 1$ for each flowery block $V_j$, with a strong set.
\end{proof}

Let us assume that $C$ includes at least one flowery vertex.
Note that the statement of Lemma~\ref{lemma:2mod3:no-flowery} may not apply.
Specifically, the strategy of simply selecting $S$ to be the set of vertices covered by strong cycles in $\cS$ is not viable because the condition (ii) may not be met in the presence of flowery vertices, as illustrated in Figure~\ref{fig:2mod3:flowery}.
Here, there are two strong cycles $(v_1, v_{2,i}, v_3, v_{4,i}, v_{5,i}, v_6, v_{7,i}, v_{8,i})$ for $i = 1, 2$, with vertices in $X$ marked in black.
We claim that there is no strong set covering $X$.
As $v_{4,2} \in X$, either $\{ v_3, v_{4,2}, v_{5,2} \}$ or $\{ v_{4,2}, v_{5,2}, v_6 \}$ needs to be part of the strong set.
Similarly, as $v_{5,1}$, either $\{ v_3, v_{5,1}, v_{5,2} \}$ or $\{ v_{5,1}, v_{5,2}, v_6 \}$ needs to be part of the strong set.
It follows that both $v_3$ and $v_6$ have been covered.
Thus, $\{ v_1, v_{7, 1}, v_{8,1} \}$ must be included into the strong set as well, leaving $v_{2,1}$ uncovered.
This example underscores the limitation of Lemma~\ref{lemma:2mod3:find-cycles} to provide a statement similar to Lemma~\ref{lemma:1mod3:ultimate}.
To work around this challenge, we leverage the idea of pairing some vertices via reduction to \textsc{Colorful Delta-matroid Matching}.
In order to apply this idea, let us give a few more observations with respect to triangle maximality.

\begin{lemma} \label{lemma:2mod3:consecutive}
  No two consecutive vertices $v_j$ and $v_{j+1}$ are safe.
\end{lemma}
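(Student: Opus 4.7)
The plan is to derive a contradiction from triangle-maximality by exhibiting a strong cycle packing that still covers $F$ but contains strictly more triangles than $\cC$. Assume for contradiction that $v_j$ and $v_{j+1}$ are both safe; by cyclic symmetry in the indexing of $C = C^i = (v_1, \dots, v_{l_i})$ I may take $j = l_i - 1$, so the safe pair is $(v_{l_i-1}, v_{l_i})$. Delete $C$ from $\cC$ and insert the $p$ vertex-disjoint triangles $T_q := \{v_{3q-2}, v_{3q-1}, v_{3q}\}$ for $q \in [p]$; each is a genuine triangle of $G$ because any three consecutive vertices of a strong cycle are pairwise adjacent. The resulting collection covers $V(C) \setminus \{v_{l_i-1}, v_{l_i}\}$ together with everything previously covered by $\cC \setminus \{C\}$.

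For each vertex $v \in \{v_{l_i-1}, v_{l_i}\} \cap F$, safety forces $v$ to be flowery, so I would attach a fresh triangle through $v$ drawn from the matching of size $13$ guaranteed to live in $N(v)$. The crux is that this attachment can always be carried out without conflicts. By the standing assumption on neighborhoods, $N(v)$ is contained in the union of four blocks $V^i_{j-2} \cup V^i_{j-1} \cup V^i_{j+1} \cup V^i_{j+2}$ (taken cyclically), which lies entirely inside $V^i$; by the color-coded partition $V = V^1 \sqcup \cdots \sqcup V^\kappa$, $V^i$ is disjoint from every other $V^{i'}$, so no cycle of $\cC \setminus \{C\}$ can block any candidate. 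Within $V^i$, at most three vertices of $\bigcup_q T_q$ (one per intersected block) and at most two vertices of a previously inserted flowery triangle can lie in $N(v)$. Because the $13$ candidate triangles pairwise share only $v$, their $26$ non-$v$ vertices are distinct, so at most five of them are blocked and at least eight survive.

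The resulting strong cycle packing then still covers $F$ and has at least $p \ge 1$ more triangles than $\cC$, contradicting triangle-maximality. The only non-trivial step is the blocking count for the flowery attachments; this closes cleanly because color-coding confines all interactions with $N(v)$ to the single color class $V^i$, and the only possible obstructions inside $V^i$ are the $O(1)$ local ones absorbed by the constant $13$ in the definition of flowery.
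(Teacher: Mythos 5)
Your proposal takes the same route as the paper: decompose $C$ into the $p$ triangles $T_q = \{v_{3q-2},v_{3q-1},v_{3q}\}$ and, for each of the two safe vertices that lies in $F$ (hence is flowery), attach a disjoint triangle drawn from the matching of size $13$ in its neighborhood, yielding a packing with strictly more triangles and contradicting triangle-maximality; the paper merely asserts the disjoint choice ``by the definition of flowery vertices'' while you make the blocking count explicit. One minor slip in your count: all three vertices of a previously inserted flowery triangle (including $v_{3p+2}$ itself, which is adjacent to $v_{3p+1}$), not just two, can lie in $N(v)$, so the number of blocked matching edges is at most $3+3=6$ rather than $5$, leaving at least $13-6=7$ rather than $8$ survivors --- the conclusion is unaffected.
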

\begin{proof}
  Assume for contradiction that two consecutive vertices $v_{3p+1}$ and $v_{3p+2}$ are both safe.
  Since $C$ is a strong cycle, there are triangles $\{ v_{3q-2}, v_{3q-1}, v_{3q}\}$ for each $q \in [p]$.
  If $v_{3p+1}$ (or $v_{3p+2}$) is flowery, then it implies the existence of an additional triangle containing $v_{3p+1}$ (or $v_{3p+2}$) that are disjoint from all other triangles (these are two separate triangles if both $v_{3p+1}$ and $v_{3p+2}$ are flowery).
  We can choose such triangles by the definition of flowery vertices.
  The packing of these triangles yields a solution with a greater number of triangles, a contradiction.
\end{proof}

By Lemma~\ref{lemma:2mod3:consecutive}, we may assume w.l.o.g.\ that $v_{1}$ is not safe and that $v_{3p+2}$ is safe.
Let $f_1, \dots, f_r$ be the safe vertices in $C_i$.
These vertices partition $C_i$ into paths $P_1, \dots, P_r$ of non-safe vertices, where the endpoints of $P_i$ are adjacent to $f_{i-1}$ and $f_i$.

\begin{lemma} \label{lemma:2mod3:not-divisible}
  For each $r' \in [r]$, $|V(P_r)|$ is not divisible by 3.
\end{lemma}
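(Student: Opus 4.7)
The plan is to prove the lemma by contradiction, following the template of Lemma~\ref{lemma:2mod3:consecutive}. Suppose $|V(P_{r'})| = 3q$ with $q \ge 1$; I will construct an alternative strong cycle packing $\cC'$ that covers the same feasible set $F$ but contains strictly more triangles than $\cC$, contradicting triangle-maximality. Note first that when $r = 1$ the single path already has $3p+1 \not\equiv 0 \pmod 3$ vertices, so we may assume $r \ge 2$ and that the two safe vertices $f_{r'-1}, f_{r'}$ bracketing $P_{r'}$ are distinct.

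The construction goes as follows. Relabel the cycle so that $f_{r'-1} = v_0$, $V(P_{r'}) = \{v_1,\ldots,v_{3q}\}$, $f_{r'} = v_{3q+1}$, and the remaining arc is $v_{3q+2},\ldots,v_{3p+1}$, which has length $3(p-q)$. Replace $C$ in $\cC$ by: (i) the $q$ triangles $\{v_{3j-2}, v_{3j-1}, v_{3j}\}$ for $j \in [q]$, covering $V(P_{r'})$; (ii) the $p-q$ triangles $\{v_{3q+3j-1}, v_{3q+3j}, v_{3q+3j+1}\}$ for $j \in [p-q]$, covering the remaining arc; and (iii) for each of $f_{r'-1}, f_{r'}$ that is intersecting (hence flowery, since it is safe), one additional triangle through it drawn from its flowery matching and chosen disjoint from everything else in the new packing, while any $f_{r'-1}$ or $f_{r'}$ that is non-intersecting is simply left uncovered. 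Each triple $\{v_i, v_{i+1}, v_{i+2}\}$ in (i) and (ii) is indeed a triangle in $G$, by the strong-cycle adjacencies at distances $1$ and~$2$.

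To verify correctness I will argue that $\cC'$ still covers $F$. Every non-safe vertex of $V(C)$ lies either in $V(P_{r'})$ (covered by (i)) or in the remaining arc (covered by (ii)). Every safe vertex of $V(C)$ distinct from $f_{r'-1}, f_{r'}$ lies in the arc and is again covered by (ii). The two bracketing safe vertices $f_{r'-1}, f_{r'}$ are covered by (iii) whenever they lie in $F$, and their omission is harmless otherwise since safe non-intersecting vertices are not in $F$. The cycles in $\cC \setminus \{C\}$ are unchanged, so $\cC'$ still covers $F$. Finally, $C$ contributed zero triangles to $\cC$, whereas the replacement contributes $p \ge 1$ triangles (plus any flowery ones from (iii)), contradicting triangle-maximality.

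The main obstacle is step (iii): one must verify that the flowery triangles can be chosen disjoint from the $p$ new triangles, from one another, and from $\cC \setminus \{C\}$. This is handled exactly as in the proof of Lemma~\ref{lemma:2mod3:consecutive}: the flowery matching of size at least $13$ produces $13$ vertex-disjoint candidate triangles through the flowery vertex, and only a bounded number of these are blocked because each of $f_{r'-1}, f_{r'}$ has at most four neighbors inside $V(C)$ (its distance-$1$ and distance-$2$ neighbors along $C$), leaving ample valid candidates.
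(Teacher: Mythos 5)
Your proof is correct and follows essentially the same route as the paper: decompose $C$ into the path $P_{r'}$ of length $3q$, the two bracketing safe vertices, and the remaining arc of length $3(p-q)$; cover the two arcs of length divisible by $3$ with $q$ and $p-q$ consecutive-triple triangles; cover the bracketing safe vertices (when needed) by triangles drawn from their flowery packings; and conclude that this replaces the non-triangle $C$ by at least $p\ge 1$ triangles, contradicting triangle-maximality. The only noticeable difference is that you cover a bracketing safe vertex only when it is intersecting, whereas the paper covers it whenever it is flowery, but both are harmless and the paper's proof is identical in substance. Your explicit handling of the $r=1$ degenerate case and your disjointness argument for the flowery triangles fill in details that the paper leaves implicit.
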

\begin{proof}
  If a path, say $P_r$, has length divisible by three, then we can cover $F \cap V(C_i)$ with a triangle packing: 
  \begin{itemize}
    \item 
    $V(P_r)$ can be covered by $|V(P_r)|/3$ disjoint triangles.
    \item
    If $f_{r-1}$ (and $f_r$) is flowery, then it can be covered by disjoint triangles.
    \item
    The remaining vertices form a path whose number of vertices is divisible by three, which can be covered by a triangle packing.
  \end{itemize}
  This contradicts the triangle maximality.
\end{proof}

Lemma~\ref{lemma:2mod3:not-divisible} can be further strengthened as follows.

\begin{lemma} \label{lemma:2mod3:path-lengths}
  There is exactly one path $P_{r'}$ such that $|V(P_{r'})| \bmod 3 = 1$ and for all other paths $P_{r''}$, $|V(P_{r''})| \bmod 3 = 2$.
\end{lemma}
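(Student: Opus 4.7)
The proof will be by contradiction, sharpening the modular counting. First, combining $\sum_k l_k + r = |V(C)| = 3p+2$ with $\sum_k l_k \equiv a + 2b \pmod 3$ (where $a, b$ count paths of length $\equiv 1, \equiv 2 \pmod 3$, respectively) and $a+b=r$, we deduce $2a \equiv 2 \pmod 3$, hence $a \equiv 1 \pmod 3$. So $a \in \{1,4,7,\ldots\}$. Since at least one $\equiv 1$ path must exist by this congruence, the claim reduces to ruling out $a \geq 4$. I will do so by constructing a triangle packing covering $F \cap V(C)$ with at least one triangle, which, via replacement in $\cC$, contradicts triangle-maximality (the cycle $C$ contributes $0$ triangles).

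Assume $a \geq 4$ and enumerate the $\equiv 1$-paths as $P_{i_1},\ldots,P_{i_a}$ in cyclic order. Pair them as $(P_{i_1},P_{i_2}), (P_{i_3},P_{i_4}),\ldots$, leaving $P_{i_a}$ unpaired when $a$ is odd. Each pair consists of cyclically adjacent $\equiv 1$-paths, so every intermediate path is $\equiv 2 \pmod 3$. The pair-segment $P_{i_l}, f_{i_l+1}, \ldots, f_{i_{l+1}}, P_{i_{l+1}}$ therefore has length $1+1+2(i_{l+1}-i_l-1)+(i_{l+1}-i_l) = 3(i_{l+1}-i_l) \equiv 0 \pmod 3$ and decomposes into $(i_{l+1}-i_l)$ disjoint consecutive triples along $C$, each a triangle in $G$ by the strong-cycle adjacencies. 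If $a$ is odd, the leftover $P_{i_a}$ together with its two adjacent safe vertices $f_{i_a}, f_{i_a+1}$ (which are not used by any pair-segment since they lie on the boundary of pair-segments) forms an extended segment of length $l_{i_a}+2 \equiv 0 \pmod 3$, covered analogously.

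The remaining ``orphan'' paths, all of length $\equiv 2 \pmod 3$, lie in the gaps between pair-segments and the leftover segment. Within each gap the number of available safe vertices exceeds the number of orphan paths by one, so a simple greedy assignment (each orphan path $P_k$ uses, say, its right-adjacent safe $f_{k+1}$) yields for each orphan a covering by internal consecutive triples plus one boundary triple. Any flowery intersecting safe vertex still uncovered after this process is covered by an independent flower triangle, disjoint from the previously chosen triples by the definition of flowery. The color-coding guarantee that $V^i$ is dedicated to the cycle $C=C^i$ ensures all these new triangles (both consecutive triples in $V(C)$ and flower triples within $V^i$) do not conflict with the rest of $\cC$. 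Hence we obtain a triangle packing covering $F \cap V(C)$ with at least one triangle, contradicting triangle-maximality. The main bookkeeping obstacle is verifying that the orphan case analysis and the interaction between the leftover segment and its neighboring pair-segments never cause two triples to share a safe vertex; this follows because each pair-segment only internally consumes safes strictly between its endpoints, leaving all boundary safes free for orphans or for the extended leftover segment. Therefore $a = 1$ and all other paths have length $\equiv 2 \pmod 3$.
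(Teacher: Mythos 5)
Your counting argument $a \equiv 1 \pmod 3$ is correct and is actually a useful addition: it makes explicit the ``at least one'' direction that the paper's proof elides (the paper only argues ``at most one''). That part is solid.

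For the ``at most one'' direction, however, you take a considerably more elaborate route than the paper's, and it is where the bookkeeping does not quite close. The paper simply picks two $\equiv 1$-paths $P_{r'}, P_{r''}$ with all intermediate paths $\equiv 2$, removes the two safe vertices $f_{r'-1}$ and $f_{r''}$, and observes that \emph{both} resulting arcs have length $\equiv 0 \pmod 3$ (arc~1 by direct counting $2+2(r''-r'-1)+(r''-r')\equiv 0$; arc~2 because $3p+2-2-0\equiv 0$). Each arc is tiled by consecutive triples and the two removed safes are handled via flower triangles or skipped; this already yields a contradiction for \emph{any} $a\ge 2$, so one never needs to consider pairing up more than one pair. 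Your version instead pairs \emph{all} $\equiv 1$-paths cyclically, tiles many small segments, and leaves one unassigned safe per ``regular'' gap; the paper's argument leaves exactly two.

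The gap is in the orphan-assignment step. You claim a uniform greedy (``each orphan $P_k$ uses its right-adjacent safe'') works, and justify disjointness by saying pair-segments only consume safes strictly between their endpoints. But when $a$ is odd, the gaps adjacent to the leftover segment contain \emph{equal} numbers of orphans and available safes (not ``one more''), because the leftover segment itself claims one boundary safe on each side. With the uniform right-greedy, the rightmost orphan in the gap immediately preceding the leftover segment claims $f_{i_a-1}$, which the leftover segment also needs — a clash. A per-gap choice of direction (left-greedy in that gap) fixes this, but the justification you give does not establish it, and the statement ``exceeds the number of orphan paths by one'' is simply false for those two gaps. There are also small indexing slips ($P_{i_a}$'s neighbors are $f_{i_a-1}, f_{i_a}$, not $f_{i_a}, f_{i_a+1}$; the right-adjacent safe of $P_k$ is $f_k$, not $f_{k+1}$). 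A secondary concern, which the paper sidesteps entirely by using only two flower triangles, is that the number of leftover flowery safes in your construction grows linearly with $a$, and you would need to argue that this many mutually disjoint flower triangles, all also disjoint from the consecutive triples, can be chosen — this is plausible given the definition of flowery but is not immediate and needs more care than ``disjoint by the definition of flowery.''

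In short: the arithmetic observation is a genuine improvement in rigor over the paper's terse proof, but the structural argument is a strictly harder path to the same contradiction, and as written it has an unresolved conflict at the leftover segment. Adopting the paper's two-arc split (after your $a\equiv 1$ count, the split applied to \emph{any} two consecutive $\equiv 1$-paths already contradicts triangle-maximality) would give a clean, complete proof.
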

\begin{proof}
  Suppose that there exist two paths $P_{r'}$ and $P_{r''}$ for $r' < r'' \in [r]$ such that both paths have order $1 \bmod 3$.
  By Lemma~\ref{lemma:2mod3:not-divisible}, we may assume w.l.o.g.\ that for every $r''' \in [r', r'' - 1]$, the path $P_{r'''}$ has order $2 \bmod 3$.
  Observe that $f_{r'-1}$ and $f_{r''}$ give a partition into two paths of order divisible by 3:
  \begin{itemize}
    \item The path containing $P_{r'}$ and $P_{r''}$ includes (i) two paths with lengths of $1 \bmod 3$, (ii) $r'' - r' - 1$ paths with lengths of $2 \bmod 3$, and (iii) $r'' - r'$ safe vertices, resulting in a total order divisible by 3, specifically $2 + 2(r'' - r' - 1) + (r'' - r') \equiv 0 \bmod 3$.
    \item The other path also has order $0 \bmod 3$ as well because $C$ has $3p + 2$ vertices.
  \end{itemize}
  This implies that both paths can be covered by disjoint triangles, which contradicts the triangle maximality.
\end{proof}

Given Lemma~\ref{lemma:2mod3:path-lengths}, we may assume w.l.o.g.\ that $|V(P_1)| \bmod 3 = 1$, and that $|V(P_{r'})| \bmod 3 = 2$ for $2 \le r' \le r$.
Under this assumption, we have the following:

\begin{lemma} \label{lemma:2mod3:flowery-index}
  If $v_{j}$ is flowery, then $j \bmod 3 = 2$.
\end{lemma}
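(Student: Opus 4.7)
The plan is to derive this as a direct corollary of Lemma~\ref{lemma:2mod3:path-lengths} by tracking positions along $C$. Since every flowery vertex is safe, it suffices to prove the stronger claim that every safe vertex $f_i$ lies at an index $\equiv 2 \pmod 3$ in $C = (v_1, \ldots, v_{3p+2})$.

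With the WLOG convention that $v_1$ is not safe (so $v_1$ is the first vertex of the path $P_1$) and that $v_{3p+2}$ is safe (so $v_{3p+2} = f_r$), walking along $C$ from $v_1$ places $f_i$ at position $j_i = \sum_{i'=1}^{i} n_{i'} + i$, where $n_{i'} = |V(P_{i'})|$. By Lemma~\ref{lemma:2mod3:path-lengths}, $n_1 \equiv 1 \pmod 3$ and $n_{i'} \equiv 2 \pmod 3$ for $i' \geq 2$, whence $j_i \equiv 1 + 2(i-1) + i = 3i - 1 \equiv 2 \pmod 3$ for every $i \in [r]$. As a sanity check, $j_r = (3p + 2 - r) + r = 3p + 2$, which is indeed $\equiv 2 \pmod 3$ and matches our assumption that $v_{3p+2} = f_r$.

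I do not foresee any real obstacle here: the lemma is essentially a modular-arithmetic unpacking of Lemma~\ref{lemma:2mod3:path-lengths}. The only care required is to keep the indexing convention straight (the WLOG placement of $P_1$ starting at $v_1$ is what anchors the residues) and to note that ``safe'' already encompasses both flowery and non-intersecting vertices, so the positional information derived for safe vertices automatically transfers to flowery ones.
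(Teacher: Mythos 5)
Your proposal is correct and takes essentially the same approach as the paper: both reduce to the modular arithmetic consequence of Lemma~\ref{lemma:2mod3:path-lengths}, computing the index of a safe vertex as an accumulated sum of path lengths plus the number of safe vertices encountered so far. Your version is in fact slightly more explicit and careful, writing out the position $j_i$ of a general $f_i$ rather than (as the paper does, somewhat tersely) illustrating with $f_r$ and leaving the shift by multiples of $3$ implicit.
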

\begin{proof}
  Assuming that $v_j = f_r$, we have
  \begin{align*}
    j \equiv (|V(P_1)| + 1) + \sum_{r' = 2}^r (|V(P_{r'})| + 1) \equiv 2
  \end{align*}
  because $|V(P_{r'})| + 1$ is divisible by 3 for each $r' \in [2, r]$.
\end{proof}


\newcommand{\at}{\mathsf{A}}
\newcommand{\tria}{\mathsf{T}}
\newcommand{\scycle}{\mathsf{S}}

Suppose that $P_1$ is on $3 q_1 + 1$ vertices and that $P_{r'}$ is on $3q_{r'} + 2$ vertices for each $r' \in [2, r]$.
To facilitate the manipulation of indices, we introduce a notation $\at(r', t)$, defined as the integer $j$ such that $v_j$ is the $t$-th vertex of $P_{r'}$.
Additionally, for a strong cycle $S \in \cS$, let $\tria(S, j)$ denote the triangle in $S$ that intersects $V_{j}, V_{j + 1}$, and $V_{j + 2}$.

Now we are ready to give a statement similar to Lemma~\ref{lemma:1mod3:ultimate} with specific pairing conditions.
Specifically, for a pair of vertices in a strong cycle $S \in \cS$, we require that both must be part of $F$, or both must be outside $F$.
At its core, we construct the pairing as follows: for any safe vertex, the two vertices immediately following it, as well as the two vertices preceding it are paired (see Figure~\ref{fig:2mod3:pairing} for an example).
This aims to ensure the coverage of these two vertices, possibly in conjunction with the safe vertex.
Yet, the situation is still complex.
Notably, in certain scenarios (case 4 in the proof), the packing involves a triangle with a safe vertex alongside its preceding vertex and following vertex.
As these two vertices are not paired, this triangle is not necessarily available, especially when these two vertices are part of different strong cycles that intersect at the safe vertex.

\begin{figure}
  \centering
  \begin{tikzpicture}
    \begin{scope}[xscale=0.9, yscale=0.72]
    \foreach \i in {0,1,2,...,16} {
      \pgfmathtruncatemacro{\nxt}{\i + 1}
      \node[vertex,label={[label distance=2.5mm]above:$V_{\nxt}$}] (v\i) at (\i, 0) {};
    }
    \foreach \i in {0,1,2,...,15} {
      \pgfmathtruncatemacro{\nxt}{\i + 1}
      \draw (v\i) -- (v\nxt);
    }

    \draw (-0.5, 0) -- (v0);
    \draw[dotted] (-0.8, 0) -- (v0);
    \draw (16.5, 0) -- (v10);
    \draw[dotted] (16.8, 0) -- (v10);

    \draw [decorate,decoration={brace,amplitude=1.5mm,mirror,raise=4mm}] (-0.1,0) -- (1.1,0);
    \draw [decorate,decoration={brace,amplitude=1.5mm,mirror,raise=4mm}] (1.9,0) -- (3.1,0);
    \draw [decorate,decoration={brace,amplitude=1.5mm,mirror,raise=4mm}] (4.9,0) -- (6.1,0);
    \draw [decorate,decoration={brace,amplitude=1.5mm,mirror,raise=4mm}] (7.9,0) -- (9.1,0);
    \draw [decorate,decoration={brace,amplitude=1.5mm,mirror,raise=4mm}] (10.9,0) -- (12.1,0);
    \draw [decorate,decoration={brace,amplitude=1.5mm,mirror,raise=4mm}] (13.9,0) -- (15.1,0);

    \end{scope}

    \flowervertex{7 * 0.9}{0}
    \flowervertex{16 * 0.9}{0}
  \end{tikzpicture}
  \caption{An illustration of the pairing constraints in Lemma~\ref{lemma:2mod3:ultimate} on a cycle of length 17.}
  \label{fig:2mod3:pairing}
\end{figure}
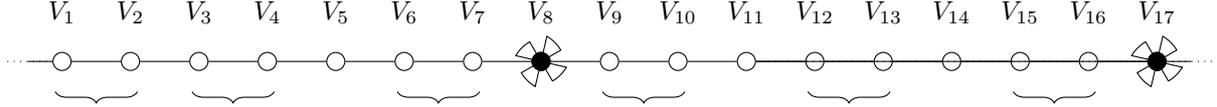

\begin{lemma} \label{lemma:2mod3:ultimate}
  Let $X \subseteq V^i$ be a set such that $X \cap V_j = \{ x_j \}$ for each intersecting block $V_j$ and $X \cap V_j = \emptyset$ for each non-intersecting block $V_j$.
  For $r' \in [r]$ and $t \in \N$,
  define $\scycle(r', t)$ as the strong cycle in $\cS$ covering $x_{\at(r', t)} \in X$.
  Suppose that the following hold:
  \begin{itemize}
    \item If  $|V(P_1)| \ne 1$, then $\scycle(1, 1) = \scycle(1, 2)$.
    \item For each $q \in [q_1]$, $\scycle(1, 3q) = \scycle(1, 3q + 1)$.
    \item For each $r' \in [2, r]$ and $q \in [q_{r'} + 1]$, $\scycle(r', 3q - 2) = \scycle(r', 3q - 1)$.
  \end{itemize}
  Under these conditions on $X$, it is possible to cover $X$ with a packing of strong cycles within $V^i$.
\end{lemma}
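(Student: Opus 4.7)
The plan is to exhibit an explicit packing $\mathcal{C}^*$ of strong cycles (mostly triangles) covering $X$, by processing each path $P_{r'}$ together with its incident safe vertices separately, and then verifying global vertex-disjointness. The construction is driven by the pairing conditions: whenever two positions $3q-2, 3q-1$ (or $(1,2)$ and the $(3q, 3q+1)$ pairs inside $P_1$) are forced to share a strong cycle $S_q$, I can build a triangle inside $S_q$ covering both of them.

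First I would spell out the path triangles. For $r' \ge 2$ and each $q \in [q_{r'}+1]$, let $S_q := \scycle(r', 3q-2) = \scycle(r', 3q-1)$. Since $S_q$ is a strong cycle passing through all blocks in cyclic order, the three consecutive vertices of $S_q$ in blocks $V_{\at(r',3q-3)}, V_{\at(r',3q-2)}, V_{\at(r',3q-1)}$ form a triangle $T_q^{r'} = \tria(S_q, \at(r',3q-3))$. Because $S_q = \scycle(r',3q-2) = \scycle(r', 3q-1)$, the vertices of $T_q^{r'}$ in the middle two blocks are exactly $x_{\at(r',3q-2)}$ and $x_{\at(r',3q-1)}$, so every paired $X$-vertex is covered. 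For each unpaired position $3q$ in $P_{r'}$ with $q \in [q_{r'}]$, I cover $x_{\at(r',3q)}$ either by absorbing it into a paired triangle when $\scycle(r',3q)$ coincides with $S_q$ or $S_{q+1}$ (in which case I redefine that triangle to be the three consecutive $X$-vertices of the common cycle), or by adding a fresh triangle $\tria(\scycle(r',3q), \at(r',3q-1))$ drawn from $\scycle(r',3q)$. For $P_1$ I apply the same template, adjusted to its pairing pattern $(1,2),(3,4),(6,7),\dots,(3q_1,3q_1+1)$; the initial pair $(1,2)$ uses a triangle rooted at $f_r$'s block via $\scycle(1,1)$, and the remaining pairs follow as above.

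Second I would verify vertex-disjointness block-by-block. Two of my triangles can interfere only if they share a vertex in some block $V_j$. For non-flowery blocks, Lemma~\ref{lemma:2mod3:find-cycles} forces distinct cycles of $\mathcal{S}$ to use distinct vertices, so triangles drawn from different cycles cannot collide there. Within the same cycle, any two of my triangles either share the paired block by design — in which case the pairing condition guarantees agreement and I merge them into one triangle absorbing the unpaired position — or lie in disjoint block windows and therefore do not share a vertex. At flowery blocks, which by Lemma~\ref{lemma:2mod3:flowery-index} coincide with flowery $f_{r'}$-blocks, the flower property (at least 13 otherwise-disjoint triangles through $f_{r'}$) provides enough slack to route the triangles of $P_{r'-1}$ and $P_{r'}$ around each other; if $f_{r'} \in X$ I additionally spend one extra triangle through $f_{r'}$ disjoint from all previously chosen triangles, which exists by a simple counting argument.

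The main obstacle I expect is the clean handling of the subcase $\scycle(r',3q) = \scycle(r',3q+1)$, i.e., when an unpaired position lies on the same cycle as its immediate paired successor: using a separate triangle for $x_{\at(r',3q)}$ then conflicts with the paired triangle $T_{q+1}^{r'}$ in the blocks they share. The resolution is exactly the merging step above — replacing $T_{q+1}^{r'}$ by the triangle $(x_{\at(r',3q)}, x_{\at(r',3q+1)}, x_{\at(r',3q+2)})$ of the common cycle — and one must check that this operation never propagates cascading conflicts into the next pair. The pairing conditions are calibrated so that each potential cascade terminates after one merge, which makes this the most technically delicate part of the argument, especially on the boundary of $P_1$ whose pairing pattern is shifted.
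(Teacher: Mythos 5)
Your proposal attempts a pure triangle packing with local merges, but this misses the central structural difficulty that the paper's proof is built around. Consider the configuration where every boundary position of every path lies on a single common cycle $Z \in \cS$ (i.e., $\scycle(1,1) = \scycle(1,3q_1{+}1) = \scycle(r',1) = \scycle(r',3q_{r'}{+}2) = Z$ for all $r'$). Since $|V(Z)| = 3p+2 \equiv 2 \pmod 3$, there is a parity obstruction: any attempt to cover the intersecting positions of $Z$ using only triangles drawn from $Z$ leaves an irreducible residue. Concretely, a right-shifted packing of each $P_{r'}$ absorbs one flowery block, but then $P_1$'s last position collides with $P_2$'s first triangle; a left-shifted packing of $P_1$ collides with $P_r$'s last triangle; and every attempted re-shift propagates the collision around the cycle because all positions belong to the same cycle $Z$ and, by Lemma~\ref{lemma:2mod3:find-cycles}, $Z$'s vertices at non-flowery blocks are the only available vertices from $Z$. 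Your merging step cannot escape this: replacing two overlapping triangles of $Z$ by a larger object would require a strong cycle, but three or more consecutive vertices of $Z$ form a strong \emph{path}, not a strong cycle, so the merge is simply not an allowed move in the packing. The paper's proof handles precisely this situation as Case~1, where the packing begins with $\cP = \{Z\}$ — the entire long cycle $Z$ is placed in the cover, its boundary vertices are covered for free, and only the interior positions are handled by triangles.

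Your proposal flags ``the clean handling of the subcase $\scycle(r',3q) = \scycle(r',3q+1)$'' and the danger of ``cascading conflicts'' as the technically delicate part, which is exactly right — but you then \emph{assert} without proof that ``each potential cascade terminates after one merge.'' This is false in Case~1, where the cascade goes all the way around $C^i$ and never terminates within a triangle-only scheme. The paper's four-way case split (on whether $\scycle(1,3q_1{+}1) = Z$, whether $\scycle(r',1) = \scycle(r',3q_{r'}{+}2)$, etc.) is what makes cascades terminate: in Cases~2--4, the first position off $Z$ induces a phase shift that breaks the chain, and the pairing conditions ensure disjointness of the two phases; in Case~1 the chain cannot be broken, and the long cycle $Z$ must enter the packing. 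Without this case distinction and the option of including a long strong cycle, the argument does not go through.
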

\begin{proof}
  Consider $x_j \in X$ such that $V_{j}$ is flowery.
  Our goal is to construct a strong cycle packing where each block $V_{j}$ is covered by at most three of its vertices.
  Given that $x_j$ is part of at least 13 triangles intersecting at $x_j$, it follows that we can select a disjoint triangle to cover $x_j$.

  We show that all $x_j$'s within non-safe blocks $V_j$ can be covered with a strong cycle packing as well.
  To that end, we initiate our analysis with respect to $Z = \scycle(1, 1)$, which is the strong cycle in $\cS$ covering~$x_1$.
  Our technical insight is the identification of the following four cases:
  \begin{enumerate}
    \item $\scycle(1, 3q_1 + 1) = Z$ and that $\scycle(r', 1) = \scycle(r', 3q_{r'} + 2) =  Z$ for each $r' \in [2, r]$. 
    \item $\scycle(1, 3q_1 + 1) \ne Z$.
    \item There exists $r' \in [2, r]$ such that $\scycle(r', 1) \ne \scycle(r', 3q_{r'} + 2)$.
    \item The remaining cases.
  \end{enumerate}
  For each case, we will establish a strategy for covering $X$.
  See Figure~\ref{fig:2mod3:ultimate} for an example of each case.

\begin{figure}
  \centering
  \begin{tikzpicture}
  \begin{scope}
    \begin{scope}[xscale=0.7, yscale=0.56]
      \node at (-.8, 2.5) {Case 1.};
      
      \draw[gray, rounded corners=10pt, line width=5mm, opacity=0.5, line cap=round] (-0.8, 0.2) -- (0, 1) -- (6, 1) -- (7, 0) -- (8, 1) -- (9, 1) -- (10, 0) -- (11, 1) -- (15, 1) -- (18, 1) -- (19, 0) -- (19.8, 0.8);
      \draw[gray, rounded corners=10pt, line width=5mm, opacity=0.5, line cap=round] (2, -1) -- (4, -1);
      \draw[gray, rounded corners=10pt, line width=5mm, opacity=0.5, line cap=round] (13, -1) -- (15, -1);
      \draw[gray, rounded corners=10pt, line width=5mm, opacity=0.5, line cap=round] (16, -1) -- (18, -1);

      \foreach \i in {0,1,2,...,6} {
        \node[vertex] (va\i) at (\i, 1) {};
        \node[vertex] (vb\i) at (\i, -1) {};
      }
      \foreach \i in {0,1,2,...,5} {
        \pgfmathtruncatemacro{\nxt}{\i + 1}
        \draw (va\i) -- (va\nxt);
        \draw (vb\i) -- (vb\nxt);
      }

      \foreach \i in {0,1} {
        \node[vertex] (wa\i) at (\i + 8, 1) {};
        \node[vertex] (wb\i) at (\i + 8, -1) {};
      }
      \draw (wa0) -- (wa1);
      \draw (wb0) -- (wb1);

      \foreach \i in {0,1,2,...,7} {
        \node[vertex] (ua\i) at (\i + 11, 1) {};
        \node[vertex] (ub\i) at (\i + 11, -1) {};
      }
      \foreach \i in {0,1,2,...,6} {
        \pgfmathtruncatemacro{\nxt}{\i + 1}
        \draw (ua\i) -- (ua\nxt);
        \draw (ub\i) -- (ub\nxt);
      }

      \node[vertex] (f1) at (7, 0) {};
      \node[vertex] (f2) at (10, 0) {};
      \node[vertex] (f3) at (19, 0) {};

      \draw (va6) -- (f1) -- (vb6);
      \draw (wa0) -- (f1) -- (wb0);
      \draw (ua0) -- (f2) -- (ub0);
      \draw (wa1) -- (f2) -- (wb1);
      \draw (ua7) -- (f3) -- (ub7);

      \draw[dotted] (19.8, 0.8) -- (f3) -- (19.8, -0.8) {};
      \draw (19.5, 0.5) -- (f3) -- (19.5, -0.5) {};
      \draw (-0.5, 0.5) -- (va0);
      \draw[dotted] (-0.8, 0.2) -- (va0);
      \draw (-0.5, -0.5) -- (vb0);
      \draw[dotted] (-0.8, -0.2) -- (vb0);

      \foreach \i in {0,1,4,5,6} {
        \node[bvertex] at (\i, 1) {};
      }
      \foreach \i in {2, 3} {
        \node[bvertex] at (\i, -1) {};
      }
      \foreach \i in {0, 1} {
        \node[bvertex] at (\i + 8, 1) {};
      }
      \foreach \i in {0, 1, 2, 6, 7} {
        \node[bvertex] at (\i + 11, 1) {};
      }
      \foreach \i in {3, 4, 5} {
        \node[bvertex] at (\i + 11, -1) {};
      }
    \end{scope}

    \flowervertex{7 * 0.7}{0}
    \flowervertex{10 * 0.7}{0}
    \flowervertex{19 * 0.7}{0}
  \end{scope}

  \begin{scope}[shift={(0, -3.5)}]
    \begin{scope}[xscale=0.7, yscale=0.56]
      \node at (-.8, 2.5) {Case 2.};
        
      \draw[gray, rounded corners=5pt, line width=5mm, opacity=0.5, line cap=round] (0, 1) -- (2, 1);
      \draw[gray, rounded corners=5pt, line width=5mm, opacity=0.5, line cap=round] (3, 1) -- (5, 1);
      \draw[gray, rounded corners=5pt, line width=5mm, opacity=0.5, line cap=round] (2, -1) -- (4, -1);
      \draw[gray, rounded corners=5pt, line width=5mm, opacity=0.5, line cap=round] (5, -1) -- (6, -1) -- (7, 0);
      \draw[gray, rounded corners=5pt, line width=5mm, opacity=0.5, line cap=round] (8, -1) -- (9, -1) -- (10, 0);
      \draw[gray, rounded corners=5pt, line width=5mm, opacity=0.5, line cap=round] (11, 1) -- (13, 1);
      \draw[gray, rounded corners=5pt, line width=5mm, opacity=0.5, line cap=round] (11, -1) -- (13, -1);
      \draw[gray, rounded corners=5pt, line width=5mm, opacity=0.5, line cap=round] (14, -1) -- (16, -1);
      \draw[gray, rounded corners=5pt, line width=5mm, opacity=0.5, line cap=round] (17, 1) -- (18, 1) -- (19, 0);

      \foreach \i in {0,1,2,...,6} {
        \node[vertex] (va\i) at (\i, 1) {};
        \node[vertex] (vb\i) at (\i, -1) {};
      }
      \foreach \i in {0,1,2,...,5} {
        \pgfmathtruncatemacro{\nxt}{\i + 1}
        \draw (va\i) -- (va\nxt);
        \draw (vb\i) -- (vb\nxt);
      }

      \foreach \i in {0,1} {
        \node[vertex] (wa\i) at (\i + 8, 1) {};
        \node[vertex] (wb\i) at (\i + 8, -1) {};
      }
      \draw (wa0) -- (wa1);
      \draw (wb0) -- (wb1);

      \foreach \i in {0,1,2,...,7} {
        \node[vertex] (ua\i) at (\i + 11, 1) {};
        \node[vertex] (ub\i) at (\i + 11, -1) {};
      }
      \foreach \i in {0,1,2,...,6} {
        \pgfmathtruncatemacro{\nxt}{\i + 1}
        \draw (ua\i) -- (ua\nxt);
        \draw (ub\i) -- (ub\nxt);
      }

      \node[vertex] (f1) at (7, 0) {};
      \node[vertex] (f2) at (10, 0) {};
      \node[vertex] (f3) at (19, 0) {};

      \draw (va6) -- (f1) -- (vb6);
      \draw (wa0) -- (f1) -- (wb0);
      \draw (ua0) -- (f2) -- (ub0);
      \draw (wa1) -- (f2) -- (wb1);
      \draw (ua7) -- (f3) -- (ub7);

      \draw[dotted] (19.8, 0.8) -- (f3) -- (19.8, -0.8) {};
      \draw (19.5, 0.5) -- (f3) -- (19.5, -0.5) {};
      \draw (-0.5, 0.5) -- (va0);
      \draw[dotted] (-0.8, 0.2) -- (va0);
      \draw (-0.5, -0.5) -- (vb0);
      \draw[dotted] (-0.8, -0.2) -- (vb0);

      \foreach \i in {0,1,2,3} {
        \node[bvertex] at (\i, 1) {};
      }
      \foreach \i in {4, 5, 6} {
        \node[bvertex] at (\i, -1) {};
      }
      \foreach \i in {0, 1} {
        \node[bvertex] at (\i + 8, -1) {};
      }
      \foreach \i in {0, 1, 6, 7} {
        \node[bvertex] at (\i + 11, 1) {};
      }
      \foreach \i in {2, 3, 4, 5} {
        \node[bvertex] at (\i + 11, -1) {};
      }
    \end{scope}

    \flowervertex{7 * 0.7}{0}
    \flowervertex{10 * 0.7}{0}
    \flowervertex{19 * 0.7}{0}
  \end{scope}

  \begin{scope}[shift={(0, -7)}]
    \begin{scope}[xscale=0.7, yscale=0.56]
      \node at (-.8, 2.5) {Case 3.};
        
      \draw[gray, rounded corners=10pt, line width=5mm, opacity=0.5, line cap=round] (-0.8, 0.2) -- (0, 1) -- (1, 1);
      \draw[gray, rounded corners=10pt, line width=5mm, opacity=0.5, line cap=round] (19, 0) -- (19.8, 0.8);
      \draw[gray, rounded corners=10pt, line width=5mm, opacity=0.5, line cap=round] (2, 1) -- (4, 1);
      \draw[gray, rounded corners=10pt, line width=5mm, opacity=0.5, line cap=round] (5, 1) -- (6, 1) -- (7, 0);
      \draw[gray, rounded corners=10pt, line width=5mm, opacity=0.5, line cap=round] (8, -1) -- (9, -1) -- (10, 0);
      \draw[gray, rounded corners=10pt, line width=5mm, opacity=0.5, line cap=round] (11, 1) -- (13, 1);
      \draw[gray, rounded corners=10pt, line width=5mm, opacity=0.5, line cap=round] (11, -1) -- (13, -1);
      \draw[gray, rounded corners=10pt, line width=5mm, opacity=0.5, line cap=round] (14, 1) -- (16, 1);
      \draw[gray, rounded corners=10pt, line width=5mm, opacity=0.5, line cap=round] (16, -1) -- (18, -1);

      \foreach \i in {0,1,2,...,6} {
        \node[vertex] (va\i) at (\i, 1) {};
        \node[vertex] (vb\i) at (\i, -1) {};
      }
      \foreach \i in {0,1,2,...,5} {
        \pgfmathtruncatemacro{\nxt}{\i + 1}
        \draw (va\i) -- (va\nxt);
        \draw (vb\i) -- (vb\nxt);
      }

      \foreach \i in {0,1} {
        \node[vertex] (wa\i) at (\i + 8, 1) {};
        \node[vertex] (wb\i) at (\i + 8, -1) {};
      }
      \draw (wa0) -- (wa1);
      \draw (wb0) -- (wb1);

      \foreach \i in {0,1,2,...,7} {
        \node[vertex] (ua\i) at (\i + 11, 1) {};
        \node[vertex] (ub\i) at (\i + 11, -1) {};
      }
      \foreach \i in {0,1,2,...,6} {
        \pgfmathtruncatemacro{\nxt}{\i + 1}
        \draw (ua\i) -- (ua\nxt);
        \draw (ub\i) -- (ub\nxt);
      }

      \node[vertex] (f1) at (7, 0) {};
      \node[vertex] (f2) at (10, 0) {};
      \node[vertex] (f3) at (19, 0) {};

      \draw (va6) -- (f1) -- (vb6);
      \draw (wa0) -- (f1) -- (wb0);
      \draw (ua0) -- (f2) -- (ub0);
      \draw (wa1) -- (f2) -- (wb1);
      \draw (ua7) -- (f3) -- (ub7);

      \draw[dotted] (19.8, 0.8) -- (f3) -- (19.8, -0.8) {};
      \draw (19.5, 0.5) -- (f3) -- (19.5, -0.5) {};
      \draw (-0.5, 0.5) -- (va0);
      \draw[dotted] (-0.8, 0.2) -- (va0);
      \draw (-0.5, -0.5) -- (vb0);
      \draw[dotted] (-0.8, -0.2) -- (vb0);

      \foreach \i in {0,1,2,3,4,5,6} {
        \node[bvertex] at (\i, 1) {};
      }
      \foreach \i in {0, 1} {
        \node[bvertex] at (\i + 8, -1) {};
      }
      \foreach \i in {0, 1, 3, 4} {
        \node[bvertex] at (\i + 11, 1) {};
      }
      \foreach \i in {2, 5, 6, 7} {
        \node[bvertex] at (\i + 11, -1) {};
      }
    \end{scope}

    \flowervertex{7 * 0.7}{0}
    \flowervertex{10 * 0.7}{0}
    \flowervertex{19 * 0.7}{0}
  \end{scope}

  \begin{scope}[shift={(0, -10.5)}]
    \begin{scope}[xscale=0.7, yscale=0.56]
      \node at (-.8, 2.5) {Case 4.};
        
      \draw[gray, rounded corners=10pt, line width=5mm, opacity=0.5, line cap=round] (0, 1) -- (2, 1);
      \draw[gray, rounded corners=10pt, line width=5mm, opacity=0.5, line cap=round] (3, 1) -- (5, 1);
      \draw[gray, rounded corners=10pt, line width=5mm, opacity=0.5, line cap=round] (6, 1) -- (7, 0) -- (8, 1);
      \draw[gray, rounded corners=10pt, line width=5mm, opacity=0.5, line cap=round] (9, 1) -- (10, 0) -- (11, 1);
      \draw[gray, rounded corners=10pt, line width=5mm, opacity=0.5, line cap=round] (11, -1) -- (13, -1);
      \draw[gray, rounded corners=10pt, line width=5mm, opacity=0.5, line cap=round] (14, 1) -- (16, 1);
      \draw[gray, rounded corners=10pt, line width=5mm, opacity=0.5, line cap=round] (14, -1) -- (16, -1);
      \draw[gray, rounded corners=10pt, line width=5mm, opacity=0.5, line cap=round] (17, -1) -- (18, -1) -- (19, 0);

      \foreach \i in {0,1,2,...,6} {
        \node[vertex] (va\i) at (\i, 1) {};
        \node[vertex] (vb\i) at (\i, -1) {};
      }
      \foreach \i in {0,1,2,...,5} {
        \pgfmathtruncatemacro{\nxt}{\i + 1}
        \draw (va\i) -- (va\nxt);
        \draw (vb\i) -- (vb\nxt);
      }

      \foreach \i in {0,1} {
        \node[vertex] (wa\i) at (\i + 8, 1) {};
        \node[vertex] (wb\i) at (\i + 8, -1) {};
      }
      \draw (wa0) -- (wa1);
      \draw (wb0) -- (wb1);

      \foreach \i in {0,1,2,...,7} {
        \node[vertex] (ua\i) at (\i + 11, 1) {};
        \node[vertex] (ub\i) at (\i + 11, -1) {};
      }
      \foreach \i in {0,1,2,...,6} {
        \pgfmathtruncatemacro{\nxt}{\i + 1}
        \draw (ua\i) -- (ua\nxt);
        \draw (ub\i) -- (ub\nxt);
      }

      \node[vertex] (f1) at (7, 0) {};
      \node[vertex] (f2) at (10, 0) {};
      \node[vertex] (f3) at (19, 0) {};

      \draw (va6) -- (f1) -- (vb6);
      \draw (wa0) -- (f1) -- (wb0);
      \draw (ua0) -- (f2) -- (ub0);
      \draw (wa1) -- (f2) -- (wb1);
      \draw (ua7) -- (f3) -- (ub7);

      \draw[dotted] (19.8, 0.8) -- (f3) -- (19.8, -0.8) {};
      \draw (19.5, 0.5) -- (f3) -- (19.5, -0.5) {};
      \draw (-0.5, 0.5) -- (va0);
      \draw[dotted] (-0.8, 0.2) -- (va0);
      \draw (-0.5, -0.5) -- (vb0);
      \draw[dotted] (-0.8, -0.2) -- (vb0);

      \foreach \i in {0,1,2,3,4,5,6} {
        \node[bvertex] at (\i, 1) {};
      }
      \foreach \i in {0, 1} {
        \node[bvertex] at (\i + 8, 1) {};
      }
      \foreach \i in {3, 4} {
        \node[bvertex] at (\i + 11, 1) {};
      }
      \foreach \i in {0, 1, 2, 5, 6, 7} {
        \node[bvertex] at (\i + 11, -1) {};
      }
    \end{scope}

    \flowervertex{7 * 0.7}{0}
    \flowervertex{10 * 0.7}{0}
    \flowervertex{19 * 0.7}{0}
  \end{scope}
  \end{tikzpicture}
  \vspace{5ex}
  \caption{Examples of strong cycle packings in the proof of Lemma~\ref{lemma:2mod3:ultimate}. There are two strong cycles that intersect two flowery vertices. The vertices in $X$, which satisfy the condition specified in Lemma~\ref{lemma:1mod3:ultimate}, are marked in black. For each of the four cases, a strong cycle packing that covers $X$ is indicated in gray.}
  \label{fig:2mod3:ultimate}
\end{figure}
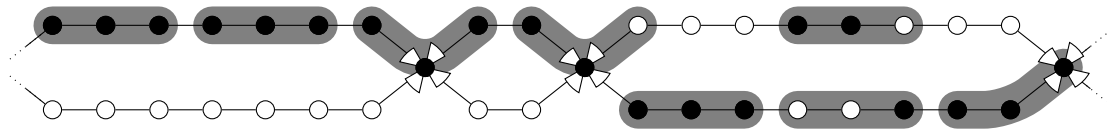

  \paragraph*{Case 1.}
  Suppose that $\scycle(1, 3q_1 + 1) = Z$ and that $\scycle(r', 1) = \scycle(r', 3q_{r'} + 2) =  Z$ for each $r' \in [2, r]$. 
  We construct a strong cycle packing $\cP$ that covers $X$ as follows.
  Initially, let $\cP = \{ Z \}$.
  \begin{itemize}
    \item 
    The part of $X$ corresponding to $P_1$ can be covered as follows.
    First, note that the part corresponding to the first and last two vertices of $P_1$ is covered by $Z$.
    For the remaining part of $P_1$, add to $\cP$ the triangles $\tria(\scycle(1, 3q), 3q) = \tria(\scycle(1, 3q+1), 3q)$ and $\tria(\scycle(1, 3q+2), 3q)$ for $q \in [q_1 - 1]$.
    Since strong cycles in $\cS$ are vertex-disjoint at non-flowery blocks as specified by Lemma~\ref{lemma:2mod3:find-cycles}, any two of these triangles are either identical or entirely vertex-disjoint.
    \item
    Next, we consider the part of $X$ corresponding to $P_{r'}$ for $r' \in [2, r]$.
    By the condition of the lemma, $\scycle(r', 2) = \scycle(r', 3q_{r'} + 1) =  Z$ for each $r' \in [2, r]$.
    Thus, the first two vertices $x_{\at(r',1)}, x_{\at(r', 2)}$ as well as the last two vertices $x_{\at(r', 3q_{r'} + 1)}, x_{\at(r', 3q_{r'} + 2)}$ are covered by $Z$.
    As above, add to $\cP$ the triangles $\tria(\scycle(1, 3q), \at(r', 3q))$ and  $\tria(\scycle(1, 3q+1), \at(r', 3q)) = \tria(\scycle(1, 3q+2), \at(r', 3q))$ for $q \in [q_{r'} - 1]$.
    These cover all but $x_{\at(r', 3q_{r'})}$.
    If it has not been covered, i.e., $\scycle(r', 3q_{r'}) \ne Z$, then add the triangle $\tria(\scycle(r', 3q_{r'}), r', 3q)$ to $\cP$, which is disjoint from $Z$.
  \end{itemize}

  \paragraph{Case 2.}
  Suppose that $\scycle(1, 1) \ne \scycle(1, 3q_{1} + 1)$.
  We consider the smallest integer $q$ such that $\scycle(1, 3q) \ne Z$
  Note that such an integer always exists, as $\scycle(1, 3q_{1} + 1) = \scycle(1, 3q_{1})$ by the condition of the lemma.
  Now we construct a triangle packing $\cP$ that covers $X$.
  We first consider the part of $X$ corresponding to $P_1$:
  \begin{itemize}
    \item For each $q' \in [q]$, include a triangle $\tria(\scycle(1, 3q'-2), 1, 3q'-2)$.
    Note that these triangles cover $x_{1}, x_{2}$, and $x_{3q'}$ and $x_{3q'+1}$ for each $q' \in [q-1]$.
    \item For each $q' \in [q, q_{1}]$, include a triangle $\tria(\scycle(1, 3q'), 3q')$.
    Although both $\tria(\scycle(3q-2), \at(1, 3q-2))$ and $\tria(\scycle(1, 3q), \at(1, 3q))$ intersect the block $V_{3q}$, they are disjoint.
    This is because $\scycle(1, 3q-2) = \scycle(1, 3q-3) = Z$, where the first equality follows from the condition of the lemma and the second from the definition of $q$, and $\scycle(1, 3q) \ne Z$.
    Therefore, $\scycle(1,3q-2)$ and $\scycle(1,3q)$ are distinct.
    These triangles cover $x_{3q'}$ and $x_{3q'+1}$ for all $q' \in [q, q_{1}]$.
    \item 
    If $x_{\at(r',3q'+2)}$ for $q' \in [q_1 - 1]$ is not covered yet, then add a triangle $\tria(\scycle(1, 3q'+2), 3q')$.
  \end{itemize}
  The part of $X$ corresponding to $P_{r'}$ for $r' \in [2, r]$ can be covered with the following triangles:
  \begin{itemize}
    \item For each $q' \in [q_{r'}]$, one or two triangles $\tria(\scycle(r', 3q' - 2), \at(r', 3q' - 2)) = \tria(\scycle(r', 3q' - 1), \at(r', 3q' - 2))$ and $\tria(\scycle(r', 3q'), \at(r', 3q' - 2))$.
    \item A triangle $\tria(\scycle(r', 3q_{r'} + 2), \at(r', 3q_{r'}))$ that includes a vertex from a flowery block.
  \end{itemize}

  \paragraph{Case 3.}
  Suppose that $\scycle(r', 1) \ne \scycle(r', 3q_{r'} + 2)$ for $2 \le r' \le r$.
  We consider the smallest integer $q$ such that $\scycle(r', 3q + 1) \ne \scycle(r', 1)$.
  Note that such an integer always exists, as $\scycle(r', 3q_{r'} + 2) = \scycle(r', 3q_{r'} + 1)$ by the condition of the lemma.
  The part of $X$ corresponding to $P_{r'}$ can be covered by the following triangles:
  \begin{itemize}
    \item For each $q' \in [q]$, include a triangle $\tria(\scycle(r', 1), \at(r', 3q' - 2))$, which covers $x_{\at(r', 3q' - 2)}$ and $x_{\at(r', 3q' - 1)}$.
    \item For each $q' \in [q, q_{r'}]$, include a triangle $\tria(\scycle(r', 3q' + 1), \at(r', 3q')) = \tria(\scycle(r',3q' + 2, 3q'))$, which covers $x_{\at(r', 3q' + 1)}$ and $x_{\at(r', 3q' + 2)}$.
    Though $\tria(\scycle(r', 1), \scycle(r', 3q - 2))$ and $\tria(\scycle(r', 3q + 1), \at(r, 3q))$ both intersect the block $V_{3q}$, they are disjoint by the definition of $q$.
    \item If $x_{\at(r', 3q')}$ for $q' \in [q_{r'}]$ is not covered, then add a triangle at $\tria(\scycle(r', 3q'), \at(r', 3q'-2))$.
  \end{itemize}
  The part of $X$ corresponding to $P_1$ can be covered with two adjacent flowery vertices:
  \begin{itemize}
    \item $\tria(\scycle(1, 1), 3p+2)$ and $\tria(\scycle(1, 3q_1), 3q_1)$ that include flowery vertices.
    \item For each $q' \in [q_1]$, $\tria(\scycle(3q', 1), 3q')$ and $\tria(\scycle(3q'+2, 1), 3q')$.
  \end{itemize}
  For other paths $P_{r''}$, we include the following triangles together with one flowery vertex:
  \begin{itemize}
    \item If $r'' \in [2, r' - 1]$, then $\tria(\scycle(r'', 3q'-2), \at(r'', 3q'-2))$ and $\tria(\scycle(r'', 3q'), \at(r'', 3q'-2))$ for each $q' \in [q_{r''}]$, as well as $\tria(\scycle(r'', 3q_{r'}+1), \at(r'', 3q_{r'}+1))$.
    \item If $r'' \in [r' + 1, r]$, then $\tria(\scycle(r'', 3q'), \at(r'', 3q'+1))$ and $\tria(\scycle(r'', 3q'+1), \at(r'', 3q'))$ for each $q' \in [q_{r''}]$, as well as $\tria(\scycle(r'', 1), \at(r'', 1) - 1)$.
  \end{itemize}

  \paragraph{Case 4.}
  Suppose that $\scycle(r', 1) = \scycle(r', 3q_{r'} + 2)$ for each $r' \in [2, r]$.
  Define $r' \in [r]$ as the smallest integer such that $\scycle(r', 1) \ne Z$, and let $\at(r', 1) = 3q$.
  For each $q' \in [q]$, we include a triangle $\tria(Z, 3q'-2)$.
  When the triangle intersects a flowery block, i.e., $V_{3q'-1}$ is a flowery block between $P_{r''-1}$ and $P_{r''}$ by Lemma~\ref{lemma:2mod3:flowery-index}, and the triangle covers $x_{3q'-2}$ and $x_{3q'}$. 
  Thus, these triangles cover the elements of $X$ that corresponds to the first and last vertices of each path $P_{r''}$ for $r'' \in [2, r']$.
  The inner vertices can be covered with the following triangles:
  \begin{itemize}
    \item 
    For $r'' = 1$, triangles $\tria(\scycle(1, 3q'), 3q'-2)$, $\tria(\scycle(1, 3q'+1), 3q'-2)$, and $\tria(\scycle(1, 3q'+2), 3q'-2)$ for $q' \in [q_{1}-1]$.
    \item
    For $r'' \in [2, r']$, triangles $\tria(\scycle(r'', 3q'-1), 3q'-1)$, $\tria(\scycle(r'', 3q'), 3q'-1)$, and $\tria(\scycle(r'', 3q'+1), 3q'-1)$ for $q' \in [q_{r''}]$.
  \end{itemize}
  For the path $P_{r'}$, the triangles $\tria(\scycle(r', 3q'-2), \at(r', 3q'-2)) = \tria(\scycle(r', 3q'-1), \at(r', 3q'-2))$ and $\tria(\scycle(r', 3q'), \at(r', 3q'-2))$ for each $q' \in [q_{r'}]$ cover the corresponding elements in $X$.
  While the block $V_{\at(r', 1)}$ is intersected by $\tria(\scycle(r', 1), \at(r', 3q'-2))$ as well as $\tria(\scycle(r'-1, 3q_{r'-1} + 2), 3q_{r'-1}+2)$, they are disjoint as $\scycle(r', 1) \ne Z$.
  The remaining part of $X$ can be covered using an argument in the second part of case 2.
\end{proof}

\subsubsection{Algorithm overview} \label{ssec:tcam:overview}

In the introductory part of Section~\ref{ssec:tcamfpt}, we showed that the objective is to find
a triangle-maximal strong cycle packing $\cC = \{ C^1, \cdots, C^{\kappa} \}$, where $C^i = (v^i_{1}, \dots, v^i_{l_i})$, covering a set $F$ of size~$2k$ that is feasible in the dual matching delta-matroid $D$.
Via a color-coding argument, we guess a partitioning $V= V^1 \cup \cdots \cup V^{\kappa}$ where each subset $V^i$ is further divided into $V_{1}^i \cup \cdots V_{l_i}^i$.
Then $v_j^i \in V_{j}^i$ holds for all $i \in [\kappa]$ and $j \in [l_i]$ with probability $1/k^{O(k)}$.
Our algorithm gives a reduction to the \textsc{Colorful Delta-matroid Matching} problem as follows.

\begin{itemize}
  \item For $i \in [\kappa]$ with $l_i \bmod 3 = 0$, we may assume that $l_i = 3$ by Lemma~\ref{lemma:tri-maximal}, so this case aligns with \textsc{Delta-matroid Triangle Cover}.
  \item Consider $i \in [\kappa]$ with $l_i \bmod 3 = 1$.
  By Lemma~\ref{lemma:tri-maximal} and Lemma~\ref{lemma:1mod3:non-flowery}, we may assume that $v_j^i$ is part of $F$, and that $v_j^i$ is non-flowery.
  After deleting all flowery vertices from $V^i$, we find a set $S \subseteq V^i$ according to Lemma~\ref{lemma:1mod3:ultimate}.
  We adjust the delta-matroid $D$ via $1$-contraction by $S \cap V_j^i$ for each $j \in [l_i]$.
  \item
  Consider $i \in [\kappa]$ with $l_i \bmod 3 = 2$.
  First, we guess whether each $v^i_j$ is flowery or not and part of $F$ or not.
  Flowery (non-flowery) vertices are then from a flowery (non-flowery) block $V_j$.
  In the absence of flowery vertices in $C^i$, we find a set $S \subseteq V^i$ following Lemma~\ref{lemma:2mod3:no-flowery}.
  For each $j \in [l_i]$, apply 1-contraction by $S \cap V_j^i$ if $v^i_j$ is part of $F$, and delete $S \cap V_j^i$ from $D$ otherwise.
  If $C^i$ contains a flowery vertex, we may assume the configuration aligned with Lemma~\ref{lemma:2mod3:path-lengths}, where $|V(P_1)| \mod 3 = 1$ and $|V(P_{r'})| \mod 3 = 2$.
  As per Lemma~\ref{lemma:2mod3:find-cycles}, let $\cS$ be a collection of strong cycles $\cS$ which are disjoint at non-flowery blocks.
  We start off by applying 1-contractions on flowery blocks $V^i_j$, where $v_j$ is part of $F$.
  We then encode pairing constraints as given in Lemma~\ref{lemma:2mod3:ultimate} by coloring the pairs.
  Specifically, for $P_r$, which has $3q_{r'} + 2$ (or $3q_1 +1$ for $r' = 1$) vertices, we introduce $q_{r'} + 1$ colors $c^i_{r',1}, \dots, c^i_{r',q_{r'}+1}$.
  In the context of $P_1$, for each strong cycle in $S \in \cS$, and we establish a colored pairing between the elements of $V_1 \cap V(S)$ and $V_2 \cap V(S)$ using color $c^i_{1,1}$.
  Similarly, for each $q \in [q_1]$, a pair is created between the elements of $V_{3q} \cap V(S)$ and $V_{3q+1} \cap V(S)$, marked in the $c^i_{1,q+1}$.
  For each path $P_{r'}$ with $r' \in [2, r]$, and for each $q \in [q_r+1]$, we have a pairing between the elements in $V_{3q-2} \cap V(S)$ and $V_{3q-1} \cap V(S)$ with the color $c^i_{r',q}$ for each strong cycle in $S \in \cS$.
  Finally, for each block $V_j$ within the path $P_{r'}$ for $r' \in [r]$ that is not involved in any pairing, we apply 1-contraction by $V(\cS) \cap V_j$. 
\end{itemize}
This concludes the description of our algorithm.
Now we proceed to validate the correctness.

Assume that there is  a strong packing covering feasible set $F$ of size $2k$ covered.
We verify that the algorithm affirm the existence of such a packing with probability $1/2^{O(k)}$.
For a strong cycle $C^i$ of length $1 \bmod 3$, the set $S$ of Lemma~\ref{lemma:1mod3:ultimate} covers $C^i$ with probability at least $1/2^{O(l_i)}$.
Similarly,  in cases where $C^i$ has length $2 \bmod 3$ and lacks flowery vertices, the outcome aligns with the previous scenario by Lemma~\ref{lemma:2mod3:no-flowery}.
For a strong cycle $C^i$ of length $2 \bmod 3$, the collection of strong cycles $\cS$ contains $C^i$ with probability $1/2^{O(l_i)}$ by Lemma~\ref{lemma:2mod3:find-cycles}.
If $C^i$ is indeed part of $\cS$, then all the requisite colored pairings are introduced.
Aggregating these probabilities across all cycles, our algorithm terminates in the affirmative with probability $1/2^{O(k)}$.

Conversely, assume that the algorithm concludes in the affirmative, i.e., there is a feasible set $F$ that is in compliance with the colored pairing constraints, as well as 1-contraction conditions.
First, consider a strong cycle $C^i$ of length $1 \bmod 3$, where 1-contraction by the intersection $S \cap V_j$ is applied for each block $V_j$.
This ensures that $F \cap V^i$ comprises exactly one vertex from $S \cap V^i_j$.
By Lemma~\ref{lemma:1mod3:ultimate} (ii), the intersection $S \cap V^i_j$ indeed can be covered by a strong cycle packing within $V^i$.
The situation is similar for a strong cycle $C^i$ of length $2 \bmod 3$ without flowery vertices, where the coverage  of $F \cap V^i$ with a strong cycle packing is guaranteed Lemma~\ref{lemma:2mod3:no-flowery} (ii).
Finally, consider a strong cycle $C^i$ of length $2 \bmod 3$ with flowery vertices.
As our algorithm introduces pairing constraints in alignment with the condition of Lemma~\ref{lemma:2mod3:ultimate}, there exists a strong cycle packing that covers $F \cap V^i$.

Consequently, we have established the algorithm's correctness in the presence of a perfect matching.
Finally, we address the case where $G$ lacks a perfect matching, repeating the approach used in the proof of Lemma~\ref{lemma:lifting}.
Assume that $E'$ is a strong edge set of size $\ell$.
Initially, fix a maximum matching $M$ of $G$, and define $U = V(G) \setminus V(M)$ as the set of unmatched vertices.
For each $u \in U$, we introduce a degree-one neighbor $u'$ adjacent to $u$, and let $U' = \{ u' \mid u \in U \}$.
Call the resulting graph $G'$, which has a perfect matching $M' = M \cup \{ uu' \mid u \in U \}$.
Within $G'$, there exists a strong set $E''$ of size $\ell + \alpha$, where $\alpha \le |U|$ represents the number of vertices in $U$ that is not incident with any edge in $E'$.
Our objective then shifts to finding a strong edge set with at least $\ell' = \ell + \alpha$ edges in $G'$ where $\beta = |U| - \alpha$ vertices of $U'$ are isolated.
Let $D'$ be the dual matching delta-matroid of $G'$.
By Lemma~\ref{lemma:tc-feasible-cover}, the existence of a strong set of size $\ell'$ is equivalent to the existence of a strong cycle packing covering a feasible set $F$ of size  $2(\ell' - |M'|) = 2(k - \beta)$.
For the additional requirement that $\beta$ vertices of $U'$ becomes isolated, note that the isolation of $u' \in U'$ is achieved when it is part of $F$:
By Lemma~\ref{lemma:alt-paths}, a set $F$ is feasible in $D'$ if and only if there are $|F|/2$ $M'$-alternating $F$-paths.
It follows from the proof of Lemma~\ref{lemma:tc-feasible-cover} that the edge $uu'$ is excluded from a strong set when it is part of $|F|/2$ $M$-alternating $F$-paths, i.e., when $u' \in F$. 
Consequently, we apply a $\beta$-contraction by $U'$ on $D'$, calling the resulting delta-matroid $D$.
We then search for a strong cycle packing covering a set of size $2k - 3\beta$ feasible in $D$.
This concludes the proof of Theorem~\ref{theorem:tc-am-fpt}.

\subsection{W[1]-hardness proof}\label{ssec:STC-w1-hard}
While lifting the FPT algorithm for \textsc{Cluster Subgraph} from $K_4$-free graphs to arbitrary graphs was relatively straightforward, this is not the case for \textsc{Strong Triadic Closure}. Indeed, we will show that the interaction in a strong set between $K_4$-free deletion set and the rest of the graph is much more intricate and it is not possible to be handled just by additional color-coding. Even more surprisingly, we show that \textsc{Strong Triadic Closure} is W[1]-hard parameterized by above matching.
\begin{theorem}
  \textsc{Strong Triadic Closure} is W[1]-hard parameterized by above matching.
\end{theorem}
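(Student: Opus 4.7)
My plan is to reduce from \textsc{Multicolored Clique} with $k$ color classes $V_1,\ldots,V_k$, which is W[1]-hard in $k$. Given such an instance $(G_0,V_1,\ldots,V_k)$, I will construct a graph $G$ and an integer $\ell$ in polynomial time such that $G$ has a strong edge set of size at least $\ell$ iff $G_0$ has a multicolored $k$-clique, and with $\ell - \MM(G) = f(k)$ depending only on $k$. Guided by Theorem~\ref{theorem:tc-am-fpt}, which handles the $K_4$-free case, the construction must be centered around $K_4$'s; the key rigidity I will exploit is that the closed $S$-neighborhood of any vertex $v$ must form a clique in $G$, so a vertex of $S$-degree $d$ forces a $(d+1)$-clique of $G$, and so encoding W[1]-hardness naturally goes through constructions in which high-degree hubs select among several candidate cliques.

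The core of the construction is a collection of $k$ ``hub'' vertices $h_1,\ldots,h_k$ together with, for each color $i$ and each $v \in V_i$, a ``candidate'' gadget $Q_{i,v}$ attached to $h_i$ and built around a $K_4$ (or a slightly larger small clique) containing $h_i$. The gadgets $Q_{i,v}$ for different $v \in V_i$ share the hub $h_i$ but are otherwise arranged so that the only cliques through $h_i$ in $G$ are contained in a single $Q_{i,v}$, and so pushing $h_i$ to high $S$-degree forces $S$ to ``select'' exactly one $v \in V_i$. Across colors I add pairwise ``consistency'' gadgets between $h_i$ and $h_j$ that admit an additional $K_4$-based contribution to a strong set only when $v_iv_j \in E(G_0)$ for the selected pair $(v_i,v_j)$. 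The non-hub vertices are padded with a rigid near-perfect matching so that $\MM(G)$ has a known, easily computed value, and $\ell$ is chosen as $\MM(G)$ plus the total bonus available when every hub selects consistently, giving $\ell - \MM(G) \in O(k^2)$.

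The main obstacle is soundness: ruling out strong edge sets of size $\ell$ that do \emph{not} correspond to a multicolored clique. The graph still contains many triangles, $K_4$'s, and strong cycles that, in isolation, could provide above-matching contributions of the kind the FPT algorithm harvests from $K_4$-free regions, and I must argue that no such ad-hoc combination matches the target. To control this I will use a local accounting argument, assigning each edge of a hypothetical strong set $S$ an above-matching credit based on which gadget it lies in, and proving gadget by gadget that the maximum credit is achieved only when each hub is activated via exactly one candidate gadget and the selected tuple is a clique of $G_0$; the discreteness of the hub-clique sizes makes the maximum additive and eliminates fractional cheating. This is also where the paper's intuition that the hardness arises from ``an unexpectedly different aspect'' than strong cycles enters: the binding constraint is not the cycle/triangle structure of $S$ in the $K_4$-free part of $G$, but the forced cliques in the $S$-neighborhoods of high-degree hubs, a phenomenon absent from the $K_4$-free setting where the algorithm succeeds.
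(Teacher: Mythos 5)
Your proposal correctly identifies the high-level source of hardness --- the requirement that the $S$-neighbourhood of a high-degree vertex must be a clique of $G$ --- which matches the paper's remark that hardness stems from an ``unexpectedly different aspect'' than strong cycles. But the construction you sketch is genuinely different from the paper's (it reduces from \textsc{Grid Tiling}, not \textsc{Multicolored Clique}), and the part you acknowledge as ``the main obstacle'' is in fact missing rather than merely deferred.

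The paper does not use isolated hub-centred $K_4$ gadgets with local credit accounting. Instead, its construction is organized around two structural devices that together handle soundness, and these have no analogue in your plan. First, the ``selection'' part $P \cup Q$ is kept $K_4$-free and every vertex of it is given a \emph{pendant} in $M$, so that this part contributes exactly $|P|+|Q|$ edges to any strong set --- no more, no matter how the adversary arranges triangles --- and the choice of whether a vertex is matched to its pendant or used in a triangle with the $Q$-pair \emph{stores} the selection. Second, all the consistency-checking vertices $R \cup C$ form a single clique of size $\Theta(k^2)$. This clique is simultaneously the entire above-matching budget (since $\binom{|R\cup C|}{2}$ minus a half-sized matching is where the parameter comes from) and a \emph{penalty mechanism}: if any $v \in R\cup C$ takes a ``cheating'' edge to $Q$, then because $v$'s $S$-neighbourhood must be a clique and $q \in Q$ has only $O(1)$ neighbours in $R \cup C$, $v$ must drop $\Omega(|R \cup C|)$ of its clique edges, an overwhelming loss. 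This numeric trade-off is what makes the soundness argument close, and it is precisely the kind of global argument that your per-gadget credit scheme does not provide.

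Concretely, two things in your plan are likely to fail without such a device. (i) You do not explain how the consistency gadgets ``read off'' the selections made at $h_i$ and $h_j$: since the graph is fixed but the selection is a property of $S$, the gadget must exist for every candidate pair $(v_i, v_j)$, and you need a mechanism that forbids $S$ from activating a consistency gadget for a pair that was \emph{not} selected by the hubs. The paper achieves this by making the checking vertices adjacent to \emph{all} of $P$ but then proving they can carry at most two edges into $P$ (because $H[P]$ is triangle-free) and zero edges into $Q$ (by the penalty argument); this uniquely pins the checked pair to the selected one. (ii) Your ``rigid near-perfect matching'' padding is not enough by itself to rule out an adversary that ignores the hubs and harvests triangles from unselected candidate gadgets; the paper's pendant trick specifically makes every such triangle exactly \emph{neutral} relative to the matching, and this requires that the candidate region be $K_4$-free, which conflicts with your choice to build the candidate gadgets around $K_4$'s. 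So your plan as stated has a genuine gap: the soundness argument --- the calibrated penalty that prevents cheating --- is absent and is not obviously recoverable by local accounting alone.
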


\begin{proof}
We reduce from \textsc{Grid Tiling}, which is formally defined as follows. Given integers $k,n\in \mathbb{N}$ and a collection $\mathcal{S}$ of $k^2$ nonempty sets $S_{i,j}\subseteq [n]\times [n]$ $ (i,j\in [k])$. The task is to find for each $i,j\in [k]$ a pair $s_{i,j}\in S_{i,j}$ such that 
\begin{itemize}
\item[(C1)] If $s_{i,j} = (a,b)$ and $s_{i+1,j} = (a',b')$, then $a=a'$. (For all $i\in [k-1], j\in [k]$.)
\item[(C2)] If $s_{i,j} = (a,b)$ and $s_{i,j+1} = (a',b')$, then $b=b'$. (For all $i\in [k], j\in [k-1]$.)
\end{itemize} 
\textsc{Grid Tiling} is well known to be W[1]-hard parameterized by $k$~\cite{CyganFKLMPPS15PCbook}. We will construct a graph $H$, where $V(H)$ is partitioned into five parts $P \cup Q \cup M \cup R\cup C$, where $|P|=\sum_{S\in \mathcal{S}}|S|$, $|Q|= 2k^2$, $|M| = |P|+|Q|$ and $|R|= k^2-k$ and $|C|= k^2-k$.  

Let us start with very informal intuition behind the proof. 
The set of vertices $P$ can be thought of as partitioned into $k^2$ sets $P_{i,j}$ $(1\le i,j\le k)$ such that for every $i,j\in [k]$ and every $s\in S_{i,j}$, $P_{i,j}$ contains a vertex $p_s^{i,j}$ representing selection of $s$ from $S_{i,j}$. In addition $H$ contains edges between vertices in $P$ corresponding to Conditions (C1) and (C2) from the definition of Grid Tiling. 
To select $p_s^{i,j}$, $Q$ contains an edge $q^{i,j}_1q^{i,j}_2$ that can form a triangle with a single $p_s^{i,j}$ in a strong set. The set $M$ then contains for every $v\in P\cup Q$ a unique pendant $m(v)$ of $v$ in $H$. $H' = H[P\cup Q \cup M]$ will be $K_4$-free and half of the vertices (that are in $M$) there have degree one. So any maximum strong set in $H'$ has exactly the size of the perfect matching in $H'$, which is $|P|+|Q|$, regardless whether we used $p_s^{i,j}q^{i,j}_1q^{i,j}_2$ triangles or just match vertices in $P\cup Q$ to their matching partners in $M$. However, only one vertex in each $P_{i,j}$ can form a triangle with $q^{i,j}_1q^{i,j}_2$ in the strong set. All the remaining vertices $p\in P$ have to be matched with their matching partner $m(p)$.

Finally $R\cup C$ will be a clique on $2k^2-2k$ vertices, so the excess above the perfect matching in a maximum size strong set will be obtained from this large clique and edges between $R\cup C$ and $P\cup Q$. The vertices in $R$ will serve to check the condition (C1). That is a vertex $r\in R$ will be only adjacent to all vertices in $P$, but only to two edges from $Q$ that represent $S_{i,j}$ and $S_{i+1,j}$. This will force that in a strong set, $r$ can have only two edges to $P\cup Q$, one to $p_1\in P_{i,j}$ and one to $p_2\in P_{i+1,j}$ and $p_1,p_2$ have to also be adjacent. Analogously, each $c\in C$ will serve to check the condition (C2) for some $i,j\in [k]$.

\paragraph*{Construction.}
Let us now proceed with more precise formal construction. 

We start with set $P$.
For every $1\le i,j\le k$, we let $P_{i,j} = \{ p_s^{i,j} \mid s \in S_{i,j} \}$ and we let 
$P = \bigcup_{1\le i,j \le k }P_{i,j}$
that is for every $i,j\in [k]$ and every $s\in S_{i,j}$ we introduce vertex $p_{s}^{i,j}$. In addition $H$ contains an edge between two vertices $p_1,p_2\in P$ if and only if $p_1$ and $p_2$ represent a selection matching one of the conditions (C1) or (C2). Or in other words if and only if
\begin{itemize}
\item either $\{p_1, p_2\} = \{p_s^{i,j}, p_{s'}^{i+1,j}\}$ with $i\in [k-1]$, $j\in [k]$, $s = (a,b)$, $s' = (a',b')$, and $a=a'$; or
\item $\{p_1, p_2\} = \{p_s^{i,j}, p_{s'}^{i,j+1}\}$ with $i\in [k]$, $j\in [k-1]$, $s = (a,b)$, $s' = (a',b')$, and $b=b'$.
\end{itemize} 

Let $Q$ be defined as $Q = \{ q_{1}^{i,j}, q_2^{i,j} \mid i, j \in [k] \}$.
For each $i, j \in [k]$, we introduce the edge $q_1^{i,j}, q_2^{i,j}$.
Furthermore, for each $i, j \in [k]$, and every $p\in P_{i,j}$ we add edges $pq_1^{i,j}$ and $pq_2^{i,j}$, so that $\{ q_{i,j,1}, q_{i,j,2}, p \}$ forms a triangle.

For every vertex $v\in P\cup Q$, $M$ contains the vertex $m(v)$ and $H$ contains the edge between $v$ and $m(v)$.

Let $R = \{ r^{i,j} \mid i \in [k-1], j\in [k] \}$ and $C = \{ c^{i,j} \mid i \in [k], j\in [k-1] \}$. 
We add edges so that $R\cup C$ forms a clique.
We also add an edge between each $p \in P$ and each $v \in R\cup C$. 
For each $i, j \in [k]$, if $i< k$ we add edges from $r^{i,j}$ to $q_1^{i,j}$, $q_2^{i,j}$, $q_1^{i+1,j}$, $q_2^{i+1,j}$, and if $j<k$ we also add edges from $c^{i,j}$ to $q_1^{i,j}$, $q_2^{i,j}$, $q_1^{i,j+1}$, $q_2^{i,j+1}$.

Finally, we let $\ell = |P|+ |Q| + 2|R\cup C| + \binom{|R\cup C|}{2}$.

\paragraph{Correctness.} First note that $H$ has a perfect matching (matching $v\in P\cup Q$ to $m(v)$ and matching $r^{i,j}\in R$ with $c^{j,i}\in C$), which has size $\mu = |P| + |Q| + \frac{|R\cup C|}{2}$ and $\ell - \mu = 3(k^2-k) + \binom{2k^2-2k}{2}$, so the above matching parameter is indeed bounded by a function of $k$. Hence it only remains to show that the two instances are equivalent. 

$(\Rightarrow)$ Let $S = \{s_{i,j}\mid i,j\in [k], s_{i,j}\in S_{i,j} \}$ be a set of pairs satisfying (C1) and (C2). We construct a strong edge set $S'$ of size $\ell$ as follows: 
\begin{itemize}
\item For every $i,j\in[k]$, include triangle $\{p_{s_{i,j}}^{i,j}, q_1^{i,j}, q_2^{i,j}\}$.
\item For each $p\in P$ not involved in these triangles, include the edge $pm(p)$.
\item Include all $\binom{2k^2-2k}{2}$ edges in the clique on $R\cup C$.
\item For each $i\in [k-1]$, $j\in [k]$, add edges $r^{i,j}p_{s_{i,j}}^{i,j}$ and $r^{i,j}p_{s_{i+1,j}}^{i+1,j}$.
\item For each $i\in [k]$, $j\in [k-1]$, add edges $c^{i,j}p_{s_{i,j}}^{i,j}$ and $c^{i,j}p_{s_{i,j+1}}^{i,j+1}$.
\end{itemize}
It is easy to verify that precisely $\ell = |P|+ |Q| + 2|R\cup C| + \binom{|R\cup C|}{2}$ many edges has been selected to $S'$. We only need to check that $S'$ forms a strong set or in other words that it satisfies the triadic closure property. First observe that the edges $pm(p)$ included in $S'$ are isolated, that is, neither $p$ nor $m(p)$ is incident with any other edges of $S'$. Consequently we identify the following types of $P_3$'s in $S$: 
\begin{itemize}
\item 
$(x,p,y)$ for $x,y\in R\cup C$ and $p\in P$. These are closed as $R\cup C$ induces a clique.
\item $(r^{i,j}, p_{s_{i,j}}^{i,j}, q^{i,j}_{\alpha})$ and $(r^{i,j}, p_{s_{i+1,j}}^{i+1,j}, q^{i+1,j}_{\alpha})$ for $\alpha\in \{1,2\}$ are closed because our construction adds edges from $r^{i,j}$ to  $q_{\alpha}^{i,j}$, $q_{\alpha}^{i+1,j}$.
\item 
$(c^{i,j}, p_{s_{i,j}}^{i,j}, q^{i,j}_{\alpha})$ and $(c^{i,j}, p_{s_{i,j+1}}^{i,j+1}, q^{i,j+1}_{\alpha})$ for $\alpha\in \{1,2\}$ are closed because our construction adds edges from $c^{i,j}$ to  $q_{\alpha}^{i,j}$, $q_{\alpha}^{i,j+1}$.
\item  
$(p_{s_{i,j}}^{i,j},r^{i,j}, p_{s_{i+1,j}}^{i+1,j})$ are closed because since $S$ is a solution for \textsc{Grid Tiling}, the pair $s_{i,j}$ and $s_{i+1,j}$ satisfies condition (C1), and so our construction adds the edge $p_{s_{i,j}}^{i,j}p_{s_{i+1,j}}^{i+1,j}$.
\item 
$(p_{s_{i,j}}^{i,j},c^{i,j}, p_{s_{i,j+1}}^{i,j+1})$ are closed because the pair $s_{i,j}$ and $s_{i,j+1}$ satisfies condition (C2), and so our construction adds the edge $p_{s_{i,j}}^{i,j}p_{s_{i,j+1}}^{i,j+1}$.
\item
  $(p, r, r')$ for $p \in P$ and $r, r' \in R$ is closed as there is an edge between each $p \in P$ and $r \in R$.
\end{itemize}
Therefore $S$ is indeed a strong set of size $\ell$.  

$(\Leftarrow)$
Let $S$ be a strong edge set of size $\ell$. We show that for every $i,j\in [k]$, we can select $s_{i,j}\in S_{i,j}$ such that the selection satisfies conditions (C1) and (C2).
 
Let us first observe that solution can contain at most $\binom{|R\cup C|}{2}$ edges from $R\cup C$ and $|P|+|Q|$ many edges from $P\cup Q\cup M$. The former is trivial, as that is the number of edges in $R\cup C$, the latter follows from the fact that $H[P\cup Q\cup M]$ is $K_4$-free and all vertices in $M$ have degree one in $H$. So every vertex $v$ in $P\cup Q$ has at most two of its incident edges from $H[P\cup Q\cup M]$ in $S$, and it can only be two if neither of them is $vm(v)$, so if $x$ many vertices in $v\in P\cup Q$ are matched with their matching partner $m(v)$ in $S$, then the maximum number of edges from $H[P\cup Q\cup M]$ in $S$ is $\frac{2(|P|+|Q|-x)}{2} + x = |P|+|Q|$. 

We will now show that for every vertex $v$ in $R\cup C$, there are at most two edges incident on $v$ and a vertex in $P$ and there is no edge in $S$ incident on $v$ and a vertex in $Q$.  
Note that since $\ell = |P|+ |Q| + \binom{|R\cup C|}{2} + 2|R\cup C| $, this immediately means that $S$ contains precisely  $|P|+ |Q|$ many edges from $P\cup Q\cup M$, $\binom{|R\cup C|}{2}$ edges from $R\cup C$, and $2|R\cup C|$ edges between  $R\cup C$ and $P$. 
First observe that $H[P]$ is bipartite and hence triangle-free. Indeed, for $p\in P_{i,j}$ and $p'\in P_{i',j'}$, we have that $pp'$ is an edge in $H$, then either $i'=i+1$ and $j'=j$ or $i'=i$ and $j'=j+1$. In addition, recall that neighborhood of every vertex in the subgraph of $H$ induced by $S$ has to be a clique in $H$. It follows that every vertex $v\in R\cup C$ has at most two neighbors $p_1, p_2\in P$ such that $vp_1, vp_2\in S$. Since $H[P\cup Q]$ is $K_4$-free, there are at most three edges in $S$ that are incident on $v$ and a vertex in $P\cup Q$.
Let $v\in R\cup C$ and $q\in Q$. Note that $q$ is adjacent to exactly two vertices in $R$ and exactly two vertices in $v$. Therefore, if $qv\in S$, then $S$ contains at most $3$ out of $|R\cup C|-1$ edges with one endpoint being $v$ and the other endpoint in $R\cup C$, or equivalently at least $|R\cup C|-4$ edges incident on $v$ in $R\cup C$ are missing from $S$. Now let $X$ be the set of vertices $v\in R\cup C$ for which there exists $q\in Q$ such that $vq\in S$. We can upper bound the number of edges in $S$ incident on a vertex in $R\cup C$ as follows: 
\begin{itemize}
\item $3|X|$ edges between $X$ and $P\cup Q$.
\item $2(|R\cup C| - |X|)$ edges between $(R\cup C)\setminus X$ and $P$.
\item $\binom{|R\cup C|}{2} - \frac{|X|\cdot (|R\cup C| - 4)}{2}$. 
\end{itemize}
Which sums to $2|R\cup C| + |X| + \binom{|R\cup C|}{2} - \frac{|X|\cdot (|R\cup C| - 4)}{2}$. Since we already upper bounded the number of edges in $S$ that are not incident on a vertex in $R\cup C$ by $|P|+|Q|$, we get from $|S|\ge \ell$ that $|X| - \frac{|X|\cdot (|R\cup C| - 4)}{2} = \frac{|X|\cdot (6-|R\cup C|)}{2}  \ge 0$ and since we can assume that $|R\cup C| = 2k^2- 2k > 6$, we get that $|X| = 0$. Therefore, for every vertex $v$ in $R\cup C$, there are at most two edges incident on $v$ and a vertex in $P$ and there is no edge in $S$ incident on $v$ and a vertex in $Q$.

From above discussion it follows that $S$ contains:
\begin{itemize}
\item $|P|+|Q|$ edges from $H[P\cup Q\cup M]$.
\item All the edges from the clique on $R\cup C$. 
\item For every $v\in R\cup C$ exactly two edges with one endpoint $v$ and the other endpoint in $P$.
\end{itemize} 
First observe that if for $p\in P$, the edge $pm(v)$ is not in $S$, then $S$ contains two edges with $p$ as the endpoint. Moreover for every $p\in P$, there is only one edges in the neighborhood of $p$ restricted to $P\cup Q\cup M$. In particular if $p\in P_{i,j}$, this edge is $q^{i,j}_1q^{i,j}_2$. Since for $p, p'\in P_{i,j}$ there is no edge between $p$ and $p'$, It follows that for every $i,j\in [k]$, there is at most one vertex $p^{i,j}_{s_{i,j}}\in P_{i,j}$ such that $S$ contains edges $p^{i,j}_{s_{i,j}}q_1^{i,j}, p^{i,j}_{s_{i,j}}q_2^{i,j}$, and $q_1^{i,j}q_2^{i,j}$ and for all $p \in  P_{i,j}\setminus \{p^{i,j}_{s_{i,j}}\}$ the strong set $S$ contains $pm(p)$. 

We now show that there is not only at most one, but precisely one such vertex $p^{i,j}_{s_{i,j}}$ and selecting the pair $s_{i,j}$ from $S_{i,j}$ for each $i,j\in [k]$ gives us a solution for the original \textsc{Grid Tiling} instance, which concludes the proof. 

Let us consider $r^{i,j}\in R$ for $i\in [k-1]$ and $j\in [k]$. As we already argued there are two vertices $p_1,p_2\in P$ such that $r^{i,j}p_1, r^{i,j}p_2\in S$. First notice that $m(p_1)$ nor $m(p_2)$ are adjacent to $r^{i,j}$. It follows $p_1m(p_1)$ and $p_2m(p_2)$ are not in $S$ and so for $\alpha\in \{1,2\}$, $p_\alpha$ is the unique vertex $p^{i',j'}_{s_{i',j'}}\in P_{i',j'}$, for some $i',j'\in [k]$, such that $S$ contains edges $p^{i',j'}_{s_{i',j'}}q_1^{i',j'}, p^{i',j'}_{s_{i',j'}}q_2^{i',j'}$(, and $q_1^{i',j'}q_2^{i',j'}$). However $r^{i,j}$ is adjacent only to $q_1^{i,j}, q_2^{i,j}, q_1^{i+1,j}, q_2^{i+1,j}$. Hence $\{p_1, p_2\} = \{p^{i,j}_{s_{i,j}}, p^{i+1,j}_{s_{i+1,j}}\}$. Moreover, $r^{i,j}p_1, r^{i,j}p_2\in S$ implies that $p_1p_2\in E[H]$ and so $s_{i,j}=(a,b)$ and $s_{i+1,j}=(a',b')$ satisfy the condition (C1) that $a=a'$ for all $i\in [k-1]$ and $j\in [k]$.

Analogously, if we consider $c^{i,j}\in C$ for all $i\in [k]$ and $j\in [k-1]$, we can conclude that $S$ contains the edges $c^{i,j}p^{i,j}_{s_{i,j}}$ and $c^{i,j}p^{i,j+1}_{s_{i,j+1}}$, so $p^{i,j}_{s_{i,j}}p^{i,j+1}_{s_{i,j+1}}\in E[H]$ and $s_{i,j}=(a,b)$ and $s_{i,j+1}=(a',b')$ satisfy the condition (C2) that $b=b'$ for all $i\in [k]$ and $j\in [k-1]$.
\end{proof}

\subsection{FPT algorithm for bounded-degree graphs}\label{ssec:STC_bounded_degree}

We show that \textsc{Strong Triadic Closure} can be solved in single-exponential time in $k = \ell - \MM(G)$ for graphs of bounded degree.
This generalizes the result of
Golovach et al.~\cite{GolovachHKLP20}, who proposed an $O^*(2^{O(k)})$-time algorithm  for \textsc{Strong Triadic Closure}, specifically for the graph of maximum degree at most four.
Compared to Golovach et al., our algorithm is simpler by adopting a delta-matroid perspective.

\tcamdelta*
\begin{proof}
  Starting with an approach in the proof of Lemma~\ref{lemma:lifting},
  our algorithm uses a greedy method to find a packing $\mathcal{K}$ of $K_4$'s.
  If $k/2$ disjoint $K_4$'s have been identified, then there is a strong set of size $\ell$: 
  Letting $K = V(\mathcal{K})$ be the set of vertices involved in $\mathcal{K}$,
  there is a matching of size $\MM(G) - |K| = \MM(G) - 4|\mathcal{K}|$ disjoint from $K$.
  Together with $K_4$'s in $\mathcal{K}$, we obtain a strong set of size $(\MM(G) - 4|\mathcal{K}|) + 6|\mathcal{K}| \ge \ell$, provided that $|\mathcal{K}| \ge k/2$. 
  So we will assume that $|\mathcal{K}| < k/2$.

  Further, we infer that a strong set $E'$ of size $\ell$ contains at most $3k$ edges within $V(\mathcal{K})$.
  This allows us to enumerate all possible edges inside $V(\mathcal{K})$ that are part of $E'$ in $\binom{\Delta |K|}{3k} = \Delta^{O(k)}$ time.
  Additionally, we can guess all edges with exactly one endpoint in $V(\mathcal{K})$ in $\Delta^{O(k)}$ time:
  For each vertex $v$, the endpoints of the edges incident with $v$ must form a clique to maintain the triadic closure property, and $G - K$ is $K_4$-free, it follows that each vertex in $K$ is incident with at most three edges in $E'$.
  So we can enumerate all these edges in $\Delta^{O(k)}$ time.
  Note that there are at most $3|K|$ vertices outside $K$ that is incident with an edge in $E'$ where exactly one endpoint is in $K$.
  For each of these vertices outside $K$, we guess at most two incident edges in $E'$.
  With all these edges guessed, we can verify whether all edges that intersect $K$ satisfy the triadic closure property.
  Specifically, there are four types of $P_3$'s $(u, v, w)$ involving a vertex in $K$:
  (i) $u, v, w \in K$, (ii) $u, v \in K$ and $w \notin K$, (iii) $u, w \in K$ and $v \notin K$, and (iv) $u \in K$ and $v, w \notin K$.
  All these relevant $P_3$'s can be enumerated, which allows us to verify the triadic closure property.

  Our next step is to identify the edges of $E'$ that lie outside $K$.
  As we have verified the triadic closure property for all $P_3$'s intersecting $K$, we need to address the \textsc{Strong Triadic Closure} problem on $G - K$ (with some edges prescribed to be part of the desired strong set).
  Suppose that we need to find a strong set of size $\ell'$, having guessed $\ell - \ell'$ edges incident with $K$ to be included in a strong set.
  By Lemma~\ref{lemma:ce-feasible-cover}, if $G - K$ has a perfect matching, the problem is to find a strong cycle packing covering a feasible set of size $2k'$, where $k' = \ell' - \MM(G - K) \le \ell - (\MM(G) - |K|) \le 3k$.
  This criterion holds regardless of the absence of perfect matching as well from the discussions in Section~\ref{sssec:2mod3}.

  To address this problem, we will adopt a probabilistic approach.
  Specifically, for each vertex in $V(G) \setminus K$, we randomly select two incident edges in $G - K$.
  For each vertex $v$ that is connected to $K$ via an edge included with $E'$, this selection includes the edges in $G - K$ predetermined to be part of $E'$.
  We keep those edges $uv$ that are chosen by both $u$ and $v$.
  The remaining edges induce a graph where every connected component is a path or a cycle.
  Edges not contributing to strong cycles are subsequently pruned.
  Our algorithm then tests the existence of a feasible set of size $2k$ within the vertices that are involved in remaining strong cycles.
  Assuming that there is a strong cycle packing that covers a feasible set of size $2k'$, the algorithms confirms this with probability $1/\Delta^{O(k)}$.
  This is because there is a strong cycle packing over at most $6k'$ vertices by Lemma~\ref{lemma:tri-maximal}.
  For each strong cycle of length $l$, the probability that it is retained is $1/\Delta^{2l}$, resulting in an overall success probability at least $1/\Delta^{O(k)}$.
  Conversely, if a feasible set can be covered by the vertices connected to remaining edges, a strong set of size $\ell'$ can be constructed by Lemma~\ref{lemma:tc-feasible-cover}.
\end{proof}

\subsection{FPT approximation}\label{ssec:STC_FPT_Approx}

In this subsection, we develop an FPT algorithm for \textsc{Strong Triadic Closure} parameterized by $k = \ell - \MM(G)$, with an approximation ratio $7$.
Specifically, we give an $O^*(k^{O(k)})$-time algorithm that either determines that there is no strong set of size $\MM(G) + k$ or finds a strong set of size $\MM(G) + k/7$.
Complementing this finding, we show the infeasibility of achieving a polynomial-time $n^{1-\varepsilon}$-factor approximation for \textsc{Strong Triadic Closure} under the assumption that P $\ne$ NP, even with parameter~$\ell$.

\tcamapprox*

\begin{proof}
  Let $c = 14$ and $t = \MM(G) + 2k/c$.
  Our algorithm will either identify a strong set of size $t$ or establish the absence of a strong set of size $\MM(G) + k$.
  First, we find a packing $\mathcal{K}$ of $K_4$'s through a greedy approach, leading to the following two case:
  \begin{itemize}
    \item 
    If $|\mathcal{K}| \ge k/c$, a strong set of size $t$ exists:
    $G - V(\mathcal{K})$ contains a matching of size $\MM(G) - 4|\mathcal{K}|$ and $K_4$'s in $\mathcal{K}$ contribute $6|\mathcal{K}|$ edges.
    \item
    Conversely, suppose that $|\mathcal{K}| < k/c$.
    We solve \textsc{Strong Triadic Closure} on $G - V(\mathcal{K})$ using the FPT algorithm for $K_4$-free graphs in Theorem~\ref{theorem:tc-am-fpt}.
    Initially, we test whether $G - V(\mathcal{K})$ has a strong set of size $t$.
    If such a set is absent, we proceed to optimally solve \textsc{Strong Triadic Closure} on $G - V(\mathcal{K})$.
    As $V(|\mathcal{K}|) < k/c$, and we can also test whether $G[V(\mathcal{K})]$ contains a strong set of size $6|\mathcal{K}|$ in $k^{O(k)}$ time.
    If this succeeds, we obtain a strong set of size $t$ as in the first case.
    Otherwise, we can find the largest strong set within $G[V(\mathcal{K})]$ in $O(k^{O(k)})$ time by enumerating all sets up to $6|\mathcal{K}|$ edges.
    Now let us consider a combine optimal solution from $G[V(\mathcal{K})]$ and $G - V(\mathcal{K})$.
    If this set has size smaller than $t$, then there is no solution of size $t + 12k/c$:
    For any strong set, every vertex in $V(\mathcal{K})$ has at most three edges that connects it to a vertex in $V(G) \setminus V(\mathcal{K})$ by the strong triadic closure property.
  \end{itemize}
  This gives a 7-factor approximation guarantee, as we either find a strong set of size $t = \MM(G) + k/7$, or conclude that there is no strong set of size $t + 12k/c = \MM(G) + k$.
\end{proof}

We conjecture that a more refined analysis, by adopting a smaller value of $c$, could improve the approximation factor.
We now turn to the hardness result.

\begin{lemma}
  For every $\varepsilon > 0$, approximating the value of $\ell$ for \textsc{Strong Triadic Subgraph} within a factor of $O(n^{1-\varepsilon})$ in polynomial time is impossible, unless P = NP.
\end{lemma}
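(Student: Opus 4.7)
The plan is to reduce from \textsc{Max Clique}, which by the results of Håstad and Zuckerman is NP-hard to approximate within a factor $n^{1-\varepsilon'}$ for every $\varepsilon'>0$. I would use the identity reduction: interpret the input $G$ on $n$ vertices to \textsc{Max Clique} directly as a \textsc{Strong Triadic Closure} instance on the same graph. So I need to argue that the size $s^*(G)$ of a maximum strong edge set of $G$ is tightly controlled by $\omega(G)$, and then invoke a Clique gap.

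The key structural observation is a two-sided bound on $s^*(G)$. For the upper bound, let $S$ be any strong edge set and $v$ any vertex; then every pair of $S$-neighbors $u,w$ of $v$ forms a closed $P_3$ $(u,v,w)$, so $uw \in E(G)$. Thus $N_S(v)$ is a clique in $G$, and since $S \subseteq E(G)$ we have $\{v\}\cup N_S(v)$ is a clique of size $\deg_S(v)+1$. Hence $\deg_S(v) \le \omega(G)-1$ for all $v$, giving $|S| \le n(\omega(G)-1)/2$. For the lower bound, the edge set of any maximum clique is itself strong, so $s^*(G) \ge \binom{\omega(G)}{2}$.

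Now I would invoke the Clique gap: for any $\delta>0$ it is NP-hard to distinguish $\omega(G) \ge n^{1-\delta}$ from $\omega(G) \le n^{\delta}$. In the YES case the bound above gives $s^*(G) \ge \binom{n^{1-\delta}}{2} = \Omega(n^{2-2\delta})$; in the NO case $s^*(G) \le n \cdot n^{\delta}/2 = O(n^{1+\delta})$. The resulting multiplicative gap is $\Omega(n^{1-3\delta})$, which for $\delta < \varepsilon/3$ exceeds $n^{1-\varepsilon}$ for all sufficiently large $n$. So a polynomial-time $O(n^{1-\varepsilon})$-approximation for \textsc{Strong Triadic Closure} would solve the corresponding Clique gap problem in polynomial time, contradicting P$\ne$NP.

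There is no real technical obstacle here; the reduction is essentially the identity, and the only thing to check is the degree bound sketched above (which is a direct consequence of the definition of a strong edge set). The rest is a routine parameter-matching calculation. The one subtle point is to make sure the ratio $n^{1-3\delta}$ is expressed with respect to $n = |V(G')| = |V(G)|$, which is immediate since the reduction preserves the vertex set exactly.
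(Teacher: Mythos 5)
Your proof is correct and follows essentially the same approach as the paper: an identity reduction from the Zuckerman/H\aa{}stad inapproximability of \textsc{Clique}, using the observation that the $S$-neighborhood of every vertex must be a clique in $G$ to bound $s^*(G)$ from above, and the edge set of a maximum clique to bound it from below. The paper's proof makes the same gap calculation with the same factor $n^{1-3\varepsilon}$.
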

\begin{proof}
  It is known that, for every $\varepsilon > 0$, no polynomial-time algorithm can approximate \textsc{Clique}
   within a factor better than $O(n^{1-\varepsilon})$, unless P = NP~\cite{Zuckerman07}.
   In particular, for
  every $\varepsilon > 0$, the existence of a polynomial-time algorithm that distinguishes between graphs
  with a clique of size $n^{1-\varepsilon}$ and those without 
  a clique of size $n^{\varepsilon}$, implies P = NP.
  Assume, for contradiction, the existence of a polynomial-time approximation for \textsc{Strong Triadic Closure} within a factor $O(n^{1-\varepsilon'})$ for every $\varepsilon' > 0$.
  
  Consider a $G=(V,E)$ with a clique $C$ with $|C| \geq n^{1-\varepsilon}$. 
  Then the set of edges $S=\binom{C}{2}$ is a strong set of size $\frac{1}{4} n^{2-2\varepsilon}$ for sufficiently large $n$.
  On the other hand, if $G$ has no clique of size $n^{\varepsilon}$, then there is no strong set of size $n^{1+\varepsilon}$.
  This is because for a strong set $S$, each vertex has degree $n^{\varepsilon}$ in $S$, as the neighborhood must be a clique to adhere to the triadic closure property.
  It follows that an approximation of \textsc{Strong Triadic Closure} within a factor of $\frac{1}{4}n^{1-3\varepsilon}$ would effectively differentiate these \textsc{Clique} instances, contradiction the assumption.
\end{proof}

\else
  \input{sec456_short.tex}
\fi

\section{Conclusion}
\label{sec:conclusions}

In this work, we have initiated the studies into the parameterized complexity of problems formulated over linear delta-matroids.
We have found a striking contrast to matroids:
the choice of parameterization, whether by rank or by cardinality, influences problem complexity. 
Specifically, when parameterized by rank, solutions applicable to matroids can be extended to delta-matroids,
with more involved algorithms but ultimately with the same $f(k)$-factor in the running time.
In particular, the recent method of determinantal sieving can be lifted to linear delta-matroids parameterized by rank.
It is at the moment unclear precisely how powerful this result is, but it does solve the basic
intersection and packing problems over linear delta-matroids.
In contrast, when parameterized by cardinality, we stumble on hardness in seemingly simple delta-matroids.
This underscores the significant impact of the absence of a truncation operation on their complexity.
Yet, we demonstrated that the \textsc{Delta-matroid Triangle Cover} remains FPT for delta-matroids of unbounded rank.
This was achieved by leveraging a novel algebraic technique, namely $\ell$-contraction, complemented by sunflower-like combinatorial approaches.
It is worth noting that the $\ell$-contraction operation is an extension of the matroid truncation.
Employing delta-matroid perspectives, we resolve two open questions in the literature, \textsc{Cluster Subgraph} and \textsc{Strong Triadic Closure} parameterized by above matching.
These problems have close connections to \textsc{Delta-matroid Triangle Cover}.
Specifically, when the graph is $K_4$-free and has a perfect matching, \textsc{Cluster Subgraph} parameterized by above matching is a special case of \textsc{Delta-matroid Triangle Cover}.
We were then able to give FPT algorithms for both problems on $K_4$-free graphs.
However, to our surprise, the complexity differs in the presence of $K_4$'s.
While we successfully extended the FPT algorithm to accommodate the general case for \textsc{Cluster Subgraph}, our attempts for an algorithmic solution for \textsc{Strong Triadic Closure} were met with difficulties, which led us to a W[1]-hardness proof.

Let us conclude this study with several open questions.
The first question concerns the potential for enhancing our algorithm for \textsc{Delta-matroid Triangle Cover}, which currently runs in $O^*(k^{O(k)})$ time, to $O^*(2^{O(k)})$ time.
The algebraic approach, and the simpler color-coding approach, in conjunction with dynamic programming, typically achieves single-exponential time complexity.
However, we have not been able to improve our algorithm to $O^*(2^{O(k)})$ time.
Another area worth investigating is the derandomization of our algorithm.
Our algorithm is ``heavily'' randomized, as we repeatedly apply $\ell$-projection, which relies on the Schwartz-Zippel lemma.
Given that the matroid truncation can be performed deterministically \cite{LokshtanovMPS18TALG}, it is reasonable to believe that the deterministic implementation of $\ell$-projection can be achieved.
On the other hand, a more fundamental obstacle is to derandomize the algorithm for \textsc{Colorful Delta-matroid Matching}. 

Lastly, let us discuss future research directions regarding kernelization.
Over matroids, The representative sets approach yields a polynomial kernel for, e.g., \textsc{3-Matroid Intersection} and \textsc{3-Matroid Parity}.
For delta-matroids with the rank parameter, the picture looks promising using the delta-matroid representative sets statement of Wahlström~\cite{Wahlstrom24SODA}.
Under a cardinality parameter, things look more difficult. In particular, it is completely unclear to us whether \textsc{DMM Intersection} has a polynomial kernel parameterized by cardinality. 
We remark that \textsc{Cluster Subgraph} and \textsc{Strong Triadic  Closure} do not have polynomial kernels even when parameterized by $\ell$ \cite{GolovachHKLP20,GruttemeierK20}, under standard complexity assumptions.
More broadly, is there a further application of representative sets statement of Wahlström~\cite{Wahlstrom24SODA}? 
Though there is no truncation operation for delta-matroids, it is plausible that the $\ell$-projection unlocks the potential of delta-matroid representative sets approach for kernelization.

\bibliographystyle{abbrv}
\bibliography{all}

\end{document}

